\def\AA{\mathcal{A}}
\def\CC{\mathcal{C}}
\def\DD{\mathcal{D}}
\def\FF{\mathcal{F}}
\def\PP{\mathcal{P}}
\def\QQ{\mathcal{Q}}
\def\WW{\mathcal{W}}
\def\UIG{\mathrm{UIG}}
\def\APUD{\mathrm{APUD}}
\def\EX{\mathit{ex}}
\def\ES#1{{\EuScript#1}}
\def\lseg#1#2{\overline{#1#2}}
\def\upseg{\mathop{\mbox{\sl ups}}}
\def\uptan{\mathop{\mbox{\sl upt}}}
\def\lowseg{\mathop{\mbox{\sl los}}}
\def\lowtan{\mathop{\mbox{\sl lot}}}
\let\lotan\lowtan
\def\camera#1#2#3{
	\begin{scope}[shift={#1}, rotate=#2, scale=0.06]
		\coordinate (0) at (-4,-2);
		\coordinate (1) at (-4,2);
		\coordinate (2) at (4,2);
		\coordinate (3) at (4,1);
		\coordinate (4) at (6,2);
		\coordinate (5) at (6,-2);
		\coordinate (6) at (4,-1);
		\coordinate (7) at (4,-2);
		\draw[ultra thick] (0) foreach \i in {1,...,7} {--(\i)}  -- cycle;
		\fill[gray!80!blue] (0) foreach \i in {1,...,7} {--(\i)}  -- cycle;
		\node[text=white] at (0,0) {\small  #3};
	\end{scope}
}
\def\robot#1#2{
	\begin{scope}[shift={#1},scale=0.09]
		\coordinate (0) at (0,5);
		\coordinate (1) at (1,5);
		\coordinate (2) at (1,3);
		\coordinate (3) at (2,3);
		\coordinate (4) at (2,2);
		\coordinate (5) at (3,2);
		\coordinate (6) at (3,4);
		\coordinate (7) at (2.55,4.45);
		\coordinate (8) at (2.5,4.95);
		\coordinate (9) at (2.85,4.45);
		\coordinate (10) at (3.1,4.05);
		\coordinate (11) at (3.4,4.05);
		\coordinate (12) at (3.75,4.45);
		\coordinate (13) at (3.55,4.95);
		\coordinate (14) at (4.05,4.45);
		\coordinate (15) at (3.5,4);
		\coordinate (16) at (3.5,1.5);
		\coordinate (17) at (2,1.5);
		\coordinate (18) at (2,0);
		\coordinate (19) at (3,-1);
		\coordinate (20) at (3,-2);
		\coordinate (21) at (0,-2);
		\draw[ultra thick] (0) foreach \i in {1,...,21} {--(\i)};
		\draw[ultra thick, gray!40!white] (21)--(0);
		\fill[gray!40!white] (0) foreach \i in {1,...,21} {--(\i)} -- cycle;
		\node[shape=circle, inner sep=1pt, fill=red] at (0.6,4.5) {};
		\begin{scope}[yscale=1,xscale=-1]
			\coordinate (0) at (0,5);
			\coordinate (1) at (1,5);
			\coordinate (2) at (1,3);
			\coordinate (3) at (2,3);
			\coordinate (4) at (2,2);
			\coordinate (5) at (3,2);
			\coordinate (6) at (3,4);
			\coordinate (7) at (2.55,4.45);
			\coordinate (8) at (2.5,4.95);
			\coordinate (9) at (2.85,4.45);
			\coordinate (10) at (3.1,4.05);
			\coordinate (11) at (3.4,4.05);
			\coordinate (12) at (3.75,4.45);
			\coordinate (13) at (3.55,4.95);
			\coordinate (14) at (4.05,4.45);
			\coordinate (15) at (3.5,4);
			\coordinate (16) at (3.5,1.5);
			\coordinate (17) at (2,1.5);
			\coordinate (18) at (2,0);
			\coordinate (19) at (3,-1);
			\coordinate (20) at (3,-2);
			\coordinate (21) at (0,-2);
			\draw[ultra thick] (0) foreach \i in {1,...,21} {--(\i)};
			\draw[gray!40!white] (21)--(0);
			\fill[gray!40!white] (0) foreach \i in {1,...,21} {--(\i)} -- cycle;
			\node[shape=circle, inner sep=1pt, fill=red] at (0.6,4.5) {};
		\end{scope}
		\node at (0,0) {#2};
		
	\end{scope}
}
\def\clause#1#2#3#4{
	\begin{scope}[square/.style = {regular polygon, regular polygon sides=4, scale=1.2}]
		
		\node[fill=black, opacity=1, draw=black, square] at (#1,#2) {};
		\node[fill=black, opacity=1, draw=black, square] at (#1 + 0.6, #3) {};
		\node[fill=black, opacity=1, draw=black, square] at (#1 + 1.2,#4) {};
		
		\draw (#1,#2) -- (#1,0);
		\draw (#1 + 0.6, #3) -- (#1+ 0.6,0);
		\draw (#1 + 1.2,#4) -- (#1+ 1.2,0);
	\end{scope}
}
\def\crossing#1#2{
	\begin{scope}[shift={#1}, scale=#2]
		\node (A) at (0,0.7) {}; 
		\node (B) at (0.6,1.5) {}; 
		
		\node (C) at (0.4,-0.1) {}; 
		\node (D) at (0.6,0.3) {}; 
		
		\node (E) at (1.4,0.1) {}; 
		\node (F) at (1.4,0.7) {}; 
		
		\node (G) at (1.2,1.1) {}; 
		\node (H) at (1.4,2) {}; 
		\node (I) at (1.85,0.75) {}; 
		\node (J) at (2.1,1.6) {}; 
		
		\node (K) at (2.2,-0.2) {}; 
		\node (L) at (2.2,0.8) {}; 
		
		\node (M) at (1.3,-1) {}; 
		\node (N) at (1.3,-0.5) {}; 
		
		\node (O) at (2,-1.4) {}; 
		\node (P) at (2,-0.8) {}; 
		
		\node (Q) at (2.7,0) {}; 
		\node (R) at (3.1,0) {}; 
		
		\node (L1) at (-2,1) {};
		\node (L2) at (-1.2,0.6) {};
		
		\node (L3) at (-0.6,0.9) {};
		\node (L4) at (-0.6,0.3) {};
		
		\node (R1) at (3.5,0.3) {};
		\node (R2) at (3.5,-0.3) {};
		
		\node (R3) at (4,0) {};
		\node (R4) at (4,-0.6) {};
		
		\node (R5) at (4.5,0.7) {};
		\node (R6) at (4.5,-0.3) {};
		
		\node (U1) at (1,2.5) {};
		\node (U2) at (1.6,2.5) {};
		
		\node (U3) at (1.4,3) {};
		\node (U4) at (1.4,4) {};
		
		\node (D1) at (1.8,-2.1) {};
		\node (D2) at (2.4,-1.9) {};

		\draw[thick] (A)--(B);
		\draw[thick] (C)--(D);
		\draw[thick] (E)--(F);
		\draw[thick] (G)--(H);
		\draw[thick] (I)--(J);
		\draw[thick] (K)--(L);
		\draw[thick] (M)--(N);
		\draw[thick] (O)--(P);
		\draw[thick] (Q)--(R);
		\draw[thick] (L1)--(L2);
		\draw[thick] (L3)--(L4);
		\draw[thick] (R1)--(R2);
		\draw[thick] (R3)--(R4);
		\draw[thick] (R5)--(R6);
		\draw[thick] (D1)--(D2);
		\draw[thick] (U1)--(U2);
		\draw[thick] (U3)--(U4);
		
	\end{scope}
}
\def\hedgeT#1#2{
	\begin{scope}[shift={#1}]
		\foreach [evaluate={\j = int(mod(\i,2)); \x = int(mod(\i,3));}] \i in {1,...,#2}
		{
			\ifthenelse{\x = 1}
			{
				\node[BLUE] (\i) at (\i - 0.2, 0) {};
				\node[RED] (\i') at (\i + 0.2 , 0.3) {};
			}
			{}
			
			\ifthenelse{\x = 2}
			{
				\node[GREEN] (\i) at (\i - 0.2, 0) {};
				\node[BLUE] (\i') at (\i + 0.2 , 0.3) {};
				
			}
			{}
			
			\ifthenelse{\x = 0}
			{
				\node[RED] (\i) at (\i - 0.2, 0) {};
				\node[GREEN] (\i') at (\i + 0.2 , 0.3) {};
			}
			{}

			\ifthenelse{\j = 0}
			{
				\pgfmathtruncatemacro\k{\i-1};
				\draw[black, thick] (\i)--(\k);
				\draw[black, thick] (\i')--(\k');		
			}
			{}			
		}
		
	\end{scope}
}
\def\hedgeF#1#2{
	\begin{scope}[shift={#1}]
		\foreach [evaluate={\j = int(mod(\i,2)); \x = int(mod(\i,3));}] \i in {1,...,#2}
		{
			\ifthenelse{\x = 1}
			{
				\node[RED] (\i) at (\i - 0.2, 0) {};
				\node[BLUE] (\i') at (\i + 0.2 , 0.3) {};
			}
			{}
			
			\ifthenelse{\x = 2}
			{
				\node[GREEN] (\i) at (\i - 0.2, 0) {};
				\node[RED] (\i') at (\i + 0.2 , 0.3) {};
				
			}
			{}
			
			\ifthenelse{\x = 0}
			{
				\node[BLUE] (\i) at (\i - 0.2, 0) {};
				\node[GREEN] (\i') at (\i + 0.2 , 0.3) {};
			}
			{}

			\ifthenelse{\j = 0}
			{
				\pgfmathtruncatemacro\k{\i-1};
				\draw[black, thick] (\i)--(\k);
				\draw[black, thick] (\i')--(\k');		
			}
			{}			
		}
		
	\end{scope}
}
\def\hedgeL#1#2{
	\begin{scope}[shift={#1}]
		\foreach [evaluate={\j = int(mod(\i,2)); \x = int(mod(\i,3));}] \i in {1,...,#2}
		{
			\ifthenelse{\x = 1}
			{
				\node[BLUE] (\i) at (\i + 0.2, 0) {};
				\node[RED] (\i') at (\i - 0.2 , 0.3) {};
			}
			{}
			
			\ifthenelse{\x = 2}
			{
				\node[RED] (\i) at (\i + 0.2, 0) {};
				\node[GREEN] (\i') at (\i - 0.2 , 0.3) {};
				
			}
			{}
			
			\ifthenelse{\x = 0}
			{
				\node[GREEN] (\i) at (\i + 0.2, 0) {};
				\node[BLUE] (\i') at (\i - 0.2 , 0.3) {};
			}
			{}

			\ifthenelse{\j = 0}
			{
				\pgfmathtruncatemacro\k{\i-1};
				\draw[black, thick] (\i)--(\k);
				\draw[black, thick] (\i')--(\k');		
			}
			{}			
		}
		
	\end{scope}
}
\def\nae#1{
	\begin{scope}[shift={#1}]
		\node (A) at (-1.4,0.4) {}; 
		\node (B) at (-1.4,-0.4) {}; 
		
		\node (C) at (-1,0) {}; 
		\node (D) at (-0.4,0) {}; 
		
		\node (E) at (1.4,0.4) {}; 
		\node (F) at (1.4,-0.4) {}; 
		
		\node (G) at (1,0) {}; 
		\node (H) at (0.4,0) {}; 
		
		\node (I) at (-0.4,1.4) {}; 
		\node (J) at (0.4,1.4) {}; 
		
		\node (K) at (0,1) {}; 
		\node (L) at (0,0.4) {}; 
		
		\node (L1) at (-2.7,0.4) {};
		\node (L2) at (-2.7,-0.4) {};
		
		\node (L3) at (-2,0.8) {};
		\node (L4) at (-2,0) {};
		
		\node (R1) at (2.6,0.4) {};
		\node (R2) at (2.6,-0.4) {};
		
		\node (R3) at (2,0.8) {};
		\node (R4) at (2,0) {};
		
		\node (U1) at (-0.4,2.5) {};
		\node (U2) at (0.4,2.5) {};
		
		\node (U3) at (0,1.9) {};
		\node (U4) at (0.8,1.9) {};
		
		\draw[thick] (A)--(B);
		\draw[thick] (C)--(D);
		\draw[thick] (E)--(F);
		\draw[thick] (G)--(H);
		\draw[thick] (I)--(J);
		\draw[thick] (K)--(L);
		\draw[thick] (L1)--(L2);
		\draw[thick] (L3)--(L4);
		\draw[thick] (R1)--(R2);
		\draw[thick] (R3)--(R4);
		\draw[thick] (U1)--(U2);
		\draw[thick] (U3)--(U4);

	\end{scope}
}
\def\vedgeT#1#2{
	\begin{scope}[shift={#1}]
		\foreach [evaluate={\j = int(mod(\i,2)); \x = int(mod(\i,3));}] \i in {1,...,#2}
		{
			\ifthenelse{\x = 1}
			{
				\node[GREEN] (\i) at (0, -\i - 0.3) {};
				\node[BLUE] (\i') at (0.25, -\i + 0.3) {};
			}
			{}
			
			\ifthenelse{\x = 2}
			{
				\node[BLUE] (\i) at (0, -\i - 0.3) {};
				\node[RED] (\i') at (0.25, -\i + 0.3) {};
			}
			{}
			
			\ifthenelse{\x = 0}
			{
				\node[RED] (\i) at (0, -\i - 0.3) {};
				\node[GREEN] (\i') at (0.25, -\i + 0.3) {};
			}
			{}

			\ifthenelse{\j = 0}
			{
				\pgfmathtruncatemacro\k{\i-1};
				\draw[black, thick] (\i)--(\k);
				\draw[black, thick] (\i')--(\k');		
			}
			{}			
		}
		
	\end{scope}
}
\def\vedgeCT#1#2{
	\begin{scope}[shift={#1}]
		\node[GREEN] (A) at (-0.3, 0) {};
		\node[RED] (B) at (0, -0.4) {};
		\draw[black, thick] (A)--(B);
		\foreach [evaluate={\j = int(mod(\i,2)); \x = int(mod(\i,3));}] \i in {1,...,#2}
		{
			\ifthenelse{\x = 1}
			{
				\node[BLUE] (\i) at (0, -\i - 0.3) {};
				\node[GREEN] (\i') at (0.25, -\i + 0.3) {};
			}
			{}
			
			\ifthenelse{\x = 2}
			{
				\node[GREEN] (\i) at (0, -\i - 0.3) {};
				\node[RED] (\i') at (0.25, -\i + 0.3) {};
			}
			{}
			
			\ifthenelse{\x = 0}
			{
				\node[RED] (\i) at (0, -\i - 0.3) {};
				\node[BLUE] (\i') at (0.25, -\i + 0.3) {};
			}
			{}
			
			\ifthenelse{\j = 0}
			{
				\pgfmathtruncatemacro\k{\i-1};
				\draw[black, thick] (\i)--(\k);
				\draw[black, thick] (\i')--(\k');		
			}
			{}			
		}
	\end{scope}
}
\def\vedgeCF#1#2{
	\begin{scope}[shift={#1}]
		\node[GREEN] (A) at (-0.3, 0) {};
		\node[BLUE] (B) at (0, -0.4) {};
		\draw[black, thick] (A)--(B);
		\foreach [evaluate={\j = int(mod(\i,2)); \x = int(mod(\i,3));}] \i in {1,...,#2}
		{
			\ifthenelse{\x = 1}
			{
				\node[RED] (\i) at (0, -\i - 0.3) {};
				\node[GREEN] (\i') at (0.25, -\i + 0.3) {};
			}
			{}
			
			\ifthenelse{\x = 2}
			{
				\node[GREEN] (\i) at (0, -\i - 0.3) {};
				\node[BLUE] (\i') at (0.25, -\i + 0.3) {};
			}
			{}
			
			\ifthenelse{\x = 0}
			{
				\node[BLUE] (\i) at (0, -\i - 0.3) {};
				\node[RED] (\i') at (0.25, -\i + 0.3) {};
			}
			{}
			
			\ifthenelse{\j = 0}
			{
				\pgfmathtruncatemacro\k{\i-1};
				\draw[black, thick] (\i)--(\k);
				\draw[black, thick] (\i')--(\k');		
			}
			{}			
		}
	\end{scope}
}
\def\vedgeL#1#2{
	\begin{scope}[shift={#1}]
		\foreach [evaluate={\j = int(mod(\i,2)); \x = int(mod(\i,3));}] \i in {1,...,#2}
		{
			
			\ifthenelse{\x = 1}
			{
				\node[RED] (\i) at (0, -\i + 0.3) {};
				\node[GREEN] (\i') at (0.25, -\i - 0.3) {};
			}
			{}
			
			\ifthenelse{\x = 2}
			{
				\node[BLUE] (\i) at (0, -\i + 0.3) {};
				\node[RED] (\i') at (0.25, -\i - 0.3) {};
			}
			{}
			\ifthenelse{\x = 0}
			{
				
				\node[GREEN] (\i) at (0, -\i + 0.3) {};
				\node[BLUE] (\i') at (0.25, -\i - 0.3) {};
			}
			{}
			
			\ifthenelse{\j = 0}
			{
				\pgfmathtruncatemacro\k{\i-1};
				\draw[black, thick] (\i)--(\k);
				\draw[black, thick] (\i')--(\k');		
			}
			{}			
		}
	\end{scope}
}
\def\polyV#1#2{%
	\begin{scope}[shift={#1}, rotate=#2]
		
		\coordinate (1) at (0,0);
		\coordinate (2) at (0.6, -0.8);
		\coordinate (3) at (0.8, -0.6);
		\coordinate (4) at (1.4, 0);
		\coordinate (5) at (0.8, 0.6);
		\coordinate (6) at (0.6, 0.8);
		\coordinate (7) at (0, 1.4);
		\coordinate (8) at (-0.6, 0.8);
		\coordinate (9) at (-0.8, 0.6);
		\coordinate (10) at (-1.4, 0);
		\coordinate (11) at (-0.8, -0.6);
		\coordinate (12) at (-0.6, -0.8);
		
		\draw[thick] (1)--(2);
		\draw[thick] (3)--(4)--(5);
		\draw[thick] (6)--(7)--(8);
		\draw[thick] (9)--(10)--(11);
		\draw[thick] (12)--(1);

		\fill[gray,opacity=0.3] (1) \foreach \i in {2,...,12}
		{
			--(\i)
		};		
		
	\end{scope}
}
\renewcommand\footnotesize{%
	\@setfontsize\footnotesize\@ixpt{11}%
	\abovedisplayskip 8\p@ \@plus2\p@ \@minus4\p@
	\abovedisplayshortskip \z@ \@plus\p@
	\belowdisplayshortskip 4\p@ \@plus2\p@ \@minus2\p@
	\def\@listi{\leftmargin\leftmargini
		\topsep 4\p@ \@plus2\p@ \@minus2\p@
		\parsep 2\p@ \@plus\p@ \@minus\p@
		\itemsep \parsep}%
	\belowdisplayskip \abovedisplayskip
}
\def\grd@save@target#1{%
	\def\grd@target{#1}}
\def\grd@save@start#1{%
	\def\grd@start{#1}}
\tikzset{
	grid with coordinates/.style={
		to path={%
			\pgfextra{%
				\edef\grd@@target{(\tikztotarget)}%
				\tikz@scan@one@point\grd@save@target\grd@@target\relax
				\edef\grd@@start{(\tikztostart)}%
				\tikz@scan@one@point\grd@save@start\grd@@start\relax
				\draw[minor help lines] (\tikztostart) grid (\tikztotarget);
				\draw[major help lines] (\tikztostart) grid (\tikztotarget);
				\grd@start
				\pgfmathsetmacro{\grd@xa}{\the\pgf@x/1cm}
				\pgfmathsetmacro{\grd@ya}{\the\pgf@y/1cm}
				\grd@target
				\pgfmathsetmacro{\grd@xb}{\the\pgf@x/1cm}
				\pgfmathsetmacro{\grd@yb}{\the\pgf@y/1cm}
				\pgfmathsetmacro{\grd@xc}{\grd@xa + \pgfkeysvalueof{/tikz/grid with coordinates/major step}}
				\pgfmathsetmacro{\grd@yc}{\grd@ya + \pgfkeysvalueof{/tikz/grid with coordinates/major step}}
				\foreach \x in {\grd@xa,...,\grd@xb}
				\node[anchor=north] at (\x,\grd@ya) {\tiny \pgfmathprintnumber{\x}};
				\foreach \y in {\grd@ya,...,\grd@yb}
				\node[anchor=east] at (\grd@xa,\y) {\tiny \pgfmathprintnumber{\y}};
			}
		}
	},
	minor help lines/.style={
		help lines,
		step=\pgfkeysvalueof{/tikz/grid with coordinates/minor step}
	},
	major help lines/.style={
		help lines,
		line width=\pgfkeysvalueof{/tikz/grid with coordinates/major line width},
		step=\pgfkeysvalueof{/tikz/grid with coordinates/major step}
	},
	grid with coordinates/.cd,
	minor step/.initial=.1,
	major step/.initial=.5,
	major line width/.initial=0.7pt,
}
\newtheoremstyle{thmsty}
{1em} 
{\topsep} 
{\itshape} 
{} 
{\bfseries} 
{.} 
{1em} 
{} 
\theoremstyle{thmsty}
\newtheorem{thm}{Theorem}[chapter]
\newtheorem{lem}[thm]{Lemma}
\newtheorem{cor}[thm]{Corollary}
\newtheorem{prp}[thm]{Proposition}
\newtheorem{cnj}[thm]{Conjecture}
\newtheorem{rem}[thm]{Remark}
\newtheorem{clm}[thm]{Claim}
\newtheorem{dfn}[thm]{Definition}
\newtheorem{conj}[thm]{Conjecture}
\newtheorem{open}[thm]{Open problem}
\providecommand*{\input@path}{}
\g@addto@macro\input@path{{./preliminaries//}}
\g@addto@macro\input@path{{./results//}}
\begin{document}
	\thispagestyle{empty}
\begin{center}
	{ \sc Masaryk University\\
		Faculty of Informatics}
	\begin{figure}[htbp]
		\centering
		\includegraphics[width=0.3\linewidth]{fithesis-fi-color}
		\label{fig:logoficnv}
	\end{figure}

	\vspace{0.7cm}
	
	{\Huge \bf Computational Aspects of Problems on Visibility and Disk Graph Representations}
	
	\vspace{1.5cm}
	
	{\large \sc Doctorate Dissertation}

	{\Large \bfseries Onur \c{C}a\u{g}\i r\i c\i}
	\vfill
	
	\textbf{Advisor:} Prof. RNDr. Petr Hlin\v{e}n\'{y}, Ph.D.
	
	\vfill
	
	\begin{center}
		Brno, August 2021
	\end{center}

\end{center}

\pagenumbering{gobble}
\newpage
\thispagestyle{empty}
\mbox{}
\newpage

\section*{Acknowledgments}
During my time in Masaryk University, I have had the chance to meet with very exciting and great people.
I wouldn't have spent such an enjoyable and special time if it wasn't for them.

First of all, I would like to thank to my supervisor, Petr Hlin\v{e}n\'{y} as he generously allowed me to work under his supervision.
Thanks to his support and guidance, I have been able to pursue my research ideas with freedom and confidence. 
Whenever I felt like I was stuck, Petr pushed me to the right direction and helped me to gain intuition about how to tackle challenging research problems.
Without him, I would not be able to complete this piece of work.

I also would like to thank my fellow colleague and officemate Bodhayan Roy for his friendship and support while he was a postdoctoral researcher at the Masaryk University. 
Our discussions with Bodhayan have taught me how to be productive and precise while doing research.

While I was pursuing my doctoral degree, I also had great time with my friends. If it wasn't for the enjoyable times I spent with them during beautiful Brno evenings, I couldn't have found the motivation and courage to go back to work and finish this thesis.
So, thank you, my friends from Brno for all the booze you bought for me.

Although being away from their son were not very comfortable for them, my parents have always gave countenance to me while I was pursuing my PhD.
Thank you, mom, and thank you dad, for all the sacrifices you have made for me to achieve my academic goals.

Last, but absolutely not the least, I would like to thank Deniz, my significant other.
With her endless love and support, I was able to live in a very happy and peaceful environment.
She stood by me no matter what decision I have made, and gave her precious advice whenever I had the need.
I cannot ever overestimate her contributions to the process of preparation of this thesis, and my academic life.
Deniz, I cannot thank you enough for being such an essential part of my life.

\newpage
\thispagestyle{empty}
\mbox{}
\newpage
\begin{abstract}
This thesis focuses on two concepts which are widely studied in the field of computational geometry.
Namely, visibility and unit disk graphs.
In the field of visibility, we have studied the conflict-free chromatic guarding of polygons, for which we have described a polynomial-time algorithm that uses $O(n \log^2 n)$ colors to guard a polygon in a conflict-free setting, and proper coloring of polygon visibility graphs, for which we have described an algorithm that returns a proper 4-coloring for a simple polygon.
Besides, we have shown that the 5-colorability problem is NP-complete on visibility graphs of simple polygons, and 4-colorability is NP-complete on visibility graphs of polygons with holes.

Then, we move further with the notion of visibility, and define a graph class which considers the real-world limitations for the applications of visibility graphs.
That is, no physical object has infinite range, and two objects might not be mutually visible from a certain distance although there are no obstacles in-between.
To model this property, we introduce unit disk visibility graphs, and show that the 3-colorability problem is NP-complete for unit disk visibility graphs of a set of line segments, and a polygon with holes.

After bridging the gap between the visibility and the unit disk graphs, we then present our results on the recognition of unit disk graphs in a restricted setting -- axes-parallel unit disk graphs.
We show that the recognition of unit disk graphs is NP-complete when the disks are centered on pre-given parallel lines.
If, on the other hand, the lines are not parallel to one another, the recognition problem is NP-hard even though the pre-given lines are axes-parallel (i.e. any pair is either parallel or perpendicular).
\end{abstract}

\newpage
\pagenumbering{arabic}
\setcounter{page}{1}
\tableofcontents

\chapter{Introduction} \label{chap:introduction}
The field of computational geometry has gained attention drastically with the latest technological advancements.
Besides two very essential application areas, namely, computer graphics, and 3D design softwares, the computational geometry algorithms are also utilized in the fields of robotics \cite{Latombe_robotmotion}, efficient 3D printing \cite{Gupta_3D}, geographic information systems \cite{pathPlanning}, computer-integrated manufacturing \cite{Scheer_cim}, and many more \cite{bcko-cgaa-08}.
In this thesis, we attempt to tackle the problems which concern the visibility relations and wireless networks.

A very famous visibility problem, called \emph{art gallery problem}, has been studied for a long time along with its variations \cite{o-agta-87,ll-ccagp-86,Bartschi-2014,hoffmann:2015}.
The problem is simply placing guards into an art gallery, such that the guards can observe the whole gallery together.
While tackling this problem, the gallery is modeled as a simple polygon, and each guard is represented by a point in the polygon.

Until recently, the main focus for the art gallery problem and its variations was to minimize the number of guards used.
However, as the wireless technologies advanced, the very same problem is revisited with the motivation of robot motion planning via wireless communications \cite{Latombe_robotmotion}.
As the wireless sensors became cheaper and cheaper, the objective of the problem has evolved into minimizing the number of unique frequencies used by the sensors, instead of minimizing the number of the guards.

This very problem constitutes the main motivation of our thesis, and is also closely related to frequency assignment problem in wireless networks \cite{freqAssignment,suri-conflict}.
One can easily solve this problem by placing a sensor at each corner of the room, and assigning a different frequency to each sensor.
However, this method becomes very expensive as the number of sensors grow \cite{freqAssignment,cf-app}.
Therefore, the main goal in this problem is minimize the number of different frequencies assigned to sensors.
Since the cost of a sensor is comparatively very low, we do not aim to minimize the number of sensors used.

Along with this application-based motivation, we also tackle a theoretical problem, colorability of polygon visibility graphs, which was left open by Ghosh \cite{Ghosh_unsolvedproblems} in 1995, and solved by  \c{C}a\u{g}\i r\i c\i, Hlin\v{e}n\'{y}, and Roy in 2017.
In this problem, every vertex of the given polygon corresponds to a vertex in the visibility graph, and two vertices of the graph are adjacent if, and only if the corresponding vertices see each other.

While dealing with the problems that are mentioned above, we also consider some restricted types of polygon, namely, funnels and weak-visibility polygons.
A funnel consists of two concave polygonal chains which meet at a common point at one end, and connected by an edge at the other end.
A weak-visibility polygon is a polygon which has a specific edge, of whose at least one point is visible from every other vertex.
Although these types of polygons seem an artificial to study on, there are indeed many applications which concern specifically those types of polygons \cite{Ghosh_weakvis,Choi_funnel}.

Since the main motivation of this thesis is based on the wireless sensor networks, we also remark that the conventional visibility relations fall short while modeling real-world scenarios, and suggest a new model to overcome this potential inaccuracy.
In the literature, two objects are said to be mutually visible when there are no obstacles in-between.
However, the ``communication'' between a pair of objects also depends on the distance between them, regardless of the type of the communication (visual, audio, data transfer, etc.).
This phenomenon is usually described by unit disk graphs \cite{udgParameterized,udgConstrained}.

In an effort to bridge the gap between two important fields of study in computational geometry, namely, unit disk graphs and visibility the graphs, we introduce a new graph class: unit disk visibility graphs.
This class considers both the combinatorial problems concerning visibility relations and the limitations of the wireless sensors and is also a superclass of unit disk graphs.

We moreover study unit disk graph recognition problem \cite{Breu_UDrecog} based on the motivation of theoretical aspects of wireless sensor networks \cite{Apnes_theory}.

The results presented in this thesis are published or will be published in the following conferences. 1, 2, and 4 are also submitted to high-quality journals, but the review process has not yet finished in the time of the submission of this thesis. Moreover, 4 has received the ``Best Student Paper'' award from the conference chair, and 5 has just been accepted to Journal of Combinatorial Theory, Series B (JCTB), which is a top journal in its field.

\begin{enumerate}
	\item O. \c{C}a\u{g}{\i}r{\i}c{\i}, P. Hlin\v{e}n\'{y}, B. Roy: On Colourability of Polygon Visibility Graphs in \emph{Foundations of Software Technology and Theoretical Computer Science (FSTTCS)}, December 2017 (Section~\ref{sec:polyproper}).
	
	\item O. {\c{C}}a\u{g}\i{r}\i{c}\i, S. K. Ghosh, P. Hlin\v{e}n\'{y}, B. Roy: On Conflict-Free Chromatic Guarding of Simple Polygons in \emph{Combinatorial Optimization and Applications (COCOA)}, December 2019 (Sections \ref{sec:funnels}, \ref{sec:cfweakvis}, and \ref{sec:polygonconflict})

	\item D. A\u{g}ao\u{g}lu and O. \c{C}a\u{g}{\i}r{\i}c{\i}: Unit Disk Visibility Graphs in \emph{European Conference on Combinatorics, Graph Theory and Applications (EUROCOMB)}, September 2021 (Chapter~\ref{chap:udvg})

	\item O. \c{C}a\u{g}{\i}r{\i}c{\i}: On embeddability of unit disk graphs onto straight lines in \emph{Computer Science in Russia (CSR)}, June 2020 (Chapter~\ref{chap:apud})
	
	\item O. \c{C}a\u{g}{\i}r{\i}c{\i}, P. Hlin\v{e}n\'{y}, F. Pokr\'{y}vka, A. Sankaran: Clique-Width of Point Configurations in \emph{Workshop on Graph-Theoretic Concepts in Computer Science (WG)}, June 2020 (outside of the main focus of this thesis, mentioned briefly in Chapter~\ref{chap:conclusion})
	
	\item O. \c{C}a\u{g}{\i}r{\i}c{\i}, L. Casuso, C. Medina, T. Patino, M. Raggi, E. Roldan-Pensado, G. Salazar, J. Urrutia: On upward straight-line embeddings of oriented paths in \emph{XVII Spanish Meeting on Computational
	Geometry (ECG)}, July 2017 (outside of the main focus of this thesis, mentioned briefly in Chapter~\ref{chap:conclusion})
\end{enumerate}

The contributions of the author of this thesis to the papers that are mentioned above are summarized in Table~\ref{table:authorsContributions}.

\chapter{Preliminaries} \label{chap:preliminaries}
\section{Organization of the chapter}

This chapter introduces necessary definitions and notations that we use throughout this thesis.
We generally use standard notations but some are slightly changed to fit into the context better.
In Section~\ref{sec:graphs}, we briefly introduce graphs and describe the definitions and notations we use.
In Section~\ref{sec:geometry}, we describe the visibility graphs, in Section~\ref{sec:intersection} we focus on the geometric aspects of the intersection graphs, and in Section~\ref{sec:problems}, we define the problems we tackle.

\section{Graphs} \label{sec:graphs}

	This section describes the fundamental graph theory terminology and notations.
	Unless stated explicitly, a graph is always
	\begin{itemize}
		\item simple,
		\item undirected,
		\item unweighted,
		\item connected, and
		\item finite.
	\end{itemize}
	in this thesis.
	Thus, in this section, all the definitions follow accordingly.
	
	\subsubsection{Vertex, edge and adjacency}
	A \emph{graph} is an abstract data structure which defines the relationships among a set of objects.
	Each object is represented by (usually) a circle which is referred to as a \emph{vertex} or a \emph{node}.
	If two objects are related, then there exists an \emph{edge} between the vertices which they are represented with, and is denoted by a line segment.
	A graph $G$ is denoted by $G = (V, E)$ where $V$ is the set of vertices, and $E$ is the set of edges.
	The vertices are named using numbers $1,2,\dots$ or letters $a,b,\dots$.
	See Figure~\ref{fig:exGraph} for an example graph whose vertices are denoted by letters.

	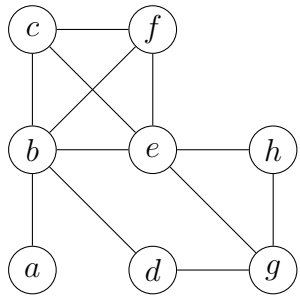
\begin{figure}[htbp]
		\centering
		\begin{tikzpicture}[scale=1.6]
			\tikzstyle{every node}=[draw, fill=white, shape=circle, minimum size=18pt,inner sep=0pt];
			\node (a) at (0,0) {$a$};
			\node (b) at (0,1) {$b$};
			\node (c) at (0,2) {$c$};
			\node (d) at (1,0) {$d$};
			\node (e) at (1,1) {$e$};
			\node (f) at (1,2) {$f$};
			\node (g) at (2,0) {$g$};
			\node (h) at (2,1) {$h$};
			
			\draw (a)--(b)--(c)--(e)--(h)--(g);
			\draw (g)--(d)--(b)--(f)--(e)--(g);
			\draw (c)--(f);
			\draw (b)--(e);
			
		\end{tikzpicture}
		\caption{A graph with 8 vertices, labeled $a$ through $g$, and 12 edges.}
		\label{fig:exGraph}
	\end{figure}

	An edge is denoted by the pair of vertices which it joins together.
	In Figure~\ref{fig:exGraph}, the vertices of the graph are $a$, $b$, $c$, $d$, $e$, $f$, $g$ and the edges of the graph are $ab$, $bc$, $bd$, $be$, $bf$, $ce$, $cf$, $dg$, $ef$, $eg$, $eh$, $gh$.
	There exists an edge $ab$ between a pair $a$ and $b$ of vertices. 
	Thus, $a$ and $b$ are called \emph{adjacent vertices} or \emph{neighbors}, and the edge $ab$ is called \emph{incident to} both $a$ and $b$.
	
	The set of all the vertices that are adjacent to a vertex $v$ is called the \emph{neighborhood} of $v$ and denoted by $N(v)$.
	For instance, in Figure~\ref{fig:exGraph}, $N(a) = \{b\}$ and $N(b) = \{a,c,d,e\}$.
	Note that the neighborhood of a vertex does not include the vertex itself. 
	Although this set is informally referred to as \emph{the neighborhood}, more specifically, it is the \emph{open neighborhood}.
	If we include $v$ in the neighborhood of $v$, then we have the \emph{closed neighborhood} of $v$, which is denoted by $N[v]$. 
	That is, $N[v] = N(v) \cup \{v\}$.
	
	\subsubsection{Subgraphs}
	A \emph{subgraph} $H = (V', E')$ of the graph $G$ is a graph whose vertex set $V' \subseteq V$ is a subset of $V$, and edge set $E' \subseteq E$ is a subset of $E$.
	For instance, let $G$ be the graph given in Figure~\ref{fig:exGraph}.
	Then, $H =\left(\{b,c,e,f\}, \{bc, cf, fe, eb\}\right)$ is a subgraph of $G$.
	An \emph{induced subgraph}, on the other hand, must include every edge which connects a pair of vertices in the vertex subset.
	This means that $H$ is not an induced subgraph without the edges $bf$ and $ce$.
	
	\subsubsection{Paths}
	Let us mention some special structures which might appear in graphs.
	A \emph{path} $P_n$ of length $n$ is a sequence of edges $(e_1, e_2, \dots, e_n)$ which joins a sequence of vertices without repetition.
	Every edge in a path joins two vertices, and there exists no other edges between any pair of the vertices except $e_1,\dots,e_n$.
	In Figure~\ref{fig:exGraph}, $(ab,bd,dg,ge)$ is a 4-path, but it is not an induced path, as the edge $be$ exists in the graph.
	An example of induced 4-path is $(ab,bd,dg,gh)$.
	A path can be denoted by the sequence of the edges as well as the sequence of the vertices which it includes, i.e., $(ab,bd,dg,gh)$ and $(a,b,d,g,h)$ define the same substructure.
	This substructure is also referred to as ``a path between $a$ and $h$'' and denoted by $\pi(a,h)$
	The smallest possible path is a single vertex, and is denoted by $P_0$.
	
	A \emph{shortest path} $\Pi(u,v)$ between a pair $u$ and $v$ of vertices is a path whose length is not greater than any given $\pi(u,v)$ in the same graph.

	\subsubsection{Cycles}
	A \emph{cycle} $C_n$ is simply a path of length $n$, plus one extra edge which joins the first vertex in the sequence with the last vertex in the sequence.
	In Figure~\ref{fig:exGraph}, the vertices $b,d,g,e$ together with the incident edges form an induced 4-cycle.
	A graph without any cycles is called a \emph{tree}.
	In Figure~\ref{fig:exGraph}, the vertices $a,b,d,e,h$ form a tree.
	A \emph{shortest path tree} $\textit{SPT}(u)$ of a vertex $u$ in a graph $G$ is a spanning tree of $G$ rooted at $u$, such that all paths between $u$ and any vertex $v \in \textit{SPT}(u)$ is a shortest path $\Pi^*_{uv}$ in $G$.
	
	\subsubsection{Complete graphs and cliques}
	A \emph{complete graph} $K_n$ is a graph with $n$ vertices, in which every pair of vertices are adjacent.
	When a graph $G$ has a subgraph $Q$ which is a complete graph, then $Q$ is referred to as a \emph{clique}.
	$Q$ is said to be a \emph{maximal clique} if there exists no other vertex in the graph which is adjacent to every vertex in $Q$.
	In other words, a maximal clique cannot be extended by adding other vertices.
	A clique $Q$ is called a \emph{maximum clique} if there exist no other cliques in the graph whose cardinality is strictly greater than $Q$.
	In Figure~\ref{fig:exGraph}, $b,c,e,f$ form a clique of size 4, which is also the maximum clique.
	The smallest non-empty clique, $K_1$ is a single vertex.
	Analogously, $K_2$ is an edge with two vertices, and $K_3$ is a 3-cycle ($C_3$).

\section{Geometry} \label{sec:geometry}

\subsection{Polygons} \label{sec:polygon}
A polygon $\mathcal{P}$ is defined as a closed region in the Euclidean plane, bounded by a finite set $\{e_1, \dots, e_n\}$ of line segments that are referred to as \emph{edges} of the polygon.
By definition, no pair of edges of a polygon intersect in the Euclidan plane.
Every consecutive pair $(e_i, e_{i+1})$ and specifically $(e_n, e_1)$ of line segments shares an endpoint.
These endpoints are called the \emph{vertices} of the polygon.
Such a cyclic sequence of consecutive vertices is referred to as a \emph{polygonal chain}.
The vertices are usually labeled by numbers $1,2,\dots$ in the clockwise order.
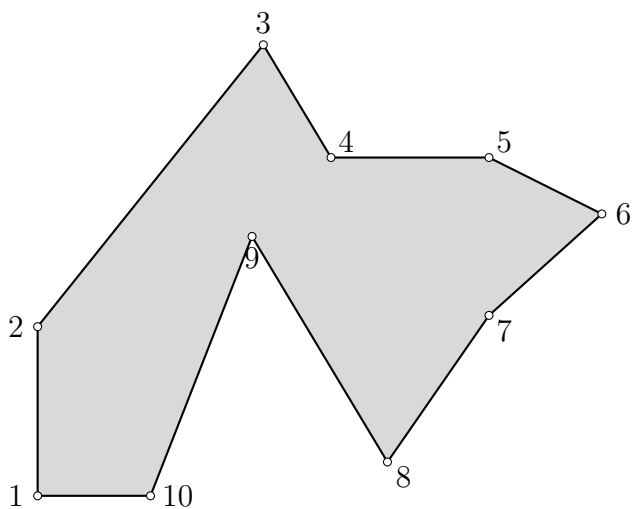
\begin{figure}
	\centering
	\begin{tikzpicture}[scale=1.5]
		\tikzstyle{every node}=[draw, fill=white, shape=circle, minimum size=3pt,inner sep=1pt];
		\coordinate (1) at (0,0);
		\coordinate (2) at (0,1.5);
		\coordinate (3) at (2,4);
		\coordinate (4) at (2.6,3);		
		\coordinate (5) at (4,3);
		\coordinate (6) at (5,2.5);
		\coordinate (7) at (4,1.6);
		\coordinate (8) at (3.1,0.3);
		\coordinate (9) at (1.9,2.3);
		\coordinate (10) at (1,0);
		
		\fill[gray, opacity=0.3] (1) foreach \i in {2,...,10} {--(\i)} --cycle;   
		\draw[thick] (1) foreach \i in {2,...,10} {--(\i)} --cycle;
		
		\node[label=left:$1$] at (1) {};
		\node[label=left:$2$] at (2) {};
		\node[label=above:$3$] at (3) {};
		\node[label=45:$4$] at (4) {};
		\node[label=45:$5$] at (5) {};
		\node[label=right:$6$] at (6) {};
		\node[label=-45:$7$] at (7) {};
		\node[label=-30:$8$] at (8) {};
		\node[label=below:$9$] at (9) {};
		\node[label=right:$10$] at (10) {};

		
	\end{tikzpicture}
	\caption{A simple polygon with 10 vertices.}
	\label{fig:polygon}
\end{figure}
We give an example polygon in Figure~\ref{fig:polygon}.
The boundary edges are denoted by bold lines, and the vertices are denoted by small circles.
The interior of the polygon is shaded in our pictures.

Note that the above definition (as well as the polygon given in Figure~\ref{fig:polygon}) is a \emph{simple polygon}, meaning that the boundary edges do not cross, and there are no ``holes'' in the polygon.
A \textit{hole} inside a polygon is simply another polygon which acts as an obstacle that blocks the visibility.
The holes can be polygonal \cite{Wein_voronoi}, disk \cite{Kim_shortestPF} and even mobile \cite{Khaili_movingObs}. 

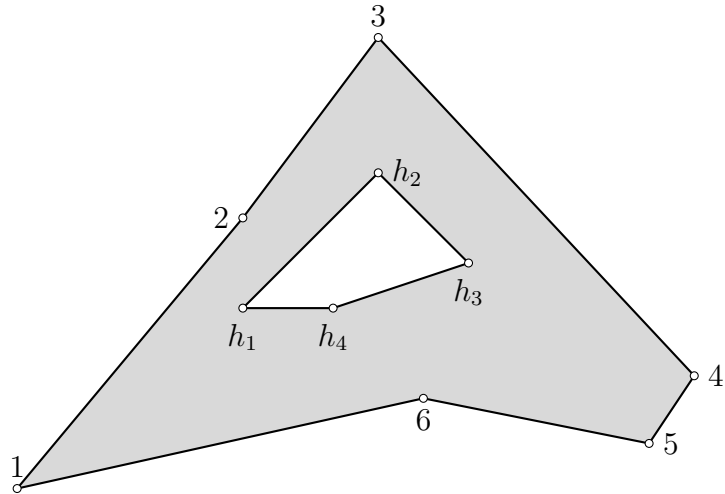
\begin{figure}[htbp]
	\centering
	\begin{tikzpicture}[scale=1.2]
		\tikzstyle{every node}=[draw, fill=white, shape=circle, minimum size=3pt,inner sep=1pt];
		\coordinate (1) at (0,0);
		\coordinate (2) at (2.5,3);
		\coordinate (3) at (4,5);
		\coordinate (4) at (7.5,1.25);		
		\coordinate (5) at (7,0.5);
		\coordinate (6) at (4.5,1);
		
		\coordinate (h1) at (2.5,2);
		\coordinate (h2) at (4,3.5);
		\coordinate (h3) at (5,2.5);
		\coordinate (h4) at (3.5,2);
		
		\fill[gray, opacity=0.3] (1) foreach \i in {2,...,6} {--(\i)} --cycle;   
		\fill[white] (h1) foreach \i in {2,...,4} {--(h\i)} --cycle;  
		
		\draw[thick] (1) foreach \i in {2,...,6} {--(\i)} --cycle;
		\draw[thick] (h1) foreach \i in {2,...,4} {--(h\i)} --cycle;
		
		\node [label=$1$] at (1) {};
		\node [label=left:$2$] at (2) {};
		\node [label=$3$] at (3) {};
		\node [label=right:$4$] at (4) {};
		\node [label=right:$5$] at (5) {};
		\node [label=below:$6$] at (6) {};
		
		\node [label=below:$h_1$] at (h1) {};
		\node [label=right:$h_2$] at (h2) {};
		\node [label=below:$h_3$] at (h3) {};
		\node [label=below:$h_4$] at (h4) {};
		
	\end{tikzpicture}
	\caption{A polygon with 6 vertices, and a hole with 4 vertices which is inside that polygon.}
	\label{fig:polygonholes}
\end{figure}

%
%

\subsection{Visibility and guarding} \label{sec:visibility}
We describe the visibility relation among a set of geometric objects using \emph{visibility graphs}.
Visibility, despite having various meanings in daily usage, has a specific meaning when it comes to computational geometry.
A point $p$ in a polygon $\PP$ is said to be visible to another point object $q$, if the line segment $\overline{pq}$ drawn between $p$ and $q$ belongs to $\PP$. See Figure~\ref{fig:polygonvisexp} for an example.

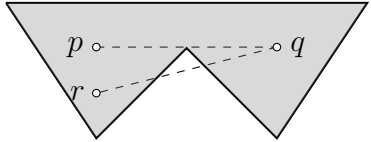
\begin{figure}
	\centering
	\begin{tikzpicture}[scale=1.2]
		\tikzstyle{every node}=[draw, fill=white, shape=circle, minimum size=1pt,inner sep=1pt];
		\coordinate (1) at (0,0);
		\coordinate (2) at (1,1);
		\coordinate (3) at (2,0);
		\coordinate (4) at (0,1.01);		
		\coordinate (5) at (2,1.01);
		\coordinate (6) at (0,0.5);
		\draw[dashed] (4)--(5);
		\draw[dashed] (5)--(6);
		\coordinate (i1) at (-1,1.5);
		\coordinate (i2) at (1,1.5);
		\coordinate (i3) at (3,1.5);
		
		\draw[thick] (i1)--(i2)--(i3)--(3)--(2)--(1)--(i1);
		\fill[gray, opacity=0.3] (i1)--(i2)--(i3)--(3)--(2)--(1)--(i1);

		\node [label=left:$p$] at (4) {};
		\node [label=right:$q$] at (5) {};
		\node [label=left:$r$] at (6) {};
	
	\end{tikzpicture}
	\caption{Two points $p$ and $q$ in a polygon see each other because $\overline{pq}$ is never outside the polygon, even though a vertex of the polygon is on the segment. However, $r$ and $q$ do not see each other because $\overline{rq}$ crosses the exterior of the polygon.}
	\label{fig:polygonvisexp}
\end{figure}

In Figure~\ref{fig:robots}, we show an example room with five robots, $a$, $b$, $c$, $d$, $e$, and two security cameras $X$, $Y$ in it.
The visibility relations of the robots and the cameras are denoted by lines, and the walls of the room are denoted by thick lines.
As seen in the figure, $X$ only sees $a$ since $c$, $d$ and $b$ are on the other side of the wall, and $e$ is behind $a$.
Similarly, $b$ also cannot see $e$ because $d$ blocks the visibility.

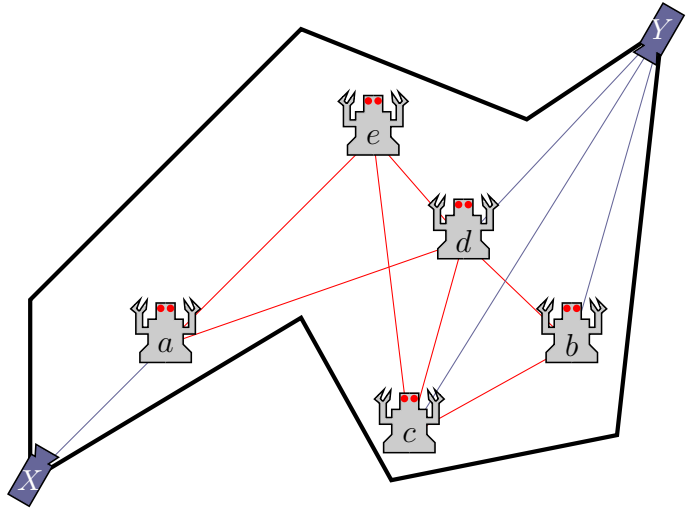
\begin{figure}
	\centering
	\begin{tikzpicture}[scale=1.2]
		\coordinate (1) at (0,0);
		\coordinate (2) at (3,1.8);
		\coordinate (3) at (4,0);
		\coordinate (4) at (6.5,0.5);
		\coordinate (5) at (7,5);
		\coordinate (6) at (5.5,4);
		\coordinate (7) at (3,5);
		\coordinate (8) at (0,2);
		
		\draw[ultra thick] (1) foreach \i in {2,...,8} {--(\i)} -- cycle;
		
		\tikzstyle{every path}=[draw, red];
		\coordinate (a) at (1.5,1.5);
		\coordinate (b) at (6,1.5);
		\coordinate (c) at (4.2,0.5);
		\coordinate (d) at (4.8,2.65);
		\coordinate (e) at (3.8,3.8);
		\draw (b)--(c)--(d)--(e)--(a);
		\draw (a)--(d);
		\draw (b)--(d);
		\draw (c)--(e);
		
		\tikzstyle{every path}=[draw, gray!80!blue];
		\draw (1)--(a);
		\draw (5)--(b);
		\draw (5)--(c);
		\draw (5)--(d);

		\tikzstyle{every path}=[draw, black];
		\robot{(1.5,1.5)}{$a$}
		\robot{(6,1.5)}{$b$}
		\robot{(4.2,0.5)}{$c$}
		\robot{(4.8,2.65)}{$d$}
		\robot{(3.8,3.8)}{$e$}
		
		\camera{(0,0)}{60}{$X$}
		\camera{(7,5)}{-120}{$Y$}
	\end{tikzpicture}
	\caption{Five robots, two cameras, and their visibility relations.}
	\label{fig:robots}
\end{figure}

Note that in Figure~\ref{fig:robots}, the visibility relations are defined in 2D.
In a real-world scenario, the cameras see everywhere in the room as they are usually installed close to the ceiling.
However, let us mention that in the scope of this thesis, we always assume that the visibility relations are defined in the Euclidean plane.
Moreover, unlike the picture shown in Figure~\ref{fig:robots}, the visibility relations are defined between pairs of points.
Thus, an \emph{obstacle} between two points $p$ and $q$ in a polygon $\PP$, is simply another point $o$, which lies on the line segment $\overline{pq}$ drawn between them, and is not a vertex of $\PP$.
The room, on the other hand, is modeled as a \emph{polygon} (see Section~\ref{sec:polygon}).

\emph{Guarding} a polygon $\mathcal{P}$ means placing a set of \emph{guards} $\{g_1, \dots, g_k\}$ into $\mathcal{P}$ such that every point inside $\mathcal{P}$ is seen by at least one of the guards $g_i$ where $1 \leq i \leq k$.
In Figure~\ref{fig:cameras}, two cameras, $X$ and $Y$ are acting as guards and they are guarding the whole room given in Figure~\ref{fig:robots} without the robots.

\begin{figure}
	\centering
	\begin{tikzpicture}[scale=1.2]
		\coordinate (1) at (0,0);
		\coordinate (2) at (3,1.8);
		\coordinate (3) at (4,0);
		\coordinate (4) at (6.5,0.5);
		\coordinate (5) at (7,5);
		\coordinate (6) at (5.5,4);
		\coordinate (7) at (3,5);
		\coordinate (8) at (0,2);
		
		\coordinate (X') at (6.904, 4.142);
		\coordinate (Y') at (0, 0.333);

		\draw[ultra thick] (1) foreach \i in {2,...,8} {--(\i)} -- cycle;
		
		\fill[color=red, opacity=0.2] (X')--(5)--(6)--(7)--(8)--(1)--(2)--cycle;
		\fill[color=blue, opacity=0.2] (Y')--(1)--(2)--(3)--(4)--(5)--cycle;
	
		\camera{(0,0)}{60}{$X$}
		\camera{(7,5)}{-120}{$Y$}
	\end{tikzpicture}
	\caption{Two cameras (guards) seeing the whole room. Orange part is seen (guarded) only by $X$, blue part is seen (guarded) only by $Y$, and the purple part is seen (guarded) by both of the cameras.}
	\label{fig:cameras}
\end{figure}
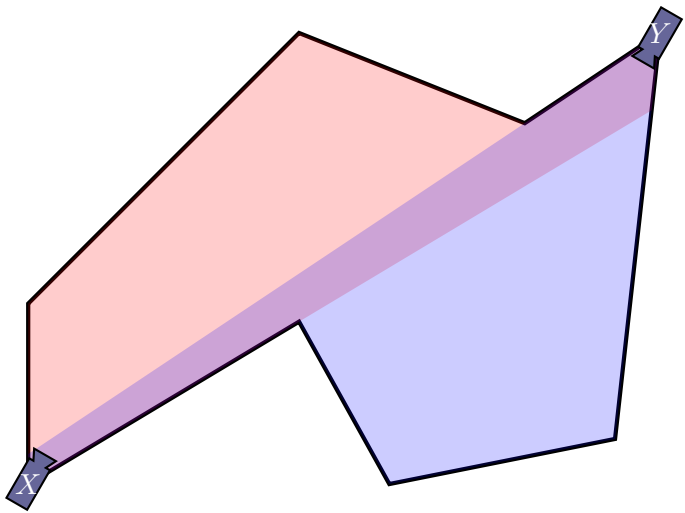

In general, for a polygon with $n$ vertices, $\lfloor n/3 \rfloor$ guards might be needed to guard the whole polygon \cite{c-actpg-75}.
When the guards are restricted to be on the vertices, finding the minimum number is an NP-hard problem \cite{os-snpd-83}.
We give details on this problem in Section~\ref{sec:problems}

Based on the definition of visibility, we give the definitions of two restricted types of polygons, namely, \emph{funnels} and \emph{weak-visibility polygons}.

\subsubsection{Funnels} \label{sec:funneldef}
A funnel is a special type of polygon which consists of two concave polygonal subchains $\mathcal{L} = (l_1, \dots, l_k)$ and $\mathcal{R} = (r_1, \dots, r_m)$ that meet at a vertex $l_k = r_m = \alpha$ called the \emph{apex}, and a \emph{base edge} $l_1r_1$ connecting the other ends of these chains.
In a funnel, a vertex can see only its two immediate neighbors from the same subchain, and some other vertices from the opposite subchain.
In Figure~\ref{fig:funnel}, we see an example funnel with 11 vertices where the length of the left and right polygonal chains are equal to 5.

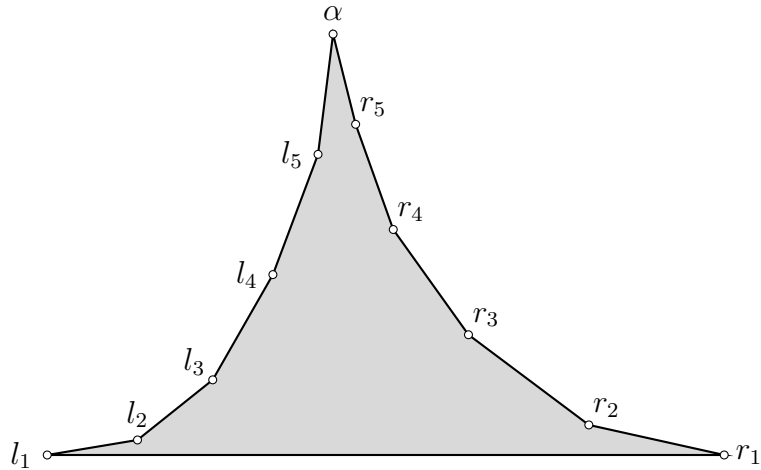
\begin{figure}
	\centering
	\begin{tikzpicture}[scale=2]
		\tikzstyle{every node}=[draw, fill=white, shape=circle, minimum size=3pt,inner sep=1pt];
		\coordinate (1) at (0,0);
		\coordinate (2) at (0.6,0.1);
		\coordinate (3) at (1.1,0.5);
		\coordinate (4) at (1.5,1.2);		
		\coordinate (5) at (1.8,2);
		\coordinate (6) at (1.9,2.8);
		\coordinate (7) at (2.05,2.2);
		\coordinate (8) at (2.3,1.5);
		\coordinate (9) at (2.8,0.8);
		\coordinate (10) at (3.6,0.2);
		\coordinate (11) at (4.5,0);
		
		\fill[gray, opacity=0.3] (1) foreach \i in {2,...,11} {--(\i)} --cycle;   
		\draw[thick] (1) foreach \i in {2,...,11} {--(\i)} --cycle;
		
		\node[label=left:$l_1$] at (1) {};
		\node[label=above:$l_2$] at (2) {};
		\node[label=135:$l_3$] at (3) {};
		\node[label=left:$l_4$] at (4) {};
		\node[label=left:$l_5$] at (5) {};
		\node[label=above:$\alpha$] at (6) {};
		\node[label=45:$r_5$] at (7) {};
		\node[label=45:$r_4$] at (8) {};
		\node[label=45:$r_3$] at (9) {};
		\node[label=45:$r_2$] at (10) {};
		\node[label=right:$r_1$] at (11) {};

	\end{tikzpicture}
	\caption{A funnel with 11 vertices.}
	\label{fig:funnel}
\end{figure}

\subsubsection{Weak-visibility polygons} \label{sec:weakvisdef}
A weak visibility polygon is a polygon where there exists an edge $e_c$, called a \emph{common edge}, or a \emph{base edge} such that every point inside the polygon sees at least one point on $e_c$.
The clockwise ordering on a weak-visibility polygon is usually given such that the common edge is the polygonal edge between the first and the last vertex.
In Figure~\ref{fig:weakvis}, we see a weak-visibility polygon with 25 vertices where the edge between the vertices $1$ and $25$ is the common edge, and every other vertex sees at least one point on that edge.

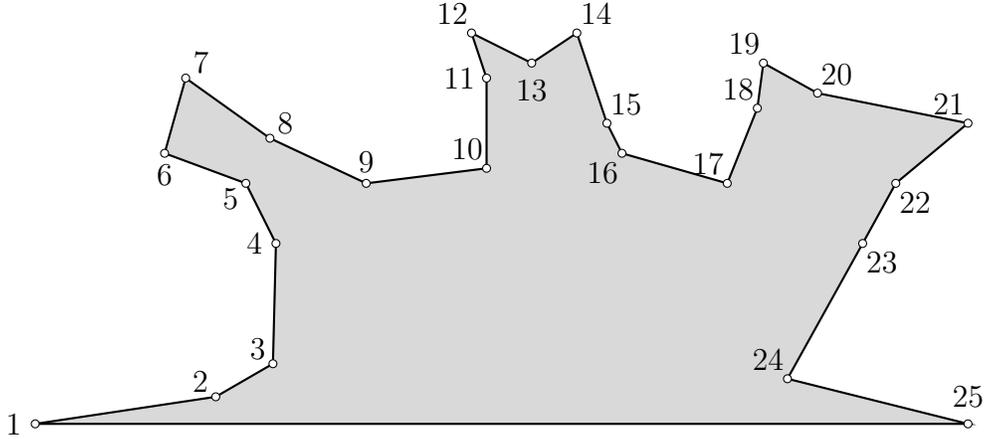
\begin{figure}
	\centering
	\begin{tikzpicture}[scale=0.4]
		\tikzstyle{every node}=[draw, fill=white, shape=circle, minimum size=3pt,inner sep=1pt];
		\coordinate (1) at (0,0) {};
		\coordinate (2) at (6,0.9) {};
		\coordinate (3) at (7.9,2) {};
		\coordinate (4) at (8,6) {};
		\coordinate (5) at (7,8) {};
		\coordinate (6) at (4.3,9) {};
		\coordinate (7) at (5,11.5) {};
		\coordinate (8) at (7.8,9.5) {};
		\coordinate (9) at (11,8) {};
		\coordinate (10) at (15,8.5) {};
		\coordinate (11) at (15,11.5) {};
		\coordinate (12) at (14.5,13) {};
		\coordinate (13) at (16.5,12) {};
		\coordinate (14) at (18,13) {};
		\coordinate (15) at (19,10) {};
		\coordinate (16) at (19.5,9) {};
		\coordinate (17) at (23,8) {};
		\coordinate (18) at (24,10.5) {};
		\coordinate (19) at (24.2,12) {};
		\coordinate (20) at (26,11) {};
		\coordinate (21) at (31,10) {};
		\coordinate (22) at (28.6,8) {};
		\coordinate (23) at (27.5,6) {};
		\coordinate (24) at (25,1.5) {};
		\coordinate (25) at (31,0) {};
		
		\fill[gray, opacity=0.3] (1) foreach \i in {2,...,25} {--(\i)} --cycle;   
		\draw[thick] (1) foreach \i in {2,...,25} {--(\i)} --cycle;
		
		\node[label=left:$1$] at (1) {};
		\node[label=135:$2$] at (2) {};
		\node[label=135:$3$] at (3) {};
		\node[label=left:$4$] at (4) {};
		\node[label=-135:$5$] at (5) {};
		\node[label=below:$6$] at (6) {};
		\node[label=45:$7$] at (7) {};
		\node[label=45:$8$] at (8) {};
		\node[label=above:$9$] at (9) {};
		\node[label=135:$10$] at (10) {};
		\node[label=left:$11$] at (11) {};
		\node[label=135:$12$] at (12) {};
		\node[label=below:$13$] at (13) {};
		\node[label=45:$14$] at (14) {};
		\node[label=45:$15$] at (15) {};
		\node[label=-135:$16$] at (16) {};
		\node[label=135:$17$] at (17) {};
		\node[label=135:$18$] at (18) {};
		\node[label=135:$19$] at (19) {};
		\node[label=45:$20$] at (20) {};
		\node[label=135:$21$] at (21) {};
		\node[label=-45:$22$] at (22) {};
		\node[label=-45:$23$] at (23) {};
		\node[label=135:$24$] at (24) {};
		\node[label=above:$25$] at (25) {};
		
	\end{tikzpicture}
	\caption{A weak-visibility polygon with 25 vertices.}
	\label{fig:weakvis}
\end{figure}
	
\subsubsection{Visibility graphs}
A visibility graph is a simple graph $G = (V,E)$ defined over a set $\mathcal{S}$ of geometric entities where a vertex $u \in V$ represents a geometric entity $s_u \in \mathcal{S}$, and the edge $uv \in E$ exists if and only if $s_u$ and $s_v$ are mutually visible (or see each other).
In the literature, visibility graphs were studied considering various geometric sets such as the vertex set of a simple polygon \cite{o-agta-87}, of a polygon with holes \cite{Wein_voronoi}, a set of points \cite{Cardinal_pointcomplexity}, a set of line segments \cite{Everett_planarsegment}, a set of horizontal line segments \cite{Duchet_planar}, a set of rectangles \cite{Bose_rectVis}, a set of arcs \cite{Hutchinson_arcVis}, along with different visibility models such as line-of-sight visibility \cite{Garey_lineOfSight}, $\alpha$-visibility \cite{Ghodsi_alpha}, $\pi$-visibility \cite{Urrutia_artGalleryAndIllum}, and vertex-edge visibility \cite{ORourke_vertexEdgeVis}.

When we consider the visibilities not only among discrete elements such as vertices of a polygon, but also the \emph{area} which is visible from a point, then we cannot model such a scenario using graphs.
This is simply because there are uncountably infinitely many points in a given finite area, and if we attempt to draw a graph where vertices correspond to those points, then we end up with an infinite graph, which is out of the scope of this thesis.

\section{Intersection graphs} \label{sec:intersection}
In this section, we describe the intersection graphs. 
For the definition of the fundamental graph theory terminology and notations, we refer the reader to Section~\ref{sec:graphs}.

An \emph{intersection graph} is a graph $G = (V,E)$ such that a vertex $u \in V$ is a set, and there exists an edge $uv \in E$ if, and only if the sets $u$ and $v$ intersect.
The elements of these sets might be discrete, e.g. $\mathbb{N}$, $\mathbb{Z}^-$, $\{-23.0001, \sqrt{-13}, \pi,\}$ or continuous, e.g. $\mathbb{R}^+$, $\mathbb{I}$, $S = \{(x,y)\ | \ 0 <x\leq 100,\ -100 < y \leq 0 \}$.
Since a geometric entity (sphere, square, ellipse, line, etc.) is defined by a set of points, a graph which represents the intersection relations among a set of geometric entities is also referred to as an intersection graph.

\subsubsection*{Unit interval graphs}

An \emph{interval} on the number real is defined by a pair $(a,b)$ of real numbers, where $a < b$, and contains all the real numbers lying between $a$ and $b$.
That is, $(a,b) = \{x\in \mathbb{R}\ | \ a < x < b \}$.
If the interval is \emph{closed}, then it is denoted by $[a,b]$, and thus written as  $[a,b] = \{x\in \mathbb{R}\ | \ a \leq x \leq b \}$.

In Figure~\ref{fig:interval}, we see an interval on the real line.
The vertical short lines, labeled $a$ and $b$, denote the beginning and the end of the interval, respectively.
The horizontal long line is the interval itself, and the thin black line is the real line.

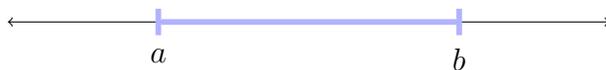
\begin{figure}[htbp]
	\centering
	\begin{tikzpicture}[scale = 2, endpoint/.style = {rectangle, inner sep=0pt, minimum height=10pt, minimum width=0pt, line width=0.75mm, draw}]
		
		\draw[<->] (-2,0) -- (2,0);
		\node[endpoint, draw= blue!30,label=below:$a$] (a) at (-1,0) {};
		\node[endpoint, draw= blue!30,label=below:$b$] (b) at (1,0) {};
		\draw[line width=0.75mm, draw= blue!30] (a)--(b);
	\end{tikzpicture}
	\caption{An interval $(a,b)$ on the real number line.}
	\label{fig:interval}
\end{figure}

The intersection graph of a set of intervals is called an \emph{interval graph}.

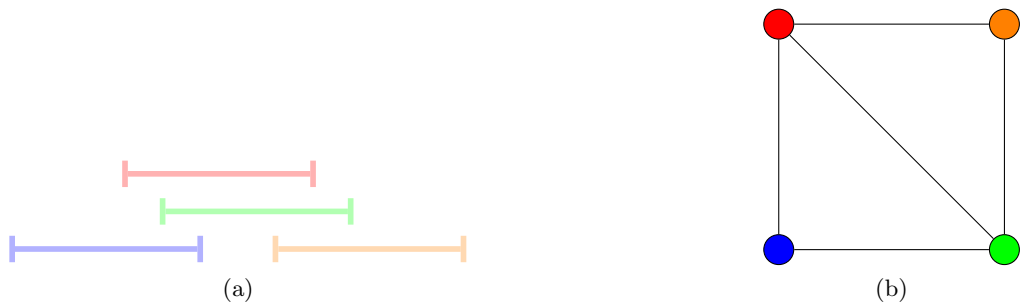
\begin{figure}[htbp]
	\centering
	\subfloat[]{
		
		\begin{tikzpicture}[scale = 2.5, endpoint/.style = {rectangle, inner sep=0pt, minimum height=10pt, minimum width=0pt, line width=0.75mm, draw}]
			
			\node[endpoint, draw= blue!30] (a) at (-0.6,0) {};
			\node[endpoint, draw= blue!30] (b) at (0.4,0) {};
			\node[endpoint, draw= red!30] (c) at (0,0.4) {};
			\node[endpoint, draw= red!30] (d) at (1,0.4) {};
			\node[endpoint, draw= green!30] (e) at (0.2,0.2) {};
			\node[endpoint, draw= green!30] (f) at (1.2,0.2) {};
			\node[endpoint, draw= orange!30] (g) at (0.8,0) {};
			\node[endpoint, draw= orange!30] (h) at (1.8,0) {};
			\draw[line width=0.75mm, draw=blue!30] (a)--(b);
			\draw[line width=0.75mm, draw=red!30] (c)--(d);
			\draw[line width=0.75mm, draw=green!30] (e)--(f);
			\draw[line width=0.75mm, draw=orange!30] (g)--(h);
		\end{tikzpicture}
		
	}
	~\hfill~
	\subfloat[]{
		\begin{tikzpicture}[scale = 0.6]
			\tikzstyle{every node}=[draw, shape=circle, minimum size=6pt,inner sep=4pt];
			\node[fill=red] (blue) at (0,5) {};
			\node[fill=blue] (red) at (0,0) {};
			\node[fill=green] (green) at (5,0) {};
			\node[fill=orange] (orange) at (5,5) {};
			\draw (red)--(blue)--(green)--(red);
			\draw (blue)--(orange)--(green);
		\end{tikzpicture}	
	}
	
	\caption{(a) A set of intervals, and (b) the disk graph that corresponds to the set given in (a).}
	\label{fig:intervalGraph}
\end{figure}

The disks in Figure~\ref{fig:intervalGraph} are of same length.
In this case, we can simply assume that every interval has length 1, and then the graph is called a \emph{unit interval graph}.

\subsubsection*{Unit disk graphs}
A \emph{disk} $\DD$ is a closed region in the Euclidean plane that is defined by a pair of coordinates $(x,y) \in \mathbb{R}^2$ and a radius $r \in \mathbb{R}$.
Formally, it is written as $\DD = \{(x,y) \in \mathbb{R}^2\ |\ (x-a)^2 + (y-b)^2 \leq r^2\}$.
In Figure~\ref{fig:disk}, there is a disk with radius $r$ and centered at $(x,y)$.
The blue area is the \emph{interior} of the disk, the black circle is the \emph{boundary} of the disk, and the line segment of that is drawn from the center to a point on the boundary indicates the radius of $\DD$.

\begin{figure}[htbp]
	\centering
	\begin{tikzpicture}
		\tikzstyle{every node}=[draw, fill=black, shape=circle, minimum size=2pt,inner sep=0pt];
		
		\filldraw[fill=blue!30,draw=black] (0,0) circle (2cm);
		\node[label=120:{$(x,y)$}]  at (0,0) {};
		\draw (0,0)--(1.7320508075,1);
		\node[draw=none,fill=none] at (1,0.3) {$r$};
	\end{tikzpicture}
	\caption{A disk centered at $(x,y)$ with radius $r$.}
	\label{fig:disk}
\end{figure}
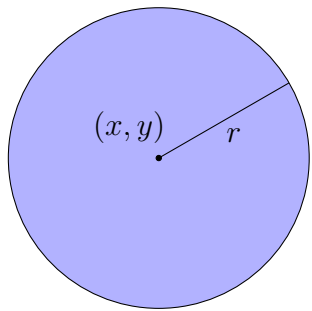

A disk is said to be \emph{open} if the boundary is not an element of the disk.
We always assume that the disks are closed, that is, the disk contains also the boundary.

In Figure~\ref{fig:diskGraph} we see a set of disks, and a graph.
The color of a vertex indicates to which disk it corresponds to.
Two vertices are adjacent if, and only if the corresponding disks intersect.
Such a graph is called a \emph{disk graph}.

\begin{figure}[htbp]
	\centering
	\subfloat[]{
		\begin{tikzpicture}[scale = 0.6]
			
			\filldraw[fill=blue,draw=black,opacity=0.3] (0,0) circle (2cm);
			\filldraw[fill=red,draw=black,opacity=0.3] (0,1.6) circle (2cm);
			\filldraw[fill=green,draw=black,opacity=0.3] (1.5,2.3) circle (2cm);
			\filldraw[fill=orange,draw=black,opacity=0.3] (-1.5,-2.3) circle (2cm);
		\end{tikzpicture}
	}
	~\hfill~
	\subfloat[]{
		\begin{tikzpicture}[scale = 0.6]
			\tikzstyle{every node}=[draw, shape=circle, minimum size=6pt,inner sep=4pt];
			\node[fill=blue] (blue) at (0,5) {};
			\node[fill=red] (red) at (0,0) {};
			\node[fill=green] (green) at (5,0) {};
			\node[fill=orange] (orange) at (5,5) {};
			\draw (red)--(blue)--(green)--(red);
			\draw (blue)--(orange);
		\end{tikzpicture}
	}
	
	\caption{(a) A set of disks in the Euclidean plane, and (b) the disk graph that corresponds to the set given in (a).}
	\label{fig:diskGraph}
\end{figure}
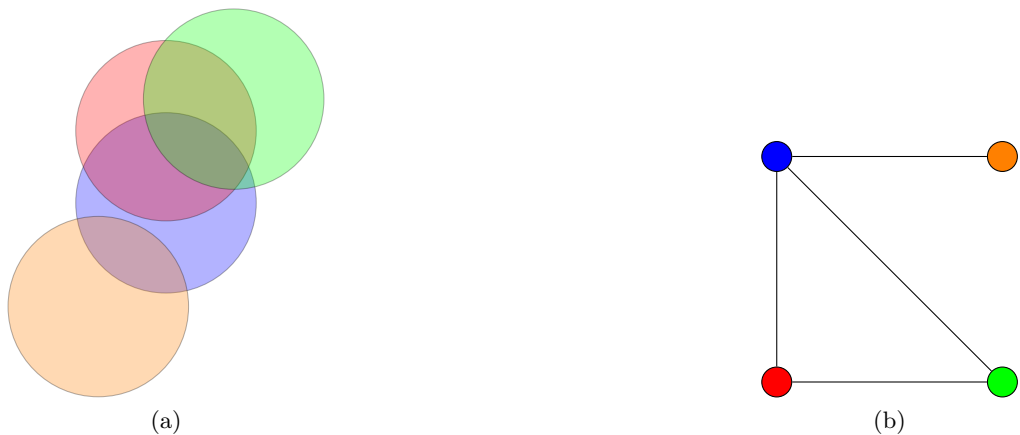

The disks in Figure~\ref{fig:diskGraph} are of same radius.
In this case, regardless of the actual radii, we can simply assume that every disk has radius 1, and then the graph is called a \emph{unit disk graph}.

\subsubsection*{Unit disk visibility graphs}

Unit disk graphs are the intersection graphs of a set of equal radius (or unit) disks in the Euclidean plane. Each vertex in the graph corresponds to a disk, and there exists an edge between a pair of vertices if the disks corresponding to that pair intersect.
Similar to visibility graphs, unit disk graphs have several applications, and they are mainly used to model sensor networks \cite{Clark_UDmaxclique,pathPlanning,rangebased}. 

Transmitters that are embedded into robots, or on sensor nodes form a wireless sensor network that can be modeled using graph data structure.
In the graph of a sensor network, vertices correspond to sensor nodes, and edges correspond to their pairwise communication links.
Note that the sensors have limited sensing range, and a pair of nodes can transmit data to each other if and only if they are inside each others' communication range.

Based on the real world applications, we assume that if a pair of objects (sensors, cameras, guards etc.) are too far from each other, then they do not see each other.
To model this notion, we adapt the unit disk graph model.

Let $\PP$ be a simple polygon, and let $G$ be the visibility graph of $\PP$. 
$G$ is called a \emph{unit disk visibility graph} of $\PP$ if the existence of an edge $uv \in E$ means that $p_u$ and $p_v$ see each other, and the Euclidean distance between them is less than 1 unit.
In other words, an element of the geometric set cannot see another element if they are too far apart.
Note that in our previous definition of the unit disk graph, there exists an edge between two vertices, if the corresponding disks intersect. This means that a distance $\leq 2$ is enough. However, in Chapter~\ref{chap:udvg}, we study this problem with the motivation of wireless sensor networks. 
For two sensor nodes to communicate, they must be inside each other's communication range.
Thus, instead of the radius, we assume that the diameter of a unit disk is one unit, for the sake of accuracy.

\begin{figure}[htbp]
	\centering
	\captionsetup[subfigure]{position=b,justification=centering}
	\subfloat[]{
		\begin{tikzpicture}[scale=3.5]
			\draw (0,0) to[grid with coordinates] (1.5,1);
			
			\tikzstyle{every node}=[draw=black, fill=black, shape=circle, minimum size=3pt,inner sep=0pt];
			
			\node (1) at (0,0) {};
			\node (2) at (0.5,0.6) {};
			\node (3) at (0.8,1) {};
			\node (4) at (1.5,0.5) {};		
			\node (5) at (1.4,0.1) {};
			\node (6) at (0.9,0.2) {};
			
			\tikzstyle{every path}=[red]
			
			\draw (1)--(2)--(3)--(4)--(5);
			\draw (2)--(4);
			
			\foreach \i in {1,...,5} 
			{
				\draw (\i)--(6);
			}
			
			\tikzstyle{every path}=[black, dashed]	
			\draw (1)--(4);
			\draw (1)--(5)--(2);
			\draw (3)--(5);
			\draw (1)--(3);
			
		\end{tikzpicture}
		\label{fig:UDVGa}
	}
	~
	\subfloat[]{
		\begin{tikzpicture}[scale=3.3]
			\draw (0,0) to[grid with coordinates] (1.5,1);
			
			\tikzstyle{every node}=[draw=black, fill=black, shape=circle, minimum size=3pt,inner sep=0pt];
			
			\node (1) at (0,0) {};
			\node (2) at (0.5,0.6) {};
			\node (3) at (0.8,1) {};
			\node (4) at (1.5,0.5) {};		
			\node (5) at (1.4,0.1) {};
			\node (6) at (0.9,0.2) {};
			
			\tikzstyle{every path}=[thick]
			\draw (1)--(2);
			\draw (3)--(6);
			\draw (4)--(5);
			
			\tikzstyle{every path}=[red]
			
			\draw (2)--(3)--(4);

			\foreach \i in {1,2,4,5} 
			{
				\draw (\i)--(6);
			}
			
			\tikzstyle{every path}=[black, dashed]	
			
			\draw (1)--(5);
			\draw (3)--(5);
			\draw (1)--(3);
			
		\end{tikzpicture}
		\label{fig:UDVGb}
	}
	
	\subfloat[]{
		\begin{tikzpicture}[scale=3.5]
			\draw (0,0) to[grid with coordinates] (1.5,1);
			
			\tikzstyle{every node}=[draw=black, fill=black, shape=circle, minimum size=3pt,inner sep=0pt];
			
			\node (1) at (0,0) {};
			\node (2) at (0.5,0.6) {};
			\node (3) at (0.8,1) {};
			\node (4) at (1.5,0.5) {};		
			\node (5) at (1.4,0.1) {};
			\node (6) at (0.9,0.2) {};
			
			\fill[gray, opacity=0.3] (1.center)--(2.center)--(3.center)--(4.center)--(5.center)--(6.center)--(1.center);
			
			\tikzstyle{every path}=[thick]
			
			\draw (1)--(2)--(3)--(4)--(5)--(6)--(1);
			
			\tikzstyle{every path}=[red]
			
			\draw (2)--(4);
			\draw (2)--(6);
			\draw (3)--(6);
			\draw (4)--(6);
			
			\tikzstyle{every path}=[black, dashed]	
			
			\draw (1)--(4);
			\draw (2)--(5);
			\draw (3)--(5);
			
		\end{tikzpicture}
		\label{fig:UDVGc}
	}
	~
	\subfloat[]{
		\begin{tikzpicture}[scale=3.5]
			\draw (0,0) to[grid with coordinates] (1.5,1);
			
			\tikzstyle{every node}=[draw=black, fill=black, shape=circle, minimum size=3pt,inner sep=0pt];
			
			\node (1) at (0,0) {};
			\node (2) at (0.5,0.6) {};
			\node (3) at (0.8,1) {};
			\node (4) at (1.5,0.5) {};		
			\node (5) at (1.4,0.1) {};
			\node (6) at (0.9,0.2) {};
			
			\node (h1) at (0.5,0.4) {};
			\node (h2) at (0.8,0.7) {};
			\node (h3) at (1,0.5) {};
			\node (h4) at (0.7,0.4) {};
			
			\fill[gray, opacity=0.3] (1.center)--(2.center)--(3.center)--(4.center)--(h3.center)--(h2.center)--(h1.center)--(1.center);
			\fill[gray, opacity=0.3] (4.center)--(5.center)--(6.center)--(1.center)--(h1.center)--(h4.center)--(h3.center)--(4.center);		
			
			\tikzstyle{every path}=[thick]
			\draw (h1)--(h2)--(h3)--(h4)--(h1);
			
			\draw (1)--(2)--(3)--(4)--(5)--(6)--(1);
			
			\tikzstyle{every path}=[red]
			
			\draw (1)--(h1);
			\draw (1)--(h4);
			
			\draw (2)--(h1);
			\draw (2)--(h2);
			
			\draw (3)--(h2);
			\draw (3)--(h3);	
			
			\draw (4)--(h2);		
			\draw (4)--(h3);
			\draw (4)--(h4);
			\draw (4)--(6);
			
			\draw (5)--(h3);
			\draw (5)--(h4);
			
			\draw (6)--(h1);
			\draw (6)--(h3);
			\draw (6)--(h4);	
			
			\tikzstyle{every path}=[black, dashed]	
			
			\draw (1)--(4);
			\draw (1)--(h2);
			\draw (3)--(5);

		\end{tikzpicture}
		\label{fig:UDVGd}
	}
	\caption{Unit disk visibility relations of (a) a set of points, (b) a set of line segments, (c) a simple polygon, and (d) a polygon with a hole.}
	
	\label{fig:UDVG}
\end{figure}
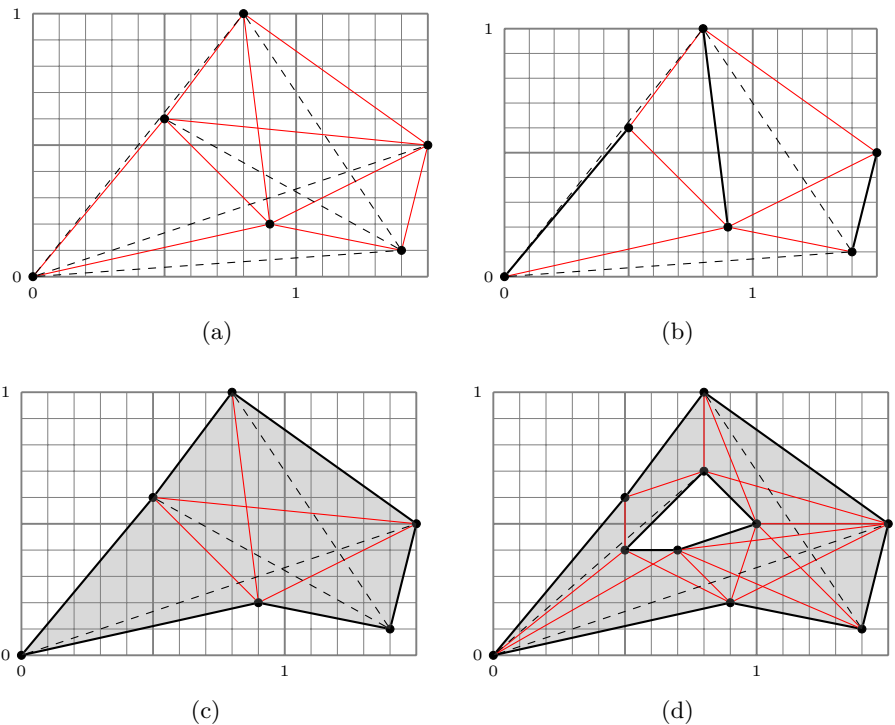

Unit disk point visibility graphs directly follows from this definition while for unit disk segment and polygon visibility graphs, the additional constraints are the followings: $i)$ the edges of unit disk segment visibility graphs cannot to intersect any segment, and $ii)$ the edges of unit disk polygon visibility graphs must totally lie inside the polygon.

\subsubsection*{Axes-parallel unit disk graphs}
\emph{Axes-parallel unit disk graph} is a unit disk graph of a set of disks that are centered on \emph{axes-parallel straight lines}.
The domain of \emph{axes-parallel straight lines} is a set of lines in 2D, where the angle between a pair of lines is either $0$ or $\pi/2$.
This implies that the equation of a straight line is either $y = a$ if it is a horizontal line, or $x = b$ if it is a vertical line, where $a,b \in \mathbb{R}$.

The input for axes-parallel straight lines recognition problem contains two sets, $\mathcal{H},\mathcal{V} \subset \mathbb{R}$, where $\mathcal{H}$ contains the Euclidean distance of each horizontal line from the $x$-axis, and $\mathcal{V}$ contains the Euclidean distance of each vertical line from the $y$-axis.
Thereby in the domain that we use, each vertex is mapped either onto a vertical line, or onto a horizontal line.

We denote the class of axes-parallel unit disk graphs as $\APUD(k,m)$ where $k$ is the number of horizontal lines, and $m$ is the number of vertical lines.

\section{Computational complexity and some combinatorial problems} \label{sec:problems}

\subsubsection{Asymptotic computational complexity}
We first introduce the notation that we use throughout this section.
The set $\Sigma = \{0,1\}$ is called \emph{an alphabet}, and $\Sigma^*$ denotes the set of all strings obtainable from $\Sigma$.
Let $L \subseteq \Sigma^*$ be a \emph{language}, and let $w \in \Sigma^*$ be a \emph{word}.
Throughout this thesis, $\Sigma = \{0,1\}$ always holds, as we do not deal with singleton alphabets.
A \emph{decision problem} $\QQ$ is to determine whether a given word $w \in \Sigma^*$ belongs to a language $L \subseteq \Sigma^*$ over the alphabet $\Sigma$.

Let $\AA$ be an algorithm, $n$ denote the size of the input given to $\AA$, and $f(n)$ be the maximum number of operations executed on any input of size $n$ before $\AA$ terminates.
$\AA$ is called a \emph{polynomial-time algorithm}, if there exists a polynomial $p \in \mathbb{N}_0[X]$ such that, for all $n \in \mathbb{Z}^+$, $f(n) < p(n)$.
A \emph{Turing machine} is an abstract machine which reads and writes on a tape, and is able to simulate the logic of any given algorithm via a finite state automaton.
A Turing machine can be deterministic, i.e. in any given situation, the machine performs at most one pre-determined action, or non-deterministic, i.e. in some situations, the machine might perform one of many actions \cite{computersandintractability}.
This corresponds to the Turing machine having deterministic or non-deterministic controlling automaton.
The notion of \emph{complexity} is defined over the Turing machine.

We now describe four complexity classes which we use throughout this thesis. Namely, P, NP, NP-complete, NP-hard, $\exists\mathbb{R}$, and $\exists\mathbb{R}$-complete.
Let $\QQ$ be a decision problem, and $Q$ be an instance of $\QQ$.

If $Q$ can be correctly decided via a polynomial-time \emph{deterministic Turing machine} algorithm,
then we say that $\QQ$ is a \emph{poly\-no\-mial\-time problem}, and we write ``$\QQ$ is in P'' or ``$\QQ \in$ P.''
In other words, the complexity class P is the set of all decision problems that can be solved in polynomial time.

If, on the other hand, $Q$ can be correctly decided via a polynomial-time \emph{non-deterministic Turing machine} algorithm,
then we say that $\QQ$ is a \emph{non-deterministic polynomial-time problem}, or ``$\QQ$ is an \emph{NP problem},'' in short.
This is equivalent to saying that a given solution to a given instance of any NP problem can be verified in polynomial time.

The complexity class NP-hard is the set $\mathfrak{Q}$ of decision problem $\QQ$ such that there exists a polynomial time reduction from any other NP problem $\QQ'$ to the problem $\QQ$.
Precisely, $\forall \QQ \in \mathfrak{Q}$ and $\forall \QQ' \in \text{NP}$, there exists a polynomial-time reduction from $\QQ'$ to $\QQ$.
This means that if we had a polynomial-time solution to some NP-hard problem, then we would be able to solve any problem in NP in polynomial-time.

A problem $\QQ$ is NP-complete, if $\QQ$ is an NP-hard problem, and is in NP.

Another complexity class, $\exists\mathbb{R}$ (read as ``existential reals'') is defined as follows.
A problem $\QQ$ is referred to as an \emph{$\exists\mathbb{R}$ problem} if any given instance $Q$ of $\QQ$ can be reduced in polynomial time to deciding whether
a given multivariate polynomial has a solution over the real numbers.
Similar to the definition of NP-completeness, a problem $\QQ$ is called an \emph{$\exists\mathbb{R}$-complete problem} if it is an $\exists\mathbb{R}$ problem, and there exists a polynomial-time reduction to an instance of one of  $\exists\mathbb{R}$ problems.
It is still an open problem whether $\exists\mathbb{R}$ problems are in NP.

From this point on, whenever we write ``polynomial-time algorithm,'' we mean ``polynomial-time deterministic Turing machine algorithm.''
Analogously, an \emph{exponential-time algorithm} refers to an algorithm which terminates after executing at most exponentially many operations on a deterministic Turing machine.

We should also note that although the complexity zoo has way more classes,
we only deal with three complexity classes in this thesis: P, NP, and $\exists\mathbb{R}$.

\subsubsection{Proper coloring and chromatic number}
\emph{Assigning a color to a vertex} $v$ means labeling $v$ with a label $\ell \in \mathcal{L}$  from a given set $\mathcal{L}$ of labels.
Conventionally, colors are preferred over labels.
For the sake of simplicity and readability, we also accommodate ourselves to this convention and describe our results using colors instead of labels.
A vertex $v$ is said to have a \emph{unique color} with respect to a set $U$ of vertices when $c(v) \neq c(u)\ \forall u \in U \setminus \{v\}$.

Let $c(v)$ denote the color of a vertex $v$ in a graph $G = (V,E)$.
A proper coloring of a graph is assigning colors to every vertex $v \in V$ in a graph $G = (V,E)$ such that for a pair $u,v \in V$, $c(u) = c(v)$ holds if, and only if $uv \not\in E$.
In other words, every vertex has a unique color among its closed neighborhood.
Chromatic number $\chi(G)$ of a graph $G$ is the minimum number of colors required to obtain a proper coloring of $G$.
The problem of determining the chromatic number, i.e. chromatic number problem, is an NP-complete problem \cite{chromaticNumber}.
Instead of finding the minimum number of colors to properly color a graph, if we ask ``Can a given graph $G$ be properly colored by $k$ colors?'' for some constant $k$, then this problem is called the  $k$-coloring problem.
For $k=1$ and $k=2$, the problem is trivial: a graph $G$ cannot be 1-colored if there exists a pair of adjacent vertices, and can be 2-colored if, and only if $G$ is a bipartite graph.
However, except a few graph classes, $k$-coloring problem remains NP-complete for any value of $k \geq 3$.
Note that if a graph is $k$ colorable, then it is always $k+c$ colorable for $c \in \mathbb{N}$.
Thus, if a graph is 3-colorable, it is also 4-colorable, but the converse does not apply.

\subsubsection{Conflict-free coloring}
A conflict-free coloring of a graph $G$ is assigning colors to some subset $U \subseteq V$ of vertices such that every vertex $v$ has at least one unique color (i.e. a color that does not repeat) in its closed neighborhood $N[v]$.

For a vertex $v$, if there exists no vertex $u \in N[v]$ such that color of $u$ is unique among $N[v] \setminus \{u\}$, then $v$ has a \emph{conflict}.
Note that $u=v$ might hold here due to the definition of a closed neighborhood.
Note that a proper coloring of a graph is also a conflict-free coloring of that graph. 

Similar to proper coloring problem, conflict-free coloring is also NP-complete for a large set of graph classes.
In the most general case, determining whether there exists a conflict-free coloring with only one color in NP-complete.
Following the same notation, we write that conflict-free chromatic number ${\chi_{CF}}(G)$ of  graph $G$ is the minimum number of colors needed to obtain a conflict-free coloring of $G$.

The conflict-free coloring problem was first studied by Biggs with the name \emph{perfect code}, which is essentially conflict-free coloring of a graph using only one color \cite{biggs-1973,DBLP:conf/mfcs/KratochvilK88}.
Later on, this topic arouse interest on polygon visibility graphs when the field of robotics became widespread \cite{chazelle87a,gmr-sjc-97}.

\subsubsection{Polygon guarding}
A point $g$ in a polygon $\mathcal{P}$ is said to be \emph{guarding} a set of points (or an area) $Q$ inside $\mathcal{P}$ if for every $q \in Q$, $g$ and $q$ see each other.
The point $g$, which is chosen as a guard, might be an arbitrary point inside $\mathcal{P}$ as well as a vertex of $\mathcal{P}$.
In this thesis, we assume that our ``guards'' are only placed on the vertices of a polygon.

A polygon $\mathcal{P}$ is \emph{guarded} if, and only if there exists a subset $\Gamma = \{g_1, g_2, \dots, g_k\}$ of vertices of $\mathcal{P}$ are selected as guards, and $\bigcup_{g \in \Gamma} \Sigma(G)$ covers the whole interior of $\mathcal{P}$.

Instead of guarding the whole interior, if our goal is to guard every vertex in a polygon, then we have a different problem.
The main difference is that the former one cannot be expressed as a simple graph, while the latter one can be, since the vertices of a graph are discrete geometric entities.
Thus, if the goal is to guard all the vertices, then a correct solution is referred to as \emph{vertex-to-vertex guarding}.
On the other hand, if the goal is to guard the whole polygon, then a correct solution is referred to as \emph{vertex-to-point}.

One of the best known problems  in computational geometry, the \emph{art gallery problem}, is essentially a vertex-to-point guarding problem \cite{bcko-cgaa-08,o-agta-87}.
The problem is to find the minimum number of guards to guard an art gallery, which is modeled by an $n$-vertex polygon.
This problem  was shown to be NP-hard by Lee and Lin \cite{ll-ccagp-86} and more recently $\exists \mathbb{R}$-complete by Abrahamsen et al. \cite{art-gal-etr}. 
The Art Gallery Theorem, proved by Chv{\'a}tal, shows that $ \lfloor n/3 \rfloor$ guards are sufficient and sometimes necessary to guard a simple polygon \cite{c-actpg-75}.

The guard minimization problem has been studied under many constraints; such as the placement of guards being restricted to the polygonal perimeter or vertices \cite{loglogn-artgal},
the viewers being restricted to vertices, the polygon being terrains \cite{saurabh-artgal,ben-moshe-terrain,terrain-gal}, weakly visible from an edge \cite{weakvis-guard}, with holes or orthogonal \cite{eidenbenz-inapprox,katz-orth-artgal,terrain-np-complete}, with respect to parameterization \cite{para-artgal}, or with respect to approximability \cite{katz-wvp}. 	
For most of these cases the problem remains hard, but interesting approximation algorithms have also been provided \cite{g-apatpp-2010,approx-art-gal-bonnet}.

\subsubsection{Conflict-free guarding of a polygon}

Let $\mathcal{P}$ be a polygon, and $\Gamma = \{g_1, g_2, \dots, g_k\}$ be the guard set which guard $\mathcal{P}$ as a whole.
Suppose that the guards are assigned colors, and when a point $o$ in $\mathcal{P}$ is guarded by some guard $g_i \in \Gamma$, we write ``$o$ sees the color of $g_i$.''

That is, every guard receives a color, and every point inside the polygon must be guarded by at least one unique color.
In addition to above mentioned versions of the art gallery problem (or rather polygon guarding problem), some problems consider not the number of the guards, but the number of colors that are assigned to the guards.
The colors, depending on the scope, determine the types of the guards.
If any observer in the polygon sees at least one guard with a different type, then that polygon has a \emph{conflict-free chromatic guarding} \cite{suri-conflict,Bartschi-2014,DBLP:journals/comgeo/HoffmannKSVW18}.
If every guard that sees any given observer is of different type, then that polygon has a \emph{strong chromatic guarding} \cite{strong-conflict-free}.

\subsubsection{Graph recognition}

Given a graph class $\CC$, and a graph $G$, the \emph{graph recognition problem} is to determine whether $G$ belongs to the class $\CC$.
The class $\CC$ can be any subset of graphs, e.g. polygon visibility graph, bipartite graph, interval grah, unit disk graph, etc.
In our thesis, we deal with a variant of the \emph{unit disk graph recognition} problem.

Analogous to the definition, the \emph{unit disk graph recognition problem} is about deciding whether a given graph $G = (V,E)$ is a unit disk graph.
That is, determining whether there exists a mapping $\Sigma: V \to (\mathbb{R} \times \mathbb{R})$, such that each vertex is the center of a unit disk respecting the intersection property given by the edges of $G$.
The mapping $\Sigma$ is also called the \emph{embedding of $G$ by unit disks}.
This problem is an $\exists\mathbb{R}$-complete problem \cite{sphereAndDotProduct,integerRealization}.

We, therefore, attack this problem by restricting the solution space for the disk centers.
In the conventional setting, the centers can be anywhere in the Euclidean plane.
We introduce the problem called \emph{axes-parallel unit disk graphs recognition}.
In this problem, we use the domain of \emph{axes-parallel straight lines} which is a set of lines in 2D, where the angle between a pair of lines is either $0$ or $\pi/2$.
This implies that the equation of a straight line is either $y = a$ if it is a horizontal line, or $x = b$ if it is a vertical line, where $a,b \in \mathbb{R}$.

The input for axes-parallel straight lines recognition problem contains two sets, $\mathcal{H},\mathcal{V} \subset \mathbb{R}$, where $\mathcal{H}$ contains the Euclidean distance of each horizontal line from the $x$-axis,
and $\mathcal{V}$ contains the Euclidean distance of each vertical line from the $y$-axis.
Thereby in the domain that we use, each vertex is mapped either onto a vertical line, or onto a horizontal line.
We denote the class of axes-parallel unit disk graphs as $\APUD(k,m)$ where $k$ is the number of horizontal lines, and $m$ is the number of vertical lines.
Formally, the input is a graph $G = (V,E)$, where $V = \{1,2,\dots,n\}$, and two sets $\mathcal{H}, \mathcal{V} \subset \mathbb{Q}$ of rational numbers with $|\mathcal{H}| = k$ and $|\mathcal{V}| = m$.
The task is to determine whether there exists a mapping $\Sigma: V \to  (\mathbb{R} \times \mathcal{H}) \cup (\mathcal{V} \times \mathbb{R})$ which is a unit disk realization of $G$.

\chapter{Restricted Types of Polygons} \label{chap:restrictedpolygons}

\section{Summary of the chapter}
In this chapter, study guarding problems mentioned in Section~\ref{sec:problems} on two (restricted) types of polygons, namely the funnels and the weak visibility polygons (defined in Section~\ref{sec:problems}).
We study the relation between those two types of polygons, and we describe our results concerning combinatorial problems on funnels and weak-visibility polygons.

\begin{itemize}
	\item We describe an algorithm which returns the minimum number of guards for a funnel to be fully guarded (Algorithm~\ref{alg:newminfunguard}).
	\item We show that the very same guard set can be used to obtain a conflict-free chromatic guarding of the funnel.
	\item We describe an approximation algorithm for vertex-to-point conflict-free chromatic guarding of a funnel, with only a constant ($+4$) additive error (Algorithm~\ref{alg:apxcfreefunnel}).
	\item We thus prove a direct two-way relation between the optimal number of guards and optimal number of colors needed in a funnel (Section~\ref{sec:cfguardingfunnel}).
	\item By generalizing the previously listed results, we show that every weak visibility polygon on $n$ vertices can be vertex-to-point conflict-free chromatic guarded with only $O (\log^2 n)$ guards (Algorithm~\ref{alg:weakv2pcoloringB}).	
\end{itemize}

Our results concerning the general visibility polygons is in Chapter~\ref{chap:generalpolygons}.

\section{Related work}

Recall that the \emph{guarding} of a polygon is placing ``guards'' into the polygon, in a way that the guards collectively can see the whole polygon.
It is usually assumed that a guard can see any point unless there is an obstacle or a wall between the guard and that point.

In this chapter, we consider the art gallery problem with conflict free guarding property.
In addition to above mentioned versions of art gallery problem (or rather polygon guarding problem), some problems consider not the number of the guards, but the number of colors that are assigned to the guards.
The colors, depending on the scope, determine the types of the guards.
If any observer in the polygon sees at least one guard with a different type, then that polygon has a \emph{conflict-free chromatic guarding} \cite{suri-conflict,Bartschi-2014,DBLP:journals/comgeo/HoffmannKSVW18}.
If every guard that any given observer is of different type, then that polygon has a \emph{strong chromatic guarding} \cite{strong-conflict-free}. 

In our thesis, we focus on the theoretical aspects of this problem.
However, we would like to describe a real-world scenario which, we think, might motivate the reader to both follow the manuscript and improve the current state of art for the sake of potential technological advancements.

Consider a scenario where a mobile robot traverses a room from one point to another, communicating with the wireless sensors placed on the corners of the room.
Even if the robot has full access to the map of the room, it cannot determine its location precisely because of accumulating rounding errors \cite{motionPlanningCourse}.
And thus it needs clear markings in the room to guide itself to the end point in an energy efficient way.
To guide a mobile robot with wireless sensors, two properties must be satisfied.
First one is, no matter where the robot is in the polygon, it should hear from at least one sensor. That is, the placed sensors must together \emph{guard} the whole room and leave no place uncovered.
The second one is, if the robot hears from several sensors, there must be at least one sensor broadcasting with a frequency that is not reused by some other sensor in the range. That is, the sensors must have \emph{conflict-free} frequencies.
If these two properties are satisfied, then the robot can guide itself using the deployed wireless sensors as landmarks.
This problem is also closely related to frequency assignment problem in wireless networks \cite{freqAssignment,suri-conflict}.
One can easily solve this problem by placing a sensor at each corner of the room, and assigning a different frequency to each sensor.
However, as we have mentioned in Chapter~\ref{chap:introduction}, this method becomes very expensive as the number of sensors grow \cite{freqAssignment,cf-app}.
Therefore, the main goal in this problem is minimize the number of different frequencies assigned to sensors.
Since the cost of a sensor is comparatively very low, we do not aim to minimize the number of sensors used.
This notion justifies the conflict-free coloring version of the problem.

The above scenario is geometrically modeled as follows.
The room is a simple polygon with $n$ vertices.
There are $m$ sensors placed in the polygon (usually on some of its vertices),
and two different sensors are given two different colors if, and only if they broadcast in different frequencies.

\section{Funnels} \label{sec:funnels}

In this section, we focus on a special interesting type of polygons -- {\em funnels} (defined in Section~\ref{sec:geometry}).
A polygon $\PP$ is a funnel if, and only if precisely three of the vertices of $\PP$ are convex, and two of the convex vertices share one common edge -- the {\em base} of the funnel~$\PP$.
We denote a funnel by $\FF$ throughout this section.

We use some special notation here. See Figure~\ref{fig:tangents}.
Let the given {funnel} be $\FF$, oriented in the plane as follows. 
On the bottom, there is the horizontal \emph{base} of the funnel -- the line segment $\lseg{l_1}{r_1}$ in the picture.
The topmost vertex of $\FF$ is called the \emph{apex}, and it is denoted by $\alpha$.
There always exists a point $x$ on the base which can see the apex $\alpha$, and then $x$ sees the whole funnel at once.
The vertices on the left side of apex form the \emph{left concave chain}, and analogously, the vertices on the right side of the apex form the 				\emph{right concave chain} of the funnel.
These left and right concave chains are denoted by $\mathcal{L}$ and $\mathcal{R}$ respectively.
We denote the vertices of $\mathcal{L}$ as $l_1, l_2, \ldots, l_k$ from bottom to top.
We denote the vertices of $\mathcal{R}$ as $r_1, r_2, \ldots, r_m$ from bottom to top.
Hence, the apex is $l_k = r_m = \alpha$.

\subsection{Guarding a funnel}

\begin{figure}[tbp]
	\centering
	\begin{tikzpicture}[scale=0.42]
		
		\tikzstyle{every node}=[draw, shape=circle, minimum size=5pt,inner sep=0pt];
		\node[label=$l_1$] (A) at (1.93,4.71) {};
		\node[label=$l_2$,fill=red] (B) at (5,5) {};
		\node[label=$l_3$] (C) at (8.72,6.76) {};
		\node[label=$l_4$] (D) at (10.61,8.21) {};
		\node[label=120:$l_5$] (E) at (13.01,10.71) {};
		\node[label=left:$l_6$] (F) at (14.37,12.5) {};
		\node[label=$\alpha$] (G) at (16.25,16.99) {};
		\node[label=30:$r_7$] (H) at (16.83,14.37) {};
		\node[label=30:$r_6$] (I) at (17.33,13.05) {};
		\node[label=right:$r_5$] (J) at (19.22,10.8) {};
		\node[label=$r_4$] (K) at (22.1,8.49) {};
		\node[label=$r_3$,fill=blue] (L) at (23.5,7.58) {};
		\node[label=$r_2$] (M) at (28,5.1) {};
		\node[label=$r_1$] (N) at (29,4.71) {};
		
		\draw[thick](A)--(B)--(C)--(D)--(E)--(F)--(G)--(H)--(I)--(J)--(K)--(L)--(M)--(N)--(A);
		
		\coordinate (A') at (24.71,6.89);
		\node[minimum size = 3pt, fill=black, label=40:$~p$] (B') at (18.63, 11.46) {};
		\coordinate (C') at (17.25, 13.27);
		\node[minimum size = 3pt, fill=black, label=150:$q~$] (M') at (14.7,13.25) {};
		
		\draw[thick](A)--(B)--(C)--(D)--(E)--(F)--(G)--(H)--(I)--(J)--(K)--(L)--(M)--(N)--(A);
		\draw[very thick, color=red, dashed] (B)--(B');
		\draw[very thick, color=blue, dashed] (L)--(M');
		\node[minimum size = 3pt, fill=black, label=270:$t$] (B'M') at (18,11.15) {};
	\end{tikzpicture}
	
	\caption{A funnel $\FF$ with seven vertices in $\mathcal{L}$ labeled $l_1,
		\ldots, l_7$ from bottom to top, and eight vertices in $\mathcal{R}$ 
		labeled $r_1, \ldots, r_8$, including the apex $\alpha = l_7 = r_8$.
		The picture also shows
		the upper tangent of the vertex $l_2$ of $\mathcal L$ (drawn in
		dashed red), the upper tangent of the vertex $r_3$ of $\mathcal R$
		(drawn in dashed blue), and their intersection~$t$.}
	\label{fig:tangents}
\end{figure}
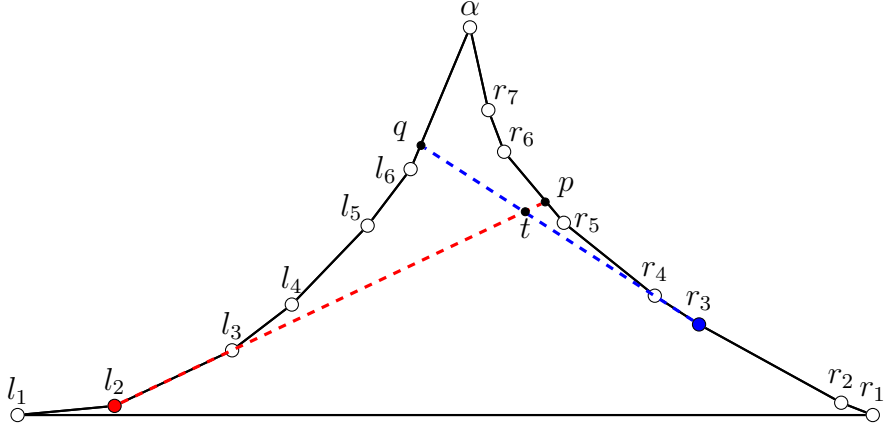

We first consider the problem of minimizing the total number of vertex guards needed to guard all points of a funnel.	
We start by describing a simple procedure (Algorithm~\ref{alg:tightpath}) that provides us with a guard set which may not always be optimal (but very close to the optimum, see Corollary~\ref{cor:alg1-near-optimal}). 
This procedure will be helpful for the subsequent coloring results.
Then, we also refine the simple procedure to compute the optimal number of guards 
in Algorithm~\ref{alg:newminfunguard}.

Let $l_i$ be a vertex on $\mathcal{L}$ which is not the apex. 
We define the \emph{upper tangent} of $l_i$, denoted by $\uptan(l_i)$, as the ray whose origin is $l_i$ and which passes through $l_{i+1}$.
Upper tangents for vertices on $\mathcal{R}$ are defined analogously. 
Let $p$ be the point of intersection of $\mathcal{R}$ and the upper tangent of~$l_i$.
Then we define $\upseg(l_i)$ as the line segment $\lseg{l_{i+1}}p$. 
For the vertices of $\mathcal{R}$, $\upseg$ is defined analogously: 
if $q$ is the point of intersection of $\mathcal{L}$ and the upper tangent of
$r_j \in \mathcal{R}$, then let $\upseg(r_j) := \lseg{r_{j+1}}q$.
See again Figure~\ref{fig:tangents}.

\begin{algorithm}[tbp]
	\KwIn{A funnel $\FF$ with concave chains $\mathcal{L}=(l_1, 
		\ldots, l_k)$ and $\mathcal{R}=(r_1, \ldots, r_m)$.}
	\KwOut{A vertex set guarding all the points of $\FF$.}
	\smallskip
	
	Initialize an auxiliary digraph $G$ with two dummy vertices $x$ and $y$,
	\mbox{and declare $\upseg(x) = \lseg{l_1}{r_1}$}\;
	Initialize $\ES{S} \gets \{x\}$\;
	\While {$\ES{S}$ is not empty}
	{
		Choose an arbitrary $t \in \ES{S}$, and remove $t$ from $\ES{S}$\;
		Let $s=\upseg(t)$	\tcc*{$s$ is a segment inside $\FF$}
		Let $q$ and $p$ be the ends of $s$ on $\mathcal{L}$ and $\mathcal{R}$, respectively\;
		Let $i$ and $j$ be the largest indices such that $l_i$ and
		$r_j$ are not above $q$ and~$p$, resp.\;
		\lIf{$l_{i+1}$ can see whole $s$}{$i' \gets i+1$}
		\lElse(\tcc*[f]{the topmost vertex on the left seeing whole $s$})
		{$i' \gets i$}
		\lIf{$r_{j+1}$ can see whole $s$}{$j' \gets j+1$}
		\lElse(\tcc*[f]{the topmost vertex on the right seeing whole $s$})
		{$j' \gets j$}
		Include the vertices $l_{i'}$ and $r_{j'}$ in~$G$\;
		\ForEach{$z\in\{l_{i'},r_{j'}\}$}
		{\label{lin:zin}%
			Add the directed edge $(t,z)$ to~$G$ \label{lin:wght}\;
			\If{segment $\upseg(z)$ includes the apex $l_k=r_m$}
			{Add the directed edge $(z,y)$ to~$G$
				\tcc*{$y$ is the dummy vertex}}
			\lElse(\tcc*[f]{more guards are needed above~$z$})
			{$\ES{S}\gets \ES{S}\cup\{z\}$}
		}
	}
	Enumerate a shortest path from $x$ to $y$ in~$G$\; 
	Output the shortest path vertices without $x$ and $y$ as the required guard set\;
	\caption{Simple vertex-to-point guarding of funnels (uncolored).}
	\label{alg:tightpath}
\end{algorithm}

The underlying idea of Algorithm~\ref{alg:tightpath} is as follows.
Imagine we proceed bottom-up when building the guard set of a funnel $\FF$.
Then the next guard is placed at the top-most vertex $z$ of $\FF$,
nondeterministically choosing between $z$ on the left and the right chain of
$\FF$, such that no ``unguarded gap'' remains below~$z$.
Note that the unguarded region of $\FF$ after placing a guard at $z$
is bounded from below by~$\upseg(z)$.
The nondeterministic choice of the next guard~$z$ is encoded within a digraph,
in which we then find the desired guard set as a shortest path.
The following is straightforward.

\begin{lem}\label{lem:tightpath-feas}
	Algorithm~\ref{alg:tightpath} runs in polynomial time, and it outputs
	a feasible guard set for all the points of a funnel $\FF$.
\end{lem}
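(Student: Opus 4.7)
The plan is to derive both conclusions from one geometric fact: for any vertex $z$ of $\FF$, the segment $\upseg(z)$ cuts $\FF$ into a lower sub-polygon that is entirely visible from $z$, and an upper sub-polygon that is again a funnel with base $\upseg(z)$. This is an immediate consequence of the concavity of the two chains $\mathcal{L}, \mathcal{R}$ together with the way $\upseg$ is built from the upper tangent: beyond $\upseg(z)$ the chain on the same side as $z$ turns away behind the successor of $z$, which blocks further sight from~$z$, while below $\upseg(z)$ no chain vertex of $\FF$ can enter the cone of vision from~$z$.

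For polynomial running time, I would argue that every non-dummy vertex ever inserted into $G$ is some $l_{i'}$ or $r_{j'}$ of $\FF$; with the standard convention that a vertex is placed into $\ES{S}$ only the first time it appears, the main loop runs at most $O(k+m)=O(n)$ times. Inside one iteration, computing $\upseg(t)$, locating the indices $i,j$, and performing the two visibility tests each cost $O(n)$ by a single scan along the chains, so building $G$ takes polynomial time; the final shortest-path computation is polynomial as well.

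For feasibility, I would prove the following invariant along any path $x=z_0 \to z_1 \to \cdots \to z_t \to y$ in $G$: for each $i \ge 1$, the guard $z_i$ sees the entire slab of $\FF$ lying between $\upseg(z_{i-1})$ and $\upseg(z_i)$. Indeed, by construction $z_i$ is a topmost vertex on one chain that still sees the whole segment $s=\upseg(z_{i-1})$, so by the geometric fact above it sees everything below its own $\upseg(z_i)$, in particular the slab in question. The base case is $\upseg(z_0)=\upseg(x)=\lseg{l_1}{r_1}$, below which no area of $\FF$ lies, and the edge $(z_t,y)$ is inserted precisely when $\upseg(z_t)$ contains the apex, so no area of $\FF$ lies above $\upseg(z_t)$ either. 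Concatenating the slabs recovers all of~$\FF$.

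It remains to observe that an $x$-to-$y$ path in $G$ always exists. Because in every iteration the candidates $l_{i'},r_{j'}$ lie strictly above the endpoints of the current segment on their respective chains, the sequence of processed $\upseg$-segments strictly advances toward the apex, so after at most $O(n)$ steps some chosen $z$ satisfies the apex condition and the edge to $y$ is created. The only nontrivial ingredient in this plan is the visibility claim of the first paragraph; I expect that to be the main obstacle, but it is the standard ``pocket'' argument for funnels and follows from concavity of $\mathcal{L}$ and $\mathcal{R}$ together with the tangent-based definition of $\upseg$.
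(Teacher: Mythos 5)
Your overall architecture---build the digraph, interpret an $x$--$y$ path as a bottom-up guard sequence, and prove by induction that the guards chosen so far cover everything below the current $\upseg$---is exactly the paper's proof, and your runtime and path-existence remarks are fine. The problem is the geometric fact you isolate as ``the only nontrivial ingredient'': it is false as stated. Take $z=l_i$ with $i\ge 3$. The sub-polygon of $\FF$ below $\upseg(l_i)$ contains the entire lower portion of the left chain, including points arbitrarily close to $l_1,\dots,l_{i-2}$; but by the defining property of a funnel, $l_i$ sees no vertex of its own chain other than $l_{i-1}$ and $l_{i+1}$ (the concave chain blocks those lines of sight), so $l_i$ certainly does not see all of the lower sub-polygon. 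Since your induction step explicitly invokes this claim (``it sees everything below its own $\upseg(z_i)$''), the step as written does not go through.

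What you actually need---and what the paper's proof uses---is much weaker: $z_i$ sees only the \emph{strip} of $\FF$ between $\upseg(z_{i-1})$ and $\upseg(z_i)$. This does follow from the algorithm's choice of $z_i$: it was selected to see all of $s=\upseg(z_{i-1})$, and the chain portions bounding that strip consist of (parts of) at most two edges incident to $z_i$ on its own chain together with the contiguous piece of the opposite chain between the endpoint of $s$ and the tangent point of $\upseg(z_i)$, both of which are swept by the fan of sight-lines from $z_i$; combined with the inductive hypothesis that $z_0,\dots,z_{i-1}$ already cover everything below $s$, concatenating the strips covers $\FF$ up to the apex. Replacing your ``lower sub-polygon entirely visible from $z$'' claim by this strip statement repairs the argument and makes it coincide with the paper's.
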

\begin{proof}
	As for the runtime, we observe that the number of considered line segments
	$s$ in the algorithm is, by the definition of $\upseg$, 
	bounded by at most $k+m$ (and it is typically much lower than this bound).
	Each considered segment $\upseg(t)$ of $t \in \ES{S}$ is processed at most once, and it contributes two edges to $G$.
	Overall, a shortest path in $G$ is found in linear time.
	
	We prove feasibility of the output set by induction.
	Let $(x=x_0,x_1,\ldots,x_{a-1},$ $x_a=y)$ be a path in~$G$.
	We claim that, for $0\leq i\leq a$, guards placed at $x_0,x_1,\ldots,x_i$ guard all the points of $\FF$ below $\upseg(x_i)$. 
	This is trivial for $i=0$, and it straightforwardly follows by induction:
	Algorithm~\ref{alg:tightpath} asserts that $x_i$ can see whole $\upseg(x_{i-1})$, and $x_i$ hence also sees the strip between $\upseg(x_{i-1})$ 			and $\upseg(x_{i})$ by basic properties of a funnel.
	Finally, at $x_{a-1}$, we guard whole $\FF$ up to its apex.
\end{proof}

\begin{rem}
	Unfortunately, the guard set produced by Algorithm~\ref{alg:tightpath} may not be optimal under certain circumstances.
	See the example in Figure~\ref{fig:funnelguards}; the algorithm picks the four red vertices, but the funnel can be guarded by the three green 				vertices. 
	Nevertheless, this (possibly non-optimal) guard set will be very useful in the next section in the context of conflict-free coloring.
\end{rem}

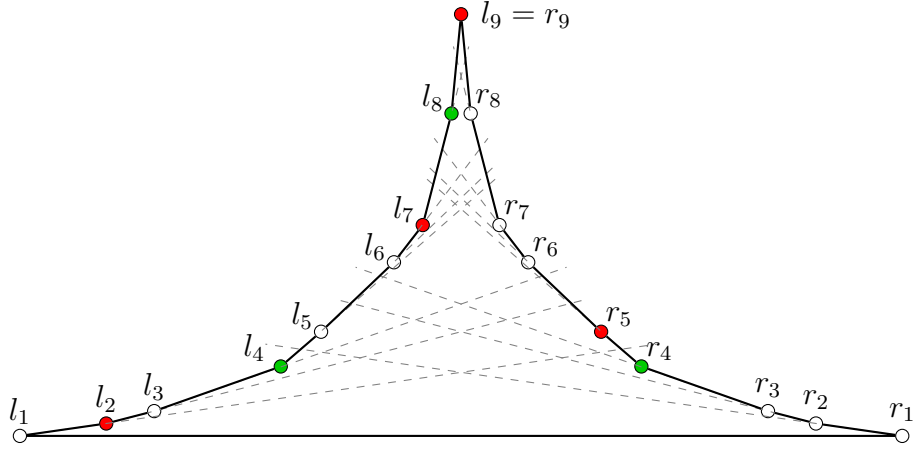
\begin{figure}[tbp]
	\centering
	\begin{tikzpicture}[xscale=0.255, yscale= 0.33]
		
		\tikzstyle{every node}=[draw, shape=circle, minimum size=5pt,inner sep=0pt];
		\node[label=$l_1$] (A) at (-23,0) {};
		\node[fill=red, label=$l_2$] (B) at (-18.5,0.5) {};
		\node[label=$l_3$] (C) at (-16,1) {};
		\node[fill=green!80!black, label=150:$l_4~$] (D) at (-9.4,2.8) {};
		\node[label=150:$l_5$] (E) at (-7.3,4.2) {};
		\node[label=150:$l_6$] (F) at (-3.5,7) {};
		\node[fill=red, label=150:$l_7$] (G) at (-2,8.5) {};
		\node[fill=green!80!black, label=150:$l_8$] (H) at (-0.5,13) {};
		\node[fill=red, label=right:{$~l_9=r_9$}] (I) at (0,17) {};
		
		\node[label=$r_1$] (Q) at (23,0) {};
		\node[label=$r_2$] (P) at (18.5,0.5) {};
		\node[label=$r_3$] (O) at (16,1) {};
		\node[fill=green!80!black, label=30:$r_4$] (N) at (9.4,2.8) {};
		\node[fill=red, label=30:$r_5$] (M) at (7.3,4.2) {};
		\node[label=30:$r_6$] (L) at (3.5,7) {};
		\node[label=30:$r_7$] (K) at (2,8.5) {};
		\node[label=30:$r_8$] (J) at (0.5,13) {};
		
		\coordinate (A') at (10.2,3.7);
		\coordinate (Q') at (-10.2,3.7);
		
		\coordinate (B') at (6.5,5.5);
		\coordinate (P') at (-6.5,5.5);
		\coordinate (C') at (5.5,6.8);
		\coordinate (O') at (-5.5,6.8);
		
		\coordinate (D') at (2,10.5);
		\coordinate (N') at (-2,10.5);
		
		\coordinate (E') at (1.9, 11);
		\coordinate (M') at (-1.9, 11);
		\coordinate (F') at (1.4, 12);
		\coordinate (L') at (-1.4, 12);
		
		\coordinate (G') at (0.4,15.7);
		\coordinate (K') at (-0.4,15.7);
		
		\coordinate (H') at (3.61, 5.6);
		\coordinate (J') at (-3.61, 5.6);
		
		\draw[color=gray, dashed] (A)--(A');
		\draw[color=gray, dashed] (Q)--(Q');
		\draw[color=gray, dashed] (D)--(D');
		\draw[color=gray, dashed] (N)--(N');
		\draw[color=gray, dashed] (E)--(E');
		\draw[color=gray, dashed] (F)--(F');
		\draw[color=gray, dashed] (M)--(M');
		\draw[color=gray, dashed] (L)--(L');
		\draw[color=gray, dashed] (B)--(B');
		\draw[color=gray, dashed] (C)--(C');
		\draw[color=gray, dashed] (P)--(P');
		\draw[color=gray, dashed] (O)--(O');
		\draw[color=gray, dashed] (G)--(G');
		\draw[color=gray, dashed] (K)--(K');
		
		\draw[thick]  (A)--(B)--(C)--(D)--(E)--(F)--(G)--(H)--(I)--(J)--(K)--(L)--(M)--(N)--(O)--(P)--(Q)--(A);
		
	\end{tikzpicture}
	\caption{A symmetric funnel with $17$ vertices. 
		The gray dashed lines show the upper tangents of the vertices.
		It is easy to see that Algorithm~\ref{alg:tightpath} selects $4$ guards,
		up to symmetry, at $l_2,r_5,l_7,l_9$ (the red vertices).
		However, the whole funnel can be guarded by three guards
		at $l_4,r_4,l_8$ (the green vertices), and it will be the task of
		Algorithm~\ref{alg:newminfunguard} to consider such better possibility.}
	\label{fig:funnelguards}
\end{figure}

For the sake of completeness, we now refine the simple approach of Algorithm~\ref{alg:tightpath} to always produce a minimum size guard set.
Recall that Algorithm~\ref{alg:tightpath} always places one next guard based on the position of the previous one guard.
Our refinement is going to consider also pairs of guards (one from the left and one from the right chain) in the procedure. 
We correspondingly extend the definition of $\upseg$ to pairs of vertices as follows.
Let $l_i$ and $r_j$ be vertices of $\FF$ on $\mathcal{L}$ and $\mathcal{R}$, respectively, such that $\upseg(l_i)=\lseg{l_{i+1}}p$ intersects 
$\upseg(r_j)=\lseg{r_{j+1}}q$ in a point~$t$ (see in Figure~\ref{fig:tangents}).
Then we set $\upseg(l_i,r_j)$ as the polygonal line (``$\vee$-shape'') $\lseg pt\cup\lseg qt$.
In case that $\upseg(l_i)\cap\upseg(r_j)=\emptyset$, we simply define
$\upseg(l_i,r_j)$ as the upper one of $\upseg(l_i)$ and	$\upseg(r_j)$.

Algorithm~\ref{alg:newminfunguard}, informally saying,
enriches the two nondeterministic choices of placing the next guard
in Algorithm~\ref{alg:tightpath} with a third choice;
placing a suitable top-most pair of guards $z=(z_1,z_2)$, 
$z_1\in\mathcal{L}$ and $z_2\in\mathcal{R}$,
such that again no ``unguarded gap'' remains below $(z_1,z_2)$.
Figure~\ref{fig:funnelguards} features a funnel in which placing such a pair
of guards $(z_1=l_4,\>z_2=r_4)$ may be strictly better than using any two
consecutive steps of Algorithm~\ref{alg:tightpath}.
On the other hand, we can show that there is no better possibility than one of these three considered steps.
Within the scope of Algorithm~\ref{alg:newminfunguard} (cf.~line~\ref{line:7}),
we extend the definition range of $\upseg(\cdot)$
to include all boundary points of $\mathcal{L}$ and $\mathcal{R}$, as follows.
If $p$ is an internal point of $\lseg{l_i}{l_{i+1}}$, then we set
$\upseg(p):=\upseg(l_i)$.
If $p'$ is an internal point of $\lseg{r_j}{r_{j+1}}$, then we set
$\upseg(p'):=\upseg(r_j)$.

\begin{small}
	\begin{algorithm}[H]
		\KwIn{A funnel $\FF$ with concave chains $\mathcal{L}=(l_1, l_2,
			\ldots, l_k)$ and $\mathcal{R}=(r_1, \ldots, r_m)$.}
		\KwOut{A minimum vertex set guarding all the points of $\FF$.}
		\smallskip
		Initialize an auxiliary digraph $G$ with two dummy vertices $x$ and $y$,
		\mbox{and declare $\upseg(x) = \lseg{l_1}{r_1}$}\;
		Initialize $\ES{S} \gets \{x\}$\;
		\While {$\ES{S}$ is not empty}
		{
			Choose an arbitrary $t \in \ES{S}$, and remove $t$ from $\ES{S}$\;
			Let $s=\upseg(t)$
			\tcc*{$s$ is a segment or a $\vee$-shape}
			Let $i'$ and $j'$ be defined for $s$ as in Algorithm~\ref{alg:tightpath}\;
			
			Let $q$ and $p$ be the ends of $s$ on $\mathcal{L}$ and $\mathcal{R}$, respectively\;
			Let $i''$ and $j''$ be the largest indices such that
			\mbox{$l_{i''}$ lies strictly below $\upseg(p)$ and 
				$r_{j''}$ strictly below $\upseg(q)$}\;
			\label{line:7}
			\tcc*[f]{then $l_{i''}$ and $r_{j''}$ together can see whole $s$}
			
			Include the vertices $l_{i'}$, $r_{j'}$ and $(l_{i''},r_{j''})$ in~$G$\;
		
		\ForEach{$z\in\{l_{i'},r_{j'},(l_{i''},r_{j''})\}$}
		{
			Add the directed edge $(t,z)$ to~$G$, and
			\mbox{\quad assign $(t,z)$ weight $2$ if $z=(l_{i''},r_{j''})$,
				and weight $1$ otherwise}\;
			\label{line:9}
			\If{$\upseg(z)$ includes the apex $l_k=r_m$}
			{Add the directed edge $(z,y)$ to~$G$ of weight~$0$\;}
		\lElse(\tcc*[f]{more guards are needed above~$z$})
		{$\ES{S}\gets \ES{S}\cup\{z\}$}
	}
}
Enumerate a shortest weighted path from $x$ to $y$ in~$G$\; 
Output the shortest path vertices without $x$ and $y$,
but considering the possible guard pairs, as the required guard set\;

\caption{Optimum vertex-to-point guarding of funnels.}
\label{alg:newminfunguard}
\end{algorithm}
\end{small}

\begin{thm}\label{thm:sizeguard-funnel}
Algorithm~\ref{alg:newminfunguard} runs in polynomial time, and it outputs
a feasible guard set of minimum size guarding all the points of a funnel $\FF$.
\end{thm}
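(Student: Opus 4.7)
The plan is to verify three things for Algorithm~\ref{alg:newminfunguard}: polynomial runtime, feasibility of the produced guard set, and optimality of its size.

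For the runtime, every frontier $s$ that arises in $\ES{S}$ is of the form $\upseg(t)$ or $\upseg(t_1,t_2)$, and is uniquely determined by one or two indices on $\mathcal{L}$ and $\mathcal{R}$; hence $|V(G)|=O((k+m)^2)$. Computing $i',j',i'',j''$ per iteration is polynomial, and a final shortest-weighted-path computation is polynomial in $|V(G)|$. Feasibility is then established by an induction along an output path $(x=x_0,x_1,\ldots,x_a=y)$ exactly as in Lemma~\ref{lem:tightpath-feas}, with one extra subcase: when $x_i=(z_1,z_2)$ is a pair vertex, the definitions of $i''$ and $j''$ ensure that $z_1$ and $z_2$ together see all of $\upseg(x_{i-1})$, and funnel geometry then gives coverage of the whole strip between $\upseg(x_{i-1})$ and the ``$\vee$-shape'' $\upseg(z_1,z_2)$.

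The main work is optimality. The weights are designed so that an $x\to y$ path's total weight equals the number of guards it encodes; therefore it suffices to show that every feasible guard set $\Gamma^*\subseteq V(\FF)$ gives rise to an $x\to y$ path in $G$ of weight at most $|\Gamma^*|$. I would proceed by induction on $|\Gamma^*|$. Let $\Gamma_0^*\subseteq\Gamma^*$ be a minimal subset whose visibility regions already cover a neighborhood of the base $\lseg{l_1}{r_1}$. Since in a funnel the upper boundary of any coverage region is either a single segment or a $\vee$-shape, $|\Gamma_0^*|\in\{1,2\}$, and in the size-two case one guard lies on each chain. If $\Gamma_0^*=\{l_a\}$, then $l_a$ sees all of $\upseg(x)$, so $a\leq i'$; concavity of $\mathcal{L}$ lets us replace $l_a$ by $l_{i'}$ in $\Gamma^*$ without loss of feasibility, and the inductive hypothesis applied to the subfunnel above $\upseg(l_{i'})$ yields a path from $l_{i'}$ to $y$ of weight $\leq|\Gamma^*|-1$, which we extend by the weight-$1$ edge $(x,l_{i'})$. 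The case $\Gamma_0^*\subseteq\mathcal{R}$ is symmetric, and the case $|\Gamma_0^*|=2$ is handled analogously, exchanging the two guards for the topmost admissible pair $(l_{i''},r_{j''})$ and prepending the weight-$2$ edge.

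The hardest step will be the geometric exchange argument: verifying that raising a guard on $\mathcal{L}$ from $l_a$ to $l_{i'}$, or replacing two guards by $(l_{i''},r_{j''})$, does not lose coverage anywhere above the previous frontier. This rests on the concavity of $\mathcal{L}$ and $\mathcal{R}$, which implies that the visibility region of $l_{i'}$ above $\upseg(x)$ contains that of $l_a$ above $\upseg(x)$, and symmetrically on the right. A secondary concern is ruling out $|\Gamma_0^*|\geq 3$, which again reduces to the funnel property that any coverage frontier is a single segment or a $\vee$-shape, so two guards on opposite chains always suffice at the bottom-most step of a feasible solution.
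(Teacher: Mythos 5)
Your runtime and feasibility arguments coincide with the paper's. For optimality you take a genuinely different route: the paper does not induct on $|\Gamma^*|$. It introduces a \emph{cover} relation ($A$ covers $D$ if there is an injection sending each guard of $A$ to a guard of $D$ on the same chain and not higher), picks a feasible $A$ covering the given solution $D$ that is \emph{maximal} in this order, and then shows in one pass, by a maximal-path contradiction, that the guards of $A$ literally spell out an $x$--$y$ path in $G$: if the path stalled at a frontier $t$ with $s=\upseg(t)$, the lowest remaining guards $l_i\in\mathcal{L}$, $r_j\in\mathcal{R}$ of $A$ would, by cover-maximality, coincide with the algorithm's candidates ($l_{i'}$/$r_{j'}$ if one of them sees all of $s$, and otherwise the pair $(l_{i''},r_{j''})$, since then $l_i$ sees the end $p$ and $r_j$ sees the end $q$), extending the path. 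Your stepwise exchange is morally the same idea; what the global maximality device buys is that the paper never has to recurse into a modified instance.

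That is exactly where your sketch has a real gap. Your inductive step invokes the hypothesis for ``the subfunnel above $\upseg(l_{i'})$'' with guard set $\Gamma^*\setminus\{l_a\}$, but this is not an instance of the theorem: the base endpoints of that region are in general not vertices of $\FF$, and, more seriously, guards of $\Gamma^*$ lying strictly \emph{below} the new frontier can still be essential for covering points strictly \emph{above} it. Concretely, a right-chain guard $r_d$ whose $\upseg(r_d)$ crosses $\upseg(l_{i'})$ sees the lens between the two chords near $\mathcal{R}$, which lies above your new base; so you can neither discard such guards nor regard $\Gamma^*\setminus\{l_a\}$ as a guard set of the subfunnel. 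You would have to either lift those guards to the frontier (a further exchange you have not supplied) or restate the induction over frontier vertices of the digraph $G$ with guard sets allowed to sit below the frontier --- which is precisely what the paper's cover-maximal $A$ achieves wholesale. A secondary point: your claim $|\Gamma_0^*|\le 2$ is correct, but not for the reason you give (the upper boundary of the union of several visibility regions need not be a segment or a $\vee$-shape); what is actually needed is that along each chain the base-visibility intervals are nested --- a lower guard on $\mathcal{L}$ sees a superset of the base seen by any higher guard on $\mathcal{L}$ --- so a minimal cover of the base uses at most one guard per chain, and in the two-guard case one per chain.
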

\begin{proof}
Proving polynomial runtime is analogous to Lemma~\ref{lem:tightpath-feas},
only now obtaining a quadratic worst-scenario bound.
Likewise the proof of feasibility of the obtained solution is analogous to
the previous proof. 
We only need to observe the following new claim:
if $i''$and $j''$ are defined as on line~\ref{line:7} of
Algorithm~\ref{alg:newminfunguard}, then $l_{i''}$ and $r_{j''}$ together
can see whole $s$ and the strip of $\FF$ from $s$ till
$\upseg(l_{i''},r_{j''})$.
The crucial part is to prove optimality.

Having two guard sets $A,B\subseteq V(F)$, we say that $A$ {\em covers} $B$ if there is an injection $c:A\to B$ such that, for each $a\in A$,
the guard $c(a)$ is on the same (left or right) chain of~$\FF$ as $a$ and not higher than~$a$.
Let $G$ be the digraph constructed by Algorithm~\ref{alg:newminfunguard} on~$\FF$. 
Note that the weight of any $x$--$y$ path in $G$ equals the number of guards placed along it.
Hence, together with claimed feasibility of the computed solution, it is enough to prove:
\begin{itemize}
\item For every feasible vertex guard set $D$ of $\FF$,
there exists a feasible guard set $A$ which covers $D$, and
$A$ has a corresponding directed $x$--$y$ path in the graph~$G$
of Algorithm~\ref{alg:newminfunguard}.
\end{itemize}

Let $A$ be any feasible guard set of $\FF$ which covers given $D$,
and such that $A$ is maximal w.r.t.~the cover relation.
Let $\Pi$ be a maximal directed path in $G$, starting from $x$, such that the set of guards $B_P$ listed in the vertices of $\Pi$ (without~$x$)
satisfies~$B_P\subseteq A$.
Obviously, we aim to show that $\Pi$ ends in~$y$.
Suppose not (it may even be that $\Pi$ is a single vertex~$x$ and $B_P=\emptyset$).
Let $t$ be the last vertex of $\Pi$ and denote by $s=\upseg(t)$ and let $q$ and $p$ be the ends of $s$ on $\mathcal{L}$ and $\mathcal{R}$,
respectively, as in the algorithm.

Let $A'=A\setminus B_P\not=\emptyset$.
Then $s$ has to be guarded from $A'$ (while the whole part of $\FF$ below $s$ is already guarded by $B_P$ by feasibility of the algorithm).
Let $i,j$ be such that $l_i\in A'\cap\mathcal{L}$ and $r_j\in A'\cap\mathcal{R}$ are the lowest guards on the left and right chain.
Assume, up to symmetry, that $l_i$ can see whole~$s$.
By our maximal choice of $A$ we have that no vertex on $\mathcal{L}$ above $l_i$ can see whole~$s$,
and so the digraph $G$ contains an edge from $t$ to $l_i$ (line~\ref{line:9} of Algorithm~\ref{alg:newminfunguard}),
which contradicts maximality of the path~$\Pi$.

Otherwise, neither of $l_i$, $r_j$ can see whole $s$,
and so $l_i$ sees the end $p$ and $r_j$ sees the end~$q$.
Consequently, $l_{i}$ is strictly below $\upseg(p)$ and            
$r_{j}$ strictly below $\upseg(q)$, and they are topmost such vertices
again by our maximal choice of $A$.
Hence the digraph $G$ contains an edge from $t$ to $(l_i,r_j)$,
as previously, which is again a contradiction concluding the proof.
\end{proof}

Lastly, we establish that the difference between Algorithms \ref{alg:tightpath} and \ref{alg:newminfunguard} cannot be larger than~$1$ guard,
because we would like to use the simpler Algorithm~\ref{alg:tightpath} instead of the latter one in subsequent applications.
Let $G^1$ with the source $x^1$ be the auxiliary graph produced by Algorithm~\ref{alg:tightpath},
and $G^2$ with the source $x^2$ be the one produced by Algorithm~\ref{alg:newminfunguard}.
We can prove the following detailed statement by induction on~$i\geq0$:
\begin{itemize}
\item Let $P^2=(x^2=x^2_0,x^2_1,\ldots,x^2_i)$ be any directed path in $G^2$ 
of weight $k$, let $Q^2$ denote the set of guards listed in the vertices of $P^2$, and
$L^2=\mathcal{L}\cap Q^2$ and $R^2=\mathcal{R}\cap Q^2$.
Then there exists a directed path $(x^1=x^1_0,x^1_1,\ldots,x^1_k,x^1_{k+1})$ in $G^1$
(of length $k+1$), such that the guard of $x_k$ is at least as high as all the
guards of $L^2$ (if $x_k\in\mathcal{L}$) or of $R^2$ (if
$x_k\in\mathcal{R}$), and the guard of $x_{k+1}$ is strictly higher than all
the guards of $Q^2$.
\end{itemize}

\begin{cor}\label{cor:alg1-near-optimal}
The guard set produced by Algorithm \ref{alg:tightpath} is always by at most one
guard larger than the optimum solution produced by Algorithm~\ref{alg:newminfunguard}.
\end{cor}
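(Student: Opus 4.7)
The plan is to prove the inductive claim that the excerpt states just before the corollary, and then derive the corollary from it by applying the claim to an optimum path in $G^2$.

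I would proceed by induction on $i$ (the length of $P^2$). In the base case $i=0$, we have $P^2=(x^2)$, weight $k=0$, and $Q^2=L^2=R^2=\emptyset$; so any out-neighbor of $x^1$ in $G^1$ (which exists because Algorithm~\ref{alg:tightpath} always places at least one guard from the base) yields a path $(x^1,x^1_1)$ of length $1=k+1$, and the ``strictly higher'' condition is vacuous. For the inductive step, assume the claim for a $P^2$ of length $i$ and weight $k$, with a matching $P^1$ of length $k+1$ in $G^1$. Now extend $P^2$ by one more edge to $x^2_{i+1}$ and split into two cases according to the weight of that edge.

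In Case~A (the edge has weight $1$, so $x^2_{i+1}$ is a single guard $z^2\in\mathcal{L}\cup\mathcal{R}$, producing a $P^2$ of weight $k+1$), the inductive hypothesis gives that the guard at $x^1_{k+1}$ already lies strictly above every guard of $Q^2$ on its chain, which by the funnel geometry means $\upseg(x^1_{k+1})$ lies above $\upseg(x^2_i)$. Hence any vertex visible from all of $\upseg(x^2_i)$ also plays the analogous role from above $\upseg(x^1_{k+1})$. I would then show that the topmost successor of $x^1_{k+1}$ in $G^1$ on the same chain as $z^2$ is at least as high as $z^2$, and that one further successor in $G^1$ must be strictly higher than $z^2$, giving $P^1$ of length $k+2$ as required. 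In Case~B (the edge has weight $2$, adding a pair $(l_{i''},r_{j''})$, so $P^2$ has weight $k+2$), I extend $P^1$ by two edges, at each step choosing the topmost available successor in $G^1$; the key subclaim is that after these two steps the last guard of $P^1$ is strictly higher on its chain than both $l_{i''}$ and $r_{j''}$, restoring the invariant.

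To close the argument, I would take $P^2$ to be a minimum-weight $x^2$-to-$y$ path in $G^2$, so its weight equals $|Q^2|$, the optimum number of guards (Theorem~\ref{thm:sizeguard-funnel}). Applying the inductive claim yields a directed path in $G^1$ of length $|Q^2|+1$ whose last placed guard lies strictly above every guard of $Q^2$ on its chain. Since $Q^2$ guards the whole funnel including the apex, one of its topmost guards already has $\upseg$ reaching the apex; hence the last guard of the $G^1$-path also has $\upseg$ reaching the apex, so the edge to the dummy sink $y$ is present in $G^1$ and we obtain an $x^1$-to-$y$ path using exactly $|Q^2|+1$ guards.

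The main obstacle I expect is Case~B: one has to verify geometrically that two consecutive single-guard choices in Algorithm~\ref{alg:tightpath}, each selected as the topmost admissible successor, do suffice to climb strictly above a pair $(l_{i''},r_{j''})$ that Algorithm~\ref{alg:newminfunguard} places in a single weighted step, and that the corresponding edges exist in $G^1$. The supporting geometric fact is the monotonicity principle that for vertices $v,v'$ on the same concave chain with $v$ higher, $\upseg(v)$ lies strictly above $\upseg(v')$; this should propagate the ``one step ahead'' invariant through both cases.
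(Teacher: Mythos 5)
Your proposal follows essentially the same route as the paper: the same induction on the refined statement, the same case split on whether the new $G^2$-edge has weight $1$ or $2$, and the same "stay one guard ahead" invariant. The obstacle you flag in Case~B is resolved in the paper exactly along the lines you anticipate: one takes the first new $G^1$-step to the \emph{opposite} chain (say $\mathcal{R}$), observes via line~\ref{line:7} of Algorithm~\ref{alg:newminfunguard} that $z_2$ sees $x^1_{k'+1}$ and hence the chosen vertex is at least as high as $z_2$, and then the second step back to $\mathcal{L}$ lands strictly above $z_1$, restoring the invariant.
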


\begin{proof}
Let $G^1$ with the source $x^1$ be the auxiliary graph produced by Algorithm~\ref{alg:tightpath},
and $G^2$ with the source $x^2$ be the one produced by Algorithm~\ref{alg:newminfunguard}.
We instead prove the following refined statement by induction on~$i\geq0$.
Recall the detailed inductive statement we are going to prove now:
\begin{itemize}
\item Let $\Pi^2=(x^2=x^2_0,x^2_1,\ldots,x^2_i)$ be any directed path in $G^2$ 
of weight $k$, let $Q^2$ denote the set of guards listed in the vertices of $P^2$, and
$L^2=\mathcal{L}\cap Q^2$ and $R^2=\mathcal{R}\cap Q^2$.
Then there exists a directed path $(x^1=x^1_0,x^1_1,\ldots,x^1_k,x^1_{k+1})$ in $G^1$
(of length $k+1$), such that the guard of $x_k$ is at least as high as all the
guards of $L^2$ (if $x_k\in\mathcal{L}$) or of $R^2$ (if
$x_k\in\mathbb{R}$), and the guard of $x_{k+1}$ is strictly higher than all
the guards of $Q^2$.
\end{itemize}

The claim is trivial for $i=0$, and so we assume that $i\geq1$ and the claim
holds for the shorter path ${\Pi^2}'=(x^2=x^2_0,x^2_1,\ldots,x^2_{i-1})$ of
weight $k'$ in $G^2$, hence providing us with a path
$(x^1=x^1_0,x^1_1,\ldots,x^1_{k'},x^1_{k'+1})$ in $G^1$.
If $k=k'+1$ (i.e., $x^2_i$ represents a single guard),
Algorithm~\ref{alg:tightpath} can ``duplicate'' the move, hence making
$x^2_i$ or a higher vertex $x^2_{i'}$ on the same chain 
an outneighbour of $x^1_{k'+1}$ in $G^1$.
Then we set $x^1_{k'+2}=x^1_{k+1}=x^2_{i'}$ and we are done.

If $k=k'+2$ (i.e., $x^2_i$ represents a pair of guards $z_1,z_2$),
we proceed as follows.
Up to symmetry, assume $x^1_{k'+1}\in\mathcal{L}$ 
and $z_1\in\mathcal{L}$, $z_2\in\mathcal{R}$.
By the induction assumption, we know that $x^1_{k'+1}$ is strictly higher
(on $\mathcal{L}$) than the guards from $L^2\setminus\{z_1\}$.
We choose $x^1_k$ as the outneighbour of $x^1_{k'+1}$ in $G^1$ that lies on
$\mathcal{R}$, and $x^1_{k+1}$ as the outneighbour of $x^1_{k}$ in
$G^1$ that lies back on $\mathcal{L}$. 
From Algorithm~\ref{alg:newminfunguard} (line~\ref{line:7})
it follows that $z_2$ sees $x^1_{k'+1}$, and so $x^1_k$ is at least as high
on $\mathcal{R}$ as~$z_2$.
Consequently, $x^1_{k+1}$ lies on $\mathcal{L}$ strictly higher than~$z_1$
(which sees the highest guard from $R^2\setminus\{z_2\}$), and we are again done.
\end{proof}

\subsection{Conflict-free chromatic guarding of a funnel} \label{sec:cfguardingfunnel}
In this section, we continue to study funnels.
To obtain a conflict-free colored solution, we will simply
consider the guards chosen by Algorithm~\ref{alg:tightpath} in the ascending
order of their vertical coordinates, and color them in the \emph{ruler sequence},
(e.g.,~\cite{guy-1994}) in which the $i^{th}$ term is the exponent of the largest power of $2$ that divides $2i$.\\ 
(The first few terms are $1,2,1,3,1,2,1,4,1,2,1,3,1,2,1,5,1\dots$.)\\
So, if Algorithm~\ref{alg:tightpath} gives $m$ guards, then our approach
will use about $\log m$ colors.

Our aim is to show that this is always very close to the optimum, by giving a lower bound
on the number of necessary colors of order $\log m-O(1)$.
To achieve this, we study the following two sets of guards for a given funnel~$\FF$:
\begin{itemize}
\item The minimal {\em guard set $A$} computed by Algorithm~\ref{alg:tightpath} on~$\FF$
(which is overall nearly optimal by Corollary~\ref{cor:alg1-near-optimal});
if this is not unique, then we fix any such~$A$.
\item A {\em guard set $D$} which achieves the minimum number of colors for
conflict-free guarding; note that $D$ may be much larger than $A$ since it
is the number of colors which matters.
\end{itemize}
On a high level, we are going to show that the coloring of $D$
must (somehow) copy the ruler sequence on~$A$.
For that we will recursively bisect our funnel into smaller ``layers'',
gaining one unique color with each bisection.

Analogous to the upper tangent, we define the {\em lower tangent}
of a vertex $l_i\in\mathcal{L}$, denote by $\lowtan(l_i)$,
as the ray whose origin is $l_i$ and which passes through~$r_j\in\mathcal{R}$
such that $r_j$ is the lowest vertex on $\mathcal{R}$ seeing $l_i$.
Note that $\lowtan(l_i)$ may intersect $\mathcal{R}$ in $r_j$ alone or in a
segment from $r_j$ up.
Let $\lowseg(l_i):=\lseg{l_i}{r_j}$.
The definition of $\lowtan()$ and $\lowseg()$ for vertices of $\mathcal{R}$ is
symmetric.

We now give a definition of ``layers'' of a funnel which is crucial for our proof.

\begin{figure}[tbp]
\centering
\begin{tikzpicture}[yscale=0.75, xscale=1.2]

\tikzstyle{every node}=[draw, shape=circle, fill=black, minimum size=2.5pt,inner sep=0pt];
\node (1) at (-5,0) {};
\node (2) at (-3.9, 0.1) {};
\node (3) at (-3.46, 0.2) {};
\node (4) at (-2.97, 0.32) {};
\node (5) at (-2.62, 0.43) {};
\node (6) at (-2.53, 0.46) {};
\node (7) at (-2.1, 0.7) {};
\node (8) at (-1.65, 0.96) {};
\node (9) at (-1.53, 1.04) {};
\node (10) at (-1.03, 1.54) {};
\node (11) at (-0.66, 2.02) {};
\node (12) at (-0.62, 2.12) {};
\node[fill=red,minimum size=4pt, label=left:$a_3$] (13) at (-0.32, 3.08) {};
\node (14) at (-0.19, 3.7) {};
\node (15) at (-0.14, 4.07) {};
\node[label=left:$q\>$, fill=white,thick,minimum size=4pt] (16) at (-0.03, 5.54) {};
\node (17) at (0.03, 8) {};
\node[fill=red,minimum size=4pt, label=right:$\,a_4$] (18) at (0.26, 6.51) {};
\node (19) at (0.67, 4.09) {};
\node (20) at (0.82, 3.42) {};
\node (21) at (0.99, 2.79) {};
\node (22) at (1.24, 2.23) {};
\node (23) at (1.41, 1.92) {};
\node (24) at (1.6, 1.68) {};
\node (25) at (1.8, 1.43) {};
\node (26) at (2.05, 1.16) {};
\node (27) at (2.23, 0.97) {};
\node[fill=red,minimum size=4pt, label=above:$a_2$] (28) at (2.56, 0.71) {};
\node (29) at (2.83, 0.57) {};
\node (30) at (3.19, 0.42) {};
\node (31) at (3.94, 0.21) {};
\node (32) at (4.33, 0.11) {};
\node (33a) at (4.67, 0.04) {};
\node (33b) at (4.8, 0.03) {};
\node (33) at (5, 0) {};
\foreach \i in {1,...,32} 
{
	\pgfmathtruncatemacro\j{\i+1};
	\draw[thick] (\i)--(\j);
}
\draw[thick] (32)--(33)--(1);

\node[fill=red,minimum size=4pt,thick, label=above:{$\!\!\!a_1=p\!\!\!$}] (1g) at (-4.8,0.02) {};
\coordinate(1') at (2.7,0.62);
\coordinate(16') at (2.03, 0);
\coordinate(1int) at (2.6,0.62); 
\coordinate(16int) at (2.02,0);
\draw[dashed, thick] (1g)--(1');
\draw[dashed, thick] (16)--(21);
\draw[dotted, thick] (21)--(16');
\draw[dotted, thick] (2)--(32);
\fill[opacity=0.3,gray] (2) -- (32) -- (1') -- (2);
\fill[opacity=0.5,green] (1'.center)--(2.center)--(3.center)--(4.center)--(5.center)--(6.center)--(7.center)--(8.center)--(9.center)--(10.center)--(11.center)--(12.center)--(13.center)--(14.center)--(15.center)--(16.center)--(21.center)--(22.center)--(23.center)--(24.center)--(25.center)--(26.center)--(27.center)--(28.center)--(29.center)--(1int)--(1.center);
\node[fill=white, inner sep=0.5pt] at (1.82,0.55) {{\small $o$}};
\end{tikzpicture}
\caption{An example of a $2$ -interval $Q$ of a funnel (filled green and
bounded by $s_1=\upseg(p)$ and $s_2=\lowseg(q)$).
The red vertices $a_1=p,a_2,a_3,a_4$ are the guards computed
by Algorithm~\ref{alg:tightpath}, and $a_2,a_3$ belong to
the interval~$Q$.
Note that $p$ and $q$ by definition do not belong to~$Q$. 
The shadow of $Q$ (filled light gray) is
bounded from below by the bottom dotted line, and the
inner point $o$ is the so-called {\em observer} of $Q$
(seeing all vertices of $Q$ and, possibly, some vertices
in the shadow). }
\label{fig:k-interval}
\end{figure}

\begin{dfn}[$t$-interval]\label{df:kinterval}\rm
Let $\FF$ be a funnel with the 
chains $\mathcal{L}=(l_1, l_2,
\ldots, l_k)$ and $\mathcal{R}=(r_1, r_2, \ldots, r_m)$,
and $A$ be the fixed guard set $A$ computed by Algorithm~\ref{alg:tightpath} on~$\FF$.
Let $s_1$ be the base of $\FF$, or $s_1=\upseg(p)$ for some vertex $p$ 
of $\FF$ (where $p$ is not the apex or its neighbour).
Let $s_2$ be the apex of $\FF$, or $s_2=\lowseg(q)$ for some vertex $q$ 
of $\FF$ (where $q$ is not in the base of~$\FF$).
Assume that $s_2$ is above $s_1$ within~$\FF$.
Then the region $Q$ of $\FF$ bounded from below by $s_1$ and from above by
$s_2$, excluding~$q$ itself, is called an {\em interval of~$\FF$}.
Moreover, $Q$ is called a {\em$t$-interval of~$\FF$} if $Q$ contains at least
$t$ of the guards of~$A$.
See Figure~\ref{fig:k-interval}.

Having an interval $Q$ of the funnel $\FF$, bounded from below by $s_1$,
we define the {\em shadow of $Q$} as follows.
If $s_1=\upseg(l_i)$ ($s_1=\upseg(r_j)$), then the shadow consists of the 
region of $\FF$ between $s_1$ and $\lowseg(l_{i+1})$
(between $s_1$ and $\lowseg(r_{j+1})$, respectively).
If $s_1$ is the base, then the shadow is empty.
\end{dfn}

\begin{lem}\label{lem:11interval}
If $Q$ is a $13$-interval of the funnel $\FF$, then there exists a point
in $Q$ which is not visible from any vertex of $\FF$ outside of~$Q$.
\end{lem}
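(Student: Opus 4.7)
My plan is to find a point $o \in Q$ not visible from any vertex outside $Q$ by placing $o$ in a ``middle layer'' of $Q$ with respect to the guards in $A \cap Q$, where visibility from outside is geometrically cut off by the concave chains of $\FF$. I would first classify outside vertices by chain ($\mathcal{L}$ or $\mathcal{R}$) and position (below $s_1$ or above $s_2$). Assume WLOG that $s_1 = \upseg(l_{i_0})$ for a left-chain vertex $l_{i_0}$. Then no vertex on $\mathcal{L}$ at or below $l_{i_0}$ sees any interior point of $Q$: for $l_{i_0}$ itself, $\upseg(l_{i_0}) = s_1$ bounds its upward visibility exactly at the lower boundary of $Q$; for strictly lower vertices, concavity of $\mathcal{L}$ forces the upper tangent to hit $\mathcal{R}$ strictly lower, so their $\upseg$ lies below $s_1$. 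Hence from below, only vertices on $\mathcal{R}$ strictly below $p' := s_1 \cap \mathcal{R}$ can contribute visibility into $Q$, and among those the topmost $r_{j_0}$ dominates via $\upseg(r_{j_0})$, since for any $k < j_0$ the segment $\upseg(r_k)$ lies below $\upseg(r_{j_0})$ by the same concavity argument applied to $\mathcal{R}$. A symmetric statement holds for $s_2$ with the roles of $\upseg$ and $\lowseg$ exchanged.

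Next, I would bound how deeply $\upseg(r_{j_0})$ reaches into the stack of guard strips of $A \cap Q$. The key property of Algorithm~\ref{alg:tightpath} is that it greedily selects, for each processed segment, the topmost vertex seeing it; consequently, guards of $A$ are spaced as far apart as the visibility geometry allows. I would argue that $\upseg(r_{j_0})$ lies below $\upseg(b_k)$ for some index $k$ bounded by a small constant $c_{\text{low}}$, essentially because $r_{j_0}$ would have been a valid candidate in the greedy step that Algorithm~\ref{alg:tightpath} performs when processing $s_1$, and concavity prevents $\upseg(r_{j_0})$ from overshooting more than a few consecutive greedy upgrade steps. A symmetric bound $c_{\text{high}}$ handles the penetration from above, where the analogous segment constrains visibility from every above-side outside vertex. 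Both constants can be shown to be at most $6$, so that $c_{\text{low}} + c_{\text{high}} \leq 12$.

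Finally, since $|A \cap Q| \geq 13 > c_{\text{low}} + c_{\text{high}}$, there exists an intermediate guard $b_{k^*} \in A \cap Q$ with $c_{\text{low}} < k^* \leq 13 - c_{\text{high}}$ whose guard strip $[\upseg(b_{k^*-1}), \upseg(b_{k^*})]$ lies strictly between $\upseg(r_{j_0})$ and the symmetric segment bounding visibility from above. I would then pick $o$ as an interior point of this middle strip, chosen away from both chains. By construction, $o$ is above $\upseg(v)$ for every below-$s_1$ vertex $v \notin Q$, and below the corresponding $\lowseg(v)$ for every above-$s_2$ vertex $v \notin Q$, so no outside vertex sees $o$. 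Combined with the earlier observation that same-chain vertices adjacent to $s_1$ and $s_2$ already cannot see any interior point of $Q$, this shows $o$ is invisible from every vertex of $\FF$ outside $Q$. The main obstacle will be pinning down the constants $c_{\text{low}}$ and $c_{\text{high}}$ precisely; this requires a careful geometric argument combining the greedy behavior of Algorithm~\ref{alg:tightpath} with the concavity of both chains of $\FF$, and is where the specific value $13$ in the statement originates.
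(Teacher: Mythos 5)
Your setup is sensible and runs parallel to the paper's: classify the outside vertices by chain and by side of $Q$, note that same-chain vertices at or below the defining vertex of $s_1$ see nothing strictly above $s_1$, and reduce the whole problem to bounding how far the single dominating opposite-chain vertex from below (your $r_{j_0}$) and its counterpart from above can see into $Q$. The genuine gap is that this bound --- which is the entire quantitative content of the lemma and the only place the constant $13$ can come from --- is asserted ($c_{\text{low}},c_{\text{high}}\le 6$) rather than proved. The mechanism you sketch for it does not go through as stated: $s_1=\upseg(p)$ for an \emph{arbitrary} vertex $p$ of $\FF$, so $s_1$ is in general not one of the segments $\upseg(t)$, $t\in\ES{S}$, that Algorithm~\ref{alg:tightpath} ever processes, and hence ``$r_{j_0}$ would have been a valid candidate in the greedy step when processing $s_1$'' has no referent; moreover Algorithm~\ref{alg:tightpath} is not a pure greedy sweep but a shortest path in an auxiliary digraph, so ``a few consecutive greedy upgrade steps'' is not a quantity one can invoke without an actual argument.

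The paper closes exactly this gap by a contradiction with the minimality of the guard set $A$. By pigeonhole, at least seven of the thirteen $A$-guards in $Q$ lie on one chain, say $a,b,c,d,e,f,g$ on $\mathcal{L}$ bottom-up, and the witness is taken to be the \emph{vertex} $d$ itself rather than a point in a strip. If some vertex of $\mathcal{R}$ below $\uptan(a)\cap\mathcal{R}$ saw $d$, then the vertex $x$ immediately below $\uptan(a)\cap\mathcal{R}$ would see $d$, and one checks that $a$, $x$ and $d$ together see everything seen by $b$ and $c$; Algorithm~\ref{alg:tightpath} would then have routed its shortest path through $x$ (or a higher right-chain vertex) instead of through both $b$ and $c$, a contradiction. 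A symmetric argument with $d$, $g$ and a vertex $y$ above handles the top. This optimality-of-$A$ contradiction is precisely the missing ingredient in your proposal; if you wish to keep the ``middle strip'' formulation you would still have to run it, and your bookkeeping would need to account for guards of $A\cap Q$ lying on the opposite chain between consecutive same-chain guards, which the bare count $6+6\le 12$ does not obviously do.
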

\begin{proof}
By definition, a $13$-interval has thirteen guards from~$A$ in it. By 
the pigeon-hole principle, at least seven of these guards are on the same chain.
Without the loss of generality, let these seven guards lie on the left chain $\mathcal{L}$ of $\FF$.
Let us denote these guards by $a$, $b$, $c$, $d$, $e$, $f$ and $g$ in the bottom-up order, respectively.
We show that the guard $d$ is not seen by any viewer outside of the $13$-interval.

Suppose that $d$ can be seen by a vertex of $\mathcal{R}$ which lies below $\uptan(a) \cap \mathcal{R}$.
This means that the vertex of $\mathcal{R}$ immediately below $\uptan(a) \cap \mathcal{R}$ 
(say, denoted by~$x$) also sees $d$. 

Since $x$ is below $\uptan(a) \cap \mathcal{R}$, $x$ sees all vertices of $\mathcal{L}$ between $b$ and $d$,
including both $b$ and $d$.
Additionally, if $b$ is not the immediate neighbour of $a$ on $\mathcal{L}$,
then $x$ sees the vertex of $\mathcal{L}$ immediately below $b$ as well.
Thus, $x$ sees all points seen by $b$ or $c$ on $\mathcal{L}$.

Since $x$ sees $d$, all points of $\mathcal{R}$ that are seen by $b$ or $c$ and lie above $x$, are also seen by $d$.
Since $x$ lies below $\uptan(a) \cap \mathcal{R}$, $a$ sees all points of $\mathcal{R}$
between and including $x$ and $\lotan(a) \cap \mathcal{R}$.
Since $a$ lies below $b$ and $c$ on $\mathcal{L}$, none among $b$ and $c$ can see any vertex below $\lotan(a) \cap \mathcal{R}$.
Thus, $d$ and $a$ see all points seen by $b$ or $c$ on $\mathcal{R}$.

The above arguments show that $a$, $d$ and $x$ together see all the points on the two concave chains seen by $b$ and $c$.
Observe that since $x$ is the vertex immediately below $\uptan(a) \cap \mathcal{R}$, Algorithm \ref{alg:tightpath}
includes in $G$ either $x$, or a higher vertex $x'$ of $\mathcal{R}$ which sees the points of $\FF$ exclusively seen by $x$.
This means Algorithm \ref{alg:tightpath} must choose $x$ (or, $x'$) instead of $b$ and $c$ to optimize on the number of 
guards. Hence, we have a contradiction, and no vertex below the $13$-interval can see $d$.

Now suppose that $d$ is seen by a vertex $y$ lying above the $13$-interval. Since the $13$-interval contains
two more guards on $\mathcal{L}$ above $d$, the vertex $y$ can certainly not lie on $\mathcal{L}$.
Therefore, $y$ must lie on $\mathcal{R}$
If the upper segment of the $13$-interval is defined
by $\lowseg(v)$, where $v \in \mathcal{R}$, then $d$, $e$, $f$ and $g$ must lie below
$\lowseg(v) \cap \mathcal{L}$.
This means, $\uptan(d) \cap \mathcal{R}$ must lie below $v$. But to see $d$, $y$ must lie below $v$.
This makes $y$ a vertex contained in the $13$-interval. So, $v$ must lie on~$\mathcal{L}$.

So, we assume that $v$ lies on $\mathcal{L}$. At the worst case, $v$ can be the guard $f$. 
This means, $\uptan(d) \cap \mathcal{R}$ is above $lot(g) \cap \mathcal{R}$, and $y$ lies on the segment
of $\mathcal{R}$ between these two points. But then, by an argument similar to above, $d$, $g$ and $y$ together see 
everything that is seen by $d$, $e$, $f$ and $g$, and so Algorithm \ref{alg:tightpath} would choose only $d$, $g$
and $y$ to get a shortest path in $G$, a contradiction. So, the point $d$ is not visible from any vertex outside of 
the $13$-interval.
\end{proof}

Our second crucial ingredient is the possibility to ``almost privately'' see the vertices of an interval
$Q$ from one point as follows.
If $s_2=\lowseg(q)$, then the intersection point of
$\lowtan(q)$ with $s_1$ is called the {\em observer of~$Q$}.
(Actually, to be precise, we should slightly perturb this position of the
observer $o$ so that the visibility between $o$ and $q$ is blocked.
To keep the presentation simple, we neglect this detail.)
If $s_2$ is the apex, then consider the spine of $\FF$ instead of $\lowtan(q)$.
See again Figure~\ref{fig:k-interval}.

\begin{lem}\label{lem:observer}
The observer $o$ of an interval $Q$ in a funnel $\FF$
can see all the vertices of $Q$, but $o$ cannot see any vertex of $\FF$
which is not in $Q$ and not in the shadow of~$Q$.
\end{lem}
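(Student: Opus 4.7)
WLOG assume $s_2 = \lowseg(q)$ with $q \in \mathcal{R}$ (the apex case is analogous, and the $q \in \mathcal{L}$ case is symmetric). Let $r \in \mathcal{L}$ be the lowest vertex visible from $q$, so $s_2 = \lseg{q}{r}$, and by construction $q$, $r$, $o$ are collinear on $\slin{q}{r}$ with $r$ between $q$ and $o$, and $o \in s_1$. I will prove the two halves of the statement separately.

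For the visibility half, the plan is to exploit the structural fact that the set of vertices of $\mathcal{L}$ (respectively $\mathcal{R}$) visible from any single interior point of $\FF$ forms a contiguous range of indices on the chain --- a direct consequence of the concavity of each chain. Once this is granted, it suffices to verify visibility at the two extremes of each range. On $\mathcal{L}$, $o$ sees the endpoint of $s_1$ on $\mathcal{L}$ trivially (both lie on $s_1$), and $o$ sees $r$ because $\lseg{o}{r}$ is a sub-segment of the line $\slin{q}{r}$, and the portion $\lseg{q}{r}$ is inside $\FF$ by definition of $\lowtan(q)$; contiguity then fills in every vertex of $\mathcal{L}$ in $Q$. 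On $\mathcal{R}$, $o$ sees the endpoint $p$ of $s_1$ on $\mathcal{R}$ trivially, and the key substep is to argue, via concavity, that $o$ sees the vertex of $\mathcal{R}$ immediately below $q$; contiguity then covers all vertices of $\mathcal{R}$ in $Q$, with $q$ itself excluded by the perturbation.

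For the invisibility half, I partition the bad vertices into those lying above $s_2$ and those lying below $s_1$ outside the shadow (the case $s_1$ equal to the base is trivial as the shadow is empty and no vertex lies below). For a vertex $v$ on $\mathcal{L}$ above $r$, concavity of $\mathcal{L}$ places $v$ strictly on the opposite side of $\slin{o}{r}$ from the funnel interior near $o$, so $\lseg{o}{v}$ must cross $\mathcal{L}$ close to $r$. For $v$ on $\mathcal{R}$ above $q$ (including $q$ itself), the analogous argument shows that $\lseg{o}{v}$ is blocked by the chain $\mathcal{L}$ near $r$; in particular, $q$ itself is blocked because $r$ lies strictly on $\lseg{o}{q}$ (the perturbation turns this into strict blocking). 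Assuming WLOG $s_1 = \upseg(l_i)$ (so the shadow is the wedge between $s_1 = \lseg{l_{i+1}}{p}$ and $\lowseg(l_{i+1})$), a vertex $v$ below $s_1$ outside the shadow is either on $\mathcal{L}$ strictly below $l_{i+1}$, in which case $l_{i+1}$ lies on $\lseg{o}{v}$ (since $o$, $l_{i+1}$, $l_i$ are collinear on the line containing $\upseg(l_i)$), blocking visibility; or $v$ is on $\mathcal{R}$ strictly below the $\mathcal{R}$-endpoint of $\lowseg(l_{i+1})$, in which case $\lseg{o}{v}$ must cross $\lowseg(l_{i+1})$ and, because $\lowseg(l_{i+1})$ is already the lowest line of sight from $l_{i+1}$ across the funnel, must subsequently exit $\FF$ through $\mathcal{L}$.

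The main obstacle I anticipate is the contiguous-visibility-on-a-concave-chain claim --- rigorously establishing that the visible range of $o$ on each chain is an interval of consecutive indices --- and, within it, specifically proving that $o$ sees the vertex of $\mathcal{R}$ immediately below $q$, since the entire argument for $\mathcal{R} \cap Q$ hinges on this endpoint. A close second is the careful handling of the perturbation: it must be chosen so that $o$ loses visibility of exactly $q$ and of the vertices of $\mathcal{L}$ hidden collinearly behind $l_{i+1}$, without losing visibility of any vertex of $Q$.
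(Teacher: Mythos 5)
Your proposal is correct and follows essentially the same route as the paper's proof: the same split into a visibility half (all of $Q$ lies between $s_1$ and $\lowtan(q)$, with $o$ on both) and an invisibility half (the tangency vertex $r$ on the opposite chain blocks everything above $s_2$ because $q$, $r$, $o$ are collinear, while $o$ lying on $\upseg(\cdot)$ and the definition of the shadow handle everything below $s_1$). You are somewhat more explicit than the paper about the contiguous-visibility-on-a-concave-chain step and the perturbation at $q$, which the paper simply asserts, but these are refinements of the same argument rather than a different approach.
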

\begin{proof}
Without the loss of generality let $q$ lie on $\mathcal{L}$.
Since the observer $o$ is the intersection point of $\lowtan(q)$ with $s_1$,
$o$ sees all the vertices of $Q$, since all of them lie between 
$s_1$ and  $\lowtan(q)$.
Again, since the observer $o$ is the intersection point of $\lowtan(q)$ with $s_1$, 
$o$ lies below a vertex on $\lowtan(q) \cap \mathcal{R}$. 
Then this vertex must be a blocker between $o$ and any point of $\FF$ above $s_2$.
So, $o$ cannot see any vertex of $\FF$ above $Q$ that is not in $Q$. 
Let $s_1 = \uptan(r)$, where without the loss of generality $r$ lies on $\mathcal{L}$.
Then, since $o$ lies on $s_1$, $o$ cannot see any vertex of $\mathcal{L}$ below $r$.
The shadow of $Q$ on $\mathcal{R}$ extends till $\lowtan(r') \cap \mathcal {R}$, where
$r'$ is the vertex of $\mathcal{L}$ immediately above $r$.
But since $o$ lies on $\upseg(r)$, $o$ is above $r'$, and hence cannot see 
vertices of $\mathcal{R}$ below $\lowtan(r') \cap \mathcal {R}$.
Hence, $o$ cannot see vertices below the shadow of $Q$.
\end{proof}

The last ingredient before the main proof is the notion
of sections of an interval $Q$ of~$\FF$.
Let $s_1$ and $s_2$ form the lower and upper boundary of~$Q$.
Consider a vertex $l_i\in\mathcal{L}$ of~$Q$.
Then the {\em lower section of $Q$ at $l_i$} is the interval of $\FF$ bounded
from below by $s_1$ and from above by $\lowseg(l_i)$.
The {\em upper section of $Q$ at $l_i$} is the interval of $\FF$ bounded
from below by $\upseg(l_i)$ and from above by $s_2$.
Sections of $r_j\in\mathcal{R}$ are defined analogously.

\begin{lem}\label{lem:ksection}
Let $Q$ be a $t$-interval of the funnel $\FF$, and let $Q_1$ and $Q_2$ be its
lower and upper sections at some vertex $p$.
Then $Q_i$, $i=1,2$, is a $t_i$-interval such that $t_1+t_2\geq t-3$.
(In other words, at most $3$ of the $A$-guards in $Q$ are not in $Q_1\cup Q_2$.)
\end{lem}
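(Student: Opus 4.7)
Without loss of generality, assume $p = l_i \in \mathcal{L}$ (the case $p \in \mathcal{R}$ is symmetric). The ``gap'' region $R := Q \setminus (Q_1 \cup Q_2)$ lies strictly between $\lowseg(l_i)$ and $\upseg(l_i)$; its vertex set is $\{l_i\} \cup \{r_{j+1}, \dots, r_{j+s}\}$, where $r_j$ is the upper endpoint of $\lowseg(l_i)$ and $r_{j+s}$ is the highest vertex of $\mathcal{R}$ below $\upseg(l_i) \cap \mathcal{R}$. The key geometric fact I would exploit is that the visibility region of $l_i$ inside $\FF$ is bounded below by $\lowseg(l_i)$ and above by $\upseg(l_i)$, so $l_i$ sees every point of $R$, and in particular the entire subchain $\lseg{r_{j+1}}{r_{j+s}}$ of~$\mathcal{R}$.

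I would then prove that at most three $A$-guards lie in $R$ by contradiction. Suppose at least four do; since at most one of them (namely $l_i$) lives on $\mathcal{L}$, at least three must be $\mathcal{R}$-vertices $r_{\beta_1}, r_{\beta_2}, r_{\beta_3}$ with $\beta_1 < \beta_2 < \beta_3$, appearing in this order along the shortest $x$--$y$ path $\Pi$ in the auxiliary digraph $G$ of Algorithm~\ref{alg:tightpath} that realizes $A$. I would construct a strictly shorter $x$--$y$ path $\Pi'$ in $G$, agreeing with $\Pi$ outside the gap but bypassing the middle guard $r_{\beta_2}$. The observation powering the bypass is that $\upseg(r_{\beta_1})$ still lies entirely inside $l_i$'s visibility region --- one endpoint $r_{\beta_1+1}$ belongs to $l_i$'s visibility window on $\mathcal{R}$ and the other endpoint lies on $\mathcal{L}$ at or below $l_{i+1}$ --- so by concavity of $\mathcal{R}$ the topmost $\mathcal{R}$-vertex seeing $\upseg(r_{\beta_1})$ is at least as high as $r_{\beta_3}$. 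This produces a directed edge in $G$ from $r_{\beta_1}$ to a vertex at or above $r_{\beta_3}$, letting $\Pi'$ drop $r_{\beta_2}$ and yielding a shorter $x$--$y$ path than $\Pi$, contradicting the shortest-path choice.

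The main obstacle will be making the visibility bookkeeping for the bypass precise: one must formally verify the ``topmost on its chain seeing the previous $\upseg$'' condition imposed by Algorithm~\ref{alg:tightpath} for each replacement vertex, because otherwise the supposed edge need not exist in $G$. This verification rests on carefully combining the concavity of both chains with the containment of $r_{\beta_1}, r_{\beta_2}, r_{\beta_3}$ inside $l_i$'s visibility window. Two routine sub-cases then round off the argument: the symmetric case $p \in \mathcal{R}$, and the two sub-situations $l_i \in A$ and $l_i \notin A$, in both of which the same bypass construction yields the overall bound that at most three of the $t$ gap-vertices are guards, i.e., $t_1 + t_2 \geq t - 3$.
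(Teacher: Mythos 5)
Your setup is fine and matches the paper's: the gap $Q\setminus(Q_1\cup Q_2)$ consists of $p=l_i$ plus a consecutive block of $\mathcal{R}$-vertices all visible from $l_i$, and the contradiction is to be extracted from the minimality (shortest-path property) of $A$ in Algorithm~\ref{alg:tightpath}. The problem is the bypass itself. The edge leaving $r_{\beta_1}$ in $G$ on the $\mathcal{R}$-side goes to the \emph{topmost $\mathcal{R}$-vertex that sees all of} $\upseg(r_{\beta_1})=\lseg{r_{\beta_1+1}}{q}$, and one endpoint of that segment is the vertex $r_{\beta_1+1}$ of the concave chain $\mathcal{R}$ itself. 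On a concave chain a vertex sees only its two immediate neighbours from the same chain, so no vertex of $\mathcal{R}$ above $r_{\beta_1+2}$ can see $r_{\beta_1+1}$, hence none can see the whole segment. Concavity therefore works \emph{against} you: the $\mathcal{R}$-successor of $r_{\beta_1}$ in $G$ is $r_{\beta_1+1}$ or $r_{\beta_1+2}$, never anything as high as $r_{\beta_3}$ once $\beta_3>\beta_1+2$ (which happens, e.g., when the path steps $\beta_1\to\beta_1+2\to\beta_1+4$). The auxiliary premise you use to get there is also unjustified: the $\mathcal{L}$-endpoint of $\upseg(r_{\beta_1})$ is $\uptan(r_{\beta_1})\cap\mathcal{L}$, which can land well above $l_{i+1}$ (hence outside $l_i$'s visibility region), so ``$\upseg(r_{\beta_1})$ lies entirely inside $l_i$'s visibility region'' is not something you can assume.

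The repair is to put $p=l_i$ itself into the replacement set rather than trying to shortcut along $\mathcal{R}$. This is what the paper does: since $l_i$ sees every gap vertex and the whole strip of $\FF$ between $\lowseg(l_i)$ and $\upseg(l_i)$, the vertex $l_i$ together with the bottommost and the topmost of the (at least four) gap guards already sees everything the intermediate gap guards see; one then argues, exactly as in the proof of Lemma~\ref{lem:11interval}, that Algorithm~\ref{alg:tightpath} would have selected only those three, contradicting the length of the shortest path realizing $A$. Your proof as written does not use $l_i$ as a guard in the replacement at all, and without it the claimed shorter path in $G$ does not exist.
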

\begin{proof}
The only vertices of $Q$ which are not included in $Q_1\cup Q_2$ are
$p$ and the vertices of the shadow of~$Q_2$.
Suppose, for a contradiction, that those contain (at least) four guards
from~$A$; in either such case, we easily contradict minimality of the guard
set~$A$ in Algorithm~\ref{alg:tightpath}, by the same argument given in Lemma \ref{lem:11interval}.
The Algorithm~\ref{alg:tightpath} simply chooses $p$ and the topmost and bottommost among these (at least) four guards,
thus choosing only three guards instead of four.
\end{proof}

Now we are ready to prove the advertised lower bound:
\begin{thm}\label{thm:logm-4}
Any conflict-free chromatic guarding of a given funnel 
requires at least $\lfloor\log_2(m+3)\rfloor-3$ colors,
where $m$ is the minimum number of guards needed to guard the whole funnel.
\end{thm}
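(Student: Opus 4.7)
The plan is to construct a nested sequence of sub-intervals $\FF = Q_0 \supsetneq Q_1 \supsetneq \cdots \supsetneq Q_k$ and to extract from each a \emph{pairwise distinct} ``unique color'' used by any optimal conflict-free guarding $D$; this immediately gives $\chi_{CF}(\FF) \geq k+1$. Since the guard set $A$ produced by Algorithm~\ref{alg:tightpath} is feasible and $m$ is the minimum number of guards, $|A| \geq m$, so the whole funnel $Q_0$ is a $t_0$-interval with $t_0 \geq m$.

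At step $i$, assume $Q_i$ is a $t_i$-interval with $t_i \geq 13$. By Lemma~\ref{lem:11interval} there is a point $p_i \in Q_i$ not visible from any vertex outside $Q_i$; by conflict-freeness of $D$ there must exist a guard $g_i$ visible from $p_i$ whose color $c_i$ is unique among all $D$-guards seen by $p_i$, and every such visible guard lies in $Q_i$. I then bisect $Q_i$ at $g_i$ by invoking Lemma~\ref{lem:ksection}, producing sections $Q_i^L, Q_i^U$ with $t_i^L + t_i^U \geq t_i - 3$; I set $Q_{i+1}$ to the larger section, so $t_{i+1} \geq \lceil (t_i-3)/2\rceil$, equivalently $t_{i+1}+3 \geq (t_i+3)/2$. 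Since $g_i$ lies on the common boundary between the two sections, $g_i \notin Q_{i+1}$.

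The critical step is proving distinctness of $c_0,c_1,\ldots,c_k$. Suppose for contradiction $c_i = c_j$ with $i<j$. Because $g_i \notin Q_{i+1} \supseteq Q_j$ while $g_j \in Q_j$, the guards $g_i$ and $g_j$ are distinct and share color $c_i$. I would choose the point $p_i$ of Lemma~\ref{lem:11interval} to coincide with (or at least be visibility-dominated by) the observer of $Q_i$ from Lemma~\ref{lem:observer}; that observer sees every vertex of $Q_i$, so it sees $g_j \in Q_j \subset Q_i$, and thus $p_i$ sees two distinct guards of color $c_i$, contradicting uniqueness. The main obstacle will be reconciling the \emph{shadow} addendum of Lemma~\ref{lem:observer} with the non-visibility guarantee of Lemma~\ref{lem:11interval}: the observer may also see into the shadow of $Q_i$, so I either restrict the recursion to upper sections (whose shadows lie inside $Q_i$), or use the sharper point supplied by Lemma~\ref{lem:11interval} directly to rule out shadow contributions, at the cost of only absorbing the few extra guards that could lie in a shadow into the slack already allowed by Lemma~\ref{lem:ksection}.

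Finally, the recurrence $t_{i+1}+3 \geq (t_i+3)/2$ with $t_0+3 \geq m+3$ yields $t_i+3 \geq (m+3)/2^i$, so the recursion runs while $(m+3)/2^i \geq 16$, i.e.\ for $i \leq \lfloor \log_2(m+3)\rfloor - 4$. This produces $k+1 = \lfloor \log_2(m+3)\rfloor - 3$ pairwise distinct colors, establishing the claimed lower bound on $\chi_{CF}(\FF)$.
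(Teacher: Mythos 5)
Your skeleton --- nested bisection via Lemma~\ref{lem:ksection}, one colour per level, termination at $13$-intervals via Lemma~\ref{lem:11interval}, and the recurrence $t_{i+1}+3\geq(t_i+3)/2$ yielding $\lfloor\log_2(m+3)\rfloor-3$ levels --- is exactly the paper's, and your closing arithmetic is correct. The gap is the distinctness step, and it is genuine. You need a single witness point that simultaneously (a) draws its unique colour from a guard $g_i$ that is a \emph{vertex of $Q_i$} (so that you may bisect at $g_i$ and conclude $g_i\notin Q_{i+1}$), and (b) sees every later guard $g_j\in Q_j\subseteq Q_{i+1}$. No point in the construction has both properties. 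The point of Lemma~\ref{lem:11interval} gives (a) but not (b): it is a particular guard vertex on one chain, which sees only its two neighbours on its own chain and a limited window of the opposite chain, so nothing forces it to see $g_j$. The observer of Lemma~\ref{lem:observer} gives (b) but not (a): its unique-colour guard may lie in the shadow of $Q_i$, which is disjoint from $Q_i$. These are genuinely different points (one is a vertex deep inside the interval, the other an interior point on the lower boundary $s_1$), so they cannot coincide, and visibility domination does not transfer uniqueness: if the observer sees strictly more than $p_i$, a colour unique for $p_i$ need not be unique for the observer. Of your two proposed repairs, restricting the recursion to upper sections breaks the recurrence (Lemma~\ref{lem:ksection} only guarantees that \emph{one} of the two sections retains $(t-3)/2$ guards, and it may be the lower one), while ``absorbing the shadow into the slack'' misidentifies the problem, which is not how many guards the shadow contains but which region the unique-colour guard may land in.

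The paper closes this by changing what the recursion carries: the inductive statement counts all colours \emph{used on} $Q$ together with its shadow, rather than tracking the specific guards chosen at deeper levels. The witness at each level is always the observer $o$; its unique-colour guard $g$ lies in $Q$ or in the shadow of $Q$, and in the shadow case one recurses into the upper section at a substitute vertex $g'$. Distinctness then follows because $o$ sees all of $Q_i$ and the relevant part of its shadow, so $c_g$ cannot equal \emph{any} colour already used there --- no single point ever needs to see the particular deeper guards, only the region they are confined to. Lemma~\ref{lem:11interval} is invoked only once, in the base case, to certify that a $13$-interval contains at least one coloured guard. To salvage your pairwise formulation you would essentially have to prove this stronger set-valued statement anyway, at which point you have reproduced the paper's induction.
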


\begin{proof}
We will prove the following claim by induction on $c\geq0$:
\begin{itemize}
\item If $Q$ is a $t$-interval in the funnel $\FF$
and $t\geq 16\cdot2^c-3$, then any conflict-free coloring of $\FF$
must use at least $c+1$ colors on the vertices of $Q$ or 
of the shadow of~$Q$.
\end{itemize}

In the base $c=0$ of the induction, we have $t\geq16-3=13$.
By Lemma~\ref{lem:11interval}, some point of $Q$ is not seen from outside,
and so there has to be a colored guard in some vertex of $Q$,
thus giving $c+1=1$ color.

Consider now $c>0$.
The observer $o$ of $Q$ (which sees all the vertices of~$Q$) 
must see a guard $g$ of a unique color
where $g$ is, by Lemma~\ref{lem:observer}, 
a vertex of $Q$ or of the shadow of~$Q$.
In the first case, we consider $Q_1$ and $Q_2$,
the lower and upper sections of $Q$ at~$g$.
By Lemma~\ref{lem:ksection}, for some $i\in\{1,2\}$,
$Q_i$ is a $t_i$-interval of $\FF$ such that
$t_i\geq (t-3)/2\geq (16\cdot2^c-6)/2=16\cdot2^{c-1}-3$.
In the second case ($g$ is in the shadow of $Q$),
we choose $g'$ as the lowermost vertex of $Q$ on the same chain as~$g$,
and take only the upper section $Q_1$ of $Q$ at $g'$.
We continue as in the first case with~$i=1$.

By induction assumption for $c-1$, $Q_i$ together with its shadow
carry a set $C$ of at least $c$ colors.
Notice that the shadow of $Q_2$ is included in~$Q$,
and the shadow of $Q_1$ coincides with the shadow of~$Q$,
moreover, the observer of $Q_1$ sees only a subset of the shadow of $Q$ seen
by the observer $o$ of~$Q$.
Since $g$ is not a point of $Q_i$ or its shadow, but our observer 
$o$ sees the color $c_g$ of $g$ and all the colors of~$C$, we have
$c_g\not\in C$ and hence $C\cup\{c_g\}$ has at least $c+1$ colors, as
desired.

Finally, we apply the above claim to $Q=F$.
We have $t\geq m$, and for $t\geq m\geq 16\cdot2^c-3$ we derive that
we need at least $c+1\geq\lfloor\log(m+3)\rfloor-3$ colors for guarding
whole~$\FF$.
\end{proof}

\begin{algorithm}[htbp]
\KwIn{A funnel $\FF$ with concave chains $\mathcal{L}=(l_1, l_2,
\ldots, l_k)$ and $\mathcal{R}=(r_1, \ldots, r_m)$.}
\KwOut{A conflict-free chromatic guard set of $\FF$
using $\leq OPT+4$ colors.}
\smallskip
Run Algorithm~\ref{alg:tightpath} to produce 
a guard sequence $A=(a_1,a_2,\dots,a_t)\>$ (bottom-up)\;
Assign colors to members of $A$ according to the ruler sequence;
the vertex $a_i$ gets color $c_i$
where $c_i$ is the largest integer such that $2^{c_i}$ divides~$2i$\;		
Output colored guards $A$ as the (approximate) solution\;

\caption{Approximate conflict-free chromatic guarding of a funnel.}
\label{alg:apxcfreefunnel}
\end{algorithm}

\begin{cor}
\label{cor:opt+4}
Algorithm~\ref{alg:apxcfreefunnel}, for a given funnel~$\FF$,
outputs in polynomial time a conflict-free chromatic guard set $A$,
such that the number of colors used by $A$ is by at most four larger
than the optimum.
\end{cor}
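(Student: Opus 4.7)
The plan is to verify three things: polynomial runtime, the conflict-free feasibility of the output, and the color bound $OPT+4$. Polynomial runtime and feasibility of the guard set $A$ (that every point of $\FF$ is seen by some $a_i \in A$) follow directly from Lemma~\ref{lem:tightpath-feas}, plus the observation that the ruler-sequence coloring is computed in linear time in~$|A|$.

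The quantitative color bound I would dispatch by a direct counting argument. The first $t$ entries of the ruler sequence use exactly $\lfloor \log_2 t \rfloor + 1$ distinct values, since the color $c$ first appears at index $2^{c-1}$. Writing $m$ for the minimum number of guards needed to guard $\FF$, Corollary~\ref{cor:alg1-near-optimal} gives $t = |A| \leq m+1$, and Theorem~\ref{thm:logm-4} gives $OPT \geq \lfloor \log_2(m+3) \rfloor - 3$. Therefore the number of colors used is at most
\[
\lfloor \log_2(m+1) \rfloor + 1 \;\leq\; \lfloor \log_2(m+3) \rfloor + 1 \;\leq\; OPT + 4.
\]

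The substantive part is the conflict-free property, which I would prove in two steps. The first, structural step is to show that for every point $p \in \FF$, the index set $K(p) = \{i : a_i \in A \text{ is visible from } p\}$ is a contiguous interval in the bottom-up ordering of $A$. The argument uses concavity of each chain $\mathcal{L}, \mathcal{R}$, so that the chain vertices visible from $p$ form an interval of chain indices, combined with the algorithm's construction -- in particular that $a_{k+1}$ sees all of $\upseg(a_k)$ -- to interleave the left and right guards monotonically in height and so to rule out any visibility ``gap'' from $p$. The second, coloring step invokes the classical property of the ruler sequence: in any contiguous window $\{i, i+1, \ldots, j\}$ the largest color value occurs exactly once, because two positions carrying value $c$ would bracket a position divisible by $2^c$ of strictly larger color. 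Hence some $a_k$ visible from $p$ carries a color unique among visible guards, giving the conflict-free condition.

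The main obstacle will be the structural step. It is the only non-generic geometric ingredient: given guards $a_i, a_j$ visible from $p$ with $i < k < j$, one must show that no blocker on either chain separates $p$ from $a_k$, via a careful case analysis leveraging the concavity of the chains together with the extremality of $a_k$ in Algorithm~\ref{alg:tightpath}.
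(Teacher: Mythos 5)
Your proposal is correct and follows essentially the same route as the paper's proof: the uniqueness of the maximum in any contiguous window of the ruler sequence, the fact that every point of $\FF$ sees a consecutive subsequence of $A$, and the combination of Corollary~\ref{cor:alg1-near-optimal} with Theorem~\ref{thm:logm-4} for the $+4$ bound. The only difference is that the paper simply asserts the contiguity of the visible guard indices as a basic funnel property, whereas you flag it as the step needing a concavity-based argument.
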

\begin{proof}
Note the following simple property of the ruler sequence:
if $c_i=c_j$ for some $i\not=j$, then $c_{(i+j)/2}>c_i$.
Hence, for any $i,j$, the largest value occurring among colors
$c_i,c_{i+1},\dots,c_{i+j-1}$ is unique.
Since every point of $\FF$ sees a consecutive subsequence of~$A$,
this is a feasible conflict-free coloring of the funnel~$\FF$.

Let $m$ be the minimum number of guards in~$\FF$.
By Corollary~\ref{cor:alg1-near-optimal}, it is $m+1\geq t=|A|\geq m$.
To prove the approximation guarantee, observe that for $t\leq 2^c-1$,
our sequence $A$ uses $\leq c$ colors.
Conversely, if $t\geq 2^{c-1}$, i.e. $m\geq 2^{c-1}-1$,
then the required number of colors for guarding $\FF$ is
at least~$c-1-3=c-4$, and hence our algorithm uses at most $4$ more colors
than the optimum.
\end{proof}

\section{Weak-visibility polygons} \label{sec:weakvis}
\subsection{Guarding a weak-visibility polygon}
In this section, we give an algorithm to obtain a set of guards to guard a given weak-visibility polygon.
The algorithm is described by Bhattacharya, Ghosh, and Roy in \cite{vertexGuardingWV}.
Note that the mentioned algorithm is not a novel contribution introduced by this thesis, but we describe it nevertheless for the sake of completeness.

Given the graph $G = (V, E)$ of a weak visibility polygon $\WW$, let $\mathit{SPT}(u)$ denote the shortest path tree of $u \in V$.
Moreover, let $p_u(v)$ denote the parent of $v$ in $\textit{SPT}(u)$, and let $\partial(u,v)$ denote the clockwise boundary of $\WW$ from a vertex $u$ to another vertex $v$.
Analogously, let $\partial'(u,v)$ denote the counterclockwise order.

We start by describing the algorithm (Algorithm 2.1 in \cite{vertexGuardingWV}) to obtain a guard set for vertex-to-vertex guarding of $\WW$.
The idea of this algorithm is to find a subset $A$ of vertices, such that the guards set of $A$ sees every vertex in $\WW$.
Thus, instead of processing the whole polygon, the algorithm works only on a subset.
The following algorithm describes how $A$ and the guard set of $A$ is chosen.

\begin{algorithm}[htbp]
	\caption{Computing a guard set $\Gamma_A$ for sove vertices of a given weak-visibility polygon $\WW$}
	\label{alg:wvguarding}
	
	\smallskip
	Compute  $\textit{SPT}(u)$ and  $\textit{SPT}(v)$\;
	Initialize all vertices of $\WW$ as \textit{unmarked}\;
	Initialize $A \gets \emptyset$, $\Gamma_A \gets \emptyset$, and $z \gets u$\;
	
	\While{$z \neq v$}{
		$z \gets$ the vertex next to $z$ in clockwise order on $\partial(u,v)$\;
		\If{$z$ is \textbf{not} marked}
		{
			$A \gets A \cup \{z\}$\;
			$\Gamma_A \gets \Gamma_A \cup \{ p_u(z), p_v(z)\}$\;
			Place guards on $p_u(z)$ and $p_v(z)$\;
			Mark all vertices of $\WW$ that are visible from $p_u(z)$ and $p_v(z)$.
		}
	}
	\Return{$\Gamma_A$}\;
\end{algorithm}

In the light of the algorithm given above, the authors describe Algorithm~\ref{alg:wvguardingB} (Algorithm 2.2 in \cite{vertexGuardingWV}) which returns a set of guards that guard every vertex in a given weak-visibility polygon.

The input for this algorithm is a weak-visibility polygon $\WW$ with the common edge $uv$.
In short, the algorithm starts with a vertex $z$ is chosen (which is initially $u$), and then traverses all the vertices in the clockwise order from $z$, performing an action with respect to the property of $z$.
The process goes on until the vertex $v$ is reached.

\begin{algorithm}[htbp]
	\caption{Computing a guard set $\Gamma_B$ for all the vertices of a given weak-visibility polygon $\WW$}
	\label{alg:wvguardingB}
	
	\smallskip
	Compute  $\textit{SPT}(u)$ and  $\textit{SPT}(v)$\;
	Initialize all the vertices of $\WW$ as unmarked\; 
	Initialize $B \gets \emptyset$, $\Gamma_B \gets \emptyset$, and $z \gets su$\;
	\While{there exists an unmarked vertex in $\WW$}{
		$z \gets $ the first unmarked vertex on $\partial(u,v)$ in clockwise order from $z$\;
		\If{every unmarked vertex of $\partial(z,p_v(z))$ is visible from $p_u(z)$ or $p_v(z)$}{
			$B \gets B \cup \{z\}$ and $\Gamma_B \gets \Gamma_B \cup \{p_u(z), p_v(z)\}$\;
			Mark all vertices of $\WW$ that become visible from $p_u(z)$ or $p_v(z)$\;
			$z \gets p_v(z)$\;
		}
		\Else{
			$z' \gets$ the first unmarked vertex on $\partial(z,v)$ in clockwise order\;
			\While{every unmarked vertex of $\partial(p_u(z'),z')$ is visible from $p_u(z')$ or $p_v(z')$}{
				$z\gets z'$ and $z'\gets$ the first unmarked vertex on $\partial(z',v)$ in clockwise order\;
			}
			
			$w \gets z$\;
			\While{there exists an unmarked vertex on $\partial(u,z)$}{
				$B \gets B \cup \{z\}$ and $\Gamma_B \gets \Gamma_B \cup \{p_u(z), p_v(z)\}$\;
				Mark all vertices of $\WW$ that become visible from $p_u(w)$ or $p_v(w)$\;
				$w \gets$ the first unmarked vertex on $\partial'(z,u)$ in counterclockwise order\;
			}
		}
	}
	\Return{$\Gamma_B$}
\end{algorithm}

\subsection{Conflict-free chromatic guarding of a weak-visibility polygon} \label{sec:cfweakvis}
In this section, we extend the scope of the studied problem of vertex-to-point 
conflict-free chromatic guarding from funnels to weak visibility polygons.
We will establish an $O(\log^2 n)$ upper bound for the number of colors
of vertex-guards on $n$-vertex weak-visibility polygons, and give the
corresponding polynomial time algorithm.

\begin{figure}[tbp]
	\centering
	\begin{tikzpicture}[scale=0.35]
		
		\tikzstyle{every node}=[draw, shape=circle, minimum size=4pt,inner sep=0pt];
		\tikzstyle{every path}=[thin, fill=none];
		\draw[fill=magenta!15!white,draw=none] (0,0)--(6,0.9)--(7.9,2)--(15,8.5)--(15,11.5)--(14.5,13)--(25,1.5)--(31,0)--(0,0);
		\node[fill=black,label=150:$u\,$](A) at (0,0) {};
		\node(B) at (6,0.9) {};
		\node(BB) at (7.9,2) {};
		\node(C) at (8,6) {};
		\node(D) at (7,8) {};
		\node[fill=orange](E) at (4.3,9) {};
		\node[fill=red](F) at (5,11.5) {};
		\node(G) at (7.8,9.5) {};
		\node(H) at (11,8) {};
		\node(I) at (15,8.5) {};
		\node(J) at (15,11.5) {};
		\node[fill=magenta](K) at (14.5,13) {};
		\node[fill=purple](L) at (16.5,12) {};
		\node[fill=blue](M) at (18,13) {};
		\node(N) at (19,10) {};
		\node(O) at (19.5,9) {};
		\node(P) at (23,8) {};
		\node(Q) at (24,10.5) {};
		\node[fill=cyan](R) at (24.2,12) {};
		\node[fill=green](S) at (26,11) {};
		\node[fill=green!60!black](T) at (31,10) {};
		\node(V) at (28.6,8) {};
		\node(W) at (27.5,6) {};
		\node(X) at (25,1.5) {};
		\node[fill=black,label=30:$\,v$](Y) at (31,0) {};
		\draw (A)--(B)--(BB)--(C)--(D)--(E)--(F)--(G)--(H)--(I)--(J)--(K)--(L)--(M)--(N)--(O)--(P)--(Q)--(R)--(S)--(T)--(V)--(W)--(X)--(Y)--(A);
		
		\tikzstyle{every path}=[very thick,dotted];
		\draw[orange] (A)--(B)--(BB)--(C)--(D)--(E)--(X)--(Y);
		\draw[red] (A)--(B)--(BB)--(C)--(D)--(F)--(G)--(X)--(Y);
		\draw[magenta] (A)--(B)--(BB)--(I)--(J)--(K)--(X)--(Y);
		\draw[purple] (A)--(B)--(BB)--(I)--(L)--(X)--(Y);
		\draw[blue] (A)--(B)--(BB)--(I)--(M)--(N)--(O)--(X)--(Y);
		\draw[cyan] (A)--(B)--(P)--(Q)--(R)--(X)--(Y);
		\draw[green] (A)--(B)--(P)--(S)--(X)--(Y);
		\draw[green!60!black] (A)--(B)--(T)--(V)--(W)--(X)--(Y);
	\end{tikzpicture}
	\caption{A weak visibility polygon $\WW$ on the base edge $uv$, and
		its collection of $8$ max funnels ordered from left to right.
		These max funnels are depicted by thick dotted lines, which
		are colored from orange on the left to green on the right
		(the apex vertex of each max funnel is colored the same as the funnel chains).
		Note that the left-most orange funnel is degenerate, i.e.
		having its two apex edges collinear.
		Moreover, the third (pink) max funnel from the left is
		emphasised as the filled region of~$\WW$.
	}
	\label{fig:maxfunnels}
\end{figure}
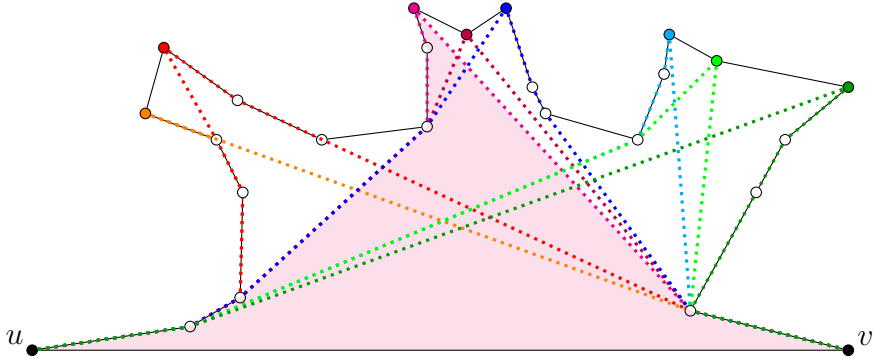

Consider a given weak visibility polygon $\WW$ on the base edge $uv$.
Each vertex $w$ of $\WW$ (other than $u,v$) is visible from some point of the segment~$\lseg uv$.
Consequently, for this $w$ one can always find a funnel $\FF \subseteq W$ with the base $uv$ 
and the apex~$w$ (such that $V(F)\subseteq V(W)$).
The chains of this funnel $\FF$ are simply the geometric shortest paths inside $\WW$
from $w$ to $u$ and~$v$, respectively.
The union of these funnels covers all vertices of $\WW$, and we consider the maximal
ones of them by inclusion -- the {\em max funnels} of $\WW$ -- ordered from left to right.
See Figure~\ref{fig:maxfunnels}.
Note that, as illustrated in the picture, some max funnels may be degenerate,
which means that their two edges incident with the apex vertex are collinear.

Our algorithm (and the corresponding upper bound) 
is based on the following two rather simple observations:

\begin{itemize}
	\item In each of the max funnels, we assign colors independently in the
	logarithmic ruler sequence.
	While doing so, we ``wastefully'' place guards at all vertices
	of both chains except the apex (which makes arguments easier).
	\item In order to prevent conflicts for points seeing guards from
	several funnels, we use generally different sets of colors for different
	max funnels.
	We distribute the color sets to the max funnels again according to the
	ruler sequence, from left to right, 
	which requires only $O(\log n)$ different color sets.
	We use the highest ruler sequence value to determine which color set
	applies to vertices which belong to several max funnels.
	
	See Algorithm~\ref{alg:weakv2pcoloringB} for details.
\end{itemize}

Let the ordered set of max funnels (possibly degenerate) of $\WW$ be $\FF = \{F_1, F_2, \ldots, F_m \}$
in the order of their apices occurring on the clockwise boundary of $\WW$ from $u$ to~$v$.
Then, as already mentioned, every vertex of $\WW$ belongs to one or more funnels of $\FF$.
Observe that $u$ and $v$ belong to all funnels of $\FF$.
Furthermore, for $F_i \in \FF$ with the apex vertex~$a_i$, the point set $W\setminus F_i$
is the union of internally-disjoint polygons such that some of them are to
the {\em left of~$F_i$} (those having their vertices between $u$ and $a_i$
in the clockwise order) and the others to the {\em right of~$F_i$}.
Observe that, for $F_j \in \FF$, we have $j<i$ if and only if $F_j\setminus F_i$
is all to the left of~$F_i$.

\begin{algorithm}[htbp]
	\caption{Computing a vertex-to-point conflict-free chromatic guarding of 
		a weak visibility polygon using $O(\log^2n)$ colors.}
	\label{alg:weakv2pcoloringB}
	\KwIn{A weak visibility polygon $\WW$, weakly visible from the base edge $uv$.}
	\KwOut{A V2P conflict-free chromatic guarding of $\WW$ using $O(\log^2 n)$
		colors, such that no guard is placed at any apex vertex of a max funnel of~$\WW$.}
	
	\smallskip
	Construct the (geom.) shortest path trees $T_u$ and $T_v$ of $\WW$ from $u$ and $v$\;
	Identify the common leaves of $T_u$ and $T_v$ as the apices of max funnels,
	and order them into a set $\FF = \{ F_1, F_2, \ldots, F_m \}$ from left to
	right, where $m\leq n$\;
	
	For $\ell=1+\lfloor\log m\rfloor$,
	consider $\ell$ color sets $C_1, C_2, \ldots, C_{\ell}$ of $2\log n$ colors
	each, i.e., let $C_i=C^l_i\cup C^r_i$ and all these sets be pairwise 
	disjoint of cardinality~$\log n$\;
	
	\ForEach{max funnel $F_i\in \FF$}{
		Assign color set $C_j$ to $F_i$, where $j$ is the largest power of $2$ dividing~$2i$%
		\tcc*{indices $j$ of the assigned sets $C_j$ form the ruler sequ.}
	}
	\ForEach{vertex $w$ of $\WW$}
	{
		Associate $w$ with the max funnel that was assigned color set $C_j$ of
		the highest index~$j$ (among the max funnels containing~$w$, this is unique)\;
		\label{lin:7}
	}
	\ForEach{max funnel $F_i\in \FF$}{
		color all vertices of the left chain of $F_i$ that are 
		associated with $F_i$, except the apex, by the ruler sequence 
		using the color set~$C^l_j$ assigned to~$F_i$\;
		color the same way the right chain of $F_i$ using the color set~$C^r_j$\;
	}
	\Return{colored $\WW$}\;
\end{algorithm}

We have the following lemma on $\WW$ and $\FF$.

\begin{lem} \label{lem:atleastone}
	Consider a weak visibility polygon $\WW$, and its max funnel $F_i \in \FF$.
	Assume that an observer $p\in W\setminus F_i$ is to the left of $F_i$ in
	$\WW$, and that $x\in W\setminus F_i$ is a point of $\WW$ to the right of $F_i$
	such that $p$ sees~$x$.
	Then $p$ sees at least one vertex of $\WW$ belonging to $F_i$ except the apex.
\end{lem}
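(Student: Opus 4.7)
The plan proceeds in three steps.

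First, I would establish the topology of the crossing. Since $p$ and $x$ lie in $\WW \setminus F_i$ on opposite sides of $F_i$, and the segment $\lseg{p}{x}$ is contained in $\WW$, this segment must pass through the interior of $F_i$. More precisely, $\lseg{p}{x}$ enters $F_i$ through an interior point $q_1$ of some edge $e_L = \lseg{l_j}{l_{j+1}}$ of the left chain, and exits through an interior point $q_2$ of some edge $e_R = \lseg{r_{j'}}{r_{j'+1}}$ of the right chain. Crucially, both edges must be internal chords of $\WW$ (and not polygon edges of $\WW$), for otherwise $\lseg{p}{x}$ would cross $\partial \WW$ and exit $\WW$. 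Hence $e_L$ separates $F_i$ from a sub-polygon $P^L_j \subseteq \WW \setminus F_i$ containing $p$, and $p$ sees $q_1$ through $P^L_j$; the analogous picture holds on the right.

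The core of the argument is to show that $p$ sees at least one of the four chain endpoints $l_j, l_{j+1}, r_{j'}, r_{j'+1}$ of the crossed edges. Since at most one of these coincides with the apex $a_i$, such a visible vertex is immediately a non-apex chain vertex of $F_i$, proving the lemma. To accomplish this, I would analyse the maximal visible sub-arc of $e_L$ from $p$: a connected sub-interval of $e_L$ containing $q_1$. Either one endpoint of this sub-arc coincides with a chain endpoint $l_j$ or $l_{j+1}$, in which case we are done, or both endpoints are \emph{shadow} points cast by reflex vertices of $\WW$ lying in $P^L_j$. The analogous dichotomy holds for $e_R$ in the sub-polygon $P^R_{j'}$ on $x$'s side.

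The main obstacle I anticipate is the hard case in which all four chain endpoints $l_j, l_{j+1}, r_{j'}, r_{j'+1}$ are simultaneously shadowed from $p$. The key structural input here is that each of $e_L, e_R$ is a geodesic segment of $\WW$ (being part of the shortest path from $a_i$ to $u$ or to $v$), which forces $l_j, l_{j+1}$ to be convex vertices of $P^L_j$ and makes the portion of $\partial \WW$ bounding $P^L_j$ bulge outward away from the chord $e_L$. Leveraging this convexity, I would argue that the angular wedge of rays from $p$ that cross $F_i$ from $e_L$ to $e_R$ is wide enough to sweep past at least one chain vertex of $F_i$; combined with the left-to-right ordering of the max funnels, a careful angular sweep starting at $\lseg{p}{q_1}$ then descending through any smaller max funnels $F_s$ ($s < i$) that might contain the alleged reflex blockers eventually exposes a chain vertex of $F_i$ visible from $p$. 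Finally, the corner case $l_{j+1} = a_i$ is handled symmetrically by running the same analysis downward toward $l_j$, which is always a non-apex chain vertex.
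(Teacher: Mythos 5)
There is a genuine gap, and it sits exactly where you anticipate it. Your plan reduces the lemma to the strictly stronger claim that $p$ sees one of the four endpoints $l_j, l_{j+1}, r_{j'}, r_{j'+1}$ of the two crossed edges, and then the ``hard case'' in which all four are shadowed is resolved only by the assertion that the angular wedge from $\lseg{p}{q_1}$ to $\lseg{p}{q_2}$ is ``wide enough to sweep past at least one chain vertex.'' That assertion is the lemma itself in disguise, not an argument; nothing in the convexity of the geodesic chains prevents $p$ from peering through a narrow slit whose cone crosses $e_L$ and $e_R$ only in their interiors while every one of the four endpoints is occluded. Indeed the paper explicitly flags this trap right after the lemma statement: the fact that $\overline{px}$ must cross both chains does \emph{not} by itself imply that any vertex of $F_i$ is visible from $p$. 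Your intermediate claim may even be false while the lemma remains true, since the visible vertex the lemma promises need not be an endpoint of the crossed edges at all.

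The ingredient you never invoke --- and which the paper's proof is built on --- is weak visibility from the base edge. Since $\WW$ is weakly visible from $\lseg{u}{v}$ and $\lseg{u}{v}$ belongs to every max funnel, $p$ sees a point $b\in\overline{uv}$ of $F_i$ for free. The paper then rotates the ray $\overrightarrow{pb}$ counterclockwise until it first hits a vertex $c$ of the right chain of $F_i$; if $\overline{pc}$ is blocked, the blocking point $y$ lies outside $\WW$, yet the set $\overline{pb}\cup\overline{px}\cup L_i$ (entirely contained in $\WW$, with both segments crossing the left chain $L_i$ and $\overrightarrow{pc}$ lying angularly between them) separates $y$ from the unbounded exterior, contradicting simplicity of $\WW$. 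This gives the visible vertex in a few lines with no case analysis on shadow points. If you want to salvage your setup, the right move is not to chase the four endpoints but to run this same rotation-plus-separation argument, using $b$ (or your crossing point $q_1$) as the second anchor alongside $x$; without some such trapping argument for the would-be blocker, your proof does not close.
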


Note that while it is immediate that the line of sight between $p$ and $x$
must cross both chains of the funnel $F_i$, this fact alone does not imply that
there is a vertex of $F_i$ visible from~$p$.
\begin{proof}
	By the definition, $p$ sees a point $b\in\overline{uv}$ on the base of~$\WW$.
	Consider the ray $\overrightarrow{pb}$ and rotate it counterclockwise until
	it first time hits a vertex $c$ on the right chain of~$F_i$.
	If $p$ sees~$c$, then we are done.
	Otherwise, the line of sight $\overline{pc}$ is blocked by a point
	$y\in\overline{pc}$ such that $y$ is outside of $\WW$, and hence to the left of $F_i$, too.
	Let $L_i$ be the point set of the left chain of $F_i$.
	
	Now, both segments $\overline{pb}$ and $\overline{px}$ cross the $L_i$, 
	and the counter-clockwise order of the rays from $p$
	is $\overrightarrow{pb}$, $\overrightarrow{pc}$, $\overrightarrow{px}$.
	Consequently, the point set 
	$\overline{pb}\cup\overline{px}\cup L_i$ (which is part of $\WW$)
	separates~$y$ (which is in the exterior of $\WW$).
	This contradicts the fact that $\WW$ is a simple polygon.
\end{proof}

Informally, Lemma~\ref{lem:atleastone} shows that the subcollection of max
funnels whose vertices are seen by the observer $p$, forms a consecutive
subsequence of $\FF$, and so precisely one of these funnels visible by $p$
gets the highest color set according to the ruler sequence.
From this color set we then get the unique color guard seen by~$p$.

The straightforward proof of correctness of our algorithm follows.

\begin{thm}\label{thm:twoApprox}
	Algorithm~\ref{alg:weakv2pcoloringB} in polynomial time computes a conflict-free
	chromatic guarding of a weak visibility polygon using $O(\log^2 n)$ colors.
\end{thm}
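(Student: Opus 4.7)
The runtime bound and the color count are direct: the shortest-path trees $T_u$ and $T_v$ and the ordered list of max funnels $\FF$ can be constructed in polynomial time, the main loops run over $O(n)$ objects, and the total palette has size $\ell \cdot 2\log n = O(\log n) \cdot O(\log n) = O(\log^2 n)$.

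For conflict-freeness, I would fix an arbitrary observer $p \in \WW$ and let $S(p) \subseteq \FF$ denote the set of max funnels of which $p$ sees at least one non-apex vertex. The first key step is to show, using Lemma~\ref{lem:atleastone}, that $S(p)$ is a contiguous subrange $\{F_a, F_{a+1}, \ldots, F_b\}$ of $\FF$: if $F_a, F_b \in S(p)$ with $a < c < b$, then a non-apex vertex of $F_a$ visible to $p$ lies to the left of $F_c$ and a non-apex vertex of $F_b$ visible to $p$ lies to its right, so Lemma~\ref{lem:atleastone} applied with $F_i := F_c$ produces a non-apex vertex of $F_c$ visible from $p$.

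The second step invokes the ruler-sequence property at two nested scales. \emph{Outer scale}: among the indices $a, a{+}1, \ldots, b$, the largest $j$ such that $2^j$ divides $2i$ is attained by a unique index $k^*$, so the color set $C_{j^*}$ assigned to $F_{k^*}$ is not used by any other funnel in $S(p)$. \emph{Inner scale}: within $F_{k^*}$, the observer $p$ sees a contiguous subrange of vertices on the left chain and a contiguous subrange on the right chain, and the ruler-sequence coloring within $C^l_{j^*}$ (resp.\ $C^r_{j^*}$) produces, exactly as in the proof of Corollary~\ref{cor:opt+4}, a uniquely-appearing color on that subrange. Combining the two scales, this color is globally unique among all colors that $p$ sees, so $p$ is conflict-free.

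The principal obstacle is reconciling line~\ref{lin:7} of Algorithm~\ref{alg:weakv2pcoloringB} (which associates each vertex with only one max funnel) with the inner-scale argument: a vertex of $F_{k^*}$ visible to $p$ might be associated with some other funnel $F_{k''}$ and therefore not colored from $C_{j^*}$ at all. The resolution is that if such a vertex $w \in F_{k^*}$ is associated with $F_{k''}\ne F_{k^*}$, then $F_{k''}$ contains $w$ and has color set index strictly greater than $j^*$; if $w$ is a non-apex vertex of $F_{k''}$, then $F_{k''}\in S(p)$, contradicting the maximality of $j^*$. The remaining case that $w$ is the apex of $F_{k''}$ is the genuine corner case to handle, and I would deal with it by strengthening the contiguity argument to show that $p$ sees an entire non-trivial arc of vertices of $F_{k^*}$ (so not all of them can simultaneously be apices of other max funnels with higher color-set indices), which then supplies at least one vertex of $F_{k^*}$ that is actually associated with $F_{k^*}$ and hence colored from $C_{j^*}$.
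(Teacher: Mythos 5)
Your proposal follows essentially the same route as the paper's proof: contiguity of the set of visible max funnels via Lemma~\ref{lem:atleastone}, a unique maximum of the outer ruler sequence over that contiguous range, and a unique maximum of the inner ruler sequence on the consecutive arc of a chain that the observer sees. The paper's own argument is terser: it picks the funnel $F_k$ with the highest color set between two conflicting funnels and simply asserts that $p$ then gets a unique color guard ``from those on $F_k$,'' without ever confronting the association step of line~\ref{lin:7}. Your explicit treatment of that step is a genuine improvement in rigor, and your main resolution (a non-apex vertex $w$ of $F_{k^*}$ associated with some $F_{k''}$ of higher index would force $F_{k''}\in S(p)$ and contradict maximality) is correct.

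The one weak point is your handling of the remaining corner case. Strengthening Lemma~\ref{lem:atleastone} to give ``an entire non-trivial arc'' of $F_{k^*}$ is not available: the lemma guarantees only a single non-apex vertex, and an observer can indeed see exactly one vertex of a funnel it does not contain. Fortunately the corner case is vacuous for a structural reason you can read off from Algorithm~\ref{alg:weakv2pcoloringB}: the apex of a max funnel is by construction a common \emph{leaf} of the shortest-path trees $T_u$ and $T_v$, whereas every non-apex vertex of a chain of $F_{k^*}$ is either $u$, $v$, or an internal node of $T_u$ or $T_v$ (the chains are root-to-leaf paths in these trees). Hence a non-apex vertex of $F_{k^*}$ can never be the apex of another max funnel, so the vertex $w$ in your obstacle is always a non-apex vertex of whatever funnel it is associated with, and your first argument already closes the case.
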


\begin{proof}
	Clearly, identifying the max funnels and the indexing can be done in $O(n^2)$ time. 
	We now prove the correctness of the algorithm.
	
	Any given point $p\in W$ always sees a unique color within any single funnel $F_i \in \FF$, 
	if at least some vertex of $F_i$ is seen from~$p$;
	in fact one such unique color left the left or/and one from the right chain
	of this funnel (and the color sets of the left and right chains are disjoint).
	This is because $p$ always sees a consecutive section of the left or
	right chain of $F_i$, regardless of whether $p$ is inside or outside
	of~$F_i$, and we use the ruler sequence colors on each chain.
	However, $p$ may see vertices associated with two or more max funnels assigned the same color set
	by Algorithm~\ref{alg:weakv2pcoloringB} (line~\ref{lin:7}).
	
	Suppose the latter; that the point $p$ sees vertices associated with
	two distinct max funnels $F_i,F_j \in \FF$ assigned the same color set.
	Let $k$, $i<k<j$, be such that $F_k$ is assigned the highest color
	set within the ruler subsequence from $i$ to $j$ ($k$ is unique).
	If $p\in F_k$, then $p$ sees a unique color guard from those on~$F_k$.
	Otherwise, up to symmetry, $p$ is to the left of $F_k$ and $p$ sees
	a vertex associated with $F_j$ which is to the right of $F_k$. 
	We have a situation anticipated by Lemma~\ref{lem:atleastone}, and
	so $p$ sees some vertex of $F_k$.
	Then again, $p$ gets a unique color guard from those on~$F_k$.
\end{proof}

\chapter{General Polygons} \label{chap:generalpolygons}

\section{Summary of the chapter}
In this chapter, we consider coloring and guarding simple polygons in general, and further generalize some of the results that were presented in previous chapters. This time, we start with the ordinary proper coloring (defined in Section~\ref{sec:problems})
\begin{itemize}
	\item Given the visibility graph of a simple polygon, we describe a polynomial-time algorithm to determine whether that graph admits a 4-coloring, and return the coloring if exists (Algorithm~\ref{alg:4coloring}).
	\item We show that the described algorithm can also be implemented without the representation of the polygon (Section~\ref{sec:withoutrepr}).
	\item We give NP-hardness reductions for the problems of proper 5-coloring of simple polygons (Section~\ref{sec:hardness5}), and proper 4-coloring of polygons with holes (Section~\ref{sec:polygonswithholes}).
	\item We show that the vertex-to-point conflict-free chromatic guarding of a simple polygon can be done in $O (\log^2 n)$ guards, which is a generalization of the upper bound for weak-visibility graphs (Algorithm~\ref{alg:weakv2pcoloringB}).
	\item We consider also the vertex-to-vertex chromatic guarding and give an NP-hardness reduction for that problem as well (Section~\ref{sec:v2vconflictfree}).
\end{itemize}

\section{Related work}
Visibility graphs of polygons have been studied with respect to various theoretical and practical computational problems.
The complexities of several popular optimization problems
have been determined for visibility graphs of polygons.
A geometric variation of the dominating set problem, namely polygon guarding, is one of the most studied problems
in computational geometry and is known as the Art Gallery Problem \cite{o-agta-87}. 
It has been studied extensively
for both polygons with and without holes and has been found to be NP-hard in both cases \cite{ll-ccagp-86,os-snpd-83}.
Besides, given a polygon, computing a maximum independent set is known to be hard, due to Shermer \cite{s-hpip-89},
while computing a maximum clique has been shown to be in polynomial time by Ghosh et al.~\cite{gsbg-cmcvgsp-07}.

A \emph{proper vertex colouring} of a graph is an assignment of labels or colours to the vertices of
the graph so that no two adjacent vertices have the same colours. Henceforth, 
when we say colouring a graph, we refer to proper vertex colouring.
The \emph{chromatic number} of a graph is defined as the minimum number of colours used in any proper colouring of the graph.
Visibility graph colouring has been studied for various types of visibility graphs.
K\'ara et al.\ characterized 3-colourable visibility graphs of point sets and
described a super-polynomial lower bound on the chromatic number with respect to
the clique number of visibility graphs of point sets \cite{Kara_pointVisChromatic}.
Pfender showed that, as for general graphs, the chromatic number of visibility graphs 
of point sets is also not upper-bounded by their clique numbers \cite{p-vgps-2008}.
Diwan and Roy showed that for visibility graphs of point sets, the 5-colouring
problem is NP-hard, but 4-colouring is solvable in polynomial time \cite{coloringPVG}.

The problem of {\em colouring} the visibility graphs of given polygons 
has been studied in the special context where each internal point of the
polygon is seen by a vertex, whose colour appears exactly
once among the vertices visible to that point \cite{Bartschi-2014,hoffmann:2015,FeketeFHM014}. 
However, little is known on colouring visibility graphs of polygons without such constraints.
Although 3-colouring is NP-hard for general graphs \cite{computersandintractability}, in particular it is rather trivial
to solve it for visibility graphs of polygons in polynomial time using a greedy approach. 
Already with $4$ colours the same question has been open so far.

In this chapter, we settle the complexity question of
the general problem of colouring polygonal visibility graphs,
which was declared open in 1995 by Lin and Skiena \cite{Lin_complexityaspects}.
We first provide a polynomial-time algorithm to find a $4$-colouring
of a given graph $G$ with the promise that $G$ is the visibility graph of some polygon, if $G$ is indeed $4$-colourable.
Then, we provide a reduction showing that the question of
$k$-colourability of the visibility graph of a given simple polygon 
is NP-complete for any~$k\geq5$.
We also show that $k$-colourability of visibility graphs of polygons with holes is
NP-complete for any~$k\geq4$.
Finally, we use our techniques of reduction to give a very simple proof of the NP-hardness
of independent sets of visibility graphs of simple polygons, which was originally proved by Shermer \cite{s-hpip-89}.

\section{Proper coloring of a polygon visibility graph} \label{sec:polyproper}
\subsection{Polynomial-time algorithm for proper 4-coloring} \label{sec:4coloringP}
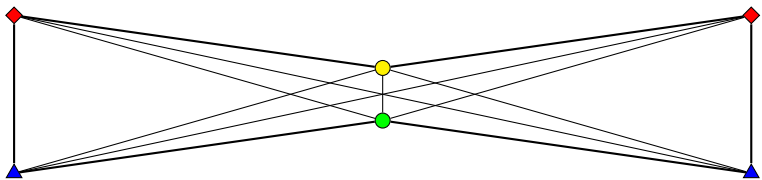
\begin{figure}
	\centering
	\begin{tikzpicture}[scale=0.7,
		diamondd/.style = {inner sep=1.6pt, diamond},	
		triangle/.style = {inner sep=1.2pt, regular polygon, regular polygon sides=3},	
		dtriangle/.style = {inner sep=1.2pt, regular polygon, regular polygon sides=3, rotate=180}	
		]	
		\tikzstyle{every node}=[draw, shape=circle, minimum size=3pt, inner sep=2pt];
		\node[fill=blue,triangle] (A) at (-5,0) {};
		\node[fill=red,diamondd] (B) at (-5,3) {};
		\node[fill=yellow] (C) at (2,2) {};
		\node[fill=red,diamondd] (D) at (9,3) {};
		\node[fill=blue,triangle] (E) at (9,0) {};
		\node[fill=green] (F) at (2,1) {};
		
		\draw[thick] (A)--(B)--(C)--(D)--(E)--(F)--(A);
		\draw (A)--(C);
		\draw (A)--(D);
		\draw (B)--(E);
		\draw (B)--(F);
		\draw (C)--(E);
		\draw (C)--(F);
		\draw (D)--(F);
		
	\end{tikzpicture}
	\caption{A visibility graph that is without a $K_5$, non planar but is 4-colorable.}
	\label{fig:nonplanar}
\end{figure}

In this section, we study the algorithmic question of $4$-colorability of the
visibility graph of a given polygon.
The full structure of $4$-colorable visibility graphs is not yet known
and it seems to be non-trivial. 
For instance, if a visibility graph is planar, it is $4$-colorable by the 4-color theorem \cite{4ct}.
Though, if such a graph contains $K_5$, then it is neither planar nor $4$-colorable, 
but a visibility graph not containing any $K_5$ may be non-planar yet $4$-colorable
(See Figure~\ref{fig:nonplanar}).

The related algorithmic problem of 3-coloring visibility graphs is rather
easy to resolve as follows.
Every simple polygon can be triangulated and, in such a triangulation, 
every non-boundary edge is contained in two triangles.
One can then proceed greedily edge by edge: 
Suppose a triangle has already been colored, and it
shares an edge with a triangle that is not fully colored. 
Then the two end vertices of the shared edge 
uniquely determine the color of the third vertex of the uncolored triangle. 
Finally, one checks whether this unique possible coloring is indeed proper.

Our algorithm essentially generalizes the 3-coloring method for 4-coloring.
We first divide the polygon into \emph{reduced polygons}.
A polygon $\PP$ is called a reduced polygon, if every chord of $\PP$ (i.e., an internal diagonal) is intersected by another chord of $\PP$.
After the division, we find and color in each reduced subpolygon a triangle (a $K_3$ subgraph) with three distinct colors.
Subsequently, whenever we find an uncolored vertex $v$ adjacent to some three
vertices colored with three distinct colors (such as, to an already
colored triangle), we can uniquely color also $v$, by the fourth color.
We will show that we can exhaust all vertices of a reduced subpolygon in this manner.
Furthermore, we check for possible coloring conflicts -- since the
coloring process is unique, this suffices to solve $4$-colorability.

Altogether, this will lead to the following theorem.
\begin{thm} \label{colthm}
	The 4-colorability problem is solvable in polynomial time for visibility graphs of simple polygons,
	and if a 4-coloring exists, then it can be computed in polynomial time from the given input graph (even without a visibility representation).
\end{thm}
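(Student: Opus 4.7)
The plan is an algorithm with two phases: decompose $\PP$ into reduced subpolygons along uncrossed chords, and then in each reduced subpolygon propagate a forced coloring seeded at one triangle.

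First I would locate, in some fixed triangulation of $\PP$, every chord $ab$ of $\PP$ that is not crossed by any other chord. Cutting along each such chord partitions $\PP$ into a collection of reduced subpolygons (in which, by definition, every chord is crossed by another) arranged in a tree whose edges are the cut-chords. Since each cut-chord is an edge of the visibility graph, any $4$-coloring of $\PP$ restricts to proper $4$-colorings on the subpolygons agreeing on the two endpoints of each cut-chord; conversely, $4$-colorings of the subpolygons that agree on the shared endpoints glue to a proper $4$-coloring of $\PP$. So it suffices to $4$-color each reduced subpolygon, possibly with two endpoint colors prescribed by a neighbouring subpolygon, and to traverse the tree of subpolygons propagating these constraints.

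The core ingredient is a propagation lemma for a reduced polygon $\PP'$: once three vertices $a,b,c$ of any triangle of $\PP'$ are assigned three pairwise distinct colors, the $4$-coloring of $\PP'$ extending this, if one exists, is \emph{unique} and can be computed greedily. The reason is that in a reduced polygon every chord $ab$ is crossed by some chord $cd$, and a crossing of two chords immediately implies that all four vertices $a,b,c,d$ are pairwise visible, hence induce a $K_4$. So as soon as any three of the four vertices of such a $K_4$ have been assigned three different colors, the fourth is forced to the remaining color. Starting from the seed triangle and iteratively picking uncolored vertices that belong to a $K_4$ with three already-colored vertices of distinct colors, the forced region grows; I would argue that it eventually covers all of $V(\PP')$ by walking through a triangulation of $\PP'$ and showing that every triangulation edge $uv$ that separates a colored from an uncolored region is itself a chord of $\PP'$ (as $\PP'$ is reduced), hence crossed by a chord whose endpoints supply the missing third colored neighbour of the new vertex.

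Given the lemma, the algorithm picks in each reduced subpolygon a triangle (provided by its triangulation) and enumerates the at most $4! = 24$ ways to color its three vertices with three distinct colors from $\{1,2,3,4\}$; for each choice it runs the propagation, checks for conflicts, and records the resulting coloring (if any). Traversing the tree of subpolygons from a root, the colors on each cut-chord restrict the $24$ possibilities in the adjacent subpolygon to a constant number, so the total work is polynomial in $n$. Reporting failure when no consistent choice survives correctly decides $4$-colorability.

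The main obstacle I expect is the propagation lemma -- specifically, proving that in a reduced polygon the forcing started at a single triangle reaches \emph{every} vertex without stalling. Local forcing from crossing chords is cheap; the subtle point is to exhibit, for every yet-uncolored vertex sitting across a triangulation diagonal from the colored region, a crossing chord configuration that delivers the required three-coloured neighbourhood. I would handle this by induction along a triangulation, exploiting that being reduced makes every internal diagonal crossed and hence makes every triangulation edge part of a $K_4$ with a vertex in the colored region. Finally, to obtain the representation-free version stated in the theorem, I would observe that the whole procedure -- triangulation, identification of uncrossed chords, enumeration of $K_4$'s from crossings, and propagation -- depends only on combinatorial data extractable from the visibility graph of a simple polygon (using known recognition of polygon triangulations inside a visibility graph, e.g.\ the $3$-coloring procedure mentioned earlier in the section), so the algorithm can be run directly on the input graph $G$.
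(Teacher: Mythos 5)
Your overall skeleton matches the paper's: cut $\PP$ along chords not crossed by any other chord (the paper's ``bottleneck pairs''), reduce to the claim that a reduced polygon has at most one $4$-coloring extending a distinctly colored seed triangle, propagate greedily by repeatedly coloring a vertex adjacent to three distinctly colored vertices, and glue the pieces back along the tree of cut-chords. The decomposition, the gluing, and the representation-free identification of the cut-chords are all fine and essentially what the paper does.

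However, the engine you propose for the propagation lemma rests on a false geometric claim: that two crossing chords $ab$ and $cd$ of a simple polygon have all four endpoints pairwise visible and hence induce a $K_4$. This fails. Take the square $a=(0,0)$, $c=(1,0)$, $b=(1,1)$, $d=(0,1)$ and replace the boundary edge $ac$ by the two-edge chain $a,r,c$ with a reflex vertex $r=(0.5,0.4)$ bulging into the interior. The chords $ab$ and $cd$ are still valid and still cross at $(0.5,0.5)$, but the segment $ac$ now lies outside the polygon (below the chain $a,r,c$), so $a$ and $c$ do not see each other: the four endpoints induce only a diamond, and a vertex of a diamond adjacent to just two of three seed colors is not forced. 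In general the boundary arc between two consecutive endpoints of crossing chords can dip into the triangle they span without ever touching either chord, so crossings buy you a $C_4$ with one chord, not a $K_4$. This is exactly why the paper's Lemma~\ref{mainlem} is long: the forced vertex at each step is found not inside the crossing configuration itself but along the geodesic shortest paths $\Pi(\tau_1,i),\Pi(\tau_2,i),\Pi(\tau_2,j),\Pi(\tau_3,j)$ emanating from the seed triangle, using tangent and blocker arguments (and a further case analysis in Theorem~\ref{lemalgo} to reach the vertices strictly between consecutive shortest-path vertices). Your own closing paragraph correctly identifies the propagation lemma as the hard part, but the tool you offer to discharge it does not exist, so the proof as proposed has a genuine gap at its core; the surrounding algorithmic and complexity analysis is sound modulo that lemma.
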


We first prove that if a reduced polygon is 4-colorable, then the 4-coloring is unique up to	
a permutation of colors.
In the coming proof, consider a polygon $\PP$ and its visibility graph $G(V,E)$, embedded on $\PP$.
Hereafter we slightly abuse notation by equating $\PP$ and $G$.
Since we want to 4-color $\PP$, we assume that $G$ has no $K_5$ (or we answer `no').
We denote the clockwise polygonal chain of $\PP$ from a vertex $u$ to
a vertex $v$ as $\partial(u,v)$ (see Section~\ref{sec:graphs} for a complete definition).

One can easily see that it is enough to focus on reduced polygons in our proofs.
Indeed, assume an edge $uv$ of $G$ which is a chord of $\PP$ and not crossed
by any other chord.
We can partition $\PP$ into subpolygons $P_1$ and $P_2$, where
$P_1=(u\,\partial(u,v)\,v)$ and $P_2= (v\,\partial(v,u)\,u)$. 
Since there is no edge $xy$ of $G$ such that $x$ is in $P_1$ and $y$ is in 	
$P_2$, the polygons $P_1$ and $P_2$ can be $4$-colored separately	
and merged again (provided that $P_1$ and $P_2$ are $4$-colorable).

For a pair of distinct points $x$ and $y$, let $\overline{xy}$ denote the	
line segment with the ends $x$ and $y$, and $\overrightarrow{xy}$	
denote the half-line (ray) starting in $x$ and passing through~$y$.

Let $u$ and $v$ be two vertices of $\PP$.
The {\em shortest path} between $u$ and $v$ is a (graph) path from $u$ to
$v$ in $G$ such that the sum of the Euclidean lengths of its edges is minimized.
Such a shortest path between $u$ and $v$ is unique in $\PP$ and is denoted by $\Pi(u,v)$.
Observe that all non-terminal vertices of a shortest path are non-convex \cite{g-vap-07}.
We will assume an implicit ordering of vertices on $\Pi(u,v)$ from $u$ to $v$. 
When we say that some vertex $w$ is the first (or last) 
vertex on $\Pi(u,v)$ with a certain property, we mean that $w$ precedes 
(respectively, succeeds) all other vertices with that property on $\Pi(u,v)$.
For a point $x$, let the {\em right tangent from $x$ to $\Pi(u,v)$}	
be the ray $\overrightarrow{xy}$ such that $y\in\Pi(u,v)$ (possibly a 	
segment of $\Pi(u,v)$ belongs to $\overrightarrow{xy}$) and whole	
$\Pi(u,v)$ lies in counterclockwise direction from	
$\overrightarrow{xy}$, i.e., $\Pi(u,v)$ does not cross~$\overrightarrow{xy}$.	
A {\em left tangent} is defined analogously.

To prove Theorem~\ref{colthm}, we use the following sequence of lemmas.	
In each lemma below, we consider a $K_5$-free reduced polygon $\PP$	
and three vertices $\tau_1,\tau_2,\tau_3$ which forms a triangle~$T\subseteq G$.	
Assume that $T$ is already colored (which is unique up to a permutation of the colors).
Suppose that $i \in V$ is an uncolored vertex, such that an edge incident to $i$ intersects $T$.
Then we have the following lemmas.

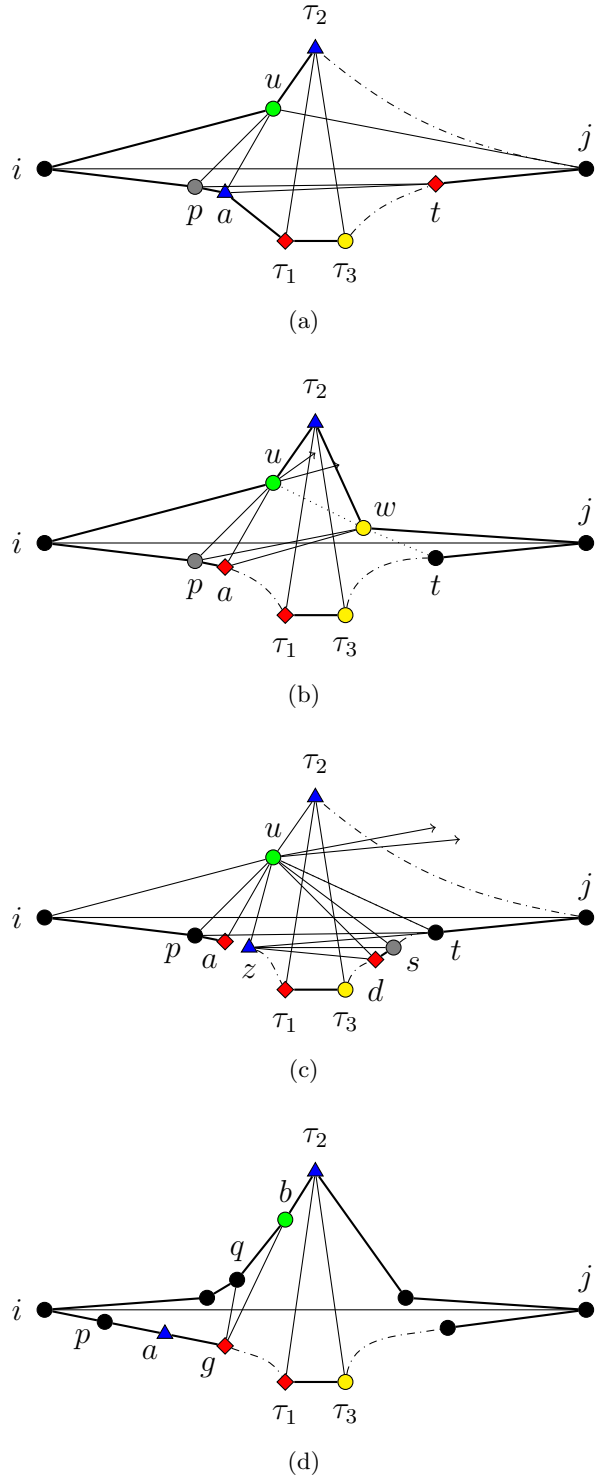
\begin{figure}[htbp]
	\vspace{-8em}
	\centering
	\subfloat[]{
		\begin{tikzpicture}[scale=0.8,
			diamondd/.style = {inner sep=1.6pt, diamond},	
			triangle/.style = {inner sep=1.2pt, regular polygon, regular polygon sides=3},	
			dtriangle/.style = {inner sep=1.2pt, regular polygon, regular polygon sides=3, rotate=180}	
			]	
			\tikzstyle{every node}=[draw, shape=circle, minimum size=3pt, inner sep=2pt];
			\node[fill=black, label=left:$i$] (vi) at (0,0) {};
			\node[fill=black, label=$j$] (vj) at (9,0) {};
			\node[fill=gray, label=below:$p$] (vp) at (2.5, -0.3) {};
			\node[fill=blue,triangle, label=below:$a$] (va) at (3,-0.4) {};
			\node[fill=red,diamondd, label=below:$\tau_1$] (t1) at (4,-1.2) {};
			\node[fill=yellow, label=below:$\tau_3$] (t3) at (5,-1.2) {};
			\node[fill=red,diamondd, label=below:$t$] (vt) at (6.5,-0.25) {};
			\node[fill=blue,triangle, label=$\tau_2$] (t2) at (4.5,2) {};
			\node[fill=green, label=$u$] (vu) at  (3.8,1) {};
			
			\draw[thick] (vi)--(vp)--(va)--(t1)--(t3);
			\draw[dash dot] (t3)  to[out=50,in=200] (vt);
			\draw[thick] (vt)--(vj);
			\draw[dash dot] (vj) to[out=170,in=320] (t2);
			\draw[thick] (t2)--(vu)--(vi);
			
			\draw (vi)--(vj);
			
			\draw (vt)--(vp);
			\draw (vt)--(va);
			\draw (vj)--(vu);
			
			\draw (vu)--(vp);
			\draw (vu)--(va);
			
			\draw (t2)--(t1);
			\draw (t2)--(t3);
		\end{tikzpicture}
	}
	
	\subfloat[]{
		\begin{tikzpicture}[scale=0.8,
			diamondd/.style = {inner sep=1.6pt, diamond},	
			triangle/.style = {inner sep=1.2pt, regular polygon, regular polygon sides=3},	
			dtriangle/.style = {inner sep=1.2pt, regular polygon, regular polygon sides=3, rotate=180}	
			]	
			\tikzstyle{every node}=[draw, shape=circle, minimum size=3pt, inner sep=2pt];
			\node[fill=black, label=left:$i$] (vi) at (0,0) {};
			\node[fill=black, label=$j$] (vj) at (9,0) {};
			\node[fill=gray, label=below:$p$] (vp) at (2.5, -0.3) {};
			\node[fill=red,diamondd, label=below:$a$] (va) at (3,-0.4) {};
			\node[fill=red,diamondd, label=below:$\tau_1$] (t1) at (4,-1.2) {};
			\node[fill=yellow, label=below:$\tau_3$] (t3) at (5,-1.2) {};
			\node[fill=black, label=below:$t$] (vt) at (6.5,-0.25) {};
			\node[fill=yellow, label=30:$w$] (vw) at (5.3,0.25) {};
			\node[fill=blue,triangle, label=$\tau_2$] (t2) at (4.5,2) {};
			\node[fill=green, label=$u$] (vu) at  (3.8,1) {};
			
			\draw[thick] (vi)--(vp)--(va);
			\draw[dash dot] (va) to[out=340,in=120] (t1);
			\draw[thick] (t1)--(t3);
			\draw[dash dot] (t3)  to[out=80,in=180] (vt);
			\draw[thick] (vt)--(vj)--(vw)--(t2)--(vu)--(vi);
			
			\draw[dotted] (vt)--(vw)--(vu);
			\draw[->] (vu)--(4.5,1.5);
			\draw[->] (vu)--(4.9,1.3);

			\draw (vi)--(vj);
			
			\draw (vw)--(vp);
			\draw (vw)--(va);

			\draw (vu)--(vp);
			\draw (vu)--(va);
			
			\draw (t2)--(t1);
			\draw (t2)--(t3);
		\end{tikzpicture}
	}
	
	\subfloat[]{
		\begin{tikzpicture}[scale=0.8,
			diamondd/.style = {inner sep=1.6pt, diamond},	
			triangle/.style = {inner sep=1.2pt, regular polygon, regular polygon sides=3},	
			dtriangle/.style = {inner sep=1.2pt, regular polygon, regular polygon sides=3, rotate=180}	
			]	
			\tikzstyle{every node}=[draw, shape=circle, minimum size=3pt, inner sep=2pt];
			\node[fill=black, label=left:$i$] (vi) at (0,0) {};
			\node[fill=black, label=$j$] (vj) at (9,0) {};
			\node[fill=black, label=200:$p$] (vp) at (2.5, -0.3) {};
			\node[fill=red,diamondd, label=250:$a$] (va) at (3,-0.4) {};
			\node[fill=blue,triangle, label=below:$z$] (vz) at (3.4,-0.5) {};
			\node[fill=red,diamondd, label=below:$\tau_1$] (t1) at (4,-1.2) {};
			\node[fill=yellow, label=below:$\tau_3$] (t3) at (5,-1.2) {};
			\node[fill=red,diamondd, label=below:$d$] (vd) at (5.5,-0.7) {};
			\node[fill=gray, label=-30:$s$] (vs) at (5.8,-0.5) {};
			\node[fill=black, label=-30:$t$] (vt) at (6.5,-0.25) {};
			\node[fill=blue,triangle, label=$\tau_2$] (t2) at (4.5,2) {};
			\node[fill=green, label=$u$] (vu) at  (3.8,1) {};
			
			\draw[dash dot] (vz) to[out=340,in=120] (t1);
			\draw[thick] (vi)--(vp)--(va);
			
			\draw[thick] (t1)--(t3);
			\draw[dash dot] (t3)  to[out=70,in=200] (vd);
			\draw[thick] (vd)--(vs);
			\draw[dash dot] (vs)  to[out=40,in=180] (vt);
			\draw[thick] (vt)--(vj);
			\draw[dash dot] (vj)  to[out=170,in=320] (t2);
			\draw (t2)--(vu)--(vi);
			
			\draw[dotted] (vt)--(vu);
			\draw[->] (vu)--(6.5,1.5);
			\draw[->] (vu)--(6.9,1.3);

			\draw (vi)--(vj);
			\draw (vp)--(vt);
			
			\draw (vu)--(vp);
			\draw (vu)--(va);
			\draw (vu)--(vz);
			\draw (vu)--(vt);
			\draw (vu)--(vs);
			\draw (vu)--(vd);
			
			\draw (vz)--(vd);
			\draw (vz)--(vs);
			\draw (vz)--(vt);
			
			\draw (t2)--(t1);
			\draw (t2)--(t3);
		\end{tikzpicture}
	}
	
	\subfloat[]{
		\begin{tikzpicture}[scale=0.8,
			diamondd/.style = {inner sep=1.6pt, diamond},	
			triangle/.style = {inner sep=1.2pt, regular polygon, regular polygon sides=3},	
			dtriangle/.style = {inner sep=1.2pt, regular polygon, regular polygon sides=3, rotate=180}	
			]	
			\tikzstyle{every node}=[draw, shape=circle, minimum size=3pt, inner sep=2pt];
			\node[fill=black, label=left:$i$] (vi) at (0,0) {};
			\node[fill=black, label=$j$] (vj) at (9,0) {};
			\node[fill=black, label=200:$p$] (vp) at (1, -0.2) {};
			\node[fill=blue,triangle, label=250:$a$] (va) at (2,-0.4) {};
			\node[fill=red,diamondd, label=250:$g$] (vg) at (3,-0.6) {};
			\node[fill=red,diamondd, label=below:$\tau_1$] (t1) at (4,-1.2) {};
			\node[fill=yellow, label=below:$\tau_3$] (t3) at (5,-1.2) {};
			\node[fill=black] (vt) at (6.7,-0.3) {};
			\node[fill=black] (vt') at (6,0.2) {};
			\node[fill=blue,triangle, label=$\tau_2$] (t2) at (4.5,2.3) {};
			\node[fill=green, label=$b$] (vb) at  (4,1.5) {};
			\node[fill=black, label=$q$] (vq) at  (3.2,0.5) {};
			\node[fill=black] (vq') at  (2.7,0.2) {};
			
			\draw[thick] (vi)--(vp)--(va)--(vg);
			\draw[thick] (t1)--(t3);
			\draw[dash dot] (vg) to[out=340,in=120] (t1);
			\draw[dash dot] (t3)  to[out=80,in=190] (vt);
			\draw[thick] (vt)--(vj)--(vt')--(t2);
			\draw[thick] (t2)--(vb)--(vq)--(vq')--(vi);
			
			\draw (vi)--(vj);
			\draw (t1)--(t2)--(t3);
			\draw (vb)--(vg)--(vq);
		\end{tikzpicture}
	}
	\caption{Illustration of the proof of Lemma~\ref{mainlem}:
		The vertices whose colors shall be uniquely determined next, are
		now drawn in gray. Polygonal boundaries containing multiple
		vertices not included in the figure are drawn with dashed lines.
		(a) $p$ forms a $K_4$ with $a$, $t$ and $u$. (b) $p$ forms a $K_4$ with $a$, $u$ and $w$.
		(c) $s$ forms a $K_4$ with $u$, $d$ and $z$. (d) $g$, $q$ and $b$ form a $K_3$.}
	\label{fig:cases}
\end{figure}

\begin{figure}
	\centering
	\begin{tikzpicture}[scale=2.8,
		diamondd/.style = {inner sep=1.6pt, diamond},	
		triangle/.style = {inner sep=1.2pt, regular polygon, regular polygon sides=3},	
		dtriangle/.style = {inner sep=1.2pt, regular polygon, regular polygon sides=3, rotate=180}	
		]	
		\tikzstyle{every node}=[draw, shape=circle, minimum size=3pt, inner sep=2pt];
		\draw[dash dot] plot [smooth, tension=1] coordinates {(1, -0.1) (1.5,0.1) (2.5,0.3)};
		\node[fill=red,diamondd, label=left:$\tau_1$] (t1) at (0,-0.1) {};
		\node[fill=blue,triangle, label=$\tau_2$] (t2) at (0.5, 0.5) {};
		\node[fill=yellow, label=right:$\tau_3$] (t3) at (1, -0.1) {};
		\node[fill=blue,triangle, label=$a$] (va) at (2,0.6) {};
		\node[fill=gray, label=$u$] (vu) at (2.3,0.8) {};
		\node[fill=green, label=$b$] (vb) at (2.9,0.6) {};
		\node[fill=red,diamondd, label=right:$d$] (vd) at (2.9,0.37) {};
		\node[fill=black] (vbl) at  (3.45,0.1) {};
		\node[fill=blue,triangle, label=$i$] (vi) at (3.7,0.6) {};
		
		\draw[thick] (t1)--(t2)--(t3)--(t1);
		\draw[thick] (t1)--(vi);
		\draw (2.5,0.3)--(vbl);
		\draw (vu)--(va);
		\draw (vu)--(vb);
		\draw (va)--(vb)--(vd);
		\draw (va)--(vd);
		\draw (vb)--(vd);
		\draw (vu)--(vd);
		\draw (va)--(vbl);
		\draw (vd)--(vbl);
		\draw[dash dot, bend right] (t2) to (va);
		\draw[dash dot] (vb)--(vi);	
	\end{tikzpicture}
	
	\caption{The vertex $a$ has an edge incident to one of the vertices of
		$a,b,c$, where $a$, $b$ and $c$ lie on the already colored 
		shortest paths from $\tau_1$ and $\tau_3$ to $i$.}
	\label{fig:algo1}
\end{figure}
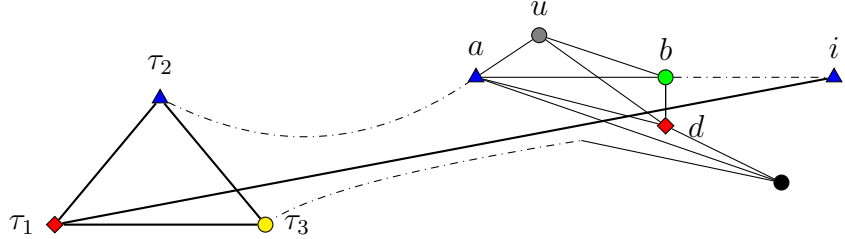
\begin{lem} \label{mainlem}
	Assume that two vertices $i \in \partial(\tau_1, \tau_2)$ and $j \in \partial(\tau_2, \tau_3)$ see each other, and the edge $ij$
	intersects $\tau_1\tau_2$ and $\tau_2 \tau_3$. Then the colors of all vertices on the four paths
	$\Pi(\tau_1, i)$, $\Pi(\tau_2, i)$, $\Pi(\tau_2, j)$ and $\Pi(\tau_3, j)$, 
	including $i,j$
	themselves, are uniquely determined by the colors of $T$.
\end{lem}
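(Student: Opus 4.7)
The plan is to propagate the coloring outward from the triangle $T = \tau_1\tau_2\tau_3$ along each of the four shortest paths by iterated application of one deterministic rule: in a $K_5$-free graph, any vertex adjacent to a properly $3$-coloured $K_3$ is forced to receive the unique fourth colour. Since every internal vertex of a shortest path $\Pi(u,v)$ in $\PP$ is reflex, and since $\overline{ij}$ crosses both $\overline{\tau_1\tau_2}$ and $\overline{\tau_2\tau_3}$, the four paths $\Pi(\tau_1,i)$, $\Pi(\tau_2,i)$, $\Pi(\tau_2,j)$, $\Pi(\tau_3,j)$ lie inside a subpolygon of $\PP$ bounded by the corresponding edges of $T$, the chord $\overline{ij}$, and pieces of the boundary arcs $\partial(\tau_1,i)$ and $\partial(j,\tau_3)$. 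Reflexivity of their interior vertices also ensures these four paths are vertex-disjoint apart from their endpoints.

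First I would establish the base of the induction by forcing the colour of the first interior vertex $u_1$ on, say, $\Pi(\tau_2,i)=(\tau_2=u_0,u_1,\ldots,u_k=i)$. By the right tangent from $u_1$ to $\Pi(\tau_1,i)$, together with the left tangent to $\Pi(\tau_3,j)$, one locates already-coloured neighbours of $u_1$ on the opposite paths; with $\tau_2$ these provide three colours forming a $K_3$ in $N(u_1)$, so $u_1$ is uniquely coloured by the fourth colour. A symmetric argument initializes the first interior vertex of each of the three other paths, and the colour of the vertex $a_1$ on $\Pi(\tau_1,i)$ can be forced by the triangle $\{\tau_1,\tau_2,u_1\}$ once $u_1$ is coloured.

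Next I would propagate inductively. At each step, the first still-uncoloured vertex $v$ on any of the four paths has as neighbours (i) the immediately preceding vertex on its own path and (ii), via the tangent argument on the opposite already-coloured path, at least two further coloured vertices that together with the predecessor form a $K_3$ of three distinct colours. Figures~(a)--(d) catalogue exactly the four combinatorial patterns in which this happens around $v$: the $K_4$s $\{p,a,t,u\}$, $\{p,a,u,w\}$, $\{s,u,d,z\}$, and the freshly induced triangle $\{g,q,b\}$. In every pattern, the $K_5$-freeness of $G$ forces $v$ to the unique fourth colour. Iteration exhausts the interior vertices of all four paths, and finally the colours of $i$ and $j$ are forced by the edge $ij$ together with the last coloured predecessors on the paths meeting at each of them.

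The main obstacle is the geometric case analysis that shows one of the four patterns (a)--(d) always applies at each inductive step. This amounts to checking where the right and left tangents from $v$ to the already-coloured opposite path land, and arguing that the landing vertex is visible both from $v$ and from $v$'s predecessor; here the reduced assumption on $\PP$ (every chord is crossed by another chord) is essential to exclude configurations in which the tangent skips past the last coloured vertex without producing the required $K_3$. Because at every step the forced colour is unique, the whole propagation is deterministic: either it completes consistently and yields the unique colouring (up to the initial permutation of $T$'s colours) of the four paths, or a conflict occurs and certifies non-$4$-colourability, establishing the lemma.
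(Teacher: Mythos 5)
Your overall strategy --- propagating the colouring outward from $T$ along the four shortest paths by repeatedly finding an uncoloured vertex adjacent to three distinctly-coloured, already-coloured vertices --- is exactly the strategy of the paper's proof. The problem is that you assert the key step rather than prove it, and the one place where you do argue concretely is wrong. At the start only $\tau_1,\tau_2,\tau_3$ are coloured, so to force the first interior vertex $u_1$ of $\Pi(\tau_2,i)$ you would need $u_1$ to be adjacent to all three of them; nothing in the hypotheses guarantees that $u_1$ sees $\tau_1$ or $\tau_3$, and in general it does not. Likewise your claim that $a_1$ is then forced by the triangle $\{\tau_1,\tau_2,u_1\}$ presupposes adjacencies that need not hold. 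The correct (and provable) statement, which is what the paper establishes, is weaker: if $p,q,r,s$ denote the first uncoloured successors on the four paths, then \emph{at least one} of them is adjacent to a properly $3$-coloured triangle of already-coloured vertices, and which one it is depends on a genuine case split (whether $p$ sees $b$ or a predecessor of $b$ on $\Pi(\tau_2,i)$; whether the tangent vertex $u$ sees the vertex $t$ of $\Pi(\tau_3,j)$; whether $t$, or the blocker $w$ on $\Pi(\tau_2,j)$, is already coloured; and so on). Your inductive step instead claims the forcing works for ``the first still-uncoloured vertex on any of the four paths,'' which is false as a universal statement and unproven as an existential one.

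Moreover, you explicitly defer ``the geometric case analysis that shows one of the four patterns (a)--(d) always applies'' as the main obstacle. That case analysis \emph{is} the proof: one must show that the relevant tangents land on vertices that are already coloured, that any blocker of the needed visibility must come from $\Pi(\tau_2,j)$ or from $\Pi(p,i)$ and is then itself visible from both $p$ and $a$ (hence usable in the forcing $K_4$), and that the remaining configurations reduce to an earlier subcase. Without carrying this out, the argument is a restatement of the goal. Finally, the reduced-polygon hypothesis is not what rescues the tangent argument inside this lemma; reducedness is used later, in Theorem~\ref{lemalgo}, to guarantee that a crossing edge such as $ij$ exists at all, whereas here the crossing of $ij$ with $\tau_1\tau_2$ and $\tau_2\tau_3$ is already assumed.
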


\begin{proof}
	We prove the claim by induction on the four paths.
	As the base case, the first vertices of these paths are the vertices
	of $T$, which are already assigned different colors.
	
	For the induction step, assume that $\Pi(\tau_1,i)$, $\Pi(\tau_2,i)$,
	$\Pi(\tau_2,j)$ and $\Pi(\tau_3,j)$ have been colored till vertices $a$, $b$, $c$ and
	$d$ respectively.
	Also, their immediate uncolored successors on $\Pi(\tau_1,i)$, $\Pi(\tau_2,i)$,
	$\Pi(\tau_2,j)$ and $\Pi(\tau_3,j)$ are $p$, $q$, $r$ and $s$ respectively.
	We aim to show that the colors of at least one of $p$, $q$, $r$ and $s$
	is uniquely determined by the already colored vertices.
	
	If $p$ does not see $b$ and any predecessor of $b$ on $\Pi(\tau_2,i)$, then 
	$q$ must see $a$ or some predecessor of $a$ on $\Pi(\tau_1,i)$. 
	We have the following cases (cf.~Figure~\ref{fig:cases}).
	
	\paragraph*{Case 1: $p$ sees $b$ or some predecessor of $b$ on $\Pi(\tau_2,i)$.}
	By definition, $p$ is the immediate successor of $a$ on $\Pi(\tau_1,i)$,
	and thus $p$ must see $a$.
	The right tangent of $a$ to $\Pi(\tau_2,i)$ lies to the right of the right tangent of $p$ to $\Pi(\tau_2,i)$. So, if the right tangent of $p$ touches $\Pi(\tau_2,i)$ at a vertex $u$, then $a$ sees $u$.
	Note that either $u = b$, or $u$ precedes $b$ on $\Pi(\tau_2,i)$. In any case, $u$ is already colored.
	Since $p$, $a$ and $\Pi(\tau_3,j)$ lie on the same side of $ij$, and $p$ is nearer
	to $ij$ than $a$ is, $p$ and $a$ see a vertex $t$ of $\Pi(\tau_3,j)$.
	If $u$ also sees $t$, and $t$ is already colored, then the claim is proved
	(Figure~\ref{fig:cases}(a)).
	So we consider the other two cases, namely, that $u$ does not see $t$, or that $t$ is not yet colored.
	
	\paragraph{Subcase 1.a:  $u$ does not see $t$.}
	
	Since $t$ and $u$ lie on different sides of $\overline{ij}$ and of $\overline{\tau_2\tau_3}$,	
	the edges of some vertex of $\Pi(\tau_2,j)$ must be blocking $u$ and $t$.
	Let $w$ be the first vertex of $\Pi(\tau_2,j)$ blocking the visibility between $u$ and $t$.
	
	Then $u$ sees $w$. The vertex $w$ is closer to $\overline{ij}$ than $u$ is.
	Also, $w$ lies to the right of $\overrightarrow{au}$ and $\overrightarrow{pu}$,
	and to the left of $\overrightarrow{at}$ and $\overrightarrow{pt}$. 
	Then the only possible segments blocking visibility between $w$ and either $p$ or $a$ can be from $\Pi(\tau_2,i)$.
	But all the vertices on $\Pi(\tau_2,i)$ preceding $u$ are farther from $\overline{ij}$ than $u$ is.
	So, there can be no such blocker, and $w$ must be visible from both $a$ and $p$ (Figure~\ref{fig:cases}(b)).
	If $w$ is already colored, then the claim is proved.
	
	Suppose that $w$ is not already colored.
	Then consider $r$, which now precedes $w$ on $\Pi(\tau_2,j)$.
	The vertices $r$ and $c$ are consecutive on $\Pi(\tau_2,j)$ and hence see each other.
	
	Since $\Pi(\tau_2,j)$ and $\Pi(\tau_1,i)$ are on opposite sides of the edge $ij$, the vertices $c$ and $r$ both see $a$, or some vertex preceding $a$ on $\Pi(\tau_1,i)$. Let $x$ be the last colored vertex
	of $\Pi(\tau_1,i)$ seen by both $c$ and $r$. 
	Let $x$ be the last colored vertex of $\Pi(\tau_1,i)$ seen by both $c$ and $r$.
	If $x \neq a$ then let $y$ be the last vertex of $\Pi(\tau_2,i)$ that blocks $c$ from the successor of $y$ on $\Pi(\tau_1,i)$.
	Then, $y$ must be visible from $x$, $r$ and $c$.
	
	Since $x$ precedes $a$ on $\Pi(\tau_1,i)$, and $y$ precedes $b$ on $\Pi(\tau_2,i)$, both $x$ and $y$ must already be colored, and thus the color of $r$ is uniquely determined by $T$.
	If $x=a$ then since $u$ is on the right tangent of $a$ to $\Pi(\tau_2,i)$, both $c$ and $r$ see $u$,
	and thus the color of $r$ is uniquely determined by $T$ in this case as well.
	
	Now we move to the second subcase.
	
	\paragraph{Subcase 1.b: $u$ sees $t$, but $t$ is not yet colored.}
	
	Since $u$ sees $t$, $\Pi(\tau_2,j)$ is a concave chain and the edge $\tau_1\tau_3$  exists in $\PP$,	
	$u$ must see every predecessor of $t$ on $\Pi(\tau_2,j)$.	
	This means that both $d$ and $s$ see $u$ (Figure~\ref{fig:cases}(c)).	
	Let the right tangent from $d$ touches $\Pi(\tau_1,i)$ in a vertex~$y$.	
	Then $s$ must see $y$, because the last vertices $i$ and $j$ of	
	the concave chains $\Pi(\tau_1,i)$ and $\Pi(\tau_3,j)$ see each other. 	
	Also, the left tangent of $u$ to $\Pi(\tau_1,i)$ must touch $\Pi(\tau_1,i)$ at a vertex equal to or preceding $y$.	
	Thus, all three of $s$, $d$ and $u$ see a common vertex $z$ on $\Pi(\tau_1,i)$ which precedes $a$, since $u$ and $t$ see $a$.	
	Thus, $z$ is already colored, and $u$, $d$ and $z$ form a $K_4$ with $s$, the color of $s$ is uniquely determined by $u$, $d$ and $z$.

	\paragraph*{Case 2: $p$ does not see $b$ or any predecessor of $b$ on $\Pi(\tau_2,i)$.}
	
	Since $\Pi(\tau_2,i)$ is a concave chain, this means that the tangent drawn from $p$ to $\Pi(\tau_2,i)$ in the direction of $\tau_2$, has whole $\Pi(\tau_2,b)$ to its left (refer to Figure~\ref{fig:cases}(d)).	
	Suppose that $b$ does not see any vertex of $\Pi(\tau_1,a)$.	
	Since $\Pi(\tau_1,i)$ is also a concave chain, and $\tau_1$ sees $\tau_2$, all segments	
	blocking visibility between $q$ and $\Pi(\tau_1,a)$ must come from $\Pi(p,i)$, and must include $p$.	
	But then, the aforementioned tangent drawn from $p$ to $\Pi(\tau_2,i)$	
	must have at least part of $\Pi(\tau_2,b)$ to the right, which is absurd.	
	
	So, $b$ must see $a$ or some other vertex of $\Pi(\tau_1,a)$.	
	Let $g$ denote the last vertex of $\Pi(\tau_1,i)$ seen by $b$.	
	Then, $g$ exists, and it belongs to $\Pi(\tau_1,a)$ since $p$ (the successor of $a$) does not see~$b$, and $g$ is seen by~$q$ (Figure \ref{fig:cases}(d)).	
	Since the vertex $g$ is on $\Pi(\tau_1,a)$, it is already colored.	
	
	Let us now similarly consider a vertex, say $t$ on $\Pi(\tau_2, j)$,	
	which is seen by both $b$ and $q$.	
	Suppose that $g$ or another common colored neighbor of $b$ and $q$ sees $t$.	
	Then we are immediately done if $t$ is colored, or we are in Subcase 1.b if $t$ is uncolored.	
	Otherwise, some vertex on $\Pi(\tau_3, j)$ blocks all visibilities between $t$	
	and all the common neighbors of $q$ and $b$. 	
	Then we finish as in Subcase 1.a.	
\end{proof}

\begin{cor} \label{maincor}
	If any vertex $i$ of $\PP$ sees a vertex of $T$ and their visibility edge crosses one of the edges of $T$, then the color of $i$ is uniquely determined by the colors of $T$.
\end{cor}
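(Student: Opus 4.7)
The plan is to reduce the corollary's configuration to an application of Lemma~\ref{mainlem} and then close the argument using the $K_5$-freeness of~$G$. By relabelling the triangle vertices, I may assume that $i$ sees $\tau_2$ and that the visibility segment $\overline{i\tau_2}$ crosses the edge $\tau_1\tau_3$ of~$T$. This places $i$ strictly on the side of the line through $\tau_1\tau_3$ opposite to~$\tau_2$, hence on the clockwise boundary sub-chain $\partial(\tau_3,\tau_1)$ that avoids~$\tau_2$.

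The key geometric object is the pair of geometric shortest paths $\Pi(\tau_1,i)$ and $\Pi(\tau_3,i)$ inside $\PP$; both are concave chains (their non-terminal vertices are reflex). Let $i'$ be the vertex of $\Pi(\tau_1,i)$ immediately preceding $i$, and $j'$ the corresponding vertex on $\Pi(\tau_3,i)$. Concavity of the two paths together with the fact that $\overline{i\tau_2}$ already crosses the side $\tau_1\tau_3$ of~$T$ forces $i'\in\partial(\tau_1,\tau_2)$ and $j'\in\partial(\tau_2,\tau_3)$; moreover, since $i'$ and $j'$ both see $i$ and lie on opposite sides of the separating segment $\overline{i\tau_2}$, they also see each other, and the edge $\overline{i'j'}$ crosses both $\tau_1\tau_2$ and $\tau_2\tau_3$. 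Hence Lemma~\ref{mainlem}, applied with its two end-vertices $(i,j)$ replaced by $(i',j')$, uniquely determines the colors of all the vertices of the four paths $\Pi(\tau_1,i')$, $\Pi(\tau_2,i')$, $\Pi(\tau_2,j')$, $\Pi(\tau_3,j')$, and in particular of $i'$ and $j'$ themselves.

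To finish, observe that $i$ is adjacent in $G$ to $\tau_2$ by hypothesis and to $i',j'$ by construction as shortest-path predecessors, so $i$ has three neighbours carrying three pairwise distinct colors; since $G$ has no $K_5$, the remaining (fourth) color is forced upon~$i$. I expect the main obstacle to be the geometric verification that $i'$ and $j'$ mutually see each other and that $\overline{i'j'}$ really crosses the two advertised edges of~$T$: this is where concavity of $\Pi(\tau_1,i)$ and $\Pi(\tau_3,i)$ and the separating property of $\overline{i\tau_2}$ interact delicately. A small side issue is the degenerate situation where $\Pi(\tau_1,i)$ or $\Pi(\tau_3,i)$ is the single edge itself (so that $i'=\tau_1$ or $j'=\tau_3$); in that case $i$ already sees two or three vertices of~$T$ together with~$\tau_2$, and the $K_5$-freeness argument applies directly without invoking Lemma~\ref{mainlem}.
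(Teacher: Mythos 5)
Your proposal diverges from the paper's proof, and the detour introduces gaps that I do not see how to close. The paper disposes of the corollary in one line: if $i$ sees, say, $\tau_1$ and $\overline{i\tau_1}$ crosses $\tau_2\tau_3$, it simply instantiates Lemma~\ref{mainlem} with $j=\tau_1$, so that the two paths ending at $j$ degenerate to (sub)edges of $T$ and the lemma's conclusion already covers $i$ as an endpoint of $\Pi(\tau_1,i)$ and $\Pi(\tau_2,i)$. No new geometric construction is needed.

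Your route through the predecessors $i'$ and $j'$ rests on three claims that are not justified and are, in general, false. First, there is no reason why the vertex of $\Pi(\tau_1,i)$ immediately preceding $i$ should lie on $\partial(\tau_1,\tau_2)$: the interior vertices of $\Pi(\tau_1,i)$ are reflex vertices blocking the sight line from $\tau_1$ to $i$, and since $i\in\partial(\tau_3,\tau_1)$ these blockers typically lie on the sub-chain of $\partial(\tau_3,\tau_1)$ between $\tau_1$ and $i$, not on the far side of the polygon; indeed, if $i$ happens to see $\tau_1$ directly, then $i'=\tau_1$ and the edge $i'j'$ cannot properly cross $\tau_1\tau_2$ at all. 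Second, ``$i'$ and $j'$ both see $i$ and lie on opposite sides of $\overline{i\tau_2}$, hence see each other'' is a non sequitur --- visibility of two vertices from a common third vertex does not imply their mutual visibility, and nothing prevents a reflex vertex from blocking $\overline{i'j'}$; consequently the hypothesis of Lemma~\ref{mainlem} (that $i'j'$ is a visibility edge crossing both $\tau_1\tau_2$ and $\tau_2\tau_3$) is never established. Third, even granting the application of the lemma to $(i',j')$, your closing step needs $\tau_2$, $i'$ and $j'$ to carry three pairwise distinct colors so that $i$ is forced to the fourth; the lemma only says that their colors are uniquely determined, not that they are distinct, and $i'$, $j'$ need not be adjacent to $\tau_2$ or to each other. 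The repair is to drop the construction entirely and use the degenerate instantiation above, which makes $i$ itself an endpoint of one of the four paths colored by Lemma~\ref{mainlem}.
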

\begin{proof}
	Without loss of generality, suppose that $i$ sees $\tau_1$, and $i\tau_1$ crosses $\tau_2\tau_3$. Then $j=\tau_1$, $\Pi(\tau_2,j) = \tau_2\tau_1$
	and $\Pi(\tau_1,j) =\tau_1$, and Lemma~\ref{mainlem} proves the claim.
\end{proof}

\begin{thm} \label{lemalgo}
	If a reduced polygon is 4-colorable, then it has a unique 4-coloring up to permutation of colors.
\end{thm}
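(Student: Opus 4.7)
The plan is to reduce the uniqueness statement to repeated application of Corollary~\ref{maincor}. First, since any triangulation of $\PP$ supplies at least one triangle, I fix a triangle $T = \tau_1\tau_2\tau_3$ in the visibility graph $G$ of $\PP$ (the case $|V(\PP)|=3$ being trivial). In any proper 4-coloring of $G$, the three vertices of $T$ receive three pairwise distinct colors, so up to a permutation of the color set I may assume that the colors of $\tau_1, \tau_2, \tau_3$ are the fixed ones used in Lemma~\ref{mainlem} and Corollary~\ref{maincor}. This accounts for the ``up to permutation of colors'' clause in the statement, and from this point on there is no further freedom to exercise.

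Next I would proceed by greedy extension. Let $S \subseteq V(\PP)$ denote the set of vertices whose color is uniquely forced by the coloring of $T$, initialised to $S = V(T)$. The claim to establish is that this set grows until $S = V(\PP)$. The reduced hypothesis enters here in an essential way: any chord of $\PP$ whose endpoints lie in $S$, or any edge of $T$ that is a chord of $\PP$, is crossed by some other chord $xy$; then Corollary~\ref{maincor} applied to $xy$ across that chord forces the colors of $x$ and $y$, adding them to $S$. The newly added vertices, together with old ones, form additional triangles in $G$, whose chord edges again trigger crossings by the reduced hypothesis, so the process can be iterated.

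The main obstacle I foresee is arguing that this propagation reaches every vertex of $\PP$ and never tries to assign two distinct forced colors to the same vertex. For reachability, I would argue that if some vertex $v \notin S$ remained, then a chord separating the colored region from $v$ would itself be crossed by a chord inside the uncolored region (by the reduced hypothesis), yielding a new pair on which Corollary~\ref{maincor} can be invoked; iterating, $v$ is eventually swept into $S$. For consistency, I would simply observe that every invocation of Corollary~\ref{maincor} (and hence of Lemma~\ref{mainlem}) returns a \emph{uniquely} determined color, so two different propagation routes reaching the same vertex must agree. Combining these two points shows that once the coloring of $T$ is fixed, the coloring of every other vertex is forced, establishing that any 4-coloring of a reduced polygon is unique up to permutation of colors.
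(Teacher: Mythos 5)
Your overall strategy---fix a triangle $T$, then use reducedness to find crossing chords and invoke Lemma~\ref{mainlem} and Corollary~\ref{maincor} to force colors outward---is the same strategy the paper follows, but the propagation step as you state it does not go through. Corollary~\ref{maincor} and Lemma~\ref{mainlem} do not apply to an arbitrary chord with both endpoints in your forced set $S$: they require a fully colored \emph{triangle} of the visibility graph (three mutually visible, distinctly colored vertices) together with a crossing edge positioned relative to that triangle exactly as in the hypotheses of Lemma~\ref{mainlem}. When the reduced hypothesis hands you a chord $xy$ crossing a chord $ab$ with $a,b\in S$, you cannot conclude anything about $x$ or $y$ until you have exhibited a third \emph{already colored} vertex seeing both $a$ and $b$. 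Producing that third vertex is the technical heart of the paper's proof: after Lemma~\ref{mainlem} colors the shortest paths $\Pi(\tau_1,i)$, $\Pi(\tau_2,i)$, etc., every uncolored vertex sits in a pocket $\partial(a,b)$ between two \emph{consecutive} vertices $a,b$ of one of these concave chains, and the paper argues geometrically that either the colored vertex $w$ of $\Pi(\tau_2,i)$ closest to $ab$ among those seeing into the pocket sees both $a$ and $b$, or the crossing edge meets an edge $cd$ of $\Pi(\tau_2,i)$ whose endpoint $c$ sees both $a$ and $b$; only then can Lemma~\ref{mainlem} be re-invoked on the new colored triangle $abw$ or $abc$. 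Your proposal skips exactly this step, and without it the ``additional triangles'' you appeal to are not known to have all three vertices colored.

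A second, related omission concerns reachability: you argue that a chord separating the colored region from a leftover vertex $v$ ``would itself be crossed by a chord inside the uncolored region,'' but reducedness only guarantees that \emph{some} crossing chord exists---both of its endpoints may be uncolored, in which case neither Lemma~\ref{mainlem} nor Corollary~\ref{maincor} yields anything. The paper avoids this by always anchoring the recursion at a colored triangle one of whose edges $ab$ bounds the uncolored pocket, so that the guaranteed crossing chord has its relevant endpoint $u$ inside the pocket and the lemma then colors $\Pi(a,u)$ and $\Pi(b,u)$. Your consistency observation, by contrast, is fine: under the hypothesis that a proper 4-coloring exists, every forced value must agree with that coloring, so distinct propagation routes cannot conflict.
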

\begin{proof}
	Consider a triangle $T$ in a reduced polygon $\PP$. 
	If $\PP$ is not just $T$, then at least one edge of $T$ is not a boundary edge of $\PP$.
	Without loss of generality, let $\tau_1\tau_2$ be such an edge.
	
	Since $\PP$ is reduced, there must be a vertex $i$ on $\partial(\tau_1, \tau_2)$ 
	such that an edge incident to $i$ crosses $\tau_1 \tau_2$.
	By Lemma \ref{mainlem} and Corollary \ref{maincor}, if $\PP$ is 4-colorable, then 
	all vertices on the paths $\Pi(\tau_1, i)$ and $\Pi(\tau_2, i)$, including $i$ have a 4-coloring uniquely determined by $T$.
	In case $\tau_2 \tau_3$ or $\tau_3 \tau_1$ are not boundary edges of $\PP$, we can similarly find $j$ on
	$\partial(\tau_2, \tau_3)$ 	and $k$ on $\partial(\tau_3, \tau_1)$ and uniquely 4-color $\Pi(\tau_2, j)$,
	$\Pi(\tau_3, j)$, $\Pi(\tau_3, k)$ and $\Pi(\tau_1, k)$. 
	\begin{figure}	
		\centering	
		\begin{tikzpicture}[scale=3,	
			diamondd/.style = {inner sep=1.6pt, diamond},	
			triangle/.style = {inner sep=1.2pt, regular polygon, regular polygon sides=3},	
			dtriangle/.style = {inner sep=1.2pt, regular polygon, regular polygon sides=3, rotate=180}	
			]	
			\tikzstyle{every node}=[draw, shape=circle, minimum size=3pt, inner sep=2pt];	
			\node[fill=red,diamondd, label=left:$t_3$] (t1) at (0,-0.1) {};	
			\node[fill=blue,triangle, label=$t_1$] (t2) at (0.5, 0.7) {};	
			\node[fill=yellow, label=right:$t_2$] (t3) at (1, -0.1) {};	
			\node[fill=blue,triangle, label=$a$] (va) at (1.7,0.53) {};	
			\node[fill=gray, label=$u$\hbox to 0pt{~($z$)\hspace*{-2em}}] (vu) at (2.1,0.85) {};	
			\node[fill=green,dtriangle, label=below:$b$] (vb) at	
			(2.9,0.55) {};	
			\node[fill=yellow, label=right:$w$] (vd) at (2.9,0.45) {};	
			\node[fill=blue,triangle, label=$i$] (vi) at (3.7,0.6) {};	
			
			\draw[thick] (t2)--(t1)--(t3);	
			\draw (vu)--(va);	
			\draw (vu)--(vb);	
			\draw (va)--(vb)--(vd);	
			\draw (va)--(vd);	
			\draw (vb)--(vd);	
			\draw[dash dot] (t2) to[out=340,in=180] (va);	
			\draw[dash dot] (vb)--(vi);		
			\draw[dash dot] (t3) to[out=20,in=190] (vi);		
			\tikzstyle{every path}=[color=cyan];	
			\draw[thick] (0.4,0.2)--(vi);	
			\draw[thick] (t2)--(t3);	
			\draw[thick] (vu)--(vd);	
		\end{tikzpicture}	
		
		\bigskip	
		\begin{tikzpicture}[scale=3,	
			diamondd/.style = {inner sep=1.6pt, diamond},	
			triangle/.style = {inner sep=1.2pt, regular polygon, regular polygon sides=3},	
			dtriangle/.style = {inner sep=1.2pt, regular polygon, regular polygon sides=3, rotate=180}	
			]	
			\tikzstyle{every node}=[draw, shape=circle, minimum size=3pt, inner sep=2pt];	
			\node[fill=red,diamondd, label=left:$t_3$] (t1) at (0,-0.1) {};	
			\node[fill=blue,triangle, label=$t_1$] (t2) at (0.5, 0.7) {};	
			\node[fill=yellow, label=right:$t_2$] (t3) at (1, -0.1) {};	
			\node[fill=blue,triangle, label=$a$] (va) at (1.7,0.53) {};	
			\node[fill=gray, label=$u$] (vu) at (1.9,0.85) {};	
			\node[fill=green,dtriangle, label=below:$b$] (vb) at (2.9,0.55) {};	
			\node[fill=yellow, label=below:$c$] (vc) at (1.3,0.06) {};	
			\node[fill=red,diamondd, label=below:$d$] (vd) at (3.2,0.5) {};	
			\node[fill=white] (vbl) at  (2.7,0) {};	
			\node[fill=blue,triangle, label=$i$] (vi) at (3.7,0.6) {};	
			
			\draw[thick] (t2)--(t1)--(t3);	
			\draw (vu)--(va);	
			\draw (vu)--(vb);	
			\draw (va)--(vb)--(vd);	
			\draw (va)--(vc)--(vb);	
			\draw (va)--(vd);	
			\draw (vb)--(vd);	
			\draw (vd)--(vc);	
			\draw[dash dot] (t2) to[out=340,in=180] (va);	
			\draw[dash dot] (vb)--(vi);		
			\draw[dash dot] (t3) to[out=50,in=190] (vc);		
			\draw[dash dot] (vd)--(vi);		
			\tikzstyle{every path}=[color=cyan];	
			\draw[thick] (0.4,0.2)--(vi);	
			\draw[thick] (t2)--(t3);	
			\draw[thick] (vu)--(vbl);	
		\end{tikzpicture}	
		
		\caption{Illustration of the proof of Theorem~\ref{lemalgo}.	
			Top: an edge incident to $u$ sees a colored vertex on $\Pi(t_2, i)$.	
			Bottom: an edge incident to $u$ crosses an edge $cd$ of $\Pi(t_2, i)$.	
			Both cases can be resolved by an application of Lemma	
			\ref{mainlem} and Corollary \ref{maincor} to the already	
			colored triangles $abw$ and $abc$, respectively.}	
		\label{fig:algo}
	\end{figure}
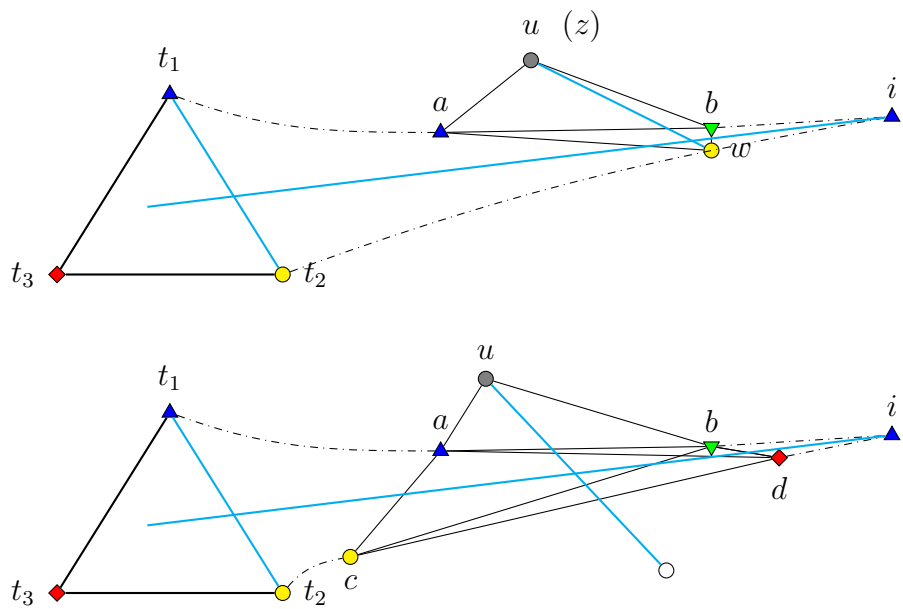
	Now, all the remaining uncolored vertices of $\PP$ are on polygonal chains of the form
	$\partial(a, b)$,
	where $a$ and $b$ are two consecutive vertices in one of the six paths mentioned above. Furthermore,
	no vertex in the polygonal chain $\partial(a, b)$, other than $a$ and $b$, is colored.
	Without loss of generality, let $a$ and $b$ be two consecutive vertices on $\Pi(\tau_1, \tau_2)$.
	If $a b$ is not a boundary edge of $\PP$, then since $\PP$ is reduced, there must be an
	uncolored vertex $u$ in $\partial(a, b)$ such that an edge incident to $u$ crosses $a b$.
	This edge is either incident to a vertex of $\Pi(\tau_2, i)$, or crosses an edge of
	$\Pi(\tau_2, i)$.
	Consider the case where such an edge to a vertex of $\Pi(\tau_2, i)$ exists.
	Then consider a vertex $w$ that is closest to $ab$ among all the vertices
	of $\Pi(\tau_2, i)$ that see a vertex (say, $z$) of $\partial(a, b)$. Since the edge $wz$ exists,
	$w$ cannot be blocked by any vertex of $\Pi(\tau_1, i)$. Due to the choice of $w$, no vertex of 
	$\Pi(\tau_2, i)$ can block $w$ from $a$ or $b$. So, $w$ sees both $a$ and $b$.
	Now consider the case where no vertex of $\partial(a, b)$ sees any vertex of $\Pi(\tau_2, i)$, but some vertex of  
	$\partial(a, b)$ sees some vertex of $\partial(c, d)$,
	where $c$ and $d$ are consecutive points on $\Pi(\tau_2, i)$. 
	Without loss of generality, assume that $c$ precedes $d$ in $\Pi(\tau_2, i)$. Then $c$ must see both $a$ and $b$ (Figure \ref{fig:algo}),
	for otherwise a vertex of $\partial(a, b)$ must have an edge with some 
	vertex of $\Pi(\tau_2, i)$ acting as a blocker for $c$, contrary to our assumption.
	Then, in the above two cases, based on the triangle $a b w$ and $a b c$, respectively,
	again Lemma \ref{mainlem} and Corollary \ref{maincor} can be used to 
	uniquely determine a 4-coloring for $\Pi (a, u)$ and $\Pi(b, u)$.
	
	Now we generalize the above procedure. Let $T_0 = \{T\}$, and let $S_0 = \{
	\Pi(\tau_1, i),\Pi(\tau_2, i)\Pi(\tau_2, j), \Pi(\tau_3, j), \Pi(\tau_3, v_k), \Pi(\tau_1, v_k)\}$.
	Note that we have assumed that none of the edges of $T$ are boundary edges. If some edges of $T$ are boundary edges then $S_0$ will have less elements.
	By the above procedure, we can uniquely 4-color all the vertices of all elements of $S_0$. Now, all the uncolored vertices lie on
	$\partial(a, b)$,
	where $a$ and $b$ are consecutive vertices of some element of $S_0$. For each such $a b$, we find a new triangle $a b c$ or $a b d$,
	and two new shortest paths of the form $\Pi (a, u)$ and $\Pi(b, u)$. Let $\tau_1$ denote the set of all such new triangles, and $S_1$ denote the set of all
	new shortest paths obtained from $T_0$ and $S_0$. Now, the remaining uncolored vertices must line on polygonal chains of the form
	$\partial(v_e, v_f)$
	where $v_e$ and $v_f$ are two consecutive vertices of some element of $S_1$. In general, following the same method
	we can always construct $T_{i+1}$ and $S_{i+1}$ from $T_i$ and $S_i$, until all vertices of $\PP$ are colored. Since in each step, the colors of vertices
	are uniquely determined, it follows that if $\PP$ has a 4-coloring, then it must be unique.
\end{proof}

\subsection{Computing a 4-coloring without a polygonal representation} \label{sec:withoutrepr}

In the previous section, we have proved that if a reduced polygon is 4-colorable, then its 4-coloring must be unique up to permutations.
Now we use the property to derive a polynomial time 4-coloring algorithms for the visibility graph of a polygon, even when the polygonal embedding
or boundary are not given.
First we need to define a few structures and operations.

\begin{dfn}[Bottleneck pair]
	A pair of adjacent vertices whose removal disconnects $G$ is called a bottleneck pair.
\end{dfn}

\begin{dfn} \label{def:bottleneck}
	Call a pair of adjacent vertices whose removal disconnects
	a given graph $G$, a \emph{bottleneck pair}.
	Consider removing all the bottleneck pairs from~$G$. We are left with connected components of $G$. 
	
	Now, consider any bottleneck pair $(x,y)$. Suppose that $x$ and $y$ were earlier adjacent to a set of vertices $S_x$ and $S_y$ of a connected component $C_i$.
	Then create a copy of $(x,y)$ and re-connect them with edges with the vertices of $S_x$ and $S_y$ respectively. Do this with every bottleneck pair of $G$.
	Call the subgraphs of $G$ so formed as \emph{reduced subgraphs} of $G$.
\end{dfn}

We remark that, in structural graph theory, the process described by
Definition~\ref{def:bottleneck} is known as a ``clique-cutset
decomposition'', specifically with cliques of size~$2$.
Since we do not assume readers closely familiar with structural graph
theory, we prefer our standalone definition.
We have the following lemma.

\begin{lem} \label{lem:bottleneck}.
	Let $G$ be the visibility graph of a polygon~$\PP$.
	Each bottleneck pair of $G$ corresponds to an internal edge of $\PP$ that is not intersected by any other internal edge of $\PP$, and vice versa.
\end{lem}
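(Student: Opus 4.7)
The plan is to prove the claimed correspondence in both directions, identifying a bottleneck pair $(u,v)$ with the edge $uv$ of $G$. Two basic facts will drive the argument: (a) the polygonal boundary $\partial\PP$ is a Hamiltonian cycle in $G$, so $G-\{u,v\}$ always inherits a spanning subgraph from the remaining boundary edges; and (b) a chord $\overline{uv}$ partitions $\PP$ into two subpolygons $P_1,P_2$ meeting only along $\overline{uv}$, so any line segment inside $\PP$ whose endpoints lie strictly on opposite sides of $\overline{uv}$ must cross $\overline{uv}$.

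For the direction ``non-crossed chord $\Rightarrow$ bottleneck pair'', I would let $uv$ be a chord not crossed by any other chord and set $V_i := V(P_i)\setminus\{u,v\}$. Both $V_1$ and $V_2$ are nonempty because $u,v$ are not boundary-adjacent. Any edge $xy\in E(G)$ with $x\in V_1$ and $y\in V_2$ would itself be a chord (its endpoints are not boundary-adjacent) and, by fact~(b), $\overline{xy}$ would cross $\overline{uv}$, contradicting the hypothesis. Hence $V_1$ and $V_2$ lie in distinct components of $G-\{u,v\}$, so $(u,v)$ is a bottleneck pair.

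For the converse, I would assume $(u,v)$ is a bottleneck pair. If $uv$ were a boundary edge, then by fact~(a) the $n{-}2$ vertices of $V\setminus\{u,v\}$ would form a single path along the remaining part of $\partial\PP$, keeping $G-\{u,v\}$ connected and giving a contradiction. So $uv$ is a chord, yielding the partition $V\setminus\{u,v\}=V_1\sqcup V_2$ as above. Each $V_i$ is connected within $G-\{u,v\}$ via its boundary path. Since $G-\{u,v\}$ is disconnected, no edge of $G$ joins $V_1$ to $V_2$; by fact~(b) this is equivalent to saying that no chord of $\PP$ crosses $\overline{uv}$, completing the proof.

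The only genuine geometric ingredient is fact~(b), which follows from the Jordan curve theorem applied to the closed curve formed by $\overline{uv}$ together with either boundary arc of $\PP$ from $u$ to $v$; this is the single step I would expect to scrutinize most carefully. Everything else is direct bookkeeping with the Hamiltonian cycle $\partial\PP$ and the bipartition it induces on the remaining vertices, so no further obstacles are expected.
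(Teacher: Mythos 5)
Your proposal is correct and follows essentially the same route as the paper: both directions are handled by splitting $\PP$ along the chord $\overline{uv}$ into the two subpolygons, observing that each side remains connected (via its boundary chain), and identifying edges between the two sides with chords that cross $\overline{uv}$. Your write-up merely makes explicit some steps the paper leaves terse (the non-emptiness of the two sides and the Jordan-curve justification for why a $V_1$--$V_2$ visibility edge must cross the chord), which is fine.
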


\begin{proof}
	We use the same notations for the vertices of $G$ and their corresponding vertices of $\PP$.
	Consider any internal edge $xy$ of $\PP$ such that no other internal edge of $\PP$ intersects it. 
	Then disconnecting the edge $xy$ and the vertices $x$ and $y$ disconnects $G$.
	So $(x,y)$ is a bottleneck pair.
	Conversely, suppose that $(x,y)$ is a bottleneck pair.
	Then $xy$ is an internal edge of $\PP$ since deleting a boundary edge does not disconnect~$G$.
	Let $P_1$ and $P_2$ be the two subpolygons of $\PP$ divided by $xy$.
	If there was a visibility edge from $P_1$ to $P_2$ not
	incident to $x,y$, then since the visibility graphs of $P_1$
	and of $P_2$ are connected, deleting $xy$ would again not disconnect~$G$.
	So, is an internal edge of $\PP$ not intersected by any other internal edge of $\PP$.
\end{proof}

The corollary below follows immediately from Lemma~\ref{lem:bottleneck}.

\begin{cor}
	Each reduced subgraph of $G$ is the visibility graph of some reduced subpolygon of $\PP$.
	Likewise, each reduced subpolygon of $\PP$ has a reduced subgraph of $G$ as its visibility graph.
\end{cor}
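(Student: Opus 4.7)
The plan is to derive the corollary directly from Lemma~\ref{lem:bottleneck} by showing that the geometric splitting of $\PP$ at uncrossed internal edges and the combinatorial splitting of $G$ at bottleneck pairs are in fact the same operation, just viewed through two different lenses.

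First, I would fix a polygon $\PP$ with visibility graph $G$, and invoke Lemma~\ref{lem:bottleneck} to obtain a bijection between bottleneck pairs $(x,y)$ of $G$ and internal edges $xy$ of $\PP$ that are not crossed by any other internal edge. Call such an edge $xy$ an \emph{uncrossed chord}. Cutting $\PP$ along each uncrossed chord partitions $\PP$ into a collection of subpolygons $\PP_1, \PP_2, \ldots, \PP_s$; by construction no $\PP_i$ contains any uncrossed chord (any remaining internal diagonal is crossed by another, since the only potentially uncrossed diagonals of $\PP_i$ lying inside $\PP$ have been cut away), so each $\PP_i$ is reduced in the sense of Section~\ref{sec:4coloringP}.

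Second, I would verify that this geometric partition corresponds, vertex-for-vertex and edge-for-edge, to the combinatorial construction of reduced subgraphs in Definition~\ref{def:bottleneck}. The endpoints $x,y$ of a cut chord appear in both adjacent subpolygons; analogously, the bottleneck pair $(x,y)$ is duplicated and re-attached to the neighbours it has in each component after removal. The key observation, already embedded in the proof of Lemma~\ref{lem:bottleneck}, is that if $\PP_i$ and $\PP_j$ lie on opposite sides of a chord $xy$, then every visibility edge of $G$ crossing the line $xy$ must be incident to $x$ or to $y$; hence every edge of $G$ either lies entirely within the vertex set of one $\PP_i$, or is incident to a duplicated bottleneck vertex in exactly the way Definition~\ref{def:bottleneck} prescribes. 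Iterating over all uncrossed chords, the components produced geometrically match the reduced subgraphs produced combinatorially.

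Third, to finish both directions of the corollary simultaneously: for any reduced subpolygon $\PP_i$ of $\PP$, the visibility graph of $\PP_i$ coincides with the reduced subgraph of $G$ on the corresponding vertex set (with duplicated bottleneck endpoints), since visibility within $\PP_i$ is unaffected by removing the parts of $\PP$ on the other side of a non-crossed chord. Conversely, each reduced subgraph produced by Definition~\ref{def:bottleneck} is realised as the visibility graph of the associated $\PP_i$. The only delicate point I anticipate is handling the duplication of bottleneck pairs cleanly when multiple uncrossed chords share an endpoint; I would address this by performing the cuts one at a time and arguing inductively, so that each cut preserves the correspondence established in the previous step.
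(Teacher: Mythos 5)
Your proposal is correct and follows exactly the route the paper intends: the paper gives no explicit proof, stating only that the corollary ``follows immediately from Lemma~\ref{lem:bottleneck},'' and your elaboration (the bijection between bottleneck pairs and uncrossed chords, the fact that no visibility edge crosses an uncrossed chord except through its endpoints, and the preservation of visibility inside each piece) is precisely the argument being invoked. Your additional care about pieces being reduced and about chords sharing endpoints is sound and only makes the implicit argument explicit.
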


\makeatletter
\renewcommand{\@algocf@capt@plain}{above}
\makeatother
\begin{algorithm}[htbp]
	\caption{Algorithm to decide 4-colorablity of visibility graphs of polygons~~~~~~~~~~}
	\label{alg:4coloring}
	\KwIn{A graph $G$ with the promise of being the visibility graph of
		a simple polygon}
	\KwOut{Whether $G$ is $4$-colorable or not. If so, then a proper $4$-coloring of $G$.}
	\smallskip
	Identify all edges $uv$ of $G$ such that removal of $u$ and $v$ disconnects $G$\;
	Delete all these bottleneck pairs (i.e.,~$u,v$) and partition $G$ into connected components $G_1, G_2, \ldots, G_k$\;
	To each connected component of $G$, add copies of the bottleneck pairs which were originally attached to it\;
	\For{each connected component $G_i$}
	{
		Locate a triangle $T$ in $G_i$ and assign three colors to its vertices\;
		\Repeat{each vertex in $G_i$ is colored}{
			Locate a vertex adjacent to three colored vertices of distinct colors and assign it the fourth color\;
		}
	}
	\If{two adjacent vertices receive the same color}
	{
		Output `non-4-colorable'\;
		Terminate\;
	}
	Glue the connected components back by merging the corresponding two copies of each bottleneck pair\;
	Permute the colors of the vertices after glueing so that there is no conflict\;
\end{algorithm}


Now, in light of the above Algorithm~\ref{alg:4coloring} and Theorem \ref{lemalgo}, we prove Theorem~\ref{colthm}.

\begin{proof} [Proof of Theorem \ref{colthm}]
	Corollary~\ref{maincor} shows that the reduced subgraphs correspond to reduced polygons. 
	By Theorem~\ref{lemalgo} (and its proof), 4-colorable reduced polygons have unique 4-colorings
	which can be found iteratively by coloring each time a vertex with some
	three previously distinctly colored neighbours. 
	Since the algorithm always chooses a color for a vertex by
	this iterative scheme, the computed (partial) 4-coloring is the only one possible.
	So, the algorithm is correct.
	
	Let the number of vertices and edges in $G$ be $n$ and $m$ respectively. The
	bottleneck pairs that do not cross any other chord, can be found in $O(m^2)$ time.
	Thus, the decomposition of $\PP$ into reduced subpolygons takes $O(m^2)$ time.
	A vertex adjacent to every vertex of a colored triangle can be found in $O(n)$ time.
	While computing the coloring on the shortest paths, a pointer
	can be kept on each of the shortest paths, and the coloring takes $O(n)$ time. The coloring step can be iterated at most once 
	for each vertex, so the complexity for all vertices is $O(n^2)$. Checking for conflict takes $O(m)$ time. Finally, rejoining 
	the reduced subgraphs takes $O(n)$ time. Thus, the complexity of the algorithm is $O(m^2)$.
\end{proof}

\subsection{NP-hardness reduction of proper 5-coloring} \label{sec:hardness5}

In this section we prove that the problem of deciding whether the visibility
graph $G$ of a given simple polygon $\PP$ can be properly colored with 
$5$ colors, is NP-complete.

Membership of our problem in NP is trivial (since $G$ can be efficiently
computed from $\PP$ and then a coloring checked on~$G$).
We are going to present a polynomial reduction from the NP-hard problem of
$3$-colorability of general graphs.
Our reduction shares some common ideas with reductions on visibility
graphs presented in \cite{Kara_pointVisChromatic},
but the main difference is in not using the SAT problem (which makes our case even simpler).
The rough outline of the reduction is depicted in
Figure~\ref{fig:hardness-rough}.

\begin{figure}[]
	$$
	\begin{tikzpicture}[scale=3]
		\tikzstyle{every node}=[minimum size=3pt, inner sep=0pt];
		\draw (2.2,0.25) node {edges of $H$} ;
		\draw (2.2,2.2) node {vertices of $H$} ;
		\draw (2.4,1.97) node {\bf\dots\dots\dots} ;
		\draw (2.4,0.52) node {\bf\dots\dots\dots} ;
		\draw (0.5,0.3) node {$v_1v_4$} ;
		\draw (0.9,0.35) node {$v_2v_6$} ;
		\draw (1.3,0.38) node {$v_4v_8$} ;
		\draw (1.7,0.40) node {$v_7v_{11}$} ;
		\tikzstyle{every path}=[draw,color=lightgray, thick];
		\draw (0,0.5) arc (101:79:11) -- (4.2,2) ;
		\draw (0,0.5) -- (0,2) arc (-101:-79:11) ;
		\tikzstyle{every path}=[draw,thick, color=black];
		\tikzstyle{every node}=[draw,shape=circle,fill=black, minimum size=2pt, inner sep=0pt];
		\draw (0,0.5) -- (0,2)
		-- (0.2,1.96) -- (0.25,2.05) node[label=above:$v_1$] {} -- (0.3,1.95)
		-- (0.4,1.93) -- (0.45,2.02) node[label=above:$v_2$] {} -- (0.5,1.915)
		-- (0.6,1.90) -- (0.65,1.99) node[label=above:$v_3$] {} -- (0.7,1.89)
		-- (0.8,1.875) -- (0.85,1.96) node[label=above:$v_4$] {} -- (0.9,1.865)
		-- (1.0,1.85) -- (1.05,1.94) node[label=above:$v_5$] {} -- (1.1,1.84)
		-- (1.2,1.83) -- (1.25,1.92) node[label=above:$v_6$] {} -- (1.3,1.825)
		-- (1.4,1.82) -- (1.45,1.9) node[label=above:$v_7$] {} -- (1.5,1.815)
		-- (1.6,1.81) -- (1.65,1.89) node[label=above:$v_8$] {} -- (1.7,1.803)
		-- (1.8,1.803) -- (1.85,1.89) node[label=above:$v_9$] {}
		-- (1.9,1.797) -- (2.0,1.797) ;
		\draw (4.2,0.5) -- (4.2,2)
		-- (4.0,1.96) -- (3.95,2.05) node[label=above:$v_n$] {} -- (3.9,1.95)
		-- (3.8,1.93) -- (3.75,2.02) node[label=above:$v_{n-1}$] {} -- (3.7,1.915)
		-- (3.6,1.90) -- (3.55,1.99) node[label=above:$\!\!\!\!\!v_{n-2}$] {}
		-- (3.5,1.89) -- (3.4,1.875) ;
		\draw (0,0.5) -- (0.5,0.58) -- (0.4,0.4) -- (0.6,0.4) -- (0.52,0.585)
		-- (0.9,0.635) -- (0.8,0.45) -- (1.0,0.45) -- (0.92,0.64)
		-- (1.3,0.67) -- (1.2,0.48) -- (1.4,0.48) -- (1.32,0.675)
		-- (1.7,0.692) -- (1.6,0.50) -- (1.8,0.50) -- (1.72,0.695) -- (1.9,0.7) ;
		\draw (4.2,0.5) -- (3.7,0.58) -- (3.8,0.4) -- (3.6,0.4) -- (3.68,0.585)
		-- (3.3,0.635) ;
		\tikzstyle{every path}=[draw, color=cyan, dashed,thick];
		\draw (0.855,1.96) -- (0.51,0.58) -- (0.25,2.05) ;
		\draw (0.445,2.02) -- (0.91,0.635) -- (1.255,1.92) ;
		\draw (0.85,1.96) -- (1.31,0.67) -- (1.655,1.89) ;
		\draw (1.45,1.9) -- (1.71,0.692) -- (2.3,1.87) ;
	\end{tikzpicture}
	$$
	\caption{A scheme of the polygon $\PP$ constructed from a
		given graph~$H$ in the proof of Theorem~\ref{thm:5hardness}.
		There are two mostly concave chains, top and bottom one.
		The top sawtooth chain features black-marked vertices
		$v_1,v_2,\ldots,v_n$ for each of the $n$ vertices of~$H$.
		The bottom chain contains, for each edge $v_iv_j$ of $H$
		(such as $v_1v_4,v_2,v_6,v_4,v_8,v_7v_{11}$ in the picture), a
		triangular pocket glued to $\PP$ by a tiny pinhole passage.
		This pocket of $v_iv_j$ is adjusted such that its lower
		corners can see precisely the top vertices $v_i$ and $v_j$,
		respectively (cf.~Figure~\ref{fig:hardness-vdetail}). 
		The important visibilities are sketched here with dashed lines.
		Altogether, we can get a proper $5$-coloring of
		the visibility graph of~$\PP$ if and only if the vertices
		$v_i$ and $v_j$ receive distinct colors for every edge
		$v_iv_j\in E(H)$.}
	\label{fig:hardness-rough}
\end{figure}
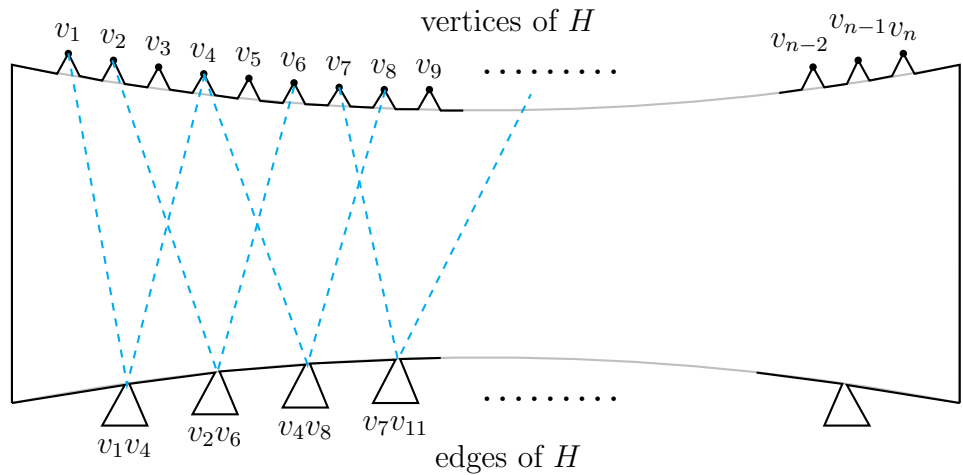

\begin{thm}
	\label{thm:5hardness}
	The problem -- given a simple polygon $\PP$ in the plane, to decide whether
	the visibility graph of $\PP$ is properly $k$-colorable -- is NP-complete for every~$k\geq5$.
\end{thm}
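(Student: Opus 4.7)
Membership in NP is immediate: given the coordinates of $\PP$, the visibility graph can be computed in polynomial time and any proposed colouring then verified directly.

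For hardness, the plan is to reduce from $3$-colourability of general graphs, following the construction sketched in Figure~\ref{fig:hardness-rough}. Given an arbitrary graph $H=(V_H,E_H)$ with $V_H=\{v_1,\dots,v_n\}$, I would build a simple polygon $\PP$ whose boundary consists of a \emph{top sawtooth chain} carrying one distinguished ``peak'' vertex (also named $v_i$) for each $v_i \in V_H$, interleaved with short concave teeth, and a \emph{bottom chain} that contains one narrow triangular pocket for each edge $v_iv_j \in E_H$. Each pocket is attached to $\PP$ through a pinhole passage so tight that only the two lower corners of the pocket have a free line of sight to the upper part of $\PP$, and by finely tilting the pinhole one arranges that the left corner sees, among the peaks, exactly $v_i$, while the right corner sees exactly $v_j$.

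Correctness would then rest on two colour-forcing claims. First, the sawteeth on top together with the pinhole vertices on the bottom induce $K_4$ subgraphs whose mutual visibilities propagate along the boundary; by an inductive argument along the chain, two specific colours (say $4$ and $5$) are forced on the ``valley'' and pinhole vertices in every proper $5$-colouring, leaving only $3$ colours available for the peaks $v_1,\dots,v_n$ and for the two lower corners of every pocket. Second, inside each pocket a small internal $K_4$-gadget pins the colour of each lower corner to equal the colour of its unique visible peak; since the two corners see each other across the pocket, their colours must differ, which enforces $\chi(v_i)\ne\chi(v_j)$ in the induced $3$-colouring of the peaks, and hence a valid $3$-colouring of $H$. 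The converse direction is by explicit extension: a $3$-colouring of $H$ is used on the peaks, and the rest of $\PP$ is coloured by the forced scheme.

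The chief obstacle will be the geometric engineering: placing the peaks, pinhole apertures and pocket corners so that the visibility relations are \emph{exactly} the desired ones, with no unwanted cross visibility between distinct pockets and no accidental visibility of a non-matching peak from a pocket corner. This must be realisable with rational coordinates of polynomial bit length, so that the reduction is polynomial. For the extension to $k\ge 6$, one appends to $\PP$ a small ``apex'' gadget of $k-5$ pairwise mutually visible vertices that see every vertex of the base construction; in any proper $k$-colouring these force $k-5$ fresh colours, after which the $k=5$ argument applies to the remaining five colours.
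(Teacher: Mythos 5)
Your construction is the paper's own (you even point to its Figure~\ref{fig:hardness-rough}), and NP-membership and the converse direction are fine, but both of your colour-forcing claims contain genuine errors. The claim that colours $4$ and $5$ are forced \emph{onto the valley vertices of the top chain} is false for this construction: every top-chain vertex (valley or peak) sees both pinhole vertices $p_{ij}^1,p_{ij}^5$ of a pocket, and since those two see each other they carry two distinct colours, so every top-chain vertex must \emph{avoid} those two colours rather than receive them. The correct and much simpler argument is exactly this: fix a single pocket, name its pinhole colours $4$ and $5$, and conclude that all peaks are coloured from $\{1,2,3\}$ --- no $K_4$'s ``propagating along the boundary'' and no statement about valleys is needed. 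Likewise, a ``$K_4$-gadget'' cannot pin a pocket corner's colour to that of its visible peak when five colours are available: equality of two non-adjacent vertices is forced by two $K_5$'s sharing a common $K_4$, which is precisely what the pocket supplies --- $(v_j,p_{ij}^1,p_{ij}^2,p_{ij}^3,p_{ij}^5)$ and $(p_{ij}^1,\dots,p_{ij}^5)$ are both $K_5$'s, forcing $\chi(p_{ij}^4)=\chi(v_j)$, after which the single visibility between $p_{ij}^4$ and $v_i$ already yields $\chi(v_i)\neq\chi(v_j)$. Your symmetric variant (pin both corners to their peaks and let the corners see each other) is workable in principle, but only with $K_5$'s, and it needs strictly more gadgetry than this asymmetric version.

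Your extension to $k\geq6$ also has a real gap: a clique of $k-5$ extra vertices that ``see every vertex of the base construction'' is not geometrically realisable, because each pocket is reachable only through a pinhole tuned so that only $v_i$ and $v_j$ see its lower corners; no additional vertex can simultaneously see $p_{ij}^2,p_{ij}^3,p_{ij}^4$ for every pocket. The fix should be local rather than global: replace the pair $p_{ij}^2,p_{ij}^3$ inside each pocket by a $(k-3)$-tuple, enlarging each in-pocket clique to $K_k$, after which the $k=5$ argument goes through verbatim.
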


\begin{proof}
	As mentioned, the problem is in NP since one can construct the
	visibility graph $G$ of $\PP$ in polynomial time
	\cite{ORourke_vertexEdgeVis,Hershberger89} and then verify a coloring.
	In the opposite direction, we reduce from the NP-complete problem of
	$3$-coloring a given graph~$H$. Let $k=5$ now.
	
	Let $V(H)=\{v_1,\ldots,v_n\}$.
	The polygon $\PP$ constructed from $H$ is shaped as in Figure~\ref{fig:hardness-rough}.
	The top chain of $\PP$ consists of $3n+2$ vertices in a sawtooth
	configuration, such that the convex vertices of the teeth are marked
	by~$v_1,\ldots,v_n$. The picture is scaled such that each $v_i$ sees
	the whole bottom chain.
	The bottom chain contains, for each edge $v_iv_j\in E(H)$, $i<j$ (in an
	arbitrary order of edges), a ``pocket'' consisting of $5$ vertices
	$p_{ij}^1,p_{ij}^2,p_{ij}^3,p_{ij}^4,p_{ij}^5$ in order, as
	detailed in Figure~\ref{fig:hardness-vdetail}.
	Importantly, $p_{ij}^1$ and $p_{ij}^5$ are mutually so close that
	the vertices $p_{ij}^2,p_{ij}^3$ in the lower left corner can see
	only the vertex $v_j$ (of course, besides $p_{ij}^1$ and $p_{ij}^5$)
	and the vertex $p_{ij}^4$ in the lower right corner can see   
	only the vertex $v_i$.
	
	\begin{figure}[]
		$$
		\begin{tikzpicture}[xscale=1.6, yscale=1.4]
			\tikzstyle{every path}=[draw, color=black];
			\tikzstyle{every node}=[draw,shape=circle,fill=black, minimum size=2pt, inner sep=0pt];
			\draw (2,0.95) -- (3,1) node[label=200:$\!p_{ij}^1~~~$] {} 
			-- (2.4,0.03) node[label=130:$p_{ij}^2$] (p2) {} 
			-- (2.43,0) node[label=270:$p_{ij}^3$] (p3) {} 
			-- (3.85,0) node[label=right:$p_{ij}^4$] (p4) {}
			-- (3.1,1.01) node[label=350:$~~~p_{ij}^5\!\!\!\!$] {} -- (4,1.05) ;
			\draw (0,2.5) -- (0.7,2.5) 
			-- (1.5,3) node[label=above:$v_i$] {} -- (2.3,2.5) -- (2.5,2.5) ; 
			\draw (3.3,2.53) -- (3.5,2.53) 
			-- (4.35,3.02) node[label=above:$v_j$] {} -- (5,2.55) -- (5.5,2.55) ; 
			\tikzstyle{every path}=[draw, color=cyan, dashed];
			\draw (1.6,3) -- (p4) -- (1.33,3);
			\draw (4.25,3) -- (p2) -- (4.53,3);
			\draw (4.44,3) -- (p3) -- (4.16,3);
		\end{tikzpicture}
		$$
		\caption{A detail (not to scale) of the pocket $v_iv_j$ from Figure~\ref{fig:hardness-rough}.
			Note that $v_j$ and $p_{ij}^4$ see the same four vertices
			$p_{ij}^1,p_{ij}^2,p_{ij}^3,p_{ij}^5$, and so they have to be colored the same.}
		\label{fig:hardness-vdetail}
	\end{figure}
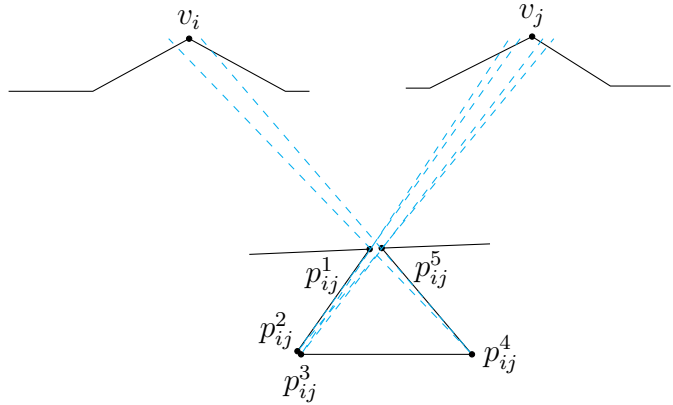
	
	Assume now that we have a proper $5$-coloring of the visibility
	graph $G$ of the constructed polygon~$\PP$.
	We easily argue the following:
	\begin{itemize}
		\item Choose any edge $v_iv_j\in E(H)$. Then the vertices $p_{ij}^1$
		and $p_{ij}^5$ of the corresponding pocket must receive distinct
		colors which we, up to symmetry, denote by $4$ and~$5$.
		Since every vertex of the top chain sees $p_{ij}^1$ and $p_{ij}^5$,
		we get that every vertex $v_b$, $b=1,2,\ldots,n$, has a color $1,2$ or~$3$.
		
		\item For each edge $v_iv_j\in E(H)$, the $5$-tuple of vertices
		$(v_j,p_{ij}^1,p_{ij}^2,p_{ij}^3,p_{ij}^5)$ of $\PP$ induces a $K_5$,
		and so does the nearly-identical $5$-tuple
		$(p_{ij}^1,p_{ij}^2,p_{ij}^3,p_{ij}^4,p_{ij}^5)$.
		Consequently, in any proper $5$-coloring of~$G$, the vertices $v_j$ 
		and $p_{ij}^4$ get the same color.
		And since $p_{ij}^4$ sees $v_i$, the colors of $v_i$ and $v_j$ must
		be distinct. 
	\end{itemize}
	Altogether, any proper $5$-coloring of the visibility
	graph $G$ of~$\PP$ implies a proper $3$-coloring of the graph~$H$.
	
	On the other hand, assume we have a proper coloring of the graph $H$ by
	colors $\{1,2,3\}$.
	We give the same colors to the vertices $v_1,\ldots,v_n$ of the top
	chain of~$\PP$, and we can always complete (e.g., greedily from left to right)
	this partial coloring to a proper $3$-coloring of the top chain of~$\PP$.
	Then we assign alternate colors $4,5,4,5,\ldots$ to the exposed
	vertices of the bottom chain.
	Finally, we color the lower corners of the bottom pockets as
	follows; for an edge $v_iv_j\in E(H)$, we give $p_{ij}^4$ the color
	of $v_j$, and to $p_{ij}^2,p_{ij}^3$ the remaining two colors among $1,2,3$.
	This gives a proper $5$-coloring of the visibility graph $G$ of~$\PP$.
	
	For each $k>5$, we naturally expand each pair of vertices
	$p_{ij}^2,p_{ij}^3$ from the above construction into a $(k-3)$-tuple
	of such vertices, and the same proof goes through. 
	
	The last bit is to show that the construction of $\PP$ can be realized
	in a grid of polynomial size in~$n=|V(H)|$ (assuming $k$ is constant).
	Both the top and bottom concave shapes can be realized as ``fat'' parabolas,
	requiring only rough resolution of $\Theta(n^2)$ in both
	horizontal and vertical directions.
	This is fully sufficient for the top chain, but realizing the pockets
	of the bottom chain is more delicate.
	Still, the precise placement of the pocket of an edge $v_iv_j$ depends
	only on the vertices $v_i$ and $v_j$ of the top chain, and not on
	other pockets. 
	Within the main scale, each pocket has dimensions $\Theta(n)$ and
	the pinhole opening is, say, $\frac1n$, and hence a sufficient precision
	for adjusting the pocket corners is $\Theta(\frac1n)$.
	Altogether, the construction of $\PP$ is
	achieved on an $\mathcal{O}(n^3)$ grid.
\end{proof}
\section{4-coloring problem for polygons with holes}
\label{sec:polygonswithholes}

Consider a polygon $\PP$ together with a collection of pairwise
disjoint polygons $Q_i$, $i=1,\ldots,k$, such that $Q_i\subseteq int(\PP)$.
Then the set $\PP \setminus int\big(Q_1\cup\ldots\cup Q_k\big)$ is
called a {\em polygon with holes}.
In this section we prove that, for polygons with holes,
$4$-colorability is already an NP-complete problem.
Given the algorithm for $4$-coloring from Section~\ref{sec:4coloringP}, it is natural
that the proof we are going to present should be very different from the
reduction in Section~\ref{sec:hardness5}.

For better clarity, we present a construction of a polygon with
holes as ``digging polygonal corridors in solid mass''.
These corridors (precisely, their topological closure) will then
form the point set of our polygon, while the ``mass trapped between''
corridors will form the holes in the polygon.
On a high level, our corridors will be composed of elementary
{\em channels}, as depicted in Figures~\ref{fig:hardness-vertstrip} and
\ref{fig:hardness-edgestrip}, placed along the lines of a large
hexagonal (honeycomb) grid in the plane. More details follow next.

\begin{figure}[]
	\centering
	\begin{tikzpicture}[scale=1.5]
		\tikzstyle{every node}=[draw,solid,shape=circle,fill=white, inner sep=1.2pt];
		\tikzstyle{every path}=[draw, thick,solid, color=black];
		\draw (1,0) node[label=left:$a_1$] (a){}
		++(150:1) node[label=left:$a_2$] (b){}
		++(30:1) node[label=left:$a_3$] (c){} ;
		\draw (7,0) node[label=right:$b_1$] (k){}
		++(30:1) node[label=right:$b_2$] (l){}
		++(150:1) node[label=right:$b_3$] (m){} ;
		\draw (c) -- (4,0.33) node[label=above:$c$] (d){} -- (m) ;
		\draw (a) -- (k) ;
		\begin{scope}
			\tikzstyle{every path}=[draw, line width=3pt, color=gray!30];
			\draw[transform canvas={yshift=-2pt}] (a) -- (k) ;
			\draw[transform canvas={yshift=+2pt}] (c) -- (d) -- (m);
		\end{scope}
		\tikzstyle{every path}=[draw, color=brown, very thick, dotted];
		\draw (a) -- (b) -- (c) -- (a) ;
		\draw (k) -- (l) -- (m) -- (k) ;
		\tikzstyle{every path}=[draw, thin,dashed, color=cyan];
		\draw (a) -- (l) -- (d) -- (b) -- (k) ;
		\draw (a) -- (d) -- (k) ;
	\end{tikzpicture}
	\\~\\
	\begin{tikzpicture}[scale=0.4]
		\path [use as bounding box] (0,-6) rectangle (8,8);
		\def\vertchan#1#2{%
			\tikzstyle{every node}=[draw,solid,shape=circle,fill=white, inner sep=1.2pt];
			\tikzstyle{every path}=[draw, very thick,solid,	color=gray!30];
			\draw (1.08,-0.1) -- (6.92,-0.1) ;
			\draw (1.08,1.08) -- (4,0.42) -- (6.92,1.08) ;
			\tikzstyle{every path}=[draw, thick,solid, color=black];
			\draw (1,0) node (a){} ++(150:1) node[#2] (b){} ++(30:1) node (c){} ;
			\draw (7,0) node (k){} ++(30:1) node[#2] (l){} ++(150:1) node (m){} ;
			\draw (c) -- (4,0.33) node (d){} -- (m) ;
			\draw (a) -- (k) ;
			\tikzstyle{every path}=[draw, color=brown, very thick, dotted];
			\draw (a) -- (b) -- (c) -- (a) ;
			\draw (k) -- (l) -- (m) -- (k) ;
			\tikzstyle{every path}=[draw, thick,solid, color=black];
			#1 ;
		}
		\begin{scope}[transform canvas={xshift=-10mm,yshift=-24.2mm},rotate=60]
			\vertchan{}{} ;
		\end{scope}
		\begin{scope}[transform	canvas={xshift=-13.6mm,yshift=26.3mm},rotate=-60]
			\vertchan{}{} ;
		\end{scope}
		\begin{scope}[transform	canvas={xshift=29.4mm,yshift=-1.45mm},rotate=60]
			\vertchan{}{} ;
		\end{scope}
		\vertchan{\draw (k)--(l)}{fill=black} ;
		\begin{scope}[transform	canvas={xshift=-39.4mm,yshift=-22.7mm}]
			\vertchan{\draw (k)--(l); \draw (b)--(a);
				\tikzstyle{every path}=[draw,dashed, color=brown,thin];
				\draw[->] (0.7,0.5)--(0,1.7);
			}{fill=black} ;
		\end{scope}
		\begin{scope}[transform	canvas={xshift=-39.6mm,yshift=22.8mm}]
			\vertchan{\draw (m)--(l); \draw (b)--(a);
				\tikzstyle{every path}=[draw,dashed, color=brown,thin];
				\draw[->] (0.7,0.5)--(0,1.7);
			}{fill=black} ;
		\end{scope}
		\begin{scope}[transform	canvas={xshift=39.4mm,yshift=22.8mm}]
			\vertchan{\draw (b)--(c);
				\tikzstyle{every path}=[draw,dashed, color=brown,thin];
				\draw[->] (7.3,0.5)--(8,1.7);
				\draw[->] (7.3,0.5)--(8,-0.6)
			}{fill=black} ;
		\end{scope}
	\end{tikzpicture}
	
	\caption{An illustration of the proof of Theorem~\ref{thm:4hardness}.
		\\Top: a picture of the {\em vertex channel}, where the
		``solid mass'' remains outside (as indicated by the shade).
		Note that the color of $c$ is unique in the picture.
		In any proper $4$-coloring,
		$b_1$ must be of the same color as $a_3$ (since both
		see the triangle $a_1a_2c$), then $b_2$ is similarly of the
		same color as $a_2$ and $b_3$ as~$a_1$.
		\\
		Bottom: How vertex channels are composed by gluing at triangle joins
		along the shape of a hexagonal grid.
		All the black vertices (the {\em flag vertices}) must
		receive the same color in any proper $4$-coloring.
	}
	\label{fig:hardness-vertstrip}
\end{figure}

\begin{figure}[]
	\centering
	\begin{tikzpicture}[scale=2, rotate=60]
		\tikzstyle{every node}=[draw,solid,shape=circle,fill=white, inner sep=1.2pt];
		\tikzstyle{every path}=[draw, thick,solid, color=black];
		\draw (1,0) node[label=right:$a_1$, fill=black] (a){}
		++(150:1) node[label=left:$a_2$] (b){}
		++(30:1) node[label=left:$a_3$] (c){} ;
		\draw (7,0) node[label=right:$b_1$] (k){}
		++(30:1) node[label=right:$b_2$] (l){}
		++(150:1) node[label=left:$b_3$, fill=black] (m){} ;
		\draw (c) -- (2.7,0.39) node[label=left:$c_1$] (d){} 
		-- (3.0,0.405) node[label=left:$c_2$] (dd){} -- (m) ;
		\draw (a) -- (4.4,0.39) node[label=right:$d_1$] (ee){} 
		-- (4.4,0.52) node[label=left:$d_2~$] (e){} -- (k) ;
		\begin{scope}[on background layer]
			\tikzstyle{every path}=[draw, line width=3pt, color=gray!30];
			\draw[transform canvas={xshift=-1.7pt,yshift=+1pt}]
			(c) -- (d) -- (dd) -- (m) ;
			\draw[transform canvas={xshift=+1.7pt,yshift=-1pt}]
			(a) -- (ee) -- (e) -- (k) ;
		\end{scope}
		\tikzstyle{every path}=[draw, fill=none, thick, color=black];
		\draw (c) -- (d) -- (dd) -- (m);
		\draw (a) -- (ee) -- (e) -- (k);
		\tikzstyle{every path}=[draw, color=brown, very thick, dotted];
		\draw (a) -- (b) -- (c) -- (a) ;
		\draw (k) -- (l) -- (m) -- (k) ;
		\tikzstyle{every path}=[draw, thin,dashed, color=cyan];
		\draw (dd) -- (a) -- (m) ;
		\draw (b) -- (d) -- (ee) ;
		\draw (dd) -- (e) -- (m) ; \draw (d) -- (a) ;
	\end{tikzpicture}
	~
	\begin{tikzpicture}[scale=0.8, rotate=60,
		triangle/.style = {inner sep=1.2pt, regular polygon, regular polygon sides=3},
		dtriangle/.style = {inner sep=1.2pt, regular polygon, regular polygon sides=3, rotate=180}
		]
		\tikzstyle{every node}=[draw,solid,shape=circle,fill=white, inner sep=1.7pt];
		\tikzstyle{every path}=[draw, thick,solid, color=black];
		\def\onechan#1#2#3#4{%
			\draw (1,0) node[diamond,fill=red] (a){} 
			++(150:1) node[dtriangle,fill=green] (b){} 
			++(30:1) node[triangle,fill=blue] (c){} ;
			\draw (7,0) node[fill=#3] (k){} ++(30:1) node[fill=#4] (l){}
			++(150:1) node[fill=#2] (m){} ;
			\draw (c) -- (2.7,0.42) node[fill=yellow] (d){} 
			-- (3.1,0.43) node[fill=#1] (dd){} -- (m) ;
			\draw (a) -- (4.8,0.2) node[fill=#2] (ee){} 
			-- (4.8,0.5) node[fill=yellow] (e){} -- (k) ;
			\draw[dotted] (a) -- (b) -- (c) -- (a) ;
			\draw[dotted] (k) -- (l) -- (m) -- (k) ;
		}
		\onechan{blue,triangle}{green,dtriangle}{red,diamond}{blue,triangle}
		\begin{scope}[transform canvas={yshift=+30mm}]
			\onechan{blue,triangle}{green,dtriangle}{blue,triangle}{red,diamond}
		\end{scope}
		\begin{scope}[transform canvas={yshift=+60mm}]
			\onechan{green,dtriangle}{blue,triangle}{red,diamond}{green,dtriangle}
		\end{scope}
		\begin{scope}[transform canvas={yshift=+90mm}]
			\onechan{green,dtriangle}{blue,triangle}{green,dtriangle}{red,diamond}
		\end{scope}
	\end{tikzpicture}
	\caption{Left: a picture of the {\em edge channel}.
		Some important fine details (which cannot be clearly displayed in this scale) are:
		$a_1$ sees $c_2$ and $b_3$, $c_1$ sees $d_1$ but not $d_2$,
		neither of $d_1,c_2$ can see~$a_2$ and neither of $c_1,c_2$ can see~$b_2$.
		Note that in any proper $4$-coloring, $a_1$ and $b_3$ must
		receive distinct colors, while $c_1$ and $d_2$ must have
		the same color (since they both see the triangle~$a_1c_2d_1$).
		Hence, in particular, the triple of colors used on $a_1a_2a_3$ must be the
		same (up to ordering) as the triple of colors on~$b_1b_2b_3$.
		\\[3pt]
		Right: Examples of proper $4$-colorings of the edge channel.
		Note that the flexibility of these $4$-colorings is not in a
		contradiction with Theorem~\ref{lemalgo} since the chord
		$a_1c_1$ is not crossed by other chords, and likewise the chord~$b_3d_2$.
	}
	\label{fig:hardness-edgestrip}
\end{figure}

\begin{thm}\label{thm:4hardness}
	The problem -- given a polygon with holes $\PP$ in the plane, 
	to decide whether the visibility graph of $\PP$ is properly $k$-colorable --
	is NP-complete for every~$k\geq4$.
\end{thm}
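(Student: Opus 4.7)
\emph{Proof plan.} Membership in NP is immediate, since the visibility graph of a polygon with holes can be built in polynomial time and any prescribed $k$-coloring checked trivially. For hardness, the plan is to reduce in polynomial time from 3-colorability of planar graphs of maximum degree three, which is NP-complete. Given such an input graph $H$, I would construct a polygon with holes $\PP$ whose visibility graph is 4-colorable if and only if $H$ is 3-colorable, using the two gadgets previewed in Figures \ref{fig:hardness-vertstrip} and \ref{fig:hardness-edgestrip}: the \emph{vertex channel} propagates colorings of an anchor triangle along a grid edge, and the \emph{edge channel} encodes a 3-coloring constraint between two anchor triangles. Both gadgets are ``corridors'' routed along edges of a large hexagonal (honeycomb) grid; the corridors together form $\PP$, and the ``solid mass trapped between'' corridors becomes the holes.

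\emph{Building the construction.} First I would fix a planar embedding of $H$ and route the edges of $H$ through the hexagonal grid with only polynomial blowup, so that each vertex of $H$ is assigned a distinct anchor triangle of the grid and each edge $uv$ of $H$ becomes a grid path from the anchor triangle of $u$ to that of $v$, with distinct $H$-edges using pairwise disjoint grid paths except at shared anchor triangles. Along every grid segment of a routed $H$-edge I install a vertex channel, except for one designated middle segment per $H$-edge where I install an edge channel instead. Triangular joins of the grid where several channels meet are shared between the incident channels, so the union of corridors is a single connected polygon with holes $\PP$.

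\emph{Using the two gadgets.} The vertex channel has the property, read off from Figure \ref{fig:hardness-vertstrip}, that in any proper 4-coloring the triangle $\{a_1,a_2,c\}$ together with $b_1$ forms a $K_4$ and similarly for the other two flanks, forcing $c(b_1)=c(a_3)$, $c(b_2)=c(a_2)$, $c(b_3)=c(a_1)$. Consequently, the 3-color pattern of the anchor triangle is transported through the channel by a fixed combinatorial mirror, and the ``flag'' vertex (the black one) is pinned to a single global color across the entire construction---call it colour $4$---leaving colours $\{1,2,3\}$ for all remaining anchor-triangle vertices. The edge channel, whose extra vertices $c_1,c_2,d_1,d_2$ form the additional $K_4$'s of Figure \ref{fig:hardness-edgestrip}, then additionally enforces $c(a_1)\neq c(b_3)$ but nothing further between the two end triangles. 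Therefore placing an edge channel along the edge $uv$ of $H$ encodes exactly the constraint that the anchor flags of $u$ and $v$ receive different colours from $\{1,2,3\}$, i.e., a proper 3-colouring of $H$. For $k>4$, the usual padding---replace each anchor triangle by a $K_{k-1}$ of mutually visible vertices and widen the channel cross-section accordingly---makes the same argument carry through.

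\emph{Main obstacle.} The combinatorial heart of the reduction is short; the substantive work will be discharging the geometric promises---that each channel realizes \emph{exactly} the visibility pattern drawn in Figures \ref{fig:hardness-vertstrip} and \ref{fig:hardness-edgestrip}, with no missing edges and, more importantly, no spurious extra edges, and that channels meeting at a shared triangular join do not see each other through any unintended sightlines. This is precisely where the holes of $\PP$ do the work that a simple polygon could not do: they occlude every long-range sightline that would otherwise couple distant channels or link two opposite sides of the hexagonal grid. I expect the precise corridor dimensions, pinhole widths, and vertex coordinates to require careful but routine case analysis, analogous in spirit to the pinhole construction used in the proof of Theorem \ref{thm:5hardness}.
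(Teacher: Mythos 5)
Your overall architecture matches the paper's: corridors of vertex channels and edge channels routed along a hexagonal grid, with the solid mass between corridors forming the holes, and the edge channel forcing the two flag vertices to differ. However, there is a genuine gap at the very start: you reduce from 3-colorability of planar graphs of maximum degree three, but by Brooks' theorem every connected graph of maximum degree three other than $K_4$ is 3-colorable, so that source problem is decidable in polynomial time and the reduction proves nothing. The NP-complete variant is planar 3-colorability with maximum degree four (or unrestricted degree), and this is exactly where your ``one distinct anchor triangle per vertex of $H$'' design breaks: a triangular join in the hexagonal grid has only three sides, so at most three channels can be glued to it, and a degree-4 vertex cannot be served by a single anchor triangle. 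The paper resolves this by representing each vertex $v$ of $H$ not by one triangle but by a subtree $T_v$ of the grid (a minor model of $H$ in the 3-regular wall), stringing vertex channels along all grid edges of $T_v$; since every triangle join of $T_v$ is forced to carry the same unordered colour triple and the same flag colour, arbitrarily many edge channels may attach to distinct joins of $T_v$ while still reading off a single well-defined colour for $v$. You need this (or an equivalent degree-reduction device) for the reduction to be from an NP-hard problem.

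A secondary inconsistency: you assert that the flag vertex is ``pinned to a single global colour $4$'' across the whole construction, and then two sentences later use the fact that the flags of $u$ and $v$ receive \emph{different} colours from $\{1,2,3\}$. Both cannot hold. In the intended argument the fourth colour is absorbed by the interior vertex $c$ of each vertex channel; what is globally constant is the unordered \emph{triple} of colours on every triangle join, while the flag colour varies from one $T_v$ to another and is precisely the datum that encodes the 3-colouring of $H$. Finally, your padding construction for $k>4$ is unnecessary: the case $k\geq 5$ already follows from the simple-polygon hardness result (Theorem~\ref{thm:5hardness}), since a simple polygon is a polygon with zero holes, so only $k=4$ needs the new gadgets.
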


\begin{proof}
	The claim follows from Theorem~\ref{thm:5hardness} for $k\geq5$, and
	so we consider only~$k=4$ here.
	Again, the problem is clearly in NP.
	In the opposite direction, we reduce from the NP-complete problem of
	$3$-coloring a given {\em planar} graph~$H$.
	
	We first recall a folklore claim that every planar graph $H$ 
	is a minor of the $3$-regular plane grid (also called the ``wall').
	By the definition of a minor (a {\em minor model}), this means that 
	there is a collection of pairwise disjoint subtrees of the grid
	$T_v$: $v\in V(H)$ (representatives of the vertices of~$H$) such that,
	for every edge $uv\in E(H)$, the grid contains an edge between
	$V(T_u)$ and $V(T_v)$ (called a representative edge of~$uv$). 
	This minor model of $H$ can be naturally embedded in the geometric hexagonal
	grid in the plane.
	To simplify our construction, we may moreover assume that we always
	choose representative edges in the grid which are not of horizontal
	direction (out of the three directions $0^\circ$, $120^\circ$ and~$240^\circ$).
	Without further optimization, the size of the resulting grid
	would be at most quadratic in the size of~$H$, which is good enough
	for the reduction.
	
	Having such a representation of the given planar graph $H$ in the
	geometric hexagonal grid, we continue as follows.
	Let a {\em vertex channel} be the polygonal fragment shown in
	Figure~\ref{fig:hardness-vertstrip}, where the triples $a_1a_2a_3$
	and $b_1b_2b_3$ are the {\em triangle joins} of the channel.
	Channels are composed, after suitable rotation, by gluing their
	triangle joins together, as illustrated in bottom part of 
	Figure~\ref{fig:hardness-vertstrip}.
	When no further channel is glued to a join, then the dotted triangle
	edge(s) is ``sealed'' by a polygon edge.
	Let an {\em edge channel} be the polygonal fragment shown in
	Figure~\ref{fig:hardness-edgestrip}, again having two triangle joins
	$a_1a_2a_3$ and $b_1b_2b_3$ at its ends.
	Edge channels are used and composed in a same way with vertex
	channels, but edge channels cannot be rotated, only mirrored by the
	vertical axis (that is why we do not use them along horizontal grid edges).
	Altogether, we construct a polygon $\PP$ from $H$ by composing copies of the
	vertex channel along all the grid edges of each $T_v$, $v\in V(H)$,
	and by further composing in copies of the edge channel (possibly mirrored)
	along the representative edges of $H$ in the grid.
	
	Assume now that we have a proper $4$-coloring of the visibility graph of $\PP$.
	For each triangle join, let the vertex with the middle $y$-coordinate
	be called the {\em flag vertex} (it is the vertex which is extreme
	to the left or right).
	One can easily check from Figure~\ref{fig:hardness-vertstrip} that,
	among all vertex channels of one $T_v$, $v\in V(H)$, all the
	triangle joins receive the same unordered triple of colors and, in
	particular, all the flag vertices have the same one color.
	The same claim can also be derived from Theorem~\ref{lemalgo} applied to
	the standalone simple polygon formed by the vertex channels of $T_v$.
	Furthermore, one can check from Figure~\ref{fig:hardness-edgestrip},
	that the edge channel also maintains the property that both its
	triangle joins must receive the same unordered triple of colors.
	
	Naturally assuming connectivity of $H$, we hence conclude that every
	triangle join constructed in $\PP$ receives the same unordered triple
	of colors, say~$\{1,2,3\}$.
	Now, to each vertex $v$ of $H$ we assign the unique color from
	$\{1,2,3\}$ which occurs on the flag vertices of~$T_v$.
	Since the two flag vertices of the edge channel see each other
	($a_1$ and $b_3$ in Figure~\ref{fig:hardness-edgestrip}), this
	ensures that for every edge $uv\in E(H)$ the colors assigned to $u$
	and $v$ are distinct, and so $H$ is $3$-colorable.
	
	In the converse direction, we assume that $H$ has a proper $3$-coloring.
	We can routinely $4$-color the polygonal fragments of each $T_v$, 
	$v\in V(H)$, such that all the flag vertices of $T_v$ get the color of~$v$.
	Then, for each $uv\in E(H)$ with distinct colors on $u$ and~$v$, 
	we can complete proper $4$-coloring of the fragment of $\PP$ made by 
	the representative edge channel of $uv$, as shown in the right part
	of Figure~\ref{fig:hardness-edgestrip}.
	Hence the visibility graph of~$\PP$ is then $4$-colorable.
	
	Finally, the construction of $\PP$ is easily done (with negligible
	distortion of the angles of hexagonal grid) within polynomial
	resolution and so in polynomial time.
\end{proof}	

\section{Conflict-free chromatic guarding of a simple polygon} \label{sec:polygonconflict}

\subsection{Vertex-to-point conflict-free chromatic guarding}
In this section, we extend the scope of the studied problem of
vertex-to-point conflict-free chromatic guarding from funnels and weak
visibility polygons to general simple polygons.  
We will establish an $O(\log^2 n)$ upper bound for the
number of colors of vertex-guards on $n$-vertex simple polygons, and give
the corresponding polynomial time algorithm.
For that we use the previous algorithm for chromatic guarding of weak visibility polygons
(Algorithm~\ref{alg:weakv2pcoloringB}).
Moreover, our algorithm is ready for further improvements in
Algorithm~\ref{alg:weakv2pcoloringB}, as it uses $O(C+\log n)$ colors
where $C$ is the number of colors used by Algorithm~\ref{alg:weakv2pcoloringB}.

We give our algorithm in two phases.
Namely, the decomposition phase, and the coloring phase.

\subsubsection{Decomposition into weak visibility polygons}

Our coloring scheme relies on the decomposition of a simple polygon into weak visibility polygons.
We utilise the decomposition algorithm described by B\"{a}rtschi \emph{et al.}
\cite{Bartschi-2014}, given in Algorithm~\ref{alg:bartdecompose}.
For an edge $f$ (or a vertex) of a polygon $\PP$,
we call the {\em visibility polygon of $f$} the set of all points of $\PP$ 
which are visible from some point of~$f$.

\begin{algorithm}[htbp]
	\KwIn{A simple polygon $\PP$, an arbitrary vertex $v$}
	\KwOut{Decomposition of $\PP$ into weak visibility polygons}
	\smallskip
	$W_1 \gets$ visibility polygon of $v$; \tcc*{i.e., the set of points visible from $v$}
	$i \gets 1$, $j \gets 2$\;
	\Repeat{$\bigcup_{W \in \mathcal{W}} = P$}{
		$\mathcal{W} \gets \mathcal{W} \cup \{W_i\}$\;
		\ForEach{edge $e$ of $W_i$ which is not a boundary edge of $\PP$}
		{	\qquad\hfill\qquad\hfill\qquad\tcc{that is, $e$ cuts $W_i$ from the rest of $\PP$}
			$W_j \gets$ visibility polygon of $e$ in the
			adjacent subpolygon of $P\setminus W_i$\;
			$j \gets j + 1$\;
		}
		$i \gets i+1$\;
	}
	\Return{$\mathcal{W}$}\;
	\caption{Polygon decomposition algorithm of B\"{a}rtschi \emph{et al.} \cite{Bartschi-2014}}
	\label{alg:bartdecompose}
\end{algorithm}

The Algorithm~\ref{alg:bartdecompose} takes a simple polygon $\PP$ and an arbitrary vertex $v$ of $\PP$.
First, the algorithm computes the visibility polygon $W_1$ of $v$ and removes~$W_1$ from $\PP$.
Then, until the whole polygon is partitioned, the algorithm selects an edge $e$ of
a previously removed polygon, computes the visibility polygon $W_j$ of $e$ within
the rest of $\PP$, and then removes $W_j$, and so on.

The mentioned decomposition algorithm has been performed in
\cite{Bartschi-2014} to obtain a point-to-point guarding, 
in which the guards are not necessarily selected at vertices of the polygon.
In our case, we need to ensure that recursively chosen guards in weak
visibility subpolygons $W_j$ of $\PP$ are placed at vertices of $\PP$.
However, Algorithm~\ref{alg:bartdecompose} typically creates subpolygons
whose vertices are internal points of the edges of~$\PP$.
To overcome this problem, we slightly modify the algorithm by inserting an
intermediate phase -- creating a special subpolygon, which ``recovers'' the
property that the base edge of each subsequently constructed visibility polygon
is between two vertices of $\PP$ again.
As we will show later, this modification does not weaken the decomposition
technique much.

\begin{algorithm}[htbp]
	\KwIn{A simple polygon $\PP$, an arbitrary edge $e_0$ of $\PP$}
	\KwOut{Decomposition of $\PP$ into weak visibility polygons of two kinds}
	\smallskip
	$W_1 \gets$ visibility polygon of $e_0$ in $\PP$\;
	$i \gets 1$, $j \gets 2$\;
	\Repeat{$\bigcup_{W \in \mathcal{W}} = P$}{
		$\mathcal{W} \gets \mathcal{W} \cup \{W_i\}$\;
		\ForEach{edge $e=uv$ of $W_i$ which is not a boundary edge of $\PP$}{
			$U \gets$ visibility polygon of $e$ in the
			adjacent subpolygon of $P\setminus W_i$\;
			\If{both $u$ and $v$ are vertices of $\PP$}
			{ \label{it:uvv}
				$W_j \gets$ $U$ \tcc*{got an ordinary subpolygon $U$}
				$j \gets j + 1$\;
			}
			\Else{ \label{it:uvxy}
				Let $x$ and $y$ be the vertices of $U$ such that $x\not=v$
				is adjacent to $u$ and $y\not=u$ is adjacent to~$v$
				\tcc*{$x,y$ are vertices of $\PP$, too}
				$\Pi_{xy} \gets$ geometric shortest path in $U$ between $x$ and $y$\;
				$U_1$ $\gets$ the subpolygon of $U$ bounded by the
				paths $\Pi_{xy}$ and $(xu,uv,vy)$\;
				$\mathcal{W} \gets \mathcal{W} \cup \{U_1\}$
				\tcc*{adding a forward subpolygon $U_1$}
				\ForEach{edge $f$ of the path $\Pi_{xy}$}
				{ \label{it:pixy}
					$W_j \gets$ visibility polygon of $f$ in the
					adjacent subpolygon of $P\setminus U_1$\;
					\tcc{"forwarding" to ordinary subpolygons adjacent to $U_1$}
					$j \gets j + 1$
				}
			}
		}
		$i \gets i+1$\;
	}
	\Return{hierarchically structured decomposition $\mathcal{W}$ of $\PP$}\;
	\caption{Adjusted polygon decomposition algorithm}
	\label{alg:ourdecompose}
\end{algorithm}

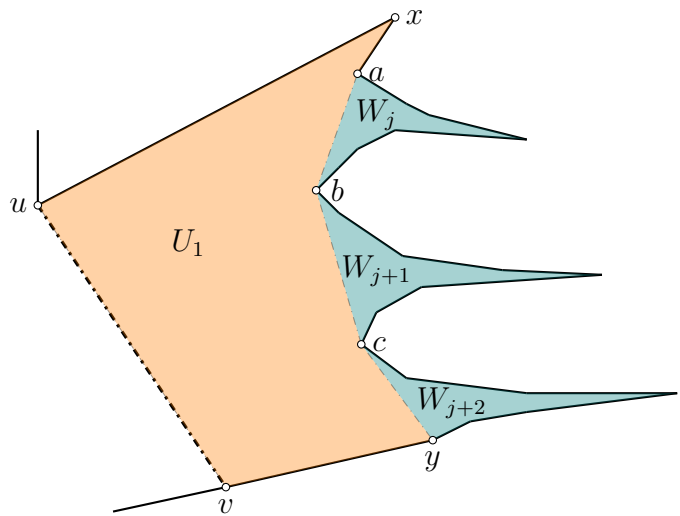
\begin{figure}[tbp]
	\centering
	\begin{tikzpicture}[xscale=-0.5, yscale=0.25]
		\coordinate (1) at (7,-1);
		\coordinate (2) at (1.5,1.5);
		\coordinate (3) at (0.5,2.5);
		\coordinate (4) at (-1,3);
		\coordinate (5) at (-5,4);
		\coordinate (6) at (-1,4);
		\coordinate (7) at (2.2,4.8);
		\coordinate (8) at (3.4,6.6);
		\coordinate (9) at (3,8.3);
		\coordinate (10) at (1.8,9.65);
		\coordinate (11) at (-3,10.3);
		\coordinate (12) at (-0.35,10.55);
		\coordinate (13) at (2.3,11.3);
		\coordinate (14) at (4,13.6);
		\coordinate (15) at (4.6,14.8);
		\coordinate (16) at (4,16);
		\coordinate (17) at (3.5,17);
		\coordinate (18) at (2.5,18);
		\coordinate (19) at (-1,17.5);
		\coordinate (20) at (1.6,18.8);
		\coordinate (21) at (2.2,19.4);
		\coordinate (22) at (3.5,21);
		\coordinate (23) at (2.5,24);
		\coordinate (24) at (12,14);
		\foreach \i in {1,...,23}
		{
			\pgfmathtruncatemacro\j{\i+1};
			\draw[thick] (\i)--(\j);
		}
		\draw[very thick,dash dot] (24)--(1);
		\coordinate (A) at (10,-2.3);
		\coordinate (B) at (12, 18);
		\draw[thick] (A)--(1);
		\draw[thick] (B)--(24);
		
		\draw[draw=none,fill=teal, opacity=0.35] (2)--(3)--(4)--(5)--(6)--(7)--(8)--(2);
		\draw[draw=none,fill=teal, opacity=0.35] (8)--(9)--(10)--(11)--(12)--(13)--(14)--(15)--(8);
		\draw[draw=none,fill=teal, opacity=0.35] (15)--(16)--(17)--(18)--(19)--(20)--(21)--(22)--(15);
		\draw[dash dot,fill=orange, opacity=0.35] (1)--(2)--(8)--(15)--(22)--(23)--(24)--(1);
		
		\node at (8,12) {$U_1$};
		\node at (1,3.5) {$W_{j+2}$};
		\node at (3,10.5) {$W_{j+1}$};
		\node at (3,18.8) {$W_j$};
		
		\tikzstyle{every node}=[draw,fill=white, shape=circle, inner sep=1pt];
		\node[label=left:$u$] at (24) {};
		\node[label=below:$v$] at (1) {};
		\node[label=below:$y$] at (2) {};
		\node[label=right:$c$] at (8) {};
		\node[label=right:$b$] at (15) {};
		\node[label=right:$a$] at (22) {};
		\node[label=right:$x$] at (23) {};
	\end{tikzpicture}
	\caption{An illustration of forward partitioning in Algorithm~\ref{alg:ourdecompose}.
		The base edge $e=uv$ of its visibility polygon $U$ (to the right) has one
		end $v$ which is not a vertex of $\PP$.
		In this situation we add an intermediate forward subpolygon $U_1$
		(colored orange) bounded by $e$ and the path $\Pi_{xy}=(x,a,b,c,y)$.
		The child ordinary subpolygons of $U_1$ are then the visibility
		polygons of the edges of $\Pi_{xy}$, denoted in the picture by $W_j,W_{j+1},W_{j+2}$.
		Notice also that, in general, the union $U_1\cup W_{j}\cup
		W_{j+2}\cup\dots$ may be larger, as witnessed in the picture by~$W_j$, than
		the visibility polygon $U$ of the edge~$e$.}
	\label{fig:forward}
\end{figure}

\medskip
We describe the full adjusted procedure in Algorithm~\ref{alg:ourdecompose}.
The constructed decomposition there consists of two kinds of polygons;
\begin{itemize}\item
	the {\em ordinary} weak visibility polygons whose base edge has both ends
	vertices of~$\PP$ (while some other vertices may be internal points of edges of~$\PP$),
	\item
	the {\em forward} weak visibility polygons whose base edge has at least one
	end not a vertex of $\PP$ (and, actually, exactly one end, but this fact is not
	crucial for the arguments), but all their other vertices are vertices of $\PP$
	and form a concave chain (which will be important).
\end{itemize}
This two-sorted process is called \emph{forward partitioning} (cf.~the
branch from line~\ref{it:uvxy} of the algorithm).
We illustrate its essence in Figure~\ref{fig:forward}.
Notice that it may happen that $x=y$ and $\Pi_{xy}$ is a trivial one-vertex path.

\begin{lem}\label{lem:ourdecompose}
	Algorithm~\ref{alg:ourdecompose} runs in polynomial time, and it outputs a
	decomposition $\mathcal{W}$ of $\PP$ into weak visibility polygons such that
	the following holds:
	\begin{enumerate}
		\item [a)]
		If $U\in\mathcal{W}$ is a forward polygon, then all vertices of $U$ except possibly
		the base ones are vertices of $\PP$.
		The non-base vertices form a concave chain.
		\item [b)]
		If $W\in\mathcal{W}$ is an ordinary polygon, then all vertices of $W$ 
		are vertices of $\PP$, except possibly for apex vertices of max funnels of $W$.
	\end{enumerate}
\end{lem}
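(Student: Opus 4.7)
The proof will proceed in four steps, corresponding to the four assertions of the lemma.

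First, I will address the polynomial runtime. Each iteration of the outer repeat-loop processes a subpolygon by constructing one visibility polygon (computable in linear time, by standard algorithms) and, in the forward-partitioning branch, one geometric shortest path inside $U$ (also polynomial). The key observation is that the total number of subpolygons produced cannot exceed a polynomial in $n=|V(\PP)|$: every ordinary subpolygon $W_j$ is the visibility polygon of a distinct edge or chord in $\PP$, and every forward subpolygon $U_1$ is attached to a distinct edge $e$ of a previously processed subpolygon; hence the hierarchy is bounded by $O(n^2)$.

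Second, I will show that every output polygon is indeed weakly visible from its base. For an ordinary $W_j$, this is immediate since $W_j$ is defined as the visibility polygon of an edge inside an adjacent subpolygon, so every interior point sees some point of that base edge. For a forward $U_1$ bounded by $e = uv$ and the geometric shortest path $\Pi_{xy}$, the crucial fact is that $\Pi_{xy}$ is a geometric shortest path inside the visibility polygon $U$ of $e$; by a standard argument, each vertex of $\Pi_{xy}$ sees some point of~$e$, and by continuity each point of $U_1$ sees some point of $e$ as well.

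Third, I will verify property (a) for forward polygons. By construction, the non-base boundary of $U_1$ consists of two short segments $xu$ and $vy$ whose endpoints $x,y$ are vertices of $\PP$ (the neighbors of $u,v$ in $U$ other than the opposite base endpoint), together with the interior vertices of $\Pi_{xy}$. Since $\Pi_{xy}$ is a geometric shortest path inside the simple polygon~$U$, all of its internal vertices are reflex vertices of $U$, and because $U\subseteq\PP$ and $\Pi_{xy}$ avoids the (possibly non-$\PP$) base of $U$ by construction, each such vertex coincides with a vertex of $\PP$. The concavity of the non-base chain follows from the well-known fact that a geometric shortest path in a simple polygon forms a concave (inward-reflex) chain when viewed from the base edge~$e$.

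Fourth, the main obstacle: I need to show (b), that in an ordinary $W_j$ the only ``fake'' vertices (internal points of edges of~$\PP$) are apex vertices of max funnels of $W_j$. The base edge of an ordinary $W_j$ is by the check on line~\ref{it:uvv} (or because we produced $W_j$ as a child of a forward $U_1$, where the forwarding edges $f\subseteq\Pi_{xy}$ have both ends at vertices of~$\PP$) already a segment between two vertices of $\PP$. Every other vertex of the visibility polygon $W_j$ arises from one of two causes: it is either a reflex vertex of the parent subpolygon (hence a vertex of $\PP$), or it is a point on an edge of the parent subpolygon where a line through a reflex vertex of~$\PP$ first strikes the boundary -- a so-called \emph{shadow} endpoint. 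I will argue that every such shadow endpoint is precisely the apex of a max funnel of $W_j$: indeed, the two shortest paths from the base of $W_j$ to the reflex vertex causing the shadow meet at that reflex vertex and then continue collinearly to the shadow endpoint, so the shadow endpoint is the apex of the (possibly degenerate) max funnel whose two chains end at this reflex vertex. Conversely, any apex that is not a vertex of $\PP$ must arise this way. This establishes the tight correspondence claimed in~(b) and completes the proof.
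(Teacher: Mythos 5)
Your proposal is correct and follows essentially the same route as the paper's proof: polynomial runtime is routine, claim (a) rests on the standard fact that internal vertices of a geometric shortest path are reflex vertices of $\PP$ forming a concave chain, and claim (b) rests on the observation that any non-$\PP$ vertex of a visibility polygon must be a convex (window/shadow) vertex and hence the apex of a, possibly degenerate, max funnel. You merely spell out in more detail the step the paper leaves implicit, namely why such a window endpoint is a max-funnel apex (the two shortest paths from the base pass through the generating reflex vertex and continue collinearly), which is a correct and welcome elaboration.
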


\begin{proof}
	It is an easy routine to verify that the algorithm can be implemented in polynomial
	time and that the members of $\mathcal{W}$ are pairwise internally disjoint
	subpolygons which together cover~$\PP$.
	
	Claim (a) follows trivially from the choice of $\Pi_{xy}$ as a geometric
	shortest path in $U$.
	For claim (b), the base vertices of $W$ are vertices of $\PP$ by
	the condition on line \ref{it:uvv}, or the choice of base edge $f$ on line
	\ref{it:pixy} of Algorithm~\ref{alg:ourdecompose}.
	If some other vertex $w$ of $W$ is not a vertex of $\PP$, then since $W$ is a
	visibility polygon of $f$, the vertex $w$ must not be reflex, and so $w$ is
	the apex of some max funnel of~$W$.
\end{proof}

\subsubsection{Recursive coloring of the decomposition}

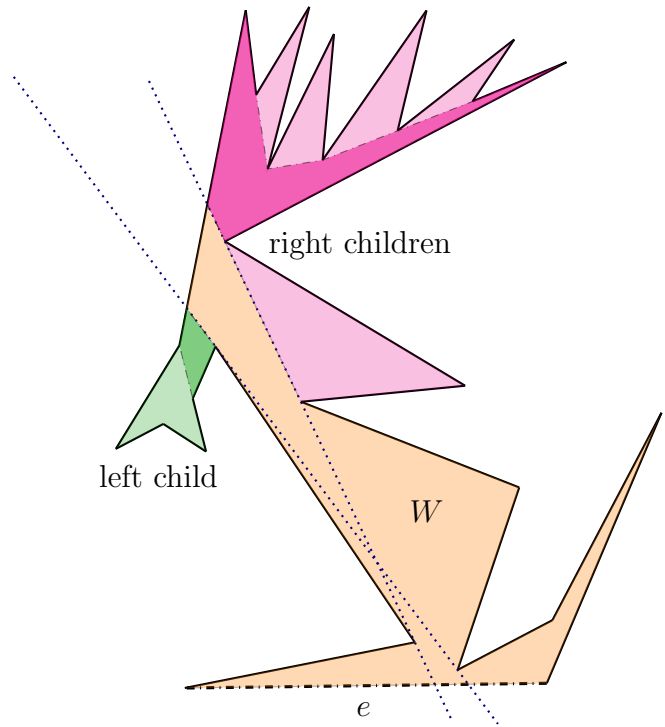
\begin{figure}[htbp]
	\centering
	\begin{tikzpicture}[xscale=1.2, yscale=0.7, rotate=-93]
		\coordinate (1) at (3.6,4);
		\coordinate (2) at (5.2,4.2);
		\coordinate (3) at (3.5,4.7);
		\coordinate (4) at (6.6,4.4);
		\coordinate (5) at (4,5);
		\coordinate (6) at (6.4,5);
		\coordinate (7) at (3.5,6);
		\coordinate (8) at (5.8,5.8);
		\coordinate (9) at (4,7);
		\coordinate (10) at (5.2,6.6);
		\coordinate (11) at (4.4,7.6);
		\coordinate (12) at (8,4);
		\coordinate (13) at (10.6,6.8);
		\coordinate (14) at (11,5);
		\coordinate (15) at (12.5,7.5);
		\coordinate (16) at (16,7);
		\coordinate (17) at (15,8);
		\coordinate (18) at (11,9);
		\coordinate (19) at (16.2,8);
		\coordinate (20) at (16.5,4);
		\coordinate (21) at (15.5,6.5);
		\coordinate (22) at (10,4);
		\coordinate (23) at (11,3.8);
		\coordinate (24) at (12,4);
		\coordinate (25) at (11.5,3.5);
		\coordinate (26) at (12,3);
		\coordinate (27) at (10,3.6);
		
		\foreach \i in {1,...,18,20,21,22,23,24,25,26}
		{
			\pgfmathtruncatemacro\j{\i+1};
			\draw[thick] (\i)--(\j);
		}
		\draw[thick] (27)--(1);
		
		\draw[dash dot, very thick] (19)--(20);
		\coordinate (A) at (16.27,7.13);
		\coordinate (B) at (16.29,6.76);
		\coordinate (C) at (9.29,3.64);
		\coordinate (D) at (7.31,3.77);
		
		\draw[dash dot,fill=orange, opacity=0.3] (12)--(14)--(15)--(16)--(17)--(18)--(19)--(20)--(21)--(22)--(C)--(D)--(12);
		
		\coordinate (P) at (5,3);
		\coordinate (Q) at (17,7);
		
		\coordinate (X) at (5,1.5);
		\coordinate (Y) at (17,7.5);
		
		\draw[thick,dotted,color=blue!50!black] (P)--(Q);
		\draw[thick,dotted,color=blue!50!black] (X)--(Y);
		
		\draw[dash dot, fill=magenta, opacity=0.25] (1)--(2)--(3)--(4)--(5)--(6)--(7)--(8)--(9)--(10)--(11)--(12)--(D)--(1);
		\draw[dash dot, fill=magenta, opacity=0.5] (1)--(2)--(4)--(6)--(8)--(10)--(11)--(12)--(D)--(1);
		
		\draw[dash dot, fill=green!60!black, opacity=0.25] (23)--(24)--(25)--(26)--(27)--(23);
		\draw[dash dot, fill=green!60!black, opacity=0.5] (22)--(23)--(27)--(C)--(22);
		
		\draw[dash dot, fill=magenta, opacity=0.25] (12)--(13)--(14)--(12);
		\node at (16.8,6) {$e$};
		\node at (13,6.5) {$W$};
		\node at (12.5,3.5) {left child};
		\node at (8,5.5) {right children};
	\end{tikzpicture}
	\caption{A weak visibility polygon $W$ of the base edge $e$ (denoted by dashed line), its left child (colored green) and two right children (colored pink).
		The two forward polygons among all three children of $W$ are filled with
		darked color.}
	\label{fig:decompositione}
\end{figure}

In the coloring phase, we traverse the (naturally rooted) decomposition tree of the
decomposition~$\mathcal{W}$ of $\PP$ computed by Algorithm~\ref{alg:ourdecompose}.
We will use the terms {\em parent} and {\em child} polygon with respect to
this rooted decomposition tree (which, essentially, is a BFS tree).
For each ordinary weak visibility polygon $W$ in $\mathcal{W}$, we apply our
Algorithm~\ref{alg:weakv2pcoloringB} for guarding, which is a correct usage since only
some apices of the max funnels of $W$ are not vertices of $\PP$, and those are
not used for guards.
On the other hand, for each forward weak visibility polygon $U$ in
$\mathcal{W}$, we apply a straightforward guarding by a ruler sequence on the
non-base vertices (which are vertices of $\PP$ and form a concave chain in~$U$).
For vertices shared between a parent and a child polygon we apply the
parental color (which automatically resolves also sibling conflicts).

For this coloring scheme, in order to avoid color conflicts between
different polygons, we need to use disjoint subsets of colors for the
ordinary and the forward polygons.
Moreover, this whole set of colors will be used in three disjoint copies
-- the first set of colors given to a parent polygon, the second one given
to all its ``left'' children and the third one to its ``right'' children.
Each child polygon will then reuse the other two color sets for its children.
The reason why this works is essentially the same as in B\"{a}rtschi
\emph{et al.} \cite{Bartschi-2014}.

We need to define what the left and right children mean.
Consider an ordinary polygon $W\in\mathcal{W}$ with the base edge $e$,
and picture $W$ such that $e$ is drawn horizontal with $W$ above it.
See Figure~\ref{fig:decompositione}.
Then every edge $f$ of $W$ which is not a boundary edge of $\PP$ is {\em not} horizontal.
Hence every child polygon $U$ of $W$ in the decomposition (regardless of whether $U$
is a forward or ordinary polygon) is either to the left of $f$ or to the
right of it, and then $U$ is called a {\em left} or {\em right} child of $W$, respectively.
If $U$ is a left child of $W$ and $U$ is a forward polygon, then the
child ordinary polygons of $U$ are also called {\em left} children.
The same applies to right children.

We can now state the precise procedure in Algorithm~\ref{alg:coloringoverall}.

\begin{algorithm}[htbp]
	\caption{Computing a vertex-to-point conflict-free chromatic guarding of 
		a simple polygon using $O(\log^2n)$ colors.}
	\label{alg:coloringoverall}
	\KwIn{A simple polygon $\PP$.}
	\KwOut{A V2P conflict-free chromatic guarding of $\PP$ using $O(C+\log n)$
		colors, where $C$ is the maximum number of colors used by calls to
		Algorithm~\ref{alg:weakv2pcoloringB}.}
	
	\smallskip
	Call Algorithm~\ref{alg:ourdecompose} to get the hierarchically
	structured decomposition $\mathcal{W}$ of $\PP$\;
	$A \gets \{1,2,\ldots,C\}$, $B \gets \{C\!+\!1,\ldots,C\!+\!\lfloor\log n\rfloor\}$
	\tcc*{color sets to be used}
	\ForEach{ $i=1,2,3$ }{
		$C_i \gets$ $C_i'\cup C_i''$ where $C_i'$ is a disjoint
		copy of $A$ and $C_i''$ a disjoint copy of $B$\;
	}
	$W \gets$ the root polygon in the decomposition $\mathcal{W}$\;
	Call procedure Recursivecolor($W$,1)
	\tcc*{as defined below}
	\Return{colored $\PP$}\;
\end{algorithm}

\begin{procedure}[tb]
	\caption{Recursivecolor($W$,\,$c$) }
\If{$W$ is a forward polygon in the decomposition $\mathcal{W}$}
{
	\ForEach{child polygon $W_i$ of $W$}
	{ Call procedure Recursivecolor($W_i$,\,$c$)\; }
	color the non-base vertices of $W$ by the ruler sequence
	using colors from $C_c''$\;
	\tcc{this overrides child colors from the recursive calls}
} \Else {
	Choose $a,b$ such that $\{c,a,b\}=\{1,2,3\}$\;
	\ForEach{child polygon $W_j$ of $W$}
	{
		\If{$W_j$ is a left child of $W$}
		{ Call procedure Recursivecolor($W_j$,\,$a$)\; }
		\Else
		{ Call procedure Recursivecolor($W_j$,\,$b$)\; }
		
	}
	Call Algorithm~\ref{alg:weakv2pcoloringB} to color-guard $W$,
	using colors from $C_c'$\;
	\tcc{this again overrides child colors from the recursive calls}
}
\Return{}
\end{procedure}

\begin{thm}\label{thm:allpolygon}
Algorithm~\ref{alg:coloringoverall} in polynomial time computes a conflict-free
chromatic guarding of an $n$-vertex simple polygon using $O(\log^2 n)$ colors.
\end{thm}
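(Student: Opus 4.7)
The plan is to verify three properties of Algorithm~\ref{alg:coloringoverall}: polynomial runtime, an $O(\log^2 n)$ bound on the number of colors, and the conflict-free guarding property itself.

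For the first two, I would argue as follows. Lemma~\ref{lem:ourdecompose} produces the hierarchical decomposition $\mathcal{W}$ of $\PP$ in polynomial time, and $|\mathcal{W}|$ is polynomial in $n$, since each element of $\mathcal{W}$ is carved off by a chord of $\PP$ or by an edge of a geometric shortest path. The recursion then processes each polygon in $\mathcal{W}$ exactly once: ordinary polygons invoke Algorithm~\ref{alg:weakv2pcoloringB}, which runs in polynomial time and uses $C=O(\log^2 n)$ colors by Theorem~\ref{thm:twoApprox}, while forward polygons receive a ruler-sequence coloring on their concave chain in $O(n)$ time. Since only three disjoint copies $C_1,C_2,C_3$ of $A\cup B$ are ever used during the entire recursion, the total number of colors is $3(C+\lfloor\log n\rfloor)=O(\log^2 n)$.

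For the conflict-free property I will fix an arbitrary observer $p\in\PP$ and let $W\in\mathcal{W}$ be the subpolygon containing $p$. The within-polygon guarantee is essentially already in hand: if $W$ is ordinary, Theorem~\ref{thm:twoApprox} supplies a guard visible from $p$ whose color, drawn from $C_c'$, is unique among the guards placed in~$W$; if $W$ is forward, every point of $W$ sees a contiguous portion of the concave chain of $W$, so the ruler-sequence coloring on that chain yields a unique color in $C_c''$ by essentially the argument underlying Corollary~\ref{cor:opt+4}. The key additional step is a locality lemma: any guard of $\PP$ visible from $p$ lies in $W$ or in one of the members of $\mathcal{W}$ immediately adjacent to $W$ in the decomposition tree, i.e.\ in $W$'s parent or in at most one child of $W$ on each side of the line of sight from~$p$.

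Once the locality lemma is available, the global conflict-freeness follows from the recursive three-index coloring of Algorithm~\ref{alg:coloringoverall}: the tree-index $c$ assigned to any ordinary polygon differs from both indices $a,b$ given to its left and right children, so the palettes $C_c', C_a', C_b'$ used on the respective guards are pairwise disjoint, and they are further disjoint from every forward palette $C_*''$ by construction. Since forward polygons merely propagate the parental index $c$ to their own children ordinary polygons, no grandparent-to-grandchild color collision can arise either, and the unique color seen by $p$ inside $W$ remains unique in the whole of~$\PP$. The hard part will be a clean formulation and proof of the locality lemma: a line of sight from $p$ inside a forward polygon may cross several cutting chords of Figure~\ref{fig:forward} before reaching a guard, and one must rule out that it ``tunnels'' through to a non-adjacent member of~$\mathcal{W}$. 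A careful case analysis, leveraging the concavity of the shortest path $\Pi_{xy}$ bounding each forward polygon and of the boundaries of the visibility polygons computed in Algorithm~\ref{alg:ourdecompose}, should supply the argument, but cleanly handling both ordinary and forward neighbors, and parents reached through either type of boundary, will be the delicate part of the proof.
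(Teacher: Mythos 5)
Your overall strategy is the same as the paper's: polynomial runtime and the $3(C+\lfloor\log n\rfloor)=O(\log^2 n)$ color count are argued identically, within-polygon uniqueness comes from Theorem~\ref{thm:twoApprox} (ordinary) and the ruler sequence on a concave chain (forward), and the global step consists of showing that an observer cannot see guards in any polygon of $\mathcal{W}$ carrying the same palette as its home polygon. Where you package this as a ``locality lemma'' and defer its proof to ``a careful case analysis,'' the paper proves exactly the needed non-visibility claims directly, and the missing ingredient is a one-line observation you never state: each ordinary $W\in\mathcal{W}$ is \emph{by construction} the visibility polygon of its base edge $e$ inside the subpolygon cut off by $e$, so any sight-line that crosses $e$ and reaches a vertex $v$ beyond it makes $v$ visible from a point of $e$, forcing $v\in W$. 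This immediately kills visibility from an ancestor into a grandchild (and symmetrically from a descendant into a grandparent), with no concavity analysis or tunnelling case distinctions required. Your locality statement also needs a second, separate case that your sketch does not mention: two \emph{sibling} children of the same parent receive the same palette, so you must show an observer in one cannot see a non-base vertex of the other. The paper does this by taking a vertex $q$ of the common ordinary parent $W$ lying between the two children and noting that the segment from the observer to the seen vertex would separate $q$ from the base edge of $W$, contradicting $q\in W$; when the common parent is forward, both children sit on its concave chain and can only be simultaneously seen from the parent or above, where no conflict arises.

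One small inaccuracy: ``at most one child of $W$ on each side of the line of sight from $p$'' is not right --- an observer in $W$ (e.g.\ near the base edge in Figure~\ref{fig:decompositione}) can see into many children --- but this is harmless for your argument, since every child carries an index different from $W$'s index $c$, so none of their guards can collide with the unique color $p$ sees in $C_c'$ (or $C_c''$). With the visibility-polygon observation and the sibling case supplied, your proof closes and coincides with the paper's.
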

\begin{proof}
Overall runtime analysis follows that of previous Algorithm~\ref{alg:ourdecompose}.
As noted above, the obtained coloring is valid since we only assign guards
to those vertices of the polygons in $\mathcal{W}$ which are at the same
time vertices of~$\PP$; this is claimed by Lemma~\ref{lem:ourdecompose} in
connection with Algorithm~\ref{alg:weakv2pcoloringB}.

It remains to prove that the resulting coloring of $\PP$ is conflict-free,
i.e., that every observer can see a unique color.
We know that the constructed coloring is conflict-free within any single
polygon of $\mathcal{W}$.
Let us say that the {\em home} polygon of an observer $x\in P$ is $W\in\mathcal{W}$
such that $x\in W$ (the parent one in case of $x$ belonging to multiple polygons).
The rest will follow if we prove that no observer in $\PP$ 
can see another polygon which has received the same color set as the home
polygon of~$x$.

Consider an ordinary polygon $W\in\mathcal{W}$ which is the visibility
polygon of its base edge $e$ (such as the one in Figure~\ref{fig:decompositione}), 
and let~$W_0$ be the nearest ancestor of $W$ which is also ordinary 
(i.e., $W_0$ is the parent of $W$, or the grandparent in case the parent is a forward polygon).
Let $x\in W_o$ be our observer, and let $W_1$ be an ordinary child or grandchild of $W$ 
which uses the same color set as~$W_0$ in Algorithm~\ref{alg:coloringoverall}.
Then the line of sight between $x$ and some non-base vertex of $W_1$ must
cross the base edge $e$ of $W$, and hence be visible from~$e$.
This is a contradiction since the observed vertex of $W_1$ does not belong to~$W$.
The same argument applies symmetrically, and also in the case of a forward
home polygon.

Consider now two sibling ordinary polygons $W_1,W_2$ which receive the same color set.
One case is that their common parent is a forward polygon $U$.
Since $W_1$ and $W_2$ are adjacent to a concave chain of $U$, an observer
$x$ may see both of them only if $x$ belongs to $U$ or some ancestor, and so
there is no conflict for~$x$.
The second case is that they have a common ordinary grandparent/parent $W$
with base edge~$e$.
Suppose that there is an observer, say $y\in W_1$, which sees a vertex $p$ of $W_2$.
Let $q$ be any vertex of $W$ ``between'' $W_1$ and $W_2$.
Then the line of sight between $e$ and $q$ must cross the line segment $yp$,
which is absurd.
Again, the same argument may also be applied to sibling forward polygons.

We have exhausted all cases.

Finally, the number of colors used by Algorithm~\ref{alg:coloringoverall}
is $3\cdot(C+\log n)$ where $C=O(\log^2 n)$ by thm~\ref{thm:twoApprox},
and so we have the bound $O(\log^2 n)$.
\end{proof}

\subsection{Vertex-to-vertex conflict-free chromatic guarding}  \label{sec:v2vconflictfree}
In the last section of this chapter, we turn to the vertex-to-vertex variant of the guarding problem.
As we have noted at the beginning, the vertex-to-vertex (V2V) weak conflict-free chromatic guarding
problem coincides with the graph conflict-free coloring problem
\cite{k-cfc-graph} on the visibility graph of the considered polygon.
While, for the latter graph problem, \cite{k-cfc-graph} provided
constructions requiring an unbounded number of colors,
it does not seem to be possible to adapt those constructions for polygon
visibility graphs (and, actually, we propose that the conflict-free
chromatic number is bounded in the case of polygon visibility graphs,
see Conjecture~\ref{cj:v2v-upper3}).

\subsubsection{Lower bound for vertex-to-vertex conflict-free chromatic guarding}

We start by showing in Figure~\ref{fig:v2v-1color} 
that one color is not always enough.
To improve this very simple lower bound further, we will then need 
a more sophisticated construction.

\begin{prp}\label{pro:v2v-3colors}
There exists a simple polygon which has no V2V conflict-free chromatic guarding with
$2$ colors.
\end{prp}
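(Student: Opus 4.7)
The plan is to exhibit a small simple polygon $\PP$ together with a short combinatorial argument ruling out every choice of guard set together with a $2$-colouring of that set.

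First I would design a basic \emph{constraint gadget}: an observer vertex $o$ placed deep inside a narrow pocket so that, among all vertices of $\PP$, the vertex $o$ sees only two designated vertices $a,b$ (besides possibly itself). Any V2V conflict-free guarding must make at least one of $\{a,b\}$ a guard (else $o$ sees no guard, violating the uniqueness requirement on a nonempty neighbourhood), and if both are guards they must carry distinct colours (else $o$ sees two identically coloured guards and no unique colour). So the gadget encodes the Boolean constraint \emph{``at most one of $a,b$ is missing from the guard set, and if both are guards then $c(a)\neq c(b)$''}.

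Next I would build $\PP$ so that it contains three such gadgets sharing their role-vertices in a triangular pattern. Specifically, choose three candidate guards $g_1,g_2,g_3$ and install three pocket-observers $o_{12},o_{23},o_{31}$ whose only visible vertices among $V(\PP)$ are, respectively, $\{g_1,g_2\}$, $\{g_2,g_3\}$ and $\{g_3,g_1\}$. To force \emph{all} three $g_i$ to be guards — and so to turn the three binary constraints into actual distinct-colour constraints — I would add three singleton-observer pockets $o_1,o_2,o_3$ with visibilities $S(o_i)=\{g_i\}$, which forces $g_i$ to be a guard for each $i$. Then any valid $2$-colouring would yield $c(g_1)\neq c(g_2)$, $c(g_2)\neq c(g_3)$ and $c(g_3)\neq c(g_1)$, i.e.\ a proper $2$-colouring of $K_3$, which does not exist. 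This contradiction is the heart of the argument and is analogous to, but one step stronger than, the single-pair contradiction used for the $1$-colour lower bound witnessed by Figure~\ref{fig:v2v-1color}.

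The combinatorial part above is immediate, so the main obstacle is geometric realisability: producing an actual simple polygon whose visibility graph on the six (or nine) designated vertices is precisely the one prescribed, without any accidental extra sight-lines between vertices of different pockets. The standard tool for this is a deep narrow pocket guarded by reflex vertices acting as a pinhole screen, in the spirit of the pockets of Figure~\ref{fig:hardness-vdetail} used in the proof of Theorem~\ref{thm:5hardness}; each $o_{ij}$ and each $o_i$ sits behind such a pinhole, with the pinhole geometry adjusted so that only the intended $g_k$'s fall in the observer's visibility wedge. Once the polygon is drawn and coordinates are fixed, checking the visibility graph on the relevant vertices is a short explicit calculation, and the lower bound follows.
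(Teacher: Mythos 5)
Your combinatorial skeleton (force three pairwise ``adjacent'' guards and invoke the non-$2$-colourability of $K_3$) is cleaner than what the paper does, but the reduction to that skeleton has a genuine gap. In the V2V conflict-free model a vertex is satisfied if \emph{any} vertex of its \emph{closed} neighbourhood carries a uniquely-coloured guard --- including the observer itself. So your singleton pocket $o_i$ with $S(o_i)=\{g_i\}$ does \emph{not} force $g_i$ to be a guard: placing a guard at $o_i$ (or at any other vertex $o_i$ happens to see) satisfies $o_i$ without touching $g_i$. The same objection weakens the pair gadget: ``$o$ sees only $a,b$'' does not yield ``$c(a)\neq c(b)$ or exactly one is a guard,'' because $o$ may be guarded by itself or by a pocket vertex. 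Worse, the idealised visibility set $S(o)=\{a,b\}$ is not geometrically realisable: in a simple polygon every vertex sees its two boundary neighbours, so an observer at the bottom of a pinhole pocket necessarily sees the pocket's own vertices, and those vertices are legitimate guard locations. This is precisely the difficulty you defer to ``geometric realisability,'' and it is not a routine adjustment --- it is the crux.

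The paper's proof is shaped by exactly this obstacle. Its ``bowl'' gadget does not force a specific external vertex to be a guard; it only establishes (by analysing the bowl's internal vertices $a_1,\dots,a_9,c_1,\dots,c_9$) that some guard must sit at the two-vertex \emph{door} $\{p_1,p_2\}$, and it explicitly notes that a single guard at the door already satisfies the whole bowl internally. The global contradiction then comes not from a $K_3$ argument but from gluing four bowls into a bowtie and counting: the vertex $t$ sees all four doors, so at least three door guards share a colour, which propagates constraints to $t'$ and to the chain vertices $r_i,s_i$ until no consistent placement remains. If you want to salvage your $K_3$ plan, you would need (a) a gadget that forces a guard into a small designated set of mutually visible vertices \emph{and} controls the colours available there, and (b) an argument that the auxiliary observers $o_{ij}$, $o_i$ are themselves satisfiable only in ways compatible with your constraints --- neither of which the current write-up supplies.
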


In the proof of Proposition~\ref{pro:v2v-3colors}, we elaborate on properties of the simple example 
from Figure~\ref{fig:v2v-1color} (right),
and provide the construction shown in Figure~\ref{fig:v2v-2color}; its underlying idea is that the polygon pictured on the left requires at
least one guard to be placed on $p_1$ or $p_2$.
Gluing four copies of that polygon to the picture on the right, we obtain
a polygon for which two colors are not enough.

\begin{figure}[tbp]
\centering
\begin{tikzpicture}[scale=0.9]
	\tikzstyle{every node}=[draw, shape=circle, minimum size=4pt,inner sep=0pt];
	\tikzstyle{every path}=[thick];
	\node (a) at (0,0) {};
	\node (b) at (2,0) {};
	\node (c) at (1,1.6) {};
	\node (d) at (2.5,-0.2) {};
	\node (e) at (1,2) {};
	\node (f) at (-0.4,-0.2) {};
	\draw (a) -- (d) -- (b) -- (e) -- (c) -- (f) -- (a);
\end{tikzpicture}
\qquad\qquad\qquad\qquad
\begin{tikzpicture}[scale=1]
	\tikzstyle{every node}=[draw, shape=circle, minimum size=4pt,inner sep=0pt];
	\tikzstyle{every path}=[thick];
	\node (a) at (0,0) {};
	\node (b) at (1,0.2) {};
	\node (c) at (2,0.2) {};
	\node (d) at (3,0) {};
	\node (e) at (3,2) {};
	\node (f) at (2,1.8) {};
	\node (g) at (1,1.8) {};
	\node (h) at (0,2) {};
	\draw (a) -- (b) -- (c) -- (d) -- (e) -- (f) -- (g) -- (h) -- (a);
\end{tikzpicture}
\caption{Two examples of simple polygons requiring at least $2$ colors for
	a V2V conflict-free chromatic guarding (in each, one vertex guard cannot see
	all vertices, and any two guards of the same color make a conflict).}
\label{fig:v2v-1color}
\end{figure}
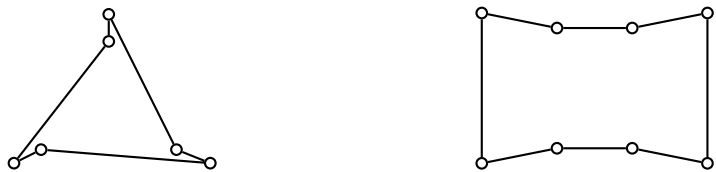

\begin{figure}[tbp]
\centering
\begin{tikzpicture}[xscale=1, yscale=1.5]
	\tikzstyle{every node}=[draw, shape=circle, minimum size=4pt,
	fill=gray, inner sep=0pt];
	\tikzstyle{every path}=[thick];
	\node[label=right:$~p_2$] (a) at (0.06,-0.3) {};
	\node[label=left:$p_1~$] (b) at (-0.06,-0.3) {};
	\node[label=left:$a_1$] (c) at (-1.9,1) {};
	\node[label=left:$a_2$] (cc) at (-1.9,1.33) {};
	\node[label=left:$a_3$] (dd) at (-1.97,1.66) {};
	\node[label=left:$a_4$] (d) at (-2.1,2) {};
	\node[label=right:$a_5$] (e) at (-1.1,2) {};
	\node[label=right:$a_6$] (ee) at (-0.89,1.66) {};
	\node[label=right:$a_7$] (ff) at (-0.64,1.33) {};
	\node[label=left:$a_8$] (f) at (-0.3,1) {};
	\node[label=below:$\!a_9~$] (g) at (-0.1,0.9) {};
	\node[label=below:$~c_9\!$] (h) at (0.1,0.9) {};
	\node[label=right:$c_8$] (i) at (0.3,1) {};
	\node[label=left:$c_7$] (ii) at (0.64,1.33) {};
	\node[label=left:$c_6$] (jj) at (0.89,1.66) {};
	\node[label=left:$c_5$] (j) at (1.1,2) {};
	\node[label=right:$c_4$] (k) at (2.1,2) {};
	\node[label=right:$c_3$] (kk) at (1.97,1.66) {};
	\node[label=right:$c_2$] (ll) at (1.9,1.33) {};
	\node[label=right:$c_1$] (l) at (1.9,1) {};
	\draw[color=blue] (a) -- (b) -- (c) -- (cc) -- (dd) -- (d)
	-- (e) -- (ee) -- (ff) -- (f) -- (g) -- (h)
	-- (i) -- (ii) -- (jj) -- (j)
	-- (k) -- (kk) -- (ll) -- (l) -- (a);
\end{tikzpicture}
\qquad
\def\bowlsh#1#2{%
	\begin{scope}[shift={#1}, rotate=#2, yscale=1, xscale=0.17]
		\draw[color=blue, thin, fill=blue!15!white] (-0.06,-0.3) -- (-1.9,1) -- (-2.1,2) -- (-1.1,2)
		-- (-0.3,1) -- (0.3,1) -- (1.1,2) -- (2.1,2) -- (1.9,1) -- (0.06,-0.3);
	\end{scope}
}
\begin{tikzpicture}[xscale=0.36,yscale=1.05]
	\tikzstyle{every node}=[draw, shape=circle, minimum size=4pt,
	fill=gray, inner sep=0pt];
	\tikzstyle{every path}=[thick];
	\node[label=below:$t$] (a) at (0,-0.3) {};
	\node[label=below:$r_1$] (c) at (-2,-0.83) {};
	\node[label=below:$r_2$] (e) at (-4,-1.5) {};
	\node[fill=white, label=right:$q_1$] (ee) at (-4.5,-0.4) {};
	\node[fill=white, label=right:$q_2$] (ff) at (-4.5,0.4) {};
	\node[label=above:$r_3$] (f) at (-4,1.5) {};
	\node[label=above:$r_4$] (h) at (-2,0.83) {};
	\node[label=above:$t'$] (aa) at (0,0.3) {};
	\node[label=above:$s_4$] (k) at (2,0.83) {};
	\node[label=above:$s_3$] (m) at (4,1.5) {};
	\node[fill=white, label=left:$q_3$] (mm) at (4.5,0.4) {};
	\node[fill=white, label=left:$q_4$] (nn) at (4.5,-0.4) {};
	\node[label=below:$s_2$] (n) at (4,-1.5) {};
	\node[label=below:$s_1$] (p) at (2,-0.83) {};
	\draw (a) -- (c) -- (e) -- (ee) -- (ff) -- (f)
	-- (h) -- (aa) -- (k) -- (m) -- (mm)
	-- (nn) -- (n) -- (p) -- (a);
	\bowlsh{(-4.75,-0.4)}{90}
	\bowlsh{(-4.75,0.4)}{90}
	\bowlsh{(4.75,-0.4)}{270}
	\bowlsh{(4.75,0.4)}{270}
\end{tikzpicture}

\caption{A construction of an example requiring at least $3$ colors for
	a V2V conflict-free chromatic guarding (cf.~Proposition~\ref{pro:v2v-3colors}).
	The bowl shape (see on the left), suitably squeezed and with a tiny
	opening between $p_1$ and $p_2$, is placed to four
	positions within the bowtie shape on the right.}
\label{fig:v2v-2color}
\end{figure}

\begin{proof}
We call a ``{\em bowl\/}'' the simple polygon depicted in Figure~\ref{fig:v2v-2color} 
(note that it contains two copies of the shape from Fig.~\ref{fig:v2v-1color}).
In particular, the vertices $p_1,p_2$ see all other vertices of the bowl.
We claim the following:
\begin{itemize}
	\item[(i)] In any conflict-free $2$-coloring of the bowl there is a guard
	placed on $p_1$ or $p_2$ (or both).
\end{itemize}

Assume the contrary to (i), that is, existence of a $2$-coloring
of the bowl avoiding both $p_1$ and~$p_2$.
One can easily check that the subset $A=\{a_1,\ldots,a_8\}$ of the vertices
requires both colors, with guards possibly placed at $A\cup\{a_9\}$.
Symmetrically, there should be guards of both colors placed at the 
disjoint subset $\{c_1,\ldots,c_8,c_9\}$.
Then we have got a coloring conflict at both $p_1$ and $p_2$, thus proving (i).
(On the other hand, placing a single guard at either $p_1$ or $p_2$
gives a feasible conflict-free coloring of the bowl.)

\smallskip
The next step is to arrange four copies of the bowl within a suitable
simple polygon, as depicted on the right hand
side of Figure~\ref{fig:v2v-2color}.
More precisely, let $S$ be the (bowtie shaped) polygon on the right of the picture.
Note that the chains $C_1=(r_2,r_1,t,s_1,s_2)$ and
$C_2=(r_3,r_4,t',s_4,s_3)$ of $S$ are both concave, and each vertex
of $C_1$ sees all of~$C_2$.
We construct a polygon $S'$ from $S$ by making a tiny opening at each of the
vertices $q_1,q_2,q_3,q_4$, and gluing there a suitably rotated and squeezed copy of the
bowl, where gluing is done along a copy of the edge $\lseg{p_1}{p_2}$ of the bowl.
We call these openings at former vertices of $S$ the {\em doors} of~$S'$.
Obviously, the doors can be made so tiny that there is no accidental
visibility between a vertex inside a bowl and a vertex belonging
to the rest of~$S'$.

Assume, for a contradiction, that $S'$ admits a conflict-free coloring with
two colors, say red and blue.
Up to symmetry, let the unique color seen by vertex $t$ be blue.
Since $t$ sees all four doors, and (i) every door has a guard,
at least three guards at the doors are red.
Hence the unique color seen by symmetric $t'$ must also be blue.
Consequently, either there is only one blue guard at one of
$r_1,r_4,s_1,s_4,q_1,q_2,q_3,q_4$ (which all see both $t$ and $t'$), 
or there are two blue guards suitably placed at a pair of vertices from
$r_2,r_3,s_2,s_3$ (each of those sees one of~$t,t'$).
Moreover, a single blue guard cannot be placed at one of the doors
$q_1,q_2,q_3,q_4$, because that would leave $r_3$ or $s_3$ unguarded
(seeing two red and no blue).
Consequently, the guards placed at the doors          
$q_1,q_2,q_3,q_4$ must all be red, and so all vertices
$t,r_1,r_2,r_3,r_4,t',s_1,s_2,s_3,s_4$ must be guarded by a blue guard.
The latter is clearly impossible without a conflict
(similarly as in Fig.~\ref{fig:v2v-1color}).
\end{proof}

Our investigation of the V2V chromatic guarding problem,
although not giving further rigorous claims (yet), moreover
suggests the following conjecture:%
\begin{cnj}\label{cj:v2v-upper3}
Every simple polygon admits a weak conflict-free vertex-to-vertex 
chromatic guarding with at most~$3$ colors.
\end{cnj}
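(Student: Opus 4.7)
The plan is to attack the conjecture through the same kind of hierarchical decomposition used for the $O(\log^2 n)$ bound in Algorithm~\ref{alg:coloringoverall}, but leveraging crucially that V2V guarding only needs to cover a finite set of vertices rather than a continuum of points. First I would apply Algorithm~\ref{alg:ourdecompose} to split $\PP$ into a rooted hierarchy $\mathcal{W}$ of weak visibility subpolygons (ordinary and forward), and reduce the conjecture to the following two sub-claims: (i) each ordinary weak visibility subpolygon admits a V2V conflict-free guarding with $3$ colors, and (ii) these local guardings can be glued along shared base edges without introducing a fourth color, by re-using the parent's colors on the interface.

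The key sub-problem is (i), and for that I would further reduce to funnels via the max-funnel decomposition underlying Lemma~\ref{lem:atleastone}. In a funnel $\FF$ the visibility structure is interval-like: any vertex on $\mathcal{L}$ sees a contiguous block of $\mathcal{R}$ and vice versa, and on its own chain it sees every strictly higher vertex (up to the apex). So I would try the following explicit scheme: place guards on \emph{every} vertex of $\mathcal{L}$ and $\mathcal{R}$ and color them cyclically by $(1,2,3,1,2,3,\dots)$ from the base up, independently on the two chains but with a half-period phase shift between $\mathcal{L}$ and $\mathcal{R}$. For any vertex $v$, its closed neighborhood within the funnel is the union of a suffix of its own chain and a contiguous window on the opposite chain; the claim to verify is that the multiset of colors in this union always contains a color that occurs exactly once. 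This is a finite combinatorial condition on short arithmetic-progression windows and can be checked by a short case analysis on the residues of the window endpoints modulo $3$.

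For the gluing step (ii) I would carry an inductive invariant stronger than mere feasibility: each subpolygon $W\in\mathcal{W}$ is colored so that the two endpoints of its base edge $e$, together with any guards the parent places on vertices of $W$ seen through $e$, already provide a uniquely colored guard for every observer on $e$. Because the base edge has only two vertices and no observer in a parent polygon can see vertices of $W$ except through $e$, the recursion only has to avoid conflicts along this $2$-vertex interface, which naturally fits a palette of $3$ colors (rotate the role of the three colors between parent and child as in the proof of Theorem~\ref{thm:allpolygon}, but without needing fresh color copies).

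The main obstacle will be sub-claim (i) in its full generality, precisely because the bound $3$ is tight by Proposition~\ref{pro:v2v-3colors}: there is zero slack, so the cyclic coloring above may fail at the top of a chain (near the apex) or at a forward-polygon interface where the concave chain is truncated. I expect the cleanest route is to treat the apex and the base vertices as two distinguished ``anchor'' pairs and prove by induction on the number of vertices of $\FF$ that there is a $3$-coloring consistent with any prescribed (legal) coloring of these anchors; the step would peel off one tooth of the funnel, apply the inductive hypothesis, and then verify the visibility window created by the removed vertex using Lemma~\ref{lem:11interval}-style pigeonhole reasoning. If this inductive strengthening goes through, combining it with the gluing scheme above closes the conjecture; if it fails on some extremal configuration, that configuration itself would be a candidate counterexample worth examining before trying to push the upper bound to~$4$.
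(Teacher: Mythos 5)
First, note that the paper does not prove this statement: it is stated as Conjecture~\ref{cj:v2v-upper3} and explicitly left open (``our investigation \dots not giving further rigorous claims (yet)''). The only rigorous results in that section are the lower bound of Proposition~\ref{pro:v2v-3colors} (two colors do not suffice) and the NP-completeness of deciding guardability with $1$ or $2$ colors (Theorem~\ref{thm:v2vhardness}). So there is no proof in the paper to compare yours against; a completed version of your plan would be a new contribution, not a reconstruction.

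That said, the proposal has concrete gaps beyond the hedges you already make. The central one is in sub-claim (i). Your description of funnel visibility is wrong: on a concave chain a vertex sees only its at most two immediate neighbours on the same chain, not every higher vertex (Section~\ref{sec:funneldef}); the closed neighbourhood of $l_i$ is $\{l_{i-1},l_i,l_{i+1}\}$ together with one contiguous window of $\mathcal{R}$. Under the cyclic colouring $(1,2,3,1,2,3,\dots)$ the three own-chain vertices receive all three colours once each, so whenever the opposite-chain window has length divisible by $3$ every colour occurs the same number of times in $N[l_i]$ and no colour is unique --- the scheme fails outright, for any phase shift, not merely ``near the apex.'' Second, reducing a weak visibility polygon to its max funnels does not decouple the problem: max funnels share vertices and a single observer sees vertices of several of them, which is precisely why Algorithm~\ref{alg:weakv2pcoloringB} needs $\Theta(\log n)$ pairwise disjoint colour sets across funnels; with a total palette of $3$ there is no room for that separation, and the same objection hits the ``rotate the three colours between parent and child'' gluing, since an observer near a base edge can simultaneously see guards of the parent, of several siblings, and of its own polygon. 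Until you have a replacement for the ruler-sequence/disjoint-palette mechanism that works with zero slack, neither (i) nor (ii) is established and the conjecture remains open.
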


\subsubsection{Hardness of vertex-to-vertex conflict-free chromatic guarding}

In view of Conjecture~\ref{cj:v2v-upper3} and the algorithmic results for
other variants of chromatic guarding, it is natural to ask how difficult is
to decide whether using $1$ or $2$ colors in V2V guarding is enough for 
a given simple polygon.
Actually, for general graphs the question whether one can find a
conflict-free coloring with~$1$ color was investigated already long time
ago \cite{biggs-1973} (under the name of a perfect code in a graph), 
and its NP-completeness was shown by Kratochv{\'{\i}}l and K\v{r}iv{\'a}nek
in~\cite{DBLP:conf/mfcs/KratochvilK88}.
Previous lower bounds and hardness results in this area, 
including recent~\cite{k-cfc-graph}, however, do not seem to
directly help in the case of polygon visibility graphs.

Here we show that in both cases of $1$ or $2$ guard colors,
the conflict-free chromatic guarding problem on arbitrary simple polygons is NP-complete.
In each case we use a routine reduction from SAT, using a ``reflection model'' of a
formula.
This technique is, on a high approximate level, shown in
Figure~\ref{fig:hardness-rough}:
the variable values (T or F) are encoded in purely local {\em variable
gadgets}, which are privately observed by opposite {\em reflection (or copy)
gadgets} modelling the literals, and then reflected within precisely
adjustable narrow beams to again opposite {\em clause gadgets}.
Furthermore, there is possibly a special {\em guard-fix gadget} whose purpose 
is to uniquely guard the polygonal skeleton of the whole construction,
and so to prevent interference of the skeleton with the other local gadgets
(which are otherwise ``hidden'' from each other).

\begin{thm}\label{thm:v2vhardness}
For $c\in\{1,2\}$, the question whether a given simple polygon 
admits a weak conflict-free vertex-to-vertex chromatic guarding 
with at most~$c$ colors, is NP-complete.
\end{thm}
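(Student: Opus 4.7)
The plan is to give a polynomial-time reduction from a suitable variant of SAT (most naturally, $3$-SAT or $1$-in-$3$-SAT, depending on whether $c=1$ or $c=2$ is more convenient), following the ``reflection model'' already sketched in the paragraph preceding the theorem. Membership in NP is the easy part: given a formula--polygon instance with a proposed colored guard set, I can first construct the visibility graph of the polygon in polynomial time (standard result), and then verify in linear time for every vertex $v$ that its closed neighborhood contains at least one uniquely colored guard. Throughout, I will fix $c\in\{1,2\}$ and assume only $c$ colors are available.

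For the hardness reduction, I would construct, given a formula $\Phi$ on variables $x_1,\dots,x_n$ and clauses $C_1,\dots,C_m$, a simple polygon $\PP_\Phi$ consisting of four kinds of local \emph{gadgets} glued along a thin polygonal skeleton: (i) a \emph{variable gadget} $V_i$ for each $x_i$, whose internal vertices admit, up to symmetry, exactly two conflict-free guardings that I interpret as $x_i = \mathrm{T}$ or $x_i = \mathrm{F}$; (ii) for each occurrence of $x_i$ in a literal of some clause $C_j$, a \emph{reflection (or copy) gadget} $R_{i,j}$ placed opposite $V_i$ inside a narrow ``beam'' pocket, engineered so that the only vertices of $\PP_\Phi$ visible across that pocket are those of $V_i$ and of the clause gadget at the other end; (iii) a \emph{clause gadget} $G_j$ for each $C_j$, built so that conflict-free guarding within $G_j$ is feasible \emph{iff} at least one of its three incoming beams transmits a ``true'' literal value; (iv) a \emph{guard-fix gadget} whose purpose is to single-handedly supply the unique colors needed by every vertex of the long thin skeleton, so that variables and clauses do not interfere with each other through it. The beams are realized by the same pinhole/tangent tricks used in the proofs of Theorems~\ref{thm:5hardness} and~\ref{thm:4hardness}, sized so that accidental visibilities across distinct beams are impossible, giving the crucial geometric locality.

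The individual gadget designs will adapt the small ``bowl'' construction of Proposition~\ref{pro:v2v-3colors}: that proof already showed that a bowl forces a guard at one of its two apex vertices $p_1,p_2$ in any conflict-free $2$-coloring, and the choice between $p_1$ and $p_2$ is exactly the binary input I need for a variable gadget. For $c=1$ (the perfect-code version), I will use a still simpler gadget that has a mandatory guard at one of two symmetric vertices (e.g.\ modeled on Figure~\ref{fig:v2v-1color}) and assemble everything so that the ``unique color'' of a literal position is forced to coincide with the color of one distinguished vertex of the corresponding clause gadget; the NP-completeness then follows from the reduction of \cite{DBLP:conf/mfcs/KratochvilK88} transported into the visibility-graph world. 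For $c=2$, I will reuse the bowl as the variable gadget, make the reflection gadget a purely geometric copy (two vertices which, because of the beam, see exactly $p_1,p_2$ of one variable and one marked vertex of one clause), and design the clause gadget as a local configuration analogous to the bowtie of Figure~\ref{fig:v2v-2color} but one that is conflict-free $2$-guardable precisely when at least one of its three literal-slots delivers an agreeing color.

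The main obstacle, and where the bulk of the work will go, is the \emph{locality/independence} argument: I must prove that the gadgets are geometrically and combinatorially decoupled, so that a conflict-free coloring of $\PP_\Phi$ restricts to a locally valid coloring of each gadget and each beam, and conversely that any locally valid choice in the gadgets can be consistently extended across the skeleton using the guard-fix gadget. This requires carefully sizing the pinhole openings of the beams (so that a beam transmits visibility between exactly one variable apex pair and exactly one clause slot, with no leakage), arguing that every vertex of the skeleton sees some unique-color guard of the guard-fix gadget independently of how the variable/clause gadgets are colored, and verifying --- case by case on the finitely many local colorings of a bowl/clause gadget --- that the ``truth value'' read at the variable end matches the ``literal value'' observable at the clause end. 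Finally, I will check that the polygon can be realized on a grid of size polynomial in $n+m$ (again using fat parabolic skeletons as in the proof of Theorem~\ref{thm:5hardness}), completing the polynomial-time reduction.
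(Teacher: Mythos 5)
Your overall architecture (variable, reflection, clause, and guard-fix gadgets strung along a thin skeleton; the bowl of Proposition~\ref{pro:v2v-3colors} recycled as the basic forcing device; NP-membership via constructing the visibility graph and checking neighborhoods) is the same as the paper's, and your $c=1$ branch is essentially on track: the paper reduces from 1-in-3 SAT, exploiting the fact that with one color every vertex must see \emph{exactly} one guard, and reads a variable's value off \emph{which} of the two door vertices carries the mandatory guard, using narrowly aimed reflection vertices $q,q'$.

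The genuine gap is in your $c=2$ branch. You propose to encode a variable's truth value by ``the choice between $p_1$ and $p_2$'' in its bowl, but property (i) in the proof of Proposition~\ref{pro:v2v-3colors} only forces a guard at $p_1$ \emph{or $p_2$ or both}; with two colors nothing makes that choice exclusive, and since the door $p_1p_2$ is a pinhole precisely so that outside vertices cannot resolve the interior, a distant clause vertex sees the pair $\{p_1,p_2\}$ as a single slot. Position is therefore not a transmittable signal here. The paper instead transmits the \emph{color} of the door guard: the guard-fix gadget of Figure~\ref{fig:v2v-2colorF} forces one red and one blue guard on each of the door pairs $\{d_1,d_2\}$ and $\{d_3,d_4\}$, which ``exhausts'' the skeleton vertices $b_1,b_2$ and thereby forbids any further guards at clause vertices or at the visible literal vertices $l_1,l_3$; each clause vertex then sees exactly the three door colors of its variables, and with two colors conflict-freeness at such a degree-three observer is precisely the not-all-equal condition. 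So the correct source problem for $c=2$ is monotone NAE-3SAT, not plain 3-SAT, and your stated clause condition (``at least one agreeing color'') does not match what a two-color conflict actually expresses. Repairing your plan requires switching the $c=2$ signal from position to color, reducing from NAE-3SAT, and letting the guard-fix gadget both guard the skeleton and pin down the two reference colors.
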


\begin{proof}
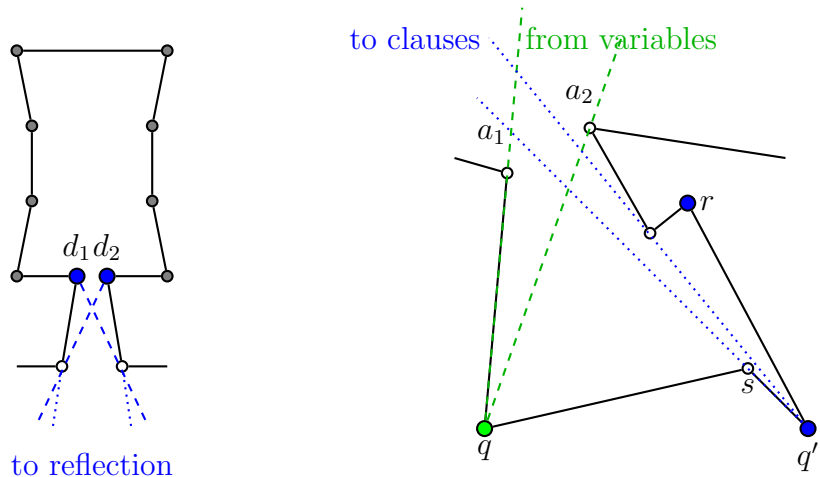
\begin{figure}[tbp]
	\centering\hfill
	\begin{tikzpicture}[scale=1]
		\tikzstyle{every path}=[thick];
		\tikzstyle{every node}=[draw, shape=circle, minimum size=4pt,
		inner sep=0pt, fill=gray];
		\node (a) at (0,0) {};
		\node (b) at (0.2,1) {};
		\node (c) at (0.2,2) {};
		\node (d) at (0,3) {};
		\node (e) at (2,3) {};
		\node (f) at (1.8,2) {};
		\node (g) at (1.8,1) {};
		\node (h) at (2,0) {};
		\tikzstyle{every node}=[draw, shape=circle, minimum size=4pt,
		inner sep=0pt, fill=none];
		\node[fill=blue,inner sep=2pt, label=above:$d_1$] (d1) at (0.8,0) {};
		\node[fill=blue,inner sep=2pt, label=above:$d_2$] (d2) at (1.2,0) {};
		\node (v1) at (0.6,-1.2) {};
		\node (v2) at (1.4,-1.2) {};
		\draw (a) -- (b) -- (c) -- (d) -- (e) -- (f) -- (g) -- (h);
		\draw (a) -- (d1) --  (v1) -- (0,-1.2) node[draw=none] {} ;
		\draw (h) -- (d2) --  (v2) -- (2,-1.2) node[draw=none] {} ;
		\draw[dotted,color=blue] (v1) -- (0.48,-2);
		\draw[dotted,color=blue] (v2) -- (1.52,-2);
		\draw[dashed,color=blue] (d2) -- (0.25,-2);
		\draw[dashed,color=blue] (d1) -- (1.75,-2);
		\node[draw=none,fill=none,color=blue, shape=rectangle] at (1,-2.5) {to	reflection};
	\end{tikzpicture}
	\qquad\hfill
	\begin{tikzpicture}[scale=2]
		\tikzstyle{every node}=[draw, shape=circle, minimum size=4pt,inner sep=0pt];
		\tikzstyle{every path}=[thick];
		\node[label=$a_1~~~$] (a) at (0.15,1.7) {};
		\node[fill=green,inner sep=2pt, label=below:$q$] (b) at (0,0) {};
		\node[label=below:$s$] (c) at (1.75,0.4) {};
		\node[fill=blue,inner sep=2pt, label=below:$q'$] (d) at (2.15,0) {};
		\node[fill=blue,inner sep=2pt, label=right:$r$] (e) at (1.35,1.5) {};
		\node (f) at (1.1,1.3) {};
		\node[label=$a_2~~$] (g) at (0.7,2) {};
		\draw (a) -- (b) -- (c) -- (d) -- (e) -- (f) -- (g) ;
		\draw (a) -- (-0.2,1.8) node[draw=none] {} ;
		\draw (g) -- (2,1.8) node[draw=none] {} ;
		\draw[dashed, color=green!70!black] (b) -- (0.25,2.8);
		\draw[dashed, color=green!70!black] (b) -- (0.91,2.6) node[draw=none] {from variables};
		\draw[dotted, color=blue] (d) -- (-0.05,2.2);
		\draw[dotted, color=blue] (d) -- (0.03,2.6) node[draw=none, label=left:to clauses] {};
	\end{tikzpicture}
	\hfill~%
	\caption{Left: the variable gadget, such that exactly one of $d_1,d_2$ must
		have a guard. ~Right: the reflection gadget, where $a_1,a_2$ are
		seen by a guard from outside, and one of $r,q'$ has a guard.}
	\label{fig:v2v-1hardness}
\end{figure}

For $c=1$ we reduce from the 1-in-3 SAT variant, which asks for an assignment
having exactly one true literal in each clause and which is NP-complete, too.
Note that in this case, there is only one color of guards, and so every
vertex must see exactly one guard.
Following the general scheme of Figure~\ref{fig:hardness-rough}, we now give
the particular variable and reflection gadgets:
\begin{itemize}
	\item The variable gadget is depicted in Figure~\ref{fig:v2v-1hardness} on the left. 
	The opening between $d_1$ and $d_2$ is sufficiently small to
	prevent accidental visibility between an inner vertex of this gadget
	(gray in the picture) and other vertices outside.
	
	In any conflict-free $1$-coloring of a polygon containing this
	gadget, precisely one of $d_1,d_2$ must have a guard (otherwise, the inner
	vertices cannot be guarded, as in Figure~\ref{fig:v2v-1color}),
	and there is no other guard on this gadget.
	
	\item The reflection gadget is depicted in Figure~\ref{fig:v2v-1hardness} on the right.
	Note that the visible angles of the vertices $q$ and $q'$ can be made
	arbitrarily tiny and fine-adjusted (independently of each other) by changing
	horizontal positions of $q,q'$ and~$s$.
	
	In any conflict-free $1$-coloring of a polygon containing this
	gadget, such that both $a_1,a_2$ see a guard (from outside),
	the following holds: a guard can be placed only at $q'$ or $r$,
	and the guard is at $q'$ if and only if $q$ sees a guard from outside
	(otherwise, there would be a conflict at $q$ or $q$ would not be guarded).
	
	\item Every clause gadget is just a single vertex positioned on a concave
	chain as shown in Figure~\ref{fig:hardness-rough}.
	There is no guard-fix gadget present in this construction.
\end{itemize}

For a given 3-SAT formula $\Phi=(x_i\vee\neg x_j\vee x_k)\wedge\ldots$,
the construction is completed as follows.
Within the frame of Figure~\ref{fig:hardness-rough}, we place a copy of the
variable gadget for each variable of $\Phi$.
We adjust these gadgets such that the combined visible angles of $d_1,d_2$
of each variable gadget do not overlap with those of other variables on the
bottom base of the frame.
(The lower left and right corners of the frame are seen by the first and
last variable gadgets, respectively.)

For each literal $\ell$ containing a variable $x_i$ we place a copy of the
reflection gadget at the bottom of the frame, such that its opening
$a_1,a_2$ is visible from both $d_1$ and $d_2$ of the gadget of $x_i$.
We adjust the narrow visible angle of $q$ (by moving $q$
horizontally) such that $q$ sees $d_1$ but not $d_2$
if the literal $\ell$ is $x_i$, and $q$ sees $d_2$ but not $d_1$
if $\ell=\neg x_i$.
Then we adjust the visible angle of $q'$ such that it sees exactly the one
vertex (on the top concave chain of the frame) which represents the clause
containing~$\ell$.

This whole construction can clearly be done in polynomial time and 
precision.
To recapitulate, each of the vertices of our constructed polygon is
\begin{itemize}
	\item coming from a copy of the variable gadget for each variable of $\Phi$, or
	\item from a copy of the reflection gadget added for each literal in~$\Phi$, or
	\item is a singleton representative of one of the clauses of $\Phi$, or
	\item is the auxiliary lower-left or right corner.
\end{itemize}

Assume that $\Phi$ has a 1-in-3 satisfying assignment.
Then we place a guard at $d_1$ of a variable gadget of $x_i$ if $x_i$ is
true, and at $d_2$ otherwise.
Moreover, at a reflection gadget of a literal $\ell$, we place a guard at
$q'$ if $\ell$ is true, and at $r$ otherwise.
No other guards are placed.
In this arrangement, every vertex of a variable or reflection gadget sees
precisely one guard, regardless of the evaluation of $\Phi$.
Moreover, since the assignment of $\Phi$ makes precisely one literal of each
clause true, every clause vertex also sees one guard.
This is a valid conflict-free $1$-coloring.

Conversely, assume we have a conflict-free $1$-coloring of our polygon.
Since precisely one of $d_1,d_2$ of each variable gadget of $x_i$ has a guard,
this correctly encodes the truth value of $x_i$ (true iff the guard is
at~$d_1$), and we know that both $a_1,a_2$ of each reflection gadget
of a literal $\ell$ see an outside guard.
Hence, by our construction, the gadget of $\ell$ has a guard at $q'$ if and
only if $\ell$ is true in our derived assignment of variables.

Furthermore, any clause vertex can see a guard only at a vertex $q'$ of some
reflection gadget. 
Otherwise (including the case of a guard at some clause vertex), we would
necessarily get a guard conflict at a vertex $a_1$ of some reflection gadget.
Consequently, every clause contains precisely one true literal
(as determined by the guard visible from this clause vertex).
The NP-completeness reduction for $c=1$ is finished.

\medskip

\begin{figure}[tbp]
	\def\bowlsh#1#2{%
		\begin{scope}[shift={#1}, rotate=#2, yscale=0.4, xscale=0.17]
			\begin{scope}[shift={(0,0.3)}]
				\draw[color=blue, thin, fill=blue!15!white] (-0.06,-0.3) -- (-1.9,1) -- (-2.1,2) -- (-1.1,2)
				-- (-0.3,1) -- (0.3,1) -- (1.1,2) -- (2.1,2) -- (1.9,1) -- (0.06,-0.3);
			\end{scope}
		\end{scope}
	}
	\centering
	\begin{tikzpicture}[scale=1.4]
		\tikzstyle{every node}=[draw, shape=circle, minimum size=3pt,
		inner sep=0pt, fill=gray];
		\tikzstyle{every path}=[thick];
		\node (a) at (0,0) {};
		\node (b) at (4,0) {};
		\node (c1) at (4.1,1) {};
		\node[label=below:$c_4$] (c2) at (4.4,1.1) {};
		\node[fill=white, minimum size=4pt, label=below:$d_4$] (c3) at (5.5,1.1) {};
		\node[fill=white, minimum size=4pt, label=above:$d_3$] (c4) at (5.5,1.9) {};
		\node[label=above:$c_3$] (c5) at (4.4,1.9) {};
		\node (c6) at (4.15,2) {};
		\node[fill=white, minimum size=4pt, label=below:$~~~d_2$] (d1) at (4.15,4.1) {};
		\node[fill=white, minimum size=4pt, label=above:$~~~d_1$] (d2) at (4.1,4.9) {};
		\node (e) at (4,6) {};
		\node (h) at (0,6) {};
		\node[label=$a_1~~~$] (j1) at (-0.1,4.9) {};
		\node (j2) at (-1.3,4.8) {};
		\node (j3) at (-1.5,4.72) {};
		\node (j4) at (-1.7,4.66) {};
		\node[label=$~~~A~~$] (j5) at (-1.9,4.63) {};
		\node (j6) at (-1.9,4.37) {};
		\node (j7) at (-1.7,4.34) {};
		\node (j8) at (-1.5,4.28) {};
		\node (j9) at (-1.3,4.2) {};
		\node[label=below:$\!\!a_2~~$] (j10) at (-0.15,4.1) {};
		\node[label=$b_1~~~$] (k1) at (-0.15,1.9) {};
		\node (k2) at (-1.3,1.8) {};
		\node (k3) at (-1.5,1.72) {};
		\node (k4) at (-1.7,1.66) {};
		\node[label=$~~~B~~$] (k5) at (-1.9,1.63) {};
		\node (k6) at (-1.9,1.37) {};
		\node (k7) at (-1.7,1.34) {};
		\node (k8) at (-1.5,1.28) {};
		\node (k9) at (-1.3,1.2) {};
		\node[label=below:$\!\!b_2~~$] (k10) at (-0.1,1.1) {};
		\draw (b) -- (c1) -- (c2) -- (c3) -- (c4) -- (c5)
		-- (c6) -- (d1) -- (d2) -- (e);
		\draw (h) -- (j1) -- (j2) -- (j3) -- (j4) -- (j5) 
		-- (j6) -- (j7) -- (j8) -- (j9) -- (j10) 
		-- (k1) -- (k2) -- (k3) -- (k4) -- (k5) 
		-- (k6) -- (k7) -- (k8) -- (k9) -- (k10) -- (a);
		\draw[dashed] (a) arc (98:82:14.2) -- (b) ;
		\draw[dashed] (h) arc (-98:-82:14.2) -- (e) ;
		\bowlsh{(c3)}{270}
		\bowlsh{(c4)}{270}
		\bowlsh{(d1)}{270}
		\bowlsh{(d2)}{270}
		\tikzstyle{every path}=[thick];
		\draw[dotted, color=blue] (5.5,1.13) -- (k9);
		\draw[dotted, color=blue] (5.5,1.13) -- (k6);
		\draw[dotted, color=blue] (5.5,1.87) -- (k5);
		\draw[dotted, color=blue] (5.5,1.87) -- (k2);
		\draw[dotted, color=gray] (c2) -- (k10);
		\draw[dotted, color=gray] (c5) -- (k1);
		\draw[dotted, color=blue] (d1) -- (j9);
		\draw[dotted, color=blue] (d1) -- (j6);
		\draw[dotted, color=blue] (d2) -- (j5);
		\draw[dotted, color=blue] (d2) -- (j2);
	\end{tikzpicture}
	
	\caption{The guard-fix gadget for conflict-free $2$-coloring:
		the four blue filled shapes glued to the frame on the right are copies of the bowl
		shape from Figure~\ref{fig:v2v-2color}, and the two ``pockets'' 
		on the left side of the frame enforce each pair of bowls to receive guards of both
		colors. The middle part of the frame is much wider than depicted here.
		See in the proof of Theorem~\ref{thm:v2vhardness}.}
	\label{fig:v2v-2colorF}
\end{figure}

We now move onto the $c=2$ case, which we reduce from the NP-complete
not-all-equal positive 3-SAT problem (also known as $2$-coloring of
$3$-uniform hypergraphs).
This special variant of 3-SAT requires every clause to have at least one true and one
false literal, and there are no negations allowed.
We again follow the same general scheme as for $c=1$, but this time 
the main focus will be on implementing the guard-fix gadget.
Let our guard colors be red and blue (then every vertex must see exactly
one red guard, or exactly one blue guard).

The left and right walls of the schematic frame from
Figure~\ref{fig:hardness-rough} are constructed as shown in
Figure~\ref{fig:v2v-2colorF}.
In the construction, we use four copies of the bowl
shape from Figure~\ref{fig:v2v-2color}, and we adopt the terminology
of gluing the bowls and of the {\em doors} from the proof of
Proposition~\ref{pro:v2v-3colors}.
For simplicity, while keeping in mind that the door of each bowl is a narrow
passage formed by a pair of vertices, we denote each door by a single letter
$d_i$, $i=1,2,3,4$ (as other vertices).

This guard-fix gadget is constructed such that both $d_1,d_2$ see all
the $8$ vertices of the ``pocket'' $A$ on the left, but neither of $a_1,a_2$ does so.
We recall the following property from the proof of Proposition~\ref{pro:v2v-3colors}:
\begin{itemize}
	\item[(i)] In any conflict-free $2$-coloring of the bowl there is a guard
	(or two) placed at the door.
\end{itemize}
Consequently, the guards placed at $d_1$ and $d_2$ must be of different
colors (red and blue); otherwise, say for two red guards at $d_1,d_2$, each
of the $8$ vertices in $A$ would have to be guarded by a blue guard
within $A\cup\{a_1,a_2\}$ which is not possible.
(Though, we have not yet excluded the case that, say, $d_1$ would have 
a red guard and $d_2$ a blue and a red guards.)

In the next step, we note that $d_3,d_4$ see all $8$ vertices of the 
``pocket'' $B$, but none of $b_1,b_2,c_3,c_4$ does so.
So, by analogous arguments, the guards placed at $d_3,d_4$ must be of
different colors (red and blue).
We remark that this does not necessarily cause a conflict with the guards
at $d_1,d_2$ since the visibility between $b_1,d_3$ and between $b_2,d_4$
is blocked by $c_3$ and $c_4$, respectively.
However, the vertices $b_1$ and $b_2$ are now ``exhausted'' in the sense
that one sees (at least) one red and two blue guards, and the other one blue
and two red guards.
Altogether, this implies that
\begin{itemize}\item[(ii)]
	there is exactly one red and one blue guard
	among $d_1,d_2$ and the same holds among $d_3,d_4$,
	and no other vertex visible from both $b_1$ and $b_2$ can have a guard.
\end{itemize}

\begin{figure}[tbp]
	\def\bowlsh#1#2{%
		\begin{scope}[shift={#1}, rotate=#2, yscale=0.25, xscale=0.1]
			\begin{scope}[shift={(0,0.3)}]
				\draw[color=blue, thin, fill=blue!15!white] (-0.06,-0.3) -- (-1.9,1) -- (-2.1,2) -- (-1.1,2)
				-- (-0.3,1) -- (0.3,1) -- (1.1,2) -- (2.1,2) -- (1.9,1) -- (0.06,-0.3);
			\end{scope}
		\end{scope}
	}
	\centering
	\begin{tikzpicture}[yscale=1.6,xscale=1.3]
		\tikzstyle{every node}=[draw, shape=circle, minimum size=3pt,
		inner sep=0pt, fill=gray];
		\tikzstyle{every path}=[thick];
		\node (a) at (0,0) {};
		\node (b) at (8,0) {};
		\coordinate (b1) at (8,1) {};
		\node[color=blue,fill=blue!30!white, minimum size=4pt,
		label=below:$d_4$] (b2) at (8.5,1.02) {};
		\node[color=blue,fill=blue!30!white, minimum size=4pt,
		label=above:$d_3$] (b3) at (8.5,1.18) {};
		\coordinate (b4) at (8,1.2) {};
		\node[color=blue,fill=blue!30!white, minimum size=4pt,
		label=below:$~~d_2\!\!\!\!$] (b5) at (8,2.02) {};
		\node[color=blue,fill=blue!30!white, minimum size=4pt,
		label=above:$~~d_1\!\!\!\!$] (b6) at (8,2.18) {};
		\node (e) at (8,3) {};
		\node[label=$c_1$] (h) at (0,3) {};
		\node[label=$c_2$] (h2) at (0.5,2.85) {};
		\node[label=$c_3$] (h3) at (1,2.74) {};
		\node[label=$c_4$] (h4) at (1.5,2.65) {};
		\node[label=$\dots$] (h5) at (2,2.58) {};
		\coordinate[label=below:$\!\!b_2~~$] (f1) at (0,1) {};
		\coordinate (f2) at (-0.8,1.05) {};
		\coordinate (f3) at (-0.8,1.15) {};
		\coordinate[label=$\!\!b_1~~$] (f4) at (0,1.2) {};
		\coordinate[label=below:$\!\!a_2~~$] (g1) at (0,2.0) {};
		\coordinate (g2) at (-0.8,2.05) {};
		\coordinate (g3) at (-0.8,2.15) {};
		\coordinate[label=$\!\!a_1~~$] (g4) at (0,2.2) {};
		\draw[dashed, color=gray] (a) arc (98:82:28.6) -- (b) ;
		\draw[dashed, color=gray] (h) arc (-105:-75:15.2) -- (e) ;
		\draw (a) -- (f1) -- (f2) -- (f3) -- (f4)
		-- (g1) -- (g2) -- (g3) -- (g4) -- (h);
		\draw (b) -- (b1) -- (b2) -- (b3) -- (b4) --(b5) -- (b6) -- (e);
		\draw (h) -- (h2) -- (h3) -- (h4) -- (h5) -- (2.2,2.56);
		\draw[dotted, color=blue] (b5) -- (2,2.58);
		\draw[dotted, color=blue] (b6) -- (a);
		\draw[dotted, color=gray] (f4) -- (8,2.98);
		
		\node (u1) at (4.78,2.49) {};
		\node[fill=white, label=left:$x_2$] (u2) at (4.95,3.01) {};
		\node (u3) at (5.02,2.5) {};
		\bowlsh{(u2)}{0}
		\draw[dotted, color=green] (u1) -- (4.2,0.1);
		\draw[dotted, color=green] (u3) -- (5.2,0.1);
		\node (w1) at (3.95,2.466) {};
		\node[fill=white, label=left:$x_1$] (w2) at (4.3,3) {};
		\node (w3) at (4.25,2.47) {};
		\bowlsh{(w2)}{0}
		\draw[dotted, color=red] (w1) -- (2.45,0.1);
		\draw[dotted, color=red] (w3) -- (3.85,0.1);
		\node (v1) at (5.5,2.54) {};
		\node[fill=white, label=left:$x_3$, label=right:$~~\dots$] (v2) at (5.5,3.03) {};
		\node (v3) at (5.65,2.56) {};
		\bowlsh{(v2)}{0}
		\draw[dotted, color=red] (v2) -- (5.5,0.08);
		\draw[dotted, color=red] (v2) -- (6.4,0.07);
		\draw (3.7,2.47) -- (w1) -- (w2) -- (w3) -- (u1) -- (u2) 
		--(u3) --(v1) --(v2) --(v3) -- (5.9,2.59);
		
		\node (l1) at (2.8,0.28) {};
		\node (l2) at (3.23,-0.4) {};
		\node (l3) at (2.92,0.285) {};
		\draw[dotted, color=green] (l2) -- (h4);
		\node (ll1) at (3.2,0.3) {};
		\node (ll2) at (4,-0.3) {};
		\node (ll3) at (3.35,0.3) {};
		\draw[dotted, color=green] (ll2) -- (h);
		\draw (2.6,0.27) -- (l1) -- (l2) -- (l3) -- (ll1) -- (ll2) -- (ll3)
		-- (3.6,0.3);
		\node[label=240:$l_1$] (lm1) at (4.7,0.28) {};
		\node[label=right:$l_2$] (lm2) at (5.7,-0.3) {};
		\node[label=60:$l_3$] (lm3) at (4.9,0.278) {};
		\draw[dotted, color=red] (lm2) -- (h3);
		\draw (lm1) -- (lm2) -- (lm3);
	\end{tikzpicture}
	
	\caption{Placement of the variable/reflection/clause gadgets for conflict-free $2$-coloring:
		the clause gadgets are the single vertices $c_1,c_2,\ldots$,
		the variable gadgets are formed by copies of the bowl at $x_1,x_2,\ldots$,
		and the reflection gadgets are simply triples of vertices as
		$l_1,l_2,l_3$ at the bottom line.
		In this example, the value (color red or blue) of the variable
		$x_1$ is reflected towards clauses $c_1$ and $c_4$ (which are
		thus assumed to contain literal~$x_1$), and the value of
		$x_2$ is reflected towards~$c_3$.}
	\label{fig:v2v-2colorG}
\end{figure}

The rest of the construction is, within the frame constructed above
(Figure~\ref{fig:v2v-2colorF}), already quite easy.
Let $\Phi$ be a given 3-SAT formula without negations.
See Figure~\ref{fig:v2v-2colorG}.
\begin{itemize}
	\item We again represent each clause of $\Phi$ by a single vertex on a
	concave chain on the top of our frame.
	This chain of clause vertices is ``slightly hidden'' in a sense that it is
	not visible from $d_1$ or $d_2$, but it is all visible from $b_1$ and~$b_2$.
	\item Each variable $x_i$ of $\Phi$ is represented by a copy of the bowl,
	also placed on the top of the frame (but separate from the
	section of clause vertices).
	As before, the visible angles of the variable gadgets are adjusted 
	so that they do not overlap on the bottom line of the frame.
	\item Each literal $\ell=x_i$ is represented by a triple of vertices
	as $l_1,l_2,l_3$ in Figure~\ref{fig:v2v-2colorG}, such that $l_1,l_3$ are
	in the visible angle of the variable-$x_i$ gadget, while $l_2$ is ``deeply
	hidden'' so that $l_2$ sees only the clause vertex $c_j$ that $\ell$ belongs to.
\end{itemize}

The final argument is analogous to the $c=1$ case.
Assume we have a not-all-equal assignment of~$\Phi$.
We put red guards to $d_1$ and $d_3$ and blue guards to $d_2$ and~$d_4$.
For each variable $x_i$, we put one guard to $x_1$ of blue color if $x_i$
is true and of red color if $x_i$ is false.
Whenever $\ell=x_i$ is a literal represented by the triple $l_1,l_2,l_3$,
we put to $l_2$ a guard of the same color as of the guard at $x_i$.
Then every clause $c_j=(x_a\vee x_b\vee x_c)$ will see the colors of guards
at $x_a,x_b,x_c$, and since the values assigned to $x_a,x_b,x_c$ are not all
the same, one of the colors is unique to guard~$c_j$.
We have got a conflict-free $2$-coloring.

On the other hand, assume a conflict-free $2$-coloring.
By (ii), there are no guards at the clause vertices $c_1,c_2,\ldots$,
and so those vertices can be only guarded from the reflection gadgets.
Assume a reflection gadget of a literal $\ell=x_i$.
Then, again by (ii), the vertices $l_1,l_3$ of this gadget have no guards,
and they see a red and a blue guard from $d_1,d_2$.
On the other hand, $l_2$ cannot see any other guard except one placed at
$l_2$, and so there has to be a guard at~$l_2$.
If, moreover, $l_1,l_3$ see a red (say) guard placed at $x_i$
(and $x_i$ as the door of a glued bowl there must have a guard by (i)\,),
then the guard at $l_2$ must also be red (or $l_1$ would have a conflict).
Consequently, every clause $c_j=(x_a\vee x_b\vee x_c)$ sees the colors of
the guards placed at $x_a$, $a_b$ and $x_c$, and since the coloring is
conflict-free, there have to be both colors visible (one red plus two blue,
or one blue plus two red).
From this we can read a valid not-all-equal variable assignment of~$\Phi$.
\end{proof}

\chapter{Unit Disk Visibility Graphs} \label{chap:udvg}

\section{Summary of the chapter}
In this chapter, we introduce a new graph class called \emph{unit disk visibility graphs}.
This class aims to model the real-world scenarios more accurately, considering the distance between the geometric objects.
The results we present in this chapter are as follows:

\begin{itemize}
	\item We prove that unit disk visibility graphs are a proper superclass of the visibility graphs (Section~\ref{sec:classification}).
	\item We prove that determining whether a given unit disk visibility graph is 3-colorable is NP-complete for a set of line segments, and we remark that this result also applies for a set of points (Section~\ref{sec:3coloringSegment}).
	\item We prove that determining whether a given unit disk visibility graph is 3-colorable is NP-complete for a polygon with holes (Section~\ref{sec:withholes}).
	\item We discuss some combinatorial problems that are studied for visibility graphs, but might have different aspects in unit disk visibility graphs (Section~\ref{sec:othercomb}).
\end{itemize}

\section{Related work}
In the literature, visibility graphs were studied considering various geometric sets such as a simple polygon \cite{o-agta-87}, a polygon with holes \cite{Wein_voronoi}, a set of points \cite{Cardinal_pointcomplexity}, a set of line segments \cite{Everett_planarsegment}, along with different visibility models such as line-of-sight visibility \cite{Garey_lineOfSight}, $\alpha$-visibility \cite{Ghodsi_alpha}, and $\pi$-visibility \cite{Urrutia_artGalleryAndIllum}.

Although the visibility graphs have been extensively studied, there are several fundamental problems which are open for some type of visibility graphs \cite{Ghosh_unsolvedproblems}. Concerning different applications (e.g. VLSI design, pattern recognition and robot motion planning), the visibility relations are restricted differently. 
To model the applications more accurately, various types of visibility graphs have been introduced such as bar visibility graphs \cite{Duchet_planar}, rectangle visibility graphs \cite{Bose_rectVis} and circle visibility graphs \cite{Hutchinson_arcVis}.  
When the environments which are modeled using the mentioned models are investigated by its properties, the most general types can be listed as compact visibility graphs \cite{Kant_compact} and polygon visibility graphs \cite{Hershberger_vis}.
Moreover, the environments that are modeled using polygons were classified into severalberbaer birbirimizi sub-types to capture some specific structural properties.
Those sub-types are, but not limited to, when the polygon is orthogonal \cite{Kahn_watchmen}, staircase \cite{Abello_staircase}, spiral \cite{Everett_spiral}, terrain \cite{Evans_terrain} or funnel \cite{Choi_funnel}.
In addition to the simple polygons, the visibility graphs of polygons with holes were also studied.
A \textit{hole} inside a polygon is simply another polygon which acts as an obstacle that blocks the visibility.
The holes can be polygonal \cite{Wein_voronoi}, disk \cite{Kim_shortestPF} and even mobile \cite{Khaili_movingObs}. 

Visibility graphs are used to describe real-world scenarios majority of which concern the mobile robots and path planning \cite{Latombe_robotmotion,Berg_compgeo,Oommen_robotnavigation}.
While modeling the environment in which the robots move, a very common tool is to interpret the geometric entities and the relations among them as visibility graphs \cite{ORourke_handbook,Aichholzer_convexifying,Overmars_newmethods,Floriani_visibilityonterrain}.

However, the physical limitations of the real world are usually overlooked or ignored while using visibility graphs. Since no camera, sensor, or guard (the objects represented by vertices of the visibility graph) has infinite range, two objects might not sense each other even though there are no obstacles in-between. Based on such a limitation, we assume that if a pair of objects  are too far from each other, then they do not see each other.

Some other restrictions are related to the visibility range. For instance, in weak visibility graphs of a set of line segments, at least two mutually visible endpoints are required for a pair of line segments to see each other \cite{Ghosh_weakvis}.
Another example is when the maximum visibility angle $\alpha$ is introduced instead of assuming each object can see all directions \cite{Ghodsi_alpha}.
In that case, a line segment is visible by an endpoint if this endpoint can see that line segment with an angle of  at least $\alpha$.
Moreover, limitation on the guarding range has been considered for terrain visibility graphs \cite{Khodakarami201715DTG}.
Instead of restricting, the case where the visibility range is extended by introducing mirror vertices and edges was also studied \cite{Klee_illuminable}.
In that case, the visibility is transferred by reflecting the straight lines using the mirroring objects.  

In this chapter, we introduce a new visibility model called \emph{unit disk visibility graphs} (UDVG).
In this model, the distance between two objects (points, vertices) also affects the visibility relation as a generalization of \cite{Khodakarami201715DTG}.
More specifically, we assume that two objects are not mutually visible if the Euclidean distance between them is greater than one unit, regardless of the straight line connecting them not containing any other object.
Our aim is to offer an insight into this new model, and investigate the algorithmic aspects of some combinatorial problems, mainly the 3-colorability problem.

We introduce a new type of visibility graphs which represents the real world applications in a finer detail.
We study \emph{unit disk visibility graphs} (UDVG) where the distance between two objects is taken into account in addition to the usual visibility restrictions.
Since the sensing range of the sensors, cameras or robots can not be ignored as the study of visibility graphs originates from network applications, we expect unit disk visibility graphs to be a natural extension of visibility graphs.

\section{The classification of the visibility graphs and the unit disk visibility graphs} \label{sec:classification}
\begin{lem} \label{lem:scaledown}
	Consider a set $P = \{p_1, \dots, p_n\}$ of points, and the visibility graph $G(P)$ of $P$.
	There exists an embedding $\Sigma(P)$ of $P$, such that the Euclidean distance between every pair $p,q \in P$ is less than one unit, preserving the visibility relations.
\end{lem}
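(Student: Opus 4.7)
The plan is to exploit the fact that point visibility is a purely combinatorial (affine-invariant) notion: among a set of points, $p_i$ and $p_j$ see each other exactly when no third point $p_k \in P$ lies on the open segment $\overline{p_ip_j}$. Since a uniform scaling about any fixed center preserves collinearity and betweenness, such a scaling cannot create or destroy any blocker. So the whole statement will reduce to choosing the scaling factor small enough to force all pairwise distances below one unit.

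First I would set $D := \max_{i \neq j} \|p_i - p_j\|$ (the diameter of $P$) and pick any real number $\lambda$ with $0 < \lambda < 1/D$. I then define the embedding $\Sigma(P) := \{\lambda p_1, \lambda p_2, \ldots, \lambda p_n\}$ (scaling about the origin; any fixed center would do equally well). For every pair $i \neq j$ we obtain
\[
\|\lambda p_i - \lambda p_j\| \;=\; \lambda\,\|p_i - p_j\| \;\leq\; \lambda D \;<\; 1,
\]
so all pairwise distances in $\Sigma(P)$ are strictly less than one unit, as required.

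Next I would verify that $\Sigma$ preserves the visibility relations. The map $x \mapsto \lambda x$ is a bijective affine transformation of $\mathbb{R}^2$, and such maps preserve collinearity and the betweenness relation among points. Hence for any three distinct indices $i,j,k$, the point $p_k$ lies on the open segment $\overline{p_ip_j}$ if and only if $\lambda p_k$ lies on $\overline{\lambda p_i\,\lambda p_j}$. Consequently, the collection of ``blocking triples'' in $\Sigma(P)$ is identical to the one in $P$, and therefore the visibility graph of $\Sigma(P)$ coincides with $G(P)$.

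There is no substantial obstacle here: the only thing one must be careful about is to state the right definition of point visibility (a segment is blocked \emph{only} by another point of $P$ lying strictly between its endpoints) and to invoke the affine invariance of collinearity in the right direction. Once these two observations are in place, the construction of $\Sigma(P)$ above yields the embedding claimed by the lemma.
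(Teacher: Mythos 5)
Your proof is correct and follows essentially the same route as the paper: a uniform scaling about the origin with a factor chosen small enough to bring the diameter below one unit, together with the observation that scaling preserves the collinearity/betweenness structure and hence the blocking relation. If anything, your choice of $\lambda < 1/D$ is slightly cleaner than the paper's division by $2\varphi$ (which only guarantees pairwise distances at most, rather than strictly less than, one unit in the extreme case), and your appeal to affine invariance of betweenness is a more precise justification than the paper's remark about slopes being unchanged.
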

\begin{proof}
	Let $\Phi = \{x_1, y_1, x_2, y_2, \dots, x_n, y_n\}$ be the set of all coordinates used to represent the set $P$, and let $\varphi \in \Phi$ be a coordinate whose absolute value is the largest number in $\Phi$.
	
	The circle $C$ with center $(0,0)$ and radius $\varphi$ contains all the points in $P$. 
	Now imagine we shrink $C$ into a unit circle, and scale the point set accordingly.
	In order to do that, we divide every coordinate in $\Phi$ by $2\varphi$. 
	That is, the new coordinates for the points are $(x_1/2\varphi, y_1/2\varphi), \dots, (x_n/2\varphi, y_n/2\varphi)$.
	
	Considering any pair $s_i, s_j \in S$, and the straight line $\ell(i,j)$ passing through these points, the slope of $\ell$ does not change. 
	Therefore, the visibility relations are preserved for $P$.
\end{proof}

By Lemma~\ref{lem:scaledown}, a given set $P$ of points can be scaled down to obtain $P'$ so that every point in $P'$ is inside a unit circle, and the visibility graph $G(P)$ of $P$ is exactly the same as the visibility graph $G(P')$ of $P'$.
Thus, the unit disk visibility graph of $P'$ is also isomorphic to $G(P')$ since no pair of points in $P'$ has Euclidean distance greater than one unit. We easily get the following.

Now, let us show that every combinatorial problem that is NP-hard in $P$ is also NP-hard in $P'$.
\begin{lem} \label{lem:hardUDVG}
	If a problem $\mathfrak{Q}$ is NP-hard for point visibility graphs, then $\mathfrak{Q}$ is also NP-hard for unit disk point visibility graphs.
\end{lem}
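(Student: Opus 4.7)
The plan is to obtain Lemma~\ref{lem:hardUDVG} as an almost immediate corollary of Lemma~\ref{lem:scaledown}, by exhibiting a polynomial-time reduction from $\mathfrak{Q}$ restricted to point visibility graphs to $\mathfrak{Q}$ restricted to unit disk point visibility graphs. More precisely, I would take an arbitrary instance $(P, \ldots)$ of $\mathfrak{Q}$ whose geometric data is a point set $P=\{p_1,\dots,p_n\}$ with visibility graph $G(P)$, and construct in polynomial time an equivalent instance $(P',\ldots)$ of $\mathfrak{Q}$ whose underlying graph is the \emph{unit disk} visibility graph $G_{UDV}(P')$.

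The construction of $P'$ is exactly the scaling described in the proof of Lemma~\ref{lem:scaledown}: let $\varphi$ be the largest absolute value of any coordinate appearing in $P$, and define $P'=\{p_1/(2\varphi),\dots,p_n/(2\varphi)\}$. This is done in polynomial time in the bit-length of $P$, since it amounts to dividing every coordinate by a single rational number that is readable from the input. Because scaling by a positive scalar is a similarity transformation of the plane, collinearity is preserved; hence the ordinary (obstacle-based) visibility relation on $P'$ coincides with that on $P$. In particular, the classical visibility graph $G(P')$ equals $G(P)$ as labeled graphs.

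Next I would observe that the unit disk constraint is vacuous on $P'$: every pair of points of $P'$ lies inside a disk of radius $\tfrac12$ around the origin (by the choice of $\varphi$), so every pairwise Euclidean distance is strictly less than one unit. Therefore no visibility edge of $G(P')$ is removed by imposing the distance cap, and we get $G_{UDV}(P')=G(P')=G(P)$. Consequently, the instance $(P,\ldots)$ is a yes-instance of $\mathfrak{Q}$ on point visibility graphs if and only if $(P',\ldots)$ is a yes-instance of $\mathfrak{Q}$ on unit disk point visibility graphs.

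Since the reduction is polynomial and maps yes-instances to yes-instances and no-instances to no-instances, NP-hardness of $\mathfrak{Q}$ on point visibility graphs carries over to NP-hardness on unit disk point visibility graphs, which is the desired conclusion. There is really no obstacle to overcome here; the only small point worth mentioning is that the problem $\mathfrak{Q}$ should be assumed to depend only on the (abstract) graph and the auxiliary data carried along by the instance, not on metric quantities of the embedding, so that replacing $P$ by the scaled copy $P'$ does not change whether the instance is a yes-instance.
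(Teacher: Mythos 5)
Your proposal is correct and takes essentially the same approach as the paper: scale the point set into a unit circle (Lemma~\ref{lem:scaledown}) so that the distance cap becomes vacuous and the unit disk visibility graph coincides with the ordinary visibility graph. The only cosmetic difference is that the paper divides by a power of two slightly larger than $\varphi$ rather than by $2\varphi$ itself, as a precaution to keep coordinate bit-lengths under control, but your direct division by $2\varphi$ is equally valid for rational inputs.
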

\begin{proof}
	Let $P = \{p_1, \dots, p_n\}$ be a geometric set (points, vertices of a polygon, endpoints of a set of line segments) in the Euclidean plane, with coordinates $(x_1, y_1)$, $\dots$, $(x_n, y_n)$, respectively.
	Let $\mathcal{A}$ be an algorithm that solves an instance $\mathcal{Q}$ of the problem $\mathfrak{Q}$ on $P$, in polynomial time.
	
	As shown in Lemma~\ref{lem:scaledown}, $P$ can be scaled down into a unit circle, preserving the relations.
	To avoid the possible coordinates with exponentially many digits, instead of making the transformation $(x_i, y_i) \to (x_i/2\varphi, y_i/2\varphi)$ for each $(x_i, y_i)$, we let $M$ be the smallest integer larger than $|\varphi|$, which can be expressed as $2^k$.
	Since dividing a number $n$ by $2^k$ requires at most $k = O(\log n)$ many digits, applying the transformation $(x_i, y_i) \to (x_i/2M, y_i/2M)$ does not create coordinates with exponentially many digits.
	
	So, let $P' = \{p'_1, \dots, p'_n\}$ be a point set with coordinates $(x_1/2M, y_1/2M), \dots, (x_n/2M,$ $y_n/2M)$, respectively.
	If $\mathcal{A}$ solves the problem $\mathfrak{Q}$ on $P'$ in polynomial time, then $\mathcal{A}$ solves $\mathfrak{Q}$ on $P$ also in polynomial time.
	
	Therefore, any $\mathfrak{Q}$ is also NP-hard for unit disk point visibility graphs.
\end{proof}

\begin{rem}
	By Lemma~\ref{lem:hardUDVG}, the minimum vertex cover, the maximum independent set and the minimum dominating set problems which have been shown to be NP-hard for visibility graphs by \cite{Lin_complexityaspects,ll-ccagp-86} are also NP-hard for unit disk visibility graphs.
\end{rem}

\begin{figure}[htbp]
	\captionsetup[subfigure]{position=b,justification=centering}
	\centering
	\subfloat[]{
		\centering
		\begin{tikzpicture}[scale=0.8]
			\tikzstyle{every node}=[draw=black, fill=gray, shape=circle, minimum size=3pt,inner sep=0pt];
			\node[fill=red] (center) at (0,0) {};

			\foreach \i in {1,...,11} {
				\coordinate (\i) at (\i*360/11:1cm);
			}
			
			\draw (1) -- (11);
			\draw (center)--(11);
			\foreach \i in {3,...,11} {
				\pgfmathtruncatemacro\j{\i-1};
				\draw (\i) -- (\j);
				\draw (center)--(\j);
			}
			
			\foreach \i in {1,2,4,6,8,10} \draw[thick, red] (center)--(\i);
			
			\foreach \i in {3,5,7,9,11} \node at (\i) {};
			
			\tikzstyle{every node}=[draw=red, fill=red, shape=circle, minimum size=3pt,inner sep=0pt];
			\foreach \i in {1,2,4,6,8,10} \node at (\i) {};
			
		\end{tikzpicture}
		\label{fig:2a}
	}
	~
	\subfloat[]{
		\centering
		\begin{tikzpicture}[scale=3]
			\node[draw=none, fill=none, rotate=26] at (0.78,0) {\tiny $1$ unit};
			
			\node[draw=none, fill=none] at (0.5,-0.07) {\small $u$};
			
			\tikzstyle{every node}=[draw=black, fill=gray, shape=circle, minimum size=3pt,inner sep=0pt];
			\draw[<->] (0.55,-0.05) -- (0.985,0.145);
			
			\node (1) at (0,0.2) {};
			\node (2) at (0.1,0.2) {};
			\node (3) at (0.2,0.2) {};
			\node (4) at (0.3,0.2) {};
			\node (5) at (0.4,0.2) {};
			\node (6) at (0.5,0.2) {};
			\node (7) at (0.6,0.2) {};
			\node (8) at (0.7,0.2) {};
			\node (9) at (0.8,0.2) {};
			\node (10) at (0.9,0.2) {};
			
			\node (11) at (1,0.2) {};
			
			\node (12) at (0.5,0) {};
			
			\foreach \i in {1,...,10}
			{
				\pgfmathtruncatemacro\j{\i+1};
				\draw (\i) -- (\j);
				\draw (\i) -- (12);
			}
			
			\draw (11) -- (12);
			
			\foreach \i in {1,3,5,7,9,11} \draw[thick, red] (12)--(\i);
			
		\end{tikzpicture}
		\label{fig:2b}
	}
	~
	\subfloat[]{
		\centering
		\begin{tikzpicture}[scale=3]
			\node[draw=none, fill=none, rotate=30] at (0.75,0) {\tiny $1$ unit};
			
			\node[draw=none, fill=none] at (0.4,-0.2) {\small $u$};
			\node[draw=none, fill=none] at 
			(0.4,-0.03) {\small $v$};
			
			\tikzstyle{every node}=[draw=black, fill=gray, shape=circle, minimum size=3pt,inner sep=0pt];
			\draw[<->] (0.55,-0.05) -- (0.95,0.15);
			
			\node (1) at (0,0.2) {};
			\node (2) at (0.1,0.2) {};
			\node (3) at (0.2,0.2) {};
			\node (4) at (0.3,0.2) {};
			\node (5) at (0.4,0.2) {};
			\node (6) at (0.5,0.2) {};
			\node (7) at (0.6,0.2) {};
			\node (8) at (0.7,0.2) {};
			\node (9) at (0.8,0.2) {};
			\node (10) at (0.9,0.2) {};
			\node (11) at (0.5,0) {};
			\node (12) at (0.5,-0.2) {};
			
			\foreach \i in {1,3,5,7,9,11}
			{
				\pgfmathtruncatemacro\j{\i+1};
				\draw[ultra thick] (\i) -- (\j);
				\draw (\i) -- (11);
				\draw (\j) -- (11);
			}
			
			\foreach \i in {2,4,6,8}
			{
				\pgfmathtruncatemacro\j{\i+1};
				\draw (\i) -- (\j);
			}
			
			\draw (10) -- (11);
			
			\foreach \i in {1,3,5,7,9,12} \draw[thick, red] (11)--(\i);
			
		\end{tikzpicture}
		\label{fig:2c}
	}
	~
	\subfloat[]{
		\centering
		\begin{tikzpicture}[scale=3]
			\node[draw=none, fill=none, rotate=45] at (0.8,0.15) {\tiny $1$ unit};
			
			\node[draw=none, fill=none] at (0.5,-0.07) {\small $u$};
			
			\tikzstyle{every node}=[draw=black, fill=gray, shape=circle, minimum size=3pt,inner sep=0pt];
			\draw[<->] (0.55,-0.05) -- (1.04,0.475);
			
			\node (1) at (0,0.5) {};
			\node (2) at (0.1,0.46) {};
			\node (3) at (0.2,0.425) {};
			\node (4) at (0.3,0.395) {};
			\node (5) at (0.4,0.375) {};
			\node (6) at (0.5,0.375) {};
			\node (7) at (0.6,0.395) {};
			\node (8) at (0.7,0.425) {};
			\node (9) at (0.8,0.46) {};
			\node (10) at (0.9,0.5) {};
			
			\node (11) at (1,0.54) {};
			
			\node (12) at (0.5,0) {};
			
			\foreach \i in {1,...,11}
			{
				\pgfmathtruncatemacro\j{\i+1};
				\draw[ultra thick] (\i) -- (\j);			
			}
			\draw[ultra thick] (1) -- (12);			
			\foreach \i in {2,...,10}
			{
				\draw (\i)--(12);
			}
			
			\foreach \i in {1,3,5,7,9,11} \draw[thick, red] (12)--(\i);
			
		\end{tikzpicture}
		\label{fig:2d}
	}
	
	\caption{\textsc{(a)} A graph with an induced $K_{1,6}$ (indicated with red edges). Unit disk visibility graphs for \textsc{(b)} a set of points, \textsc{(c)} a set of segments, and \textsc{(d)} a simple polygon, each containing an induced $K_{1,6}$.}
	\label{fig:notUD}
\end{figure}
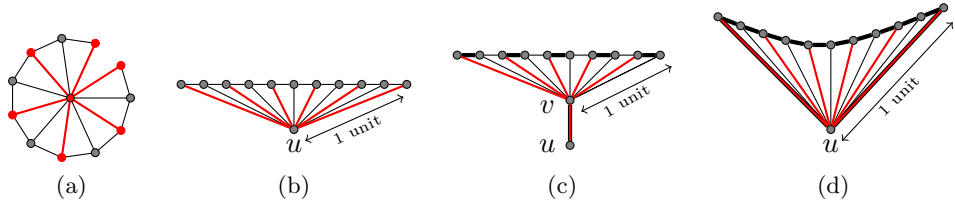			

In Figure~\ref{fig:2a}, we have a graph which contains $K_{1,6}$ as an induced subgraph depicted with red vertices and red edges. Our purpose is to show that an induced $K_{1,6}$ can be contained in the unit disk visibility graph of a set of points, a set of line segments and a simple polygon. In Figure~\ref{fig:2b}, we are given a set of twelve points where all of them except $u$ are collinear. Observe that among the collinear points, we can choose at most six pairwise non-adjacent points to be in our induced subgraph, otherwise at least two of them see each other. With the addition of $u$, we obtain an induced $K_{1,6}$ since the distance between $u$ and the farthest point is one unit. In Figure~\ref{fig:2c}, we have a similar configuration where we can choose at most five pairwise non-adjacent endpoints among the collinear segments. By adding both endpoints $u$ and $v$, we obtain an induced $K_{1,6}$ since $u$ does not see other points than $v$ due to distance and $v$ sees all endpoints. In Figure~\ref{fig:2d}, we also have a similar configuration to Figure~\ref{fig:2b}. The vertices of a polygon do not see each other if the straight line joining them is outside of the polygon. Therefore, among the vertices except $u$, we can choose at most six pairwise non-adjacent points to be in our induced subgraph. With the addition of $u$, we obtain an induced $K_{1,6}$.

\begin{rem} \label{rem:super}
	As we have shown that the unit disk visibility graphs are a superclass of visibility graphs, we emphasize that the unit disk visibility graphs are not a subclass of unit disk graphs since unit disk graphs can not contain an induced $K_{1,6}$ \cite{Marathe_heuristics} while a unit disk visibility graph of a set of points, a set of line segments, and a simple polygon can contain it as demonstrated in Figure~\ref{fig:notUD}. Therefore, the combinatorial problems that are NP-hard for unit disk graphs are not necessarily hard on unit disk visibility graphs, and the question whether unit disk graphs are a subclass of unit disk visibility graphs is left open.
\end{rem}

\begin{lem} \label{lem:K1,6}
	Unit disk visibility graphs are not a subclass of unit disk graphs. 
\end{lem}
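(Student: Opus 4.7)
The plan is to prove the lemma by exhibiting a concrete unit disk visibility graph that contains an induced subgraph forbidden in every unit disk graph, thereby separating the two classes. The natural choice of forbidden subgraph is $K_{1,6}$, since it is a classical fact (see Marathe et al.~\cite{Marathe_heuristics}) that no unit disk graph contains an induced $K_{1,6}$: in any unit disk representation, any six neighbours of a common vertex $u$ would include two whose pairwise distance is at most the diameter of $u$'s disk, forcing an edge among them. I would cite this fact up front so that the burden reduces entirely to producing a single unit disk visibility graph containing induced $K_{1,6}$.

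For the construction, I would rely on the configurations already drawn in Figure~\ref{fig:notUD}(b)--(d), choosing the point-set variant (Figure~\ref{fig:2b}) since it is the simplest of the three. Concretely, I would place eleven collinear points $p_1,\dots,p_{11}$ equally spaced on a horizontal segment of length $1$, and one extra point $u$ placed below the midpoint so that the distance from $u$ to the farthest $p_i$ equals exactly one unit. Then I would verify the two things needed: (i)~$u$ is at distance $\leq 1$ from every $p_i$, so in the unit disk visibility graph $u$ sees each $p_i$; and (ii)~among the collinear points, any two non-consecutive ones $p_i, p_j$ with $j>i+1$ fail to see each other in the visibility model because the intervening point $p_{i+1}$ lies on the segment $\overline{p_ip_j}$ and blocks visibility. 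Picking the six alternate points $p_1,p_3,p_5,p_7,p_9,p_{11}$ together with $u$ then yields an induced $K_{1,6}$ in the unit disk visibility graph.

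Combining (i) and (ii) with the fact that $K_{1,6}$ is a forbidden induced subgraph of unit disk graphs gives the separation: the constructed unit disk visibility graph cannot be a unit disk graph. Since the very same reasoning works verbatim for the segment and polygon variants of Figure~\ref{fig:notUD}, I would close by remarking that the same obstruction applies to unit disk segment visibility graphs and unit disk polygon visibility graphs, giving the lemma uniformly across all geometric inputs considered in this chapter.

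I do not expect any real obstacle here; the only point requiring a small amount of care is justifying that the chosen six collinear points are pairwise non-adjacent in the visibility graph, which follows immediately from the blocking property of points lying strictly between them on the same line. The rest is a direct appeal to the $K_{1,6}$-freeness of unit disk graphs.
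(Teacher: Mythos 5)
Your proposal is correct and follows essentially the same route as the paper: cite the $K_{1,6}$-freeness of unit disk graphs from \cite{Marathe_heuristics}, then exhibit an induced $K_{1,6}$ in a unit disk point visibility graph via the collinear-points-plus-apex configuration of Figure~\ref{fig:2b}, with the same remark that the segment and polygon variants work identically. No gaps.
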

\begin{proof}
	Unit disk graphs cannot contain an induced $K_{1,6}$ \cite{Marathe_heuristics} while a unit disk visibility graph of a set of points, a set of line segments, and a simple polygon can contain it.
	In Figure~\ref{fig:2a}, we have a graph which contains $K_{1,6}$ as an induced subgraph shown with red vertices and red edges. Our purpose is to show that an induced $K_{1,6}$ can be contained in the unit disk visibility graph of a set of points, a set of line segments and a simple polygon. In Figure~\ref{fig:2b}, we are given a set of twelve points where all of them except $u$ are collinear. Observe that among the collinear points, we can choose at most six pairwise non-adjacent points to be in our induced subgraph, otherwise at least two of them see each other. With the addition of $u$, we obtain an induced $K_{1,6}$ since the distance between $u$ and the farthest point is one unit. In Figure~\ref{fig:2c}, we have a similar configuration where we can choose at most five pairwise non-adjacent endpoints among the collinear segments. By adding both endpoints $u$ and $v$, we obtain an induced $K_{1,6}$ since $u$ does not see other points than $v$ due to distance and $v$ sees all endpoints. In Figure~\ref{fig:2d}, we also have a similar configuration to Figure~\ref{fig:2b}. The vertices of a polygon do not see each other if the straight line joining them is outside of the polygon. Therefore, among the vertices except $u$, we can choose at most six pairwise non-adjacent points to be in our induced subgraph. With the addition of $u$, we obtain an induced $K_{1,6}$.
\end{proof}

The idea used to prove Lemma~\ref{lem:K1,6} is that unit disk graphs cannot contain an induced $K_{1,6}$ \cite{Marathe_heuristics} while unit disk point, segment and polygon visibility graphs can contain it as in Figure~\ref{fig:notUD}.

\begin{lem} \label{lem:UnitUDVG}
	Unit disk graphs are a proper subclass of unit disk point visibility graphs, and \emph{not} a subclass of unit disk segment and polygon (simple or with holes) visibility graphs.
\end{lem}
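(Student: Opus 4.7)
The claim has three parts: (i) every unit disk graph is a unit disk point visibility graph; (ii) this inclusion is strict; and (iii) there is a unit disk graph that is neither a unit disk segment visibility graph nor a unit disk polygon visibility graph (simple or with holes). I plan to address them in this order.

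For (i), the approach is a perturbation argument. Given a UDG $G$ realized by centers $P = \{p_1, \ldots, p_n\} \subset \mathbb{R}^2$, I would first argue that by infinitesimally adjusting the common disk radius one may assume no pair of centers lies exactly at the unit-distance threshold, so there is a constant $\delta > 0$ such that every pairwise distance differs from $1$ by at least $\delta$. A random perturbation of each $p_i$ by magnitude at most $\delta/(4n)$ then shifts every pairwise distance by less than $\delta/2$ (triangle inequality), preserving all UDG adjacencies; generically such a perturbation also destroys every three-point collinearity, since collinear configurations form a measure-zero subset of the perturbation space. Let $P'$ be the resulting point set; then for every pair within unit distance the connecting segment contains no other point of $P'$, so the UDPVG of $P'$ coincides with $G$.

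For the strictness claim (ii), I would invoke Lemma~\ref{lem:K1,6}, which exhibits $K_{1,6}$ as the UDPVG of the explicit point set depicted in Figure~\ref{fig:2b}, where collinearity is used to block most visibilities; on the other hand, by the classical result of~\cite{Marathe_heuristics}, no UDG contains an induced $K_{1,6}$. Hence $K_{1,6}$ is a UDPVG that is not a UDG, so the inclusion is strict.

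For (iii), simple structural counter-examples suffice. Every UDSVG has at least two vertices, since by definition each vertex is an endpoint of some line segment and every segment has two distinct endpoints; thus the one-vertex graph $K_1$, which is trivially a UDG, cannot be a UDSVG. Likewise, every UDPVG of a simple polygon or of a polygon with holes has at least three vertices, since any polygon has at least three vertices, so $K_1$ (or $K_2$) is a UDG that is not a UDPVG in either setting. The main technical step is the perturbation argument of part (i): the perturbation must simultaneously preserve a robust condition (the within-unit-distance relation, stable thanks to the margin $\delta$) and realize a generic condition (no three collinear points), and the care needed is in quantifying the perturbation magnitude so that both hold at once.
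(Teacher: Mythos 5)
Your proposal is correct. Parts (i) and (ii) follow essentially the same route as the paper: the paper also realizes a given unit disk graph as a unit disk point visibility graph by perturbing the disk centers into general position so that no three are collinear (citing a general-position perturbation result), and it also derives strictness from the $K_{1,6}$ example of Lemma~\ref{lem:K1,6} together with the fact that unit disk graphs exclude induced $K_{1,6}$. Your version is actually more careful than the paper's, since you quantify the perturbation against a threshold margin $\delta$ so that adjacencies are provably preserved while collinearities are destroyed; the paper leaves this implicit. Part (iii) is where you genuinely diverge. You rule out segment and polygon visibility graphs via the minimal counterexample $K_1$ (and $K_2$), using only the lower bounds on the number of vertices ($2n$ endpoints for $n$ segments, at least $3$ vertices for a polygon). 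This is logically sufficient to refute subclass containment, but it is a degenerate separation. The paper instead argues that unit disk segment visibility graphs always have an \emph{even} number of vertices (so every odd-order unit disk graph is a counterexample, of which $K_1$ is the smallest), that unit disk visibility graphs of simple polygons always contain a Hamiltonian cycle given by the polygon boundary (so any non-Hamiltonian unit disk graph separates the classes), and that for polygons with holes the boundary of a hole forces an induced chordless cycle of length at least three (which many unit disk graphs lack). Each approach buys something different: yours is shorter and needs no structural observations, while the paper's produces counterexamples of every size and extracts structural invariants of the visibility classes that are reused elsewhere (cf.\ the discussion around Remark~\ref{rem:super}).
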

\begin{proof}
	Given a representation of unit disk graphs, we can simply perturb the disk centers slightly to obtain a set of points in general position, creating a configuration in which no three points are collinear \cite{Fonesca_recognition}.
	This way, we have had obtained a setting that is exactly a unit disk graph, in which a pair of vertices are adjacent only if they are close enough. 
	It is still a question whether the new positions of the disk centers can be represented by using polynomially many decimal digits with respect to the input size. However, this shows that every unit disk graph can be represented as the unit disk visibility graph of the new set of points. Lemma~\ref{lem:UnitUDVG} shows that there are unit disk visibility graphs of a set of points which cannot be recognized as unit disk graphs. Therefore, unit disk graphs are a ``proper'' subclass of unit disk visibility graphs for a set of points.
	
	By definition, a set of $n$ (disjoint) line segments contains $2n$ endpoints. Therefore, unit disk visibility graphs for a set of segments have even number of vertices while there unit disk graphs can have odd number of vertices. Combined with Lemma~\ref{lem:UnitUDVG}, unit disk graphs are neither a subclass nor a superclass of unit disk visibility graphs of a set of line segments.
	
	Unit disk visibility graphs for a simple polygon contain a Hamiltonian cycle which acts as the border of the given polygon. However, unit disk graphs may not contain an Hamiltonian cycle. Combined with Lemma~\ref{lem:UnitUDVG}, unit disk graphs are neither a subclass nor a superclass of unit disk visibility graphs for a simple polygon.
	
	Unit disk visibility graphs for a polygon with holes contain one cycle $C$ to act as the exterior border of the given polygon and at least one other cycle $C'$ disjoint from $C$ to act as the border of a hole. By definition, a cycle of a simple graph is of length at least three and this means that unit disk visibility graphs for a polygon with holes have at least $6$ vertices. Therefore, unit disk graphs of order $< 6$ are not unit disk visibility graphs for a polygon with holes. We now show that unit disk graphs of order $\geq 6$ are not unit disk visibility graphs for a polygon with holes using the following fact: \emph{The cycle $C'$ corresponding to the border of a hole is an induced ``chordless'' cycle of the graph.} However, unit disk graphs may not contain any induced cycle of length $\geq 3$. Combined with Lemma~\ref{lem:UnitUDVG}, unit disk graphs are neither a subclass nor a superclass of unit disk visibility graphs for a polygon with holes.	
\end{proof}

\section{3-colorability of unit disk segment visibility graphs} \label{sec:3coloringSegment}

In this section, we mention our NP-hardness reductions. A polynomial-time (NP-hardness) reduction from a (NP-hard) problem $\mathfrak{Q}$ to another problem $\mathfrak{P}$ is to map any instance $\Phi$ of $\mathfrak{Q}$ to some instance $\Psi$ of $\mathfrak{P}$ such that $\Phi$ is a \textsc{YES}-instance of $\mathfrak{Q}$ if and only if $\Psi$ is a \textsc{YES}-instance of $\mathfrak{P}$, in polynomial-time and polynomial-space.
We first show that the 3-coloring problem for unit disk segment visibility graphs is NP-hard, using a reduction from the \emph{Monotone not-all-equal 3-satisfiability} (Monotone NAE3SAT) problem which is a variation of 3SAT \cite{Schaefer_complexitySAT} with no negated variables, and to satisfy the circuit, at least one variable must be true, and at least one variable must be false in each clause. 

\begin{thm} \label{thm:main}
	There is a polynomial-time reduction of any instance $\Phi$ of Monotone NAE3SAT to some instance $\Psi$ of unit disk segment visibility graphs such that $\Phi$ is a YES-instance if and only if $\Psi$ is 3-colorable.
\end{thm}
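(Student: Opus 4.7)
The plan is to reduce from Monotone NAE3SAT by constructing, for each instance $\Phi = C_1 \wedge \dots \wedge C_m$ over variables $\{x_1,\dots,x_n\}$, a set $\mathcal{S}$ of line segments whose unit disk segment visibility graph $\Psi = G(\mathcal{S})$ is properly 3-colorable if and only if $\Phi$ is NAE-satisfiable. The reduction will use four types of pieces glued together: a \emph{palette gadget} that fixes the three colors, \emph{variable gadgets} whose designated output endpoint encodes a Boolean value, \emph{wire gadgets} that rigidly propagate that value over a distance, and \emph{clause gadgets} realized as triangles that are 3-colorable precisely when the incoming literals are not all equal.

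First, I would build the palette: three short segments (length just under $1$) arranged so that their endpoints contain a $K_3$-inducing triple $\{t_T,t_F,t_N\}$, with no other segment obstructing the mutual visibilities and all three endpoints within unit distance. This forces three distinct colors on $t_T,t_F,t_N$, which we label $T$, $F$, $N$ and use as the reference for all downstream gadgets. Each variable gadget then exposes one designated output endpoint $x_i^\star$ linked to $t_N$ by a short chain of segments forcing $x_i^\star \in \{T, F\}$ in any proper 3-coloring, independently of the other variables. A wire gadget will be a ``ladder'' of short segments whose rungs are themselves segments of length slightly less than $1$; short attached spurs anchored to copies of $t_N$ force the coloring along the ladder to alternate between only $\{T,F\}$, so that the wire carries the value of its source endpoint without alteration. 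Because Lemma~\ref{lem:hardUDVG}-style scaling already gives us room to choose a global unit, I may freely place gadgets so that unrelated endpoints are at Euclidean distance greater than $1$, which kills any unintended visibility edge automatically.

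Next, for each clause $C_j = (x_a \vee x_b \vee x_c)$, I would construct a clause gadget consisting of three endpoints $u_j^a, u_j^b, u_j^c$ mutually adjacent (i.e.\ a $K_3$) and connected by three disjoint wires to copies of $x_a^\star, x_b^\star, x_c^\star$. Since each $u_j^p$ is confined to $\{T,F\}$ by its wire and the triangle on $\{u_j^a,u_j^b,u_j^c\}$ demands three distinct colors, this triangle is properly 3-colorable if and only if $\{u_j^a,u_j^b,u_j^c\}$ is not monochromatic in $\{T,F\}$. Reading off $u_j^p = T$ as ``$x_p$ is true'' and $u_j^p = F$ as ``$x_p$ is false'', this is exactly the NAE condition on the (positive) clause $C_j$. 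Hence a proper 3-coloring of $\Psi$ induces an assignment satisfying $\Phi$ in the NAE sense, and conversely any NAE-satisfying assignment yields a legal 3-coloring by coloring each $x_i^\star$ accordingly, propagating along wires, and completing each clause triangle (which is always possible when the three inputs use both $T$ and $F$).

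The main obstacle will be the geometric bookkeeping: I must guarantee that intended adjacencies are realised (no intervening segment blocks the short line of sight and the endpoints lie within distance $1$), while unintended adjacencies are killed (either by distance or by an inserted blocking segment). The most delicate part is routing many wires from one variable gadget to several clause gadgets without creating spurious edges when wires pass near each other; I will handle this either by spatially separating wires at distance greater than $1$ (possible because Monotone NAE3SAT is NP-hard even when the clause--variable incidence graph is non-planar), or by inserting thin ``curtain'' segments that block visibility between two wires while not interfering with intra-wire visibilities. Finally, I will check that all endpoint coordinates can be expressed on a grid of size polynomial in $n+m$, so that $\mathcal{S}$ and hence $\Psi$ are computable in polynomial time, completing the reduction.
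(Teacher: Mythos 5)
Your overall architecture (palette, wires carrying one of two ``truth'' colors, clause triangles, reduction from Monotone NAE3SAT) matches the paper's, but two steps would fail as written.

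First, your clause gadget is broken. You place a $K_3$ directly on the three value-carrying endpoints $u_j^a,u_j^b,u_j^c$ and simultaneously confine each of them to the two colors $\{T,F\}$. Three mutually adjacent vertices restricted to a palette of two colors can \emph{never} be properly colored, so your construction rejects every formula, satisfiable or not. The paper's gadget avoids this by one level of indirection: the clause is built from three segments $xx'$, $yy'$, $zz'$ where the value-carrying endpoints $x,y,z$ are pairwise non-adjacent and the $K_3$ sits on the \emph{partner} endpoints $x',y',z'$, which may use all three colors but are each forced (by being the other endpoint of the same segment) to differ from their incoming value. Then ``all three inputs equal'' leaves only two colors for the triangle $x'y'z'$ (impossible), while ``not all equal'' always admits a completion. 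You need this indirection; without it the biconditional in the theorem fails in both directions.

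Second, you have not actually dealt with crossings, which is the genuinely delicate part of the reduction. The clause--variable incidence structure of a hard NAE3SAT instance cannot be assumed planar (planar NAE3SAT is polynomial-time solvable), so vertical clause-wires must topologically cross horizontal variable-wires; no amount of ``spatial separation at distance greater than $1$'' can remove a forced crossing, and an inserted ``curtain'' segment has its own endpoints whose visibilities and distances to both wires must be controlled, which is exactly the problem you are trying to solve. The paper resolves this with a dedicated edge-crossing gadget (adapted from Gr\"af et al.'s unit disk coloring reduction) together with a proof that the gadget admits exactly two proper 3-colorings up to permutation, both of which transmit the color of each wire straight through the crossing. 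Your proposal names the obstacle but does not supply the gadget or the uniqueness argument, so the reduction is incomplete at precisely the point where the work lies.
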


Before proving Theorem~\ref{thm:main}, we describe the gadgets to be used to construct a unit disk segment visibility graph from a given NAE3SAT formula. From now on, for the sake of simplicity, when we write  ``color of an endpoint,'' we mean the color of the vertex which corresponds to that particular endpoint.

\subsection{The long edges and NAE3SAT clauses}\label{sec:longedges}

First, let us describe how we model the long edges to transfer the colors from one vertex to the others.
In Figure~\ref{fig:longEdge}, there are six line segments on two rows, drawn with thick lines.
Their endpoints, which correspond to the vertices of the unit disk visibility graph, are shown by three different shapes: square, triangle, and circle.
The disks drawn around the endpoints are the unit disks.
The dashed red lines between the endpoints are the visibility edges.

By definition, if two unit disks do not intersect, then the corresponding endpoints are not mutually visible.
Thus, even though there is no obstacle between some endpoints, they do not share a visibility edge since they are farther than one unit.
Note that if two endpoints see each other, then they are of different shape.
Therefore, assigning a unique color for each unique shape gives us a proper 3-coloring in the unit disk visibility graph of this particular configuration.

\begin{figure}[htbp]
	\centering
	\begin{tikzpicture} [scale = 1.1,
		triangle/.style = {regular polygon, regular polygon sides=3, scale=1.2},
		square/.style = {regular polygon, regular polygon sides=4, scale=1.2}]
		
		\tikzstyle{every node}=[draw=black, fill=gray, shape=circle, minimum size=6pt,inner sep=0pt];
		
		\foreach \i in {1,2,3,4,5,6}
		{	
			\draw (\i,0) circle (0.55cm);
			\draw[gray] (\i+0.2,0.2) circle (0.55cm);
		}
		
		\node[square] (1) at (1,0) {};
		\node[triangle] (2) at (2,0) {};
		\node (3) at (3,0) {};
		\node[square] (4) at (4,0) {};
		\node[triangle] (5) at (5,0) {};
		\node (6) at (6,0) {};
		
		\node (7) at (1.2,0.2) {};
		\node[square] (8) at (2.2,0.2) {};
		\node[triangle] (9) at (3.2,0.2) {};
		\node (10) at (4.2,0.2) {};
		\node[square] (11) at (5.2,0.2) {};
		\node[triangle] (12) at (6.2,0.2) {};
		
		\foreach \i in {1,3,5}
		{
			\pgfmathtruncatemacro\j{\i+1};
			\pgfmathtruncatemacro\k{\i+6};
			\pgfmathtruncatemacro\m{\i+7};
			\draw[ultra thick] (\i) -- (\j);
			\draw[ultra thick] (\k) -- (\m);
		}
		
		\foreach \i in {2,3,4,5,6}
		{
			\pgfmathtruncatemacro\j{\i+5};
			\pgfmathtruncatemacro\k{\i+6};
			\draw[red, dashed] (\i) -- (\j);
			\draw (\i) -- (\k);
			
		}
		
		\foreach \i in {2,4,8,10}
		{
			\pgfmathtruncatemacro\j{\i+1};
			\draw[red, dashed] (\i) -- (\j);	
			
		}
		
		\draw (1) -- (7);
		
	\end{tikzpicture}
	\caption{A long edge gadget constructed by six line segments having a unique 3-coloring.}
	\label{fig:longEdge}	
\end{figure}
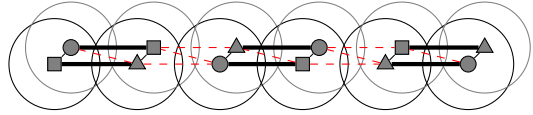

\begin{rem}
	In a proper 3-coloring of the unit disk visibility graph of the line segment configuration given in Figure~\ref{fig:longEdge}, a pair of vertices receive the same color if and only if their corresponding endpoints are of same shape.
\end{rem}

Observe that on each row, the shapes of the endpoints are repeating sequentially.
Given such a configuration with arbitrarily many line segments, if we consider the ordering $(1,2, \dots)$ of the endpoints from left to right on a single row, then the color of $i$th endpoint will have the same color with $(i+3)$th endpoint.
This helps us to transfer a color from one side of the configuration to the desired position, enabling the usage of long edges in the circuit.

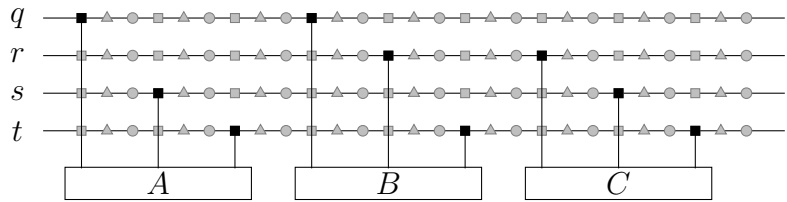
\begin{figure}[htbp]
	\centering
	\begin{tikzpicture}[yscale=0.5, xscale=1.7,
		triangle/.style = {regular polygon, regular polygon sides=3, scale=1.2},
		square/.style = {regular polygon, regular polygon sides=4, scale=1.2}]
		\tikzstyle{every node}=[draw, fill=black, minimum size=4pt,inner sep=0pt];
		
		\node[color=black, draw=none, fill=none] at (0,4) {$q$};
		\node[color=black, draw=none, fill=none] at (0,3) {$r$};
		\node[color=black, draw=none, fill=none] at (0,2) {$s$};
		\node[color=black, draw=none, fill=none] at (0,1) {$t$};
		
		\foreach \i in {1,2,3,4}
		{
			\draw (0.2,\i)--(6,\i);	
			\foreach \j in {0.5, 1.1, 1.7, 2.3, 2.9, 3.5, 4.1, 4.7, 5.3}
			{
				\node[color=gray,fill=gray!50, draw=gray, square] at (\j,\i) {};
				\node[color=gray,fill=gray!50, triangle] at (\j + 0.2, \i) {};
				\node[color=gray,fill=gray!50, shape=circle] at (\j + 0.4, \i) {};
			}	
		}
		
		\clause{0.5}{4}{2}{1}
		\clause{2.3}{4}{3}{1}
		\clause{4.1}{3}{2}{1}
		
		\tikzstyle{every node}=[draw, shape=rectangle, color=black, fill=none, minimum size=5pt, minimum width=6em, inner sep=2pt];
		
		\node at (1.1,-0.4) {$A$};
		\node at (2.9,-0.4) {$B$};
		\node at (4.7,-0.4) {$C$};
	\end{tikzpicture}
	\caption{An NAE3SAT formula with variables $q,r,s,t$, and clauses $A = (q \vee s \vee t)$, $B = (q \vee r \vee t)$, and $C =(r \vee s \vee t)$.}
	\label{fig:3satCircuit}
\end{figure}

Consider the circuit given in Figure~\ref{fig:3satCircuit}.
There are four boolean variables, $q$, $r$, $s$, $t$, and three clauses, $A$, $B$, $C$.
Each variable has a long edge, representing its ``wire.''
On each row, three shapes are repeating, square, triangle, and circle, in this precise order.
These three shapes are in the same order with the configuration given in Figure~\ref{fig:longEdge}.
If a variable appears in a clause, then it is shown by an edge from the corresponding row to the clause.
Thus, when a color is assigned to a square endpoint, the same color repeats along the wire,
and eventually is transferred to the clause.

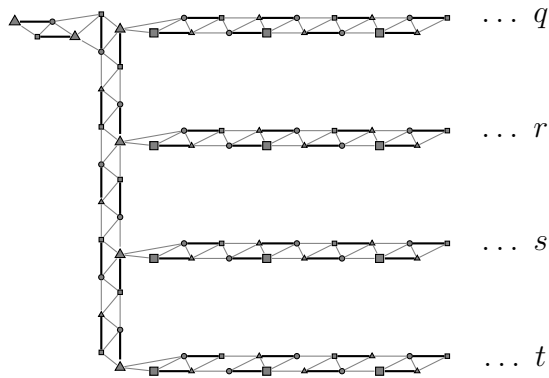
\begin{figure}[htbp]
	\centering
	\begin{tikzpicture} [scale=0.5,
		triangle/.style = {regular polygon, regular polygon sides=3, scale=1.2},
		square/.style = {regular polygon, regular polygon sides=4, scale=1.2}]
		
		\node[draw=none, fill=none] at (11,-0.9) {$\dots\ q$};
		\node[draw=none, fill=none] at (11,-3.9) {$\dots\ r$};
		\node[draw=none, fill=none] at (11,-6.9) {$\dots\ s$};
		\node[draw=none, fill=none] at (11,-9.9) {$\dots\ t$};
		
		\tikzstyle{every node}=[draw=black, fill=gray, shape=circle, minimum size=2pt,inner sep=0pt];
		\tikzstyle{every path}=[draw, gray];

		\node[triangle,scale=2] (A) at (-2.3,-1) {};
		\node (B) at (-1.3,-1) {};
		\node[square] (C) at (-1.7,-1.4) {};
		\node[triangle,scale=1.8] (D) at (-0.7, -1.4) {};
		\draw[black, thick] (A)--(B);
		\draw (A)--(C);
		\draw (B)--(C);
		\draw[black, thick] (C)--(D);

		\foreach [evaluate={\j=int(mod(\i,3));}] \i in {1,...,10}
		{
			\ifthenelse{\j = 1}
			{
				\node[square] (\i) at (0, -\i+0.2) {};
				\node[triangle,scale=1.8] (\i') at (0.5, -\i-0.2) {};
			}
			{}
			
			\ifthenelse{\j = 2}
			{
				\node (\i) at (0, -\i+0.2) {};
				\node[square] (\i') at (0.5, -\i-0.2) {};
			}
			{}
			
			\ifthenelse{\j = 0}
			{
				\node[triangle] (\i) at (0, -\i+0.2) {};
				\node (\i') at (0.5, -\i-0.2) {};
			}
			{}

		}
		
		\draw (1)--(B)--(D)--(1);
		\draw (2)--(D);

		\foreach \i in {1,3,5,7,9}
		{
			\pgfmathtruncatemacro\j{\i+1};
			\draw[black, thick] (\i)--(\j);
			\draw (\i)--(\i');
			\draw[black, thick] (\i')--(\j');
			\draw (\i')--(\j);
			\draw (\j)--(\j');	
		}
		
		\foreach \i in {2,4,6,8}
		{
			\pgfmathtruncatemacro\j{\i+1};
			\draw (\i)--(\j);
			\draw (\i')--(\j');
			\draw (\i')--(\j);
		}

		\foreach [evaluate={\j=int(mod(\i,3));}] \i in {1,...,8}
		{
			\ifthenelse{\j = 1}
			{
				\node (qu\i) at (\i+1.2,-0.9) {};
				\node (ru\i) at (\i+1.2,-3.9) {};
				\node (su\i) at (\i+1.2,-6.9) {};
				\node (tu\i) at (\i+1.2,-9.9) {};
				
				\node[square,scale=1.8] (qd\i) at (\i+0.4,-1.3) {};
				\node[square,scale=1.8] (rd\i) at (\i+0.4,-4.3) {};
				\node[square,scale=1.8] (sd\i) at (\i+0.4,-7.3) {};	
				\node[square,scale=1.8] (td\i) at (\i+0.4,-10.3) {};
			}
			{}
			\ifthenelse{\j = 2}
			{
				\node[square] (qu\i) at (\i+1.2,-0.9) {};
				\node[square] (ru\i) at (\i+1.2,-3.9) {};
				\node[square] (su\i) at (\i+1.2,-6.9) {};
				\node[square] (tu\i) at (\i+1.2,-9.9) {};
				
				\node[triangle] (qd\i) at (\i+0.4,-1.3) {};
				\node[triangle] (rd\i) at (\i+0.4,-4.3) {};				
				\node[triangle] (sd\i) at (\i+0.4,-7.3) {};		
				\node[triangle] (td\i) at (\i+0.4,-10.3) {};
			}
			{}
			
			\ifthenelse{\j = 0}
			{
				\node[triangle] (qu\i) at (\i+1.2,-0.9) {};
				\node[triangle] (ru\i) at (\i+1.2,-3.9) {};
				\node[triangle] (su\i) at (\i+1.2,-6.9) {};
				\node[triangle] (tu\i) at (\i+1.2,-9.9) {};

				\node (qd\i) at (\i+0.4,-1.3) {};
				\node (rd\i) at (\i+0.4,-4.3) {};
				\node (sd\i) at (\i+0.4,-7.3) {};		
				\node (td\i) at (\i+0.4,-10.3) {};
			}
			{}

		}
		
		\foreach \i in {1,3,5,7}
		{
			\pgfmathtruncatemacro\j{\i+1};
			
			\draw[black, thick] (qu\i)--(qu\j);
			\draw[black, thick] (qd\i)--(qd\j);
			
			\draw[black, thick] (ru\i)--(ru\j);
			\draw[black, thick] (rd\i)--(rd\j);
			
			\draw[black, thick] (su\i)--(su\j);
			\draw[black, thick] (sd\i)--(sd\j);
			
			\draw[black, thick] (tu\i)--(tu\j);
			\draw[black, thick] (td\i)--(td\j);
			
			\draw (qu\i)--(qd\i);
			\draw (ru\i)--(rd\i);
			\draw (su\i)--(sd\i);
			\draw (tu\i)--(td\i);
			
			\draw (qu\j)--(qd\j);
			\draw (ru\j)--(rd\j);
			\draw (su\j)--(sd\j);
			\draw (tu\j)--(td\j);
			
			\draw (qu\i)--(qd\j);
			\draw (ru\i)--(rd\j);
			\draw (su\i)--(sd\j);
			\draw (tu\i)--(td\j);
			
		}

		\foreach \i in {2,4,6}
		{
			\pgfmathtruncatemacro\j{\i+1};
			
			\draw (qu\i)--(qu\j);
			\draw (qd\i)--(qd\j);
			
			\draw (ru\i)--(ru\j);
			\draw (rd\i)--(rd\j);
			
			\draw (su\i)--(su\j);
			\draw (sd\i)--(sd\j);
			
			\draw (tu\i)--(tu\j);
			\draw (td\i)--(td\j);

			\draw (qu\i)--(qd\j);
			\draw (ru\i)--(rd\j);
			\draw (su\i)--(sd\j);
			
		}
		
		\draw (qu1)--(1')--(qd1);
		\draw (ru1)--(4')--(rd1);
		\draw (su1)--(7')--(sd1);
		\draw (tu1)--(10')--(td1);

	\end{tikzpicture}
	\caption{The wires transfering one of two colors.}
	\label{fig:initial}
\end{figure}

The truth assignments of the variables are determined by a pair of colors.
In this case, every square endpoint on a wire should receive one of these two colors.
We guarantee this by setting up the gadget shown in Figure~\ref{fig:initial}.
The color of the triangle vertex is the ``neutral'' color, which means that the remaining two colors represents ``true'' and ``false'' for the variables.
Thus, by picking a color for the triangle, we enforce every variable to transfer either true or false.
Important part is that a variable connects to a clause only by one of the squares.
Therefore, the truth assignment of a variable is transferred to the clause by the color of a square endpoint.
Let us assume that the triangles in Figure~\ref{fig:initial} are colored green.
In Figure~\ref{fig:transferringSegment}, we see an embedding of line segments, in which the variables transfer either blue, or red to the clauses below.

\begin{figure}[htbp]
	\centering
	\begin{tikzpicture} [scale=0.8,
		triangle/.style = {regular polygon, regular polygon sides=3, scale=0.5},
		square/.style = {regular polygon, regular polygon sides=4, scale=0.5},
		circle/.style = {shape=circle, scale=0.4}]
		
		\tikzstyle{every node}=[draw=black, fill=gray, shape=circle, minimum size=6pt,inner sep=0pt];

		\node[square] (1) at (1,0) {};
		\node[triangle] (2) at (2,0) {};
		\node[circle] (3) at (3,0) {};
		\node[square] (4) at (4,0) {};
		\node[triangle] (5) at (5,0) {};
		\node[circle] (6) at (6,0) {};
		
		\node[circle] (7) at (1.2,0.2) {};
		\node[square] (8) at (2.2,0.2) {};
		\node[triangle] (9) at (3.2,0.2) {};
		\node[circle] (10) at (4.2,0.2) {};
		\node[square] (11) at (5.2,0.2) {};
		\node[triangle] (12) at (6.2,0.2) {};
		
		\foreach \i in {1,3,5}
		{
			\pgfmathtruncatemacro\j{\i+1};
			\pgfmathtruncatemacro\k{\i+6};
			\pgfmathtruncatemacro\m{\i+7};
			\draw[thick] (\i) -- (\j);
			\draw[thick] (\k) -- (\m);
		}
		
		\foreach \i in {2,3,4,5,6}
		{
			\pgfmathtruncatemacro\j{\i+5};
			\pgfmathtruncatemacro\k{\i+6};
			\draw(\i) -- (\j);
			\draw (\i) -- (\k);
			
		}
		
		\foreach \i in {2,4,8,10}
		{
			\pgfmathtruncatemacro\j{\i+1};
			\draw (\i) -- (\j);	
			
		}
		
		\draw (1) -- (7);
		
		\node[triangle] (13) at (3.3, -0.2) {};
		\node[circle] (14) at (3.6, -0.525) {};
		
		\node[square] (15) at (3.6, -1.5) {};
		\node[triangle] (16) at (3.6, -2.5) {};
		
		\node[circle] (17) at (3.6, -3.5) {};
		\node[square, scale=2, fill=black] (18) at (3.6, -4.5) {};
		
		\node[triangle] (19) at (3.8,-0.9) {};
		\node[circle] (20) at (3.8,-1.9) {};
		
		\node[square] (21) at (3.8,-2.9) {};
		\node[triangle] (22) at (3.8,-3.9) {};
		
		\foreach \i in {1,3,5}
		{
			\pgfmathtruncatemacro\j{\i+1};
			\pgfmathtruncatemacro\k{\i+6};
			\pgfmathtruncatemacro\m{\i+7};
			\draw[thick] (\i) -- (\j);
			\draw[thick] (\k) -- (\m);
		}
		
		\foreach \i in {2,4,8,10}
		{
			\pgfmathtruncatemacro\j{\i+1};
			\draw (\i) -- (\j);
		}
		
		\foreach \i in {13,15,17,19,21}
		{
			\pgfmathtruncatemacro\j{\i+1};
			\draw[thick] (\i) -- (\j);	
		}
		
		\draw (1)--(7)--(2)--(8)--(3)--(9)--(4)--(10)--(5)--(11)--(6);
		\draw (3)--(13);
		\draw (4)--(19);
		\draw (13)--(4)--(14)--(15);
		\draw (14)--(19)--(15)--(20)--(16)--(21)--(17)--(22)--(18);
		\draw (16)--(17);
		\draw (20)--(21);
	\end{tikzpicture}
	\caption{Setting up the long edge gadget to transfer the color of square endpoints given in Figure~\ref{fig:3satCircuit} and Figure~\ref{fig:initial}.}
	\label{fig:transferringSegment}
\end{figure}
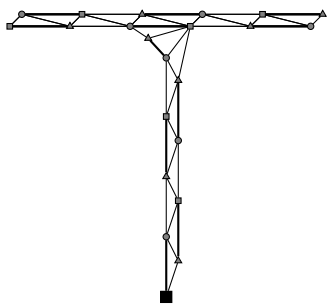

Three components given in Figure~\ref{fig:3satCircuit}, Figure~\ref{fig:initial}, and Figure~\ref{fig:transferringSegment}
guarantee that two colors representing true and false are transferred properly to the NAE3SAT clauses.
In a NAE3SAT instance, a clause should be not-all-equal i.e., in a clause at least one variable must be false, and at least one variable must be true.
Now, let us show that if a clause receives three variables with the same truth assignment, then the gadget is not 3-colorable.

Regardless of the number of variables, and number of clauses, the long vertical edges in Figure~\ref{fig:3satCircuit} never intersect.
This is because a variable transfers its color from different endpoints on the corresponding row, for every clause it appears in.
In Figure~\ref{fig:3satCircuit}, we see that the variables are connected to clauses via vertical edges these are close to each other.
In other words, every clause has its own area, and no two vertical edges intersect.
This allows us to design a gadget that is able to connect three variables, even if they are far apart.

\begin{figure}[htbp]
	\centering
	\begin{tikzpicture}[scale=1.75, 
		square/.style = {regular polygon, regular polygon sides=4, fill=gray, scale=2.2},
		triangle/.style = {regular polygon, regular polygon sides=3, fill=gray, scale=2.2}
		]
		
		\tikzstyle{every node}=[draw=black, fill=black, shape=circle, minimum size=2pt,inner sep=0pt];
		
		\foreach \i in {0,1,2,3}
		{	
			\node (\i1) at (\i*0.5,0.4) {}; 
			\node (\i0) at (\i*0.5 + 0.2,0.2) {};
		}

		\node (A) at (2,0.4) {};
		\node (B)  at (2,0) {};
		\node[square] (C) at (2.3,0.2) {};
		\node (D) at (2.8,0.2) {};
		\node (E) at (3,0.4) {};
		\node[square] (F) at (3,0.9) {};
		\node (G) at (3.2,0.2) {};
		\node[square] (H) at (3.7,0.2) {};
		\node (I) at (4,0.4) {};
		\node (J)  at (4,0) {};
		
		\foreach \i in {8,9,10,11}
		{	
			\node (\i1) at (\i*0.5+ 0.5,0.4) {}; 
			\node (\i0) at (\i*0.5 + 0.3,0.2) {};
		}
		
		\foreach \i in {0,2,8,10}
		{
			\pgfmathtruncatemacro\j{\i+1};
			\draw[very thick] (\i0)--(\j0);
			\draw[very thick] (\i1)--(\j1);
		}
		
		\draw[very thick] (A)--(B);		
		\draw[very thick] (C)--(D);
		\draw[very thick] (E)--(F);
		\draw[very thick] (G)--(H);
		\draw[very thick] (I)--(J);

		\node[square] (P) at (0.25, 0.6) {};
		\node (Q) at (0.25, 1.1) {};
		\node (U) at (2.7, 1.1) {};
		\node (V) at (3.3, 1.1) {};
		\node[square] (R) at (3,1.3) {};
		\node (S) at (3,1.8) {};
		\node[square] (X) at (5.75, 0.6) {};
		\node (Y) at (5.75, 1.1) {};
		
		\draw[very thick] (P)--(Q);
		\draw[very thick] (U)--(V);
		\draw[very thick] (R)--(S);
		\draw[very thick] (X)--(Y);	
		
		\node[fill=none, shape=rectangle, minimum width=1cm, minimum height=0.8cm] at (3,0.3) {};

		\foreach \i in {1,9}
		{	
			\pgfmathtruncatemacro\j{\i+1};
			\pgfmathtruncatemacro\k{\i-1};
			\draw (\i0)--(\i1);
			\draw (\j0)--(\j1);
			\draw (\k0)--(\k1);
			\draw (\i0)--(\j0);
			\draw (\i1)--(\j1);
		}
		
		\draw (00)--(01);
		\draw (00)--(11);
		\draw (00)--(01);
		\draw (10)--(21);
		\draw (20)--(31);
		\draw (30)--(31);
		
		\draw (81)--(90);
		\draw (91)--(100);
		\draw (101)--(110);
		\draw (111)--(110);
		
		\draw (01)--(P)--(11);
		\draw (B)--(30)--(A)--(31);
		\draw (A)--(C)--(B);
		\draw (U)--(R)--(V);
		\draw (U)--(F)--(V);
		\draw (J)--(H)--(I);
		
		\draw (J)--(80)--(I)--(81);
		\draw (101)--(X)--(111);

		\draw (D)--(E)--(G)--(D);
		\node[triangle, rotate=-90] at (D.center) {};
		\node[triangle, rotate=180] at (E.center) {};
		\node[triangle,rotate=90] at (G.center) {};

		\draw[thick, dotted] (0.2,1.2)--(0.2,2.2);
		\draw[thick, dotted] (0.3,1.2)--(0.3,2.2);
		\draw[thick, dotted] (2.95,1.9)--(2.95,2.2);
		\draw[thick, dotted] (3.05,1.9)--(3.05,2.2);
		\draw[thick, dotted] (5.7,1.2)--(5.7,2.2);
		\draw[thick, dotted] (5.8,1.2)--(5.8,2.2);
		
		\node[draw=none, fill=none] at (1.8,1.1) {\tiny color of $x$};
		\node[draw=none, fill=none] at (2.5,1.6) {\tiny color of $y$};
		\node[draw=none, fill=none] at (3.7,1.1) {\tiny color of $z$};
		\node[draw=none, fill=none] at (3,-0.15) {\tiny not-all-equal};
		\node[draw=none, fill=none] at (3,-0.25) {\tiny $x \vee y \vee z$};
		\draw [->] (1.8,1)--(2.275,0.25);
		\draw [->] (1.8,1)--(0.3,0.65);
		
		\draw [->] (3.7,1)--(3.7,0.3);
		\draw [->] (3.7,1)--(5.7,0.65);
		
		\draw [->] (2.5,1.5)--(2.9,1.35);
		\draw (2.5,1.5)--(2.5,1);
		\draw[->] (2.5,1)--(2.9,0.9);
	\end{tikzpicture}
	\caption{A NAE3SAT clause $x \vee y \vee z$. Assuming that the square-shaped nodes can receive only one of two colors (either the color that corresponds to \textsc{true}, or the color that corresponds to \textsc{false}); the three triangle-shaped nodes inside by the rectangular frame can be properly colored using three colors if and only if at least one of three square-shaped nodes have a different color than the other two.}
	\label{fig:naeclause}
\end{figure}
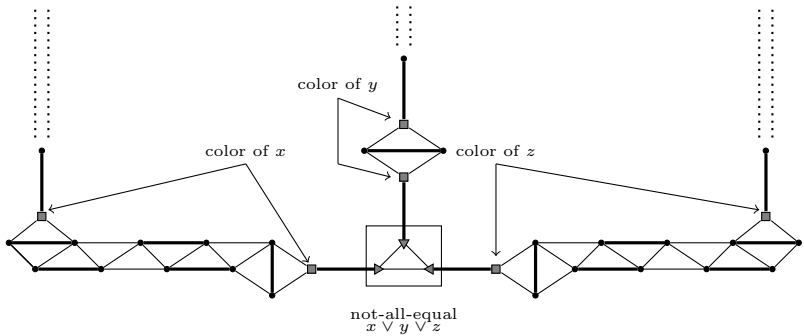

Up to this point, we showed how we model the long edges to transfer the colors.
However, given an instance of the Monotone NAE3SAT problem, there is no guarantee that the circuit can be drawn without edge crossings.
Note that the circuit given in Figure~\ref{fig:3satCircuit}, the vertical edges passes through several long edges.
Since our embedding is in 2D, we need a gadget which has no edge crossings, and also can be used to transfer the colors properly when two edges intersect in the circuit.

\subsection{Edge crossings}\label{sec:edgecrossings}

We now describe a certificate that transfers the colors safely in case of edge crossings.
The certificate that we use is similar to the certificate given by Gr{\"a}f et al. for the reduction of unit disk graph coloring problem (see Figure 4 in \cite{Graf_udgColoring}).

\begin{figure}[htbp]
	\centering
	\subfloat[The edge crossing gadget transferring the color from $a$ to $r$, and from $h$ to $o$.]{
		\centering
		\begin{tikzpicture}[scale=0.5]
			\tikzstyle{every node}=[draw, fill=gray, shape=circle, minimum size=2pt,inner sep=0pt]
			\node[label=35:$a$] (A) at (0,2) {}; 
			\node[label=$b$] (B) at (0,4) {}; 
			\node[label=left:$c$] (C) at (2,0) {}; 
			\node[label=100:$d$] (D) at (2,2) {}; 
			\node[label=80:$e$] (E) at (4,2) {}; 
			\node[label=$f$] (F) at (3,3) {}; 
			\node[label=100:$g$] (G) at (2,4) {}; 
			\node[label=$h$] (H) at (2,6) {}; 
			\node[label=100:$i$] (I) at (4,4) {}; 
			\node[label=$j$] (J) at (4,6) {}; 
			\node[label=100:$k$] (K) at (6,2) {}; 
			\node[label=$l$] (L) at (6,4) {}; 
			\node[label=100:$m$] (M) at (4,-2) {}; 
			\node[label=100:$n$] (N) at (4,0) {}; 
			\node[label=80:$o$] (O) at (6,-2) {}; 
			\node[label=80:$p$] (P) at (6,0) {}; 
			\node[label=$q$] (Q) at (8,4) {}; 
			\node[label=100:$r$] (R) at (8,2) {}; 
			
			\draw (A)--(B);
			\draw (A)--(C);
			\draw (A)--(D);
			
			\draw (B)--(G);
			\draw (B)--(H);
			
			\draw (C)--(D);
			\draw (C)--(N);
			
			\draw (D)--(E);
			\draw (D)--(F);
			\draw (D)--(G);
			
			\draw (E)--(F);
			\draw (E)--(K);
			\draw (E)--(N);
			\draw (E)--(I);
			
			\draw (F)--(G);
			\draw (F)--(I);
			
			\draw (G)--(H);
			\draw (G)--(I);
			
			\draw (H)--(J);
			
			\draw (I)--(J);
			\draw (I)--(L);
			
			\draw (J)--(L);
			
			\draw (K)--(L);
			\draw (K)--(N);
			\draw (K)--(P);
			\draw (K)--(Q);
			\draw (K)--(R);
			
			\draw (L)--(Q);
			
			\draw (M)--(P);
			\draw (M)--(O);
			\draw (M)--(N);
			
			\draw (N)--(P);
			
			\draw (O)--(P);
			
			\draw (Q)--(R);
			
		\end{tikzpicture}
		\label{fig:crossingSegments}
	}
	~\hfill~
	\subfloat[The segment embedding of (\textsc{e}) on a grid. Only endpoint that is not on the grid is $i$, which is at $(1.85,2.15)$.]{
		\centering
		\hspace{-0.5em}
		\begin{tikzpicture}[scale=1.5]
			\draw (0,0) to[grid with coordinates] (3.5,3.5);
			
			\tikzstyle{every node}=[draw, fill=black, shape=circle, minimum size=2pt,inner sep=0pt];
			
			\node[label={\footnotesize $a$}] (A) at (0,2.1) {}; 
			\node[label={\footnotesize $b$}] (B) at (0.6,2.9) {}; 
			
			\node[label=135:{\footnotesize $c$}] (C) at (0.4,1.3) {}; 
			\node[label=135:{\footnotesize $d$}] (D) at (0.6,1.7) {}; 
			
			\node[label=20:{\footnotesize $e$}] (E) at (1.4,1.5) {}; 
			\node[label=20:{\footnotesize $f$}] (F) at (1.4,2.1) {}; 
			
			\node[label=left:{\footnotesize $g$}] (G) at (1.2,2.5) {}; 
			\node[label=-10:{\footnotesize $h$}] (H) at (1.4,3.4) {}; 
			
			\node[label=100:{\footnotesize $i$}] (I) at (1.85,2.15) {}; 
			\node[label=above:{\footnotesize $j$}] (J) at (2.1,3) {}; 
			
			\node[label=0:{\footnotesize $k$}] (K) at (2.2,1.2) {}; 
			\node[label={\footnotesize $l$}] (L) at (2.2,2.2) {}; 
			
			\node[label=0:{\footnotesize $m$}] (M) at (1.3,0.4) {}; 
			\node[label=0:{\footnotesize $n$}] (N) at (1.3,0.9) {}; 
			
			\node[label=0:{\footnotesize $o$}] (O) at (2,0) {}; 
			\node[label=0:{\footnotesize $p$}] (P) at (2,0.6) {}; 
			
			\node[label={\footnotesize $q$}] (Q) at (2.7,1.4) {}; 
			\node[label={\footnotesize $r$}] (R) at (3.1,1.4) {}; 

			\draw[thick] (A)--(B);
			\draw[thick] (C)--(D);
			\draw[thick] (E)--(F);
			\draw[thick] (G)--(H);
			\draw[thick] (I)--(J);
			\draw[thick] (K)--(L);
			\draw[thick] (M)--(N);
			\draw[thick] (O)--(P);
			\draw[thick] (Q)--(R);

			\tikzstyle{every path}=[dashed, draw, color=red];
			
			\draw (A)--(C);
			\draw (A)--(D);
			
			\draw (B)--(G);
			\draw (B)--(H);
			
			\draw (C)--(N);
			
			\draw (D)--(E);
			\draw (D)--(F);
			\draw (D)--(G);
			
			\draw (E)--(K);
			\draw (E)--(N);
			\draw (E)--(I);
			
			\draw (F)--(G);
			\draw (F)--(I);
			
			\draw (G)--(I);
			
			\draw (H)--(J);
			
			\draw (I)--(L);
			
			\draw (J)--(L);
			
			\draw (K)--(N);
			\draw (K)--(P);
			\draw (K)--(Q);
			\draw (K)--(R);
			
			\draw (L)--(Q);
			
			\draw (M)--(P);
			\draw (M)--(O);
			
			\draw (N)--(P);
			
		\end{tikzpicture}
		\label{fig:crossingGadgetSeg}
	}
	
	\caption{Gadget that replaces edge crossings of the 3SAT circuit that corresponds to a unit disk segment visibility graph.}
	\label{fig:edgeCrossingsSeg}
\end{figure}
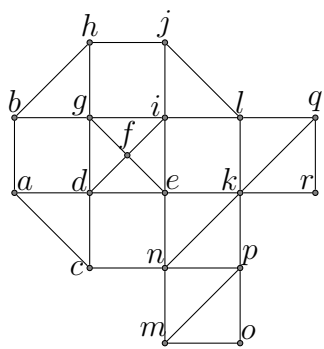
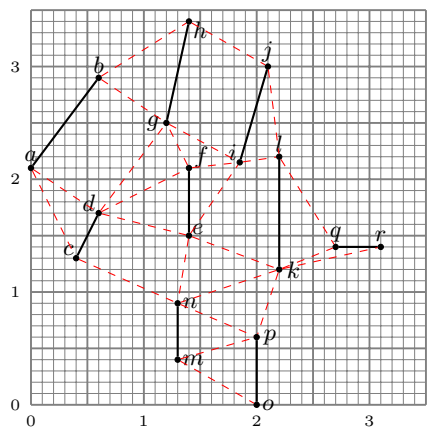

In Figure~\ref{fig:crossingSegments} and Figure~\ref{fig:crossingGadgetSeg}, we give the gadget to replace the edge crossings in the 3SAT circuit, and its embedding as line segments, respectively. As we have mentioned previously, the colors repeat.

\begin{clm} \label{2uniqueCol}
	There are exactly two distinct (proper) 3-colorings of the edge crossing gadget gadget given in Figure~\ref{fig:crossingSegments}, all of which require the vertex pairs $a, r$ and $h, o$ to receive the same color.
\end{clm}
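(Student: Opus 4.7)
The plan is to exploit the rich triangular structure of the gadget: many triples of vertices induce a $K_3$, and each such triangle forces its three vertices to receive all three colors. Once we pin down the colors of one triangle, a chain of adjacent triangles propagates most of the remaining colors deterministically, and the few remaining choices can be resolved by direct case analysis. Without loss of generality I will fix the starting triangle $\{a,c,d\}$ by setting $a=1$, $c=2$, $d=3$; any other proper colouring of this triangle is obtained by relabelling, which I treat as equivalent.

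First I propagate colours through the upper-left region. The triangles $\{d,e,f\}$ and $\{d,f,g\}$ jointly force $e=g$ and $\{e,f\}=\{1,2\}$, since both triangles avoid $d=3$ and share the vertex $f$. The triangle $\{f,g,i\}$ (equivalently $\{e,f,i\}$) then forces $i=3$. Finally, the triangle $\{b,g,h\}$ combined with the edge $ab$ (so $b\neq 1$) creates a bounded branching: either $g=1$ (hence $e=1$, $f=2$) with $b\in\{2,3\}$, or $g=2$ (hence $e=2$, $f=1$) with $b=3$ forced.

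Next I propagate through the right-hand portion. The chain of triangles $\{e,k,n\}$, $\{k,n,p\}$, $\{n,m,p\}$, $\{m,o,p\}$ yields successively $p=e$, $m=k$, and crucially $o=n$ (each identification coming from two overlapping triangles that force the ``third'' colour to coincide). The edge $cn$ further constrains $n$ via $n\neq c$. Symmetrically, the triangles $\{i,j,l\}$, $\{k,l,q\}$, $\{k,q,r\}$ force $r=l$, and the bridging edge $hj$ links $j$ to $h$ via $j\neq h$. Thus once $h$, $c$, and $e$ are set by the upper-left branching, the vertices $j,l,n,p,m,o,k,q,r$ are fully determined (or shown infeasible).

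Finally I enumerate the three branches. The branch $g=1$, $b=2$, $h=3$ gives $n=3$ (forced by $n\neq c=2$ and $n\neq e=1$), $p=1$, $m=k=2$, $o=3$, $j=2$, $l=1$, $q=3$, $r=1$: a valid colouring with $a=r=1$ and $h=o=3$. The branch $g=2$, $b=3$, $h=1$ propagates (after a short sub-check ruling out $n=3$ via the constraint $k\neq l$ in the triangle $\{k,l,q\}$) to $n=1$, $p=2$, $k=m=3$, $o=1$, $j=2$, $l=1$, $q=2$, $r=1$, again valid with $a=r=1$ and $h=o=1$. The remaining branch $g=1$, $b=3$, $h=2$ collapses: it forces $j=1$, $l=2$, $n=3$, $p=1$, whereupon the vertex $k$ is required to differ from $e=1$, $l=2$, $n=3$, $p=1$ -- i.e.\ from all three colours, which is impossible. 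Thus exactly two proper $3$-colourings survive (up to the initial permutation of $\{1,2,3\}$), and both satisfy $a=r$ and $h=o$. The main obstacle is purely organisational: tracking several propagation chains simultaneously and identifying exactly which triangle rules out the infeasible branch; each individual propagation step is routine.
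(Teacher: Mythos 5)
Your proof is correct and follows essentially the same route as the paper's: fix one triangle up to permutation, exploit the wheel on $\{d,e,f,g,i\}$ to force $e=g$ and $i$, and then propagate deterministically through overlapping triangles (notably $\{e,k,n\},\{k,n,p\},\{n,m,p\},\{m,o,p\}$ and $\{i,j,l\},\{k,l,q\},\{k,q,r\}$) to conclude $o=n$ and $r=l$, hence $a=r$ and $h=o$. The only cosmetic difference is where the case split happens --- you branch at $\{b,g,h\}$ into three cases and kill one via the stuck vertex $k$, while the paper branches at $\{e,k,n\}$ directly into the two surviving colourings --- and your two surviving assignments match the paper's Figures of Case 1 and Case 2 exactly.
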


\begin{proof}
	Observe that there is a unique coloring of the induced subgraph $G'$ by the vertex subset $\{d, e, f, g, i\}$ up to a permutation of colors. Since $f$ is adjacent to all vertices in $G'$, it has to obtain a unique color $c_g$. Then, coloring any of $\{d, e, g, i\}$ uniquely determines the colors of the rest of these vertices where $e$ and $g$ get the same color $c_b \neq c_g$, and $d$ and $i$ get the same color $c_r \neq c_g, c_b$. We continue by choosing any of the triangles $\{e, k, n\}$, $\{i, j, l\}$, $\{a, c, d\}$  and $\{b, g, h\}$ whose one vertex is already colored by the coloring of $G'$. Assume that we choose $\{e, k, n\}$ where $e$ has the color $c_b$. There are two cases which may occur. 
	
	The first case is when $k$ is colored with $c_g$ and $n$ is colored with $c_r$.
	In this case, the color of every uncolored vertex is determined and as in Figure~\ref{fig:case1}. Note that this case is the same as the following cases: \emph{i)} choosing the triangle $\{i, j, l\}$ and coloring $j$ with $c_g$ and $l$ with $c_b$, \emph{ii)} choosing the triangle $\{a, c, d\}$ and coloring $a$ with $c_b$ and $c$ with $c_g$, and \emph{iii)} choosing the triangle $\{b, g, h\}$ and coloring $b$ with $c_g$ and $h$ with $c_r$.
	
	The second case is when $k$ is colored with $c_r$ and $n$ is colored with $c_g$.
	In this case, the color of every uncolored vertex is determined and as in Figure~\ref{fig:case2} (b). Note that this case is the same as the following cases: \emph{i)} choosing the triangle $\{i, j, l\}$ and coloring $j$ with $c_b$ and $l$ with $c_g$, \emph{ii)} choosing the triangle $\{a, c, d\}$ and coloring $a$ with $c_g$ and $c$ with $c_b$, and \emph{iii)} choosing the triangle $\{b, g, h\}$ and coloring $b$ with $c_r$ and $h$ with $c_g$.

	In both cases, the color of $a$ is the same as the color of $r$, and the color of $h$ is the same as the color of $o$. Therefore, our gadget transfers the color from $a$ to $r$ and the color from $h$ to $o$ correctly. Moreover, in case 1, $a$ and $h$ are assigned different colors while in case 2 they are assigned the same color. Therefore, when $a$ and $h$ are forced to be true or false and not neutral, our gadget corresponds to all possible truth assignments for these vertices. 
	In Figure~\ref{fig:twoCases}, these two unique colorings are demonstrated.
	
\end{proof}

\begin{figure}[h]
	\centering
	\captionsetup[subfigure]{position=b}
	\subfloat[Case 1 after the coloring of $G'$ is fixed.]{
		\centering
		\begin{tikzpicture}[scale=0.6,
			square/.style = {regular polygon, regular polygon sides=4, scale=4},
			circle/.style = {shape=circle, scale=1.8}]
			\tikzstyle{every node}=[draw, color=black, shape=circle, minimum size=3pt,inner sep=0pt]
			
			\node[label=35:$a$, square, fill=blue] (A) at (0,2) {}; 
			\node[label=$b$, circle, fill=green] (B) at (0,4) {}; 
			\node[label=left:$c$, circle, fill=green] (C) at (2,0) {}; 
			\node[label=100:$d$, circle, fill=red] (D) at (2,2) {}; 
			\node[label=80:$e$, circle, fill=blue] (E) at (4,2) {}; 
			\node[label=$f$, circle, fill=green] (F) at (3,3) {}; 
			\node[label=100:$g$, circle, fill=blue] (G) at (2,4) {}; 
			\node[label=$h$, square, fill=red] (H) at (2,6) {}; 
			\node[label=100:$i$, circle, fill=red] (I) at (4,4) {}; 
			\node[label=$j$, circle, fill=green] (J) at (4,6) {}; 
			\node[label=100:$k$, circle, fill=green] (K) at (6,2) {}; 
			\node[label=$l$, circle, fill=blue] (L) at (6,4) {}; 
			\node[label=100:$m$, circle, fill=green] (M) at (4,-2) {}; 
			\node[label=100:$n$, circle, fill=red] (N) at (4,0) {}; 
			\node[label=80:$o$, square, fill=red] (O) at (6,-2) {}; 
			\node[label=80:$p$, circle, fill=blue] (P) at (6,0) {}; 
			\node[label=$q$, circle, fill=red] (Q) at (8,4) {}; 
			\node[label=100:$r$, square, fill=blue] (R) at (8,2) {}; 
			
			\draw (A)--(B);
			\draw (A)--(C);
			\draw (A)--(D);
			
			\draw (B)--(G);
			\draw (B)--(H);
			
			\draw (C)--(D);
			\draw (C)--(N);
			
			\draw (D)--(E);
			\draw (D)--(F);
			\draw (D)--(G);
			
			\draw (E)--(F);
			\draw (E)--(K);
			\draw (E)--(N);
			\draw (E)--(I);
			
			\draw (F)--(G);
			\draw (F)--(I);
			
			\draw (G)--(H);
			\draw (G)--(I);
			
			\draw (H)--(J);
			
			\draw (I)--(J);
			\draw (I)--(L);
			
			\draw (J)--(L);
			
			\draw (K)--(L);
			\draw (K)--(N);
			\draw (K)--(P);
			\draw (K)--(Q);
			\draw (K)--(R);
			
			\draw (L)--(Q);
			
			\draw (M)--(P);
			\draw (M)--(O);
			\draw (M)--(N);
			
			\draw (N)--(P);
			
			\draw (O)--(P);
			
			\draw (Q)--(R);
			
			
		\end{tikzpicture}
		\label{fig:case1}
	}
	~\hfill~
	\subfloat[Case 2 after the coloring of $G'$ is fixed.]{
		\centering
		\begin{tikzpicture}[scale=0.6,
			square/.style = {regular polygon, regular polygon sides=4, scale=4},
			circle/.style = {shape=circle, scale=1.8}]
			\tikzstyle{every node}=[draw, color=black, shape=circle, minimum size=3pt,inner sep=0pt]
			
			\node[label=35:$a$, square, fill=green] (A) at (6,2) {}; 
			\node[label=$b$, circle, fill=red] (B) at (6,4) {}; 
			\node[label=left:$c$, circle, fill=blue] (C) at (8,0) {}; 
			\node[label=100:$d$, circle, fill=red] (D) at (8,2) {}; 
			\node[label=80:$e$, circle, fill=blue] (E) at (10,2) {}; 
			\node[label=$f$, circle, fill=green] (F) at (9,3) {}; 
			\node[label=100:$g$, circle, fill=blue] (G) at (8,4) {}; 
			\node[label=$h$, square, fill=green] (H) at (8,6) {}; 
			\node[label=100:$i$, circle, fill=red] (I) at (10,4) {}; 
			\node[label=$j$, circle, fill=blue] (J) at (10,6) {}; 
			\node[label=100:$k$, circle, fill=red] (K) at (12,2) {}; 
			\node[label=$l$, circle, fill=green] (L) at (12,4) {}; 
			\node[label=100:$m$, circle, fill=red] (M) at (10,-2) {}; 
			\node[label=100:$n$, circle, fill=green] (N) at (10,0) {}; 
			\node[label=80:$o$, square, fill=green] (O) at (12,-2) {}; 
			\node[label=80:$p$, circle, fill=blue] (P) at (12,0) {}; 
			\node[label=$q$, circle, fill=blue] (Q) at (14,4) {}; 
			\node[label=100:$r$, square, fill=green] (R) at (14,2) {}; 
			
			\draw (A)--(B);
			\draw (A)--(C);
			\draw (A)--(D);
			
			\draw (B)--(G);
			\draw (B)--(H);
			
			\draw (C)--(D);
			\draw (C)--(N);
			
			\draw (D)--(E);
			\draw (D)--(F);
			\draw (D)--(G);
			
			\draw (E)--(F);
			\draw (E)--(K);
			\draw (E)--(N);
			\draw (E)--(I);
			
			\draw (F)--(G);
			\draw (F)--(I);
			
			\draw (G)--(H);
			\draw (G)--(I);
			
			\draw (H)--(J);
			
			\draw (I)--(J);
			\draw (I)--(L);
			
			\draw (J)--(L);
			
			\draw (K)--(L);
			\draw (K)--(N);
			\draw (K)--(P);
			\draw (K)--(Q);
			\draw (K)--(R);
			
			\draw (L)--(Q);
			
			\draw (M)--(P);
			\draw (M)--(O);
			\draw (M)--(N);
			
			\draw (N)--(P);
			
			\draw (O)--(P);
			
			\draw (Q)--(R);
			
		\end{tikzpicture}
		\label{fig:case2}
	}
	\caption{Exactly two possible colorings of the edge crossing gadget (up to a permutation of colors).}
	\label{fig:twoCases}
\end{figure}
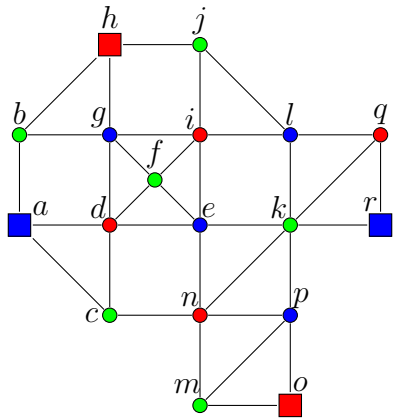
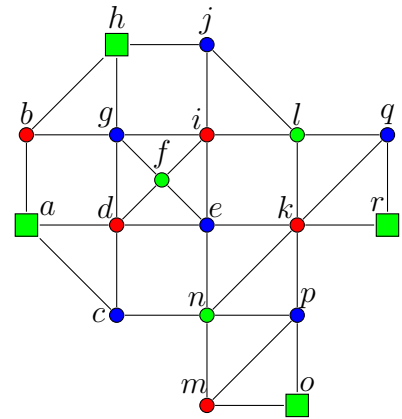

\subsection{The reduction of NP-hardness}
Given an instance of the Monotone NAE3SAT problem, we construct a circuit where every boolean variable is a wire placed horizontally in the plane, on a different row.
The clauses are placed on the bottom-most row. 
The truth assignment of the variables are transferred to the clauses via vertical wires, one end connected to the horizontal wire, other end connected to the clause (See Figure~\ref{fig:3satCircuit}).

We replace the wires in the described construction by a series of line segments such that the set of segments on a wire has a unique 3-coloring. Three main components of our reduction described in detail can be summarized as follows.

\begin{enumerate}
	\item \emph{A long edge} shown in Figure~\ref{fig:longEdge} is used to transfer a color from one end to the other (similar to transferring the truth assignment of a variable). The horizontal line segments are placed on two rows in the Euclidan plane. This configuration, no matter how long, always has a unique 3-coloring. Thus, by assigning a color to any endpoint, one automatically decides which color to appear in the clause gadget.
	
	\item \emph{A clause gadget} shown in Figure~\ref{fig:naeclause} is modeled by three line segments $xx'$, $yy'$, and $zz'$, not allowing three variables to have the same truth assignment. In our case, the transferred colors. 
	One endpoint from each segment, $x', y', z'$ yield a $K_3$.
	The remaining endpoints, $x, y, z$ do not see each other.
	The color of $x$ is transferred using long edges, and since $x'$ is the other endpoint, the color of $x'$ must be different.
	The same rule applies to $y$ and $z$ as well.
	Thus, if all $x, y, z$ have the same color, then this clause gadget cannot be 3-colored.
	
	\item \emph{An edge crossing gadget} shown in Figure~\ref{fig:crossingSegments} describes a certificate for an edge crossing in the circuit so that it can be realized as a set of non-intersecting line segments.		
\end{enumerate}

The truth assignments of the variables are determined by a pair of colors.
In this case, every square endpoint on a wire should receive one of these two colors.
We guarantee this by setting up the gadget shown in Figure~\ref{fig:initial}.
The color of the triangle vertex is the ``neutral'' color, which means that the remaining two colors represents ``true'' and ``false'' for the variables.
Thus, by picking a color for the triangle, we enforce every variable to transfer either true or false.
Important part is that a variable connects to a clause only by one of the squares.
Therefore, the truth assignment of a variable is transferred to the clause by the color of a square endpoint.

Given a Monotone NAE3SAT formula with $m$ clauses $C_1, \dots, C_m$ and $n$ variables $q_1, \dots, q_n$, we construct the corresponding unit disk segment visibility graph $G$ as follows:	
\begin{itemize}
	\item For each variable $q_i$, add a vertex $v_i$ to $G$ together with a long horizontal edge $H_i$ transferring its color.
	\item For each clause $C_i$ and each variable $q_j$ in $C_i$, add a triangle $T_i$ to $G$ together with a long vertical edge $V_j$ transferring the color of $v_j$. 
	\item For each $V_j$ crossing a $H_i$, add an edge crossing gadget (certificate) to $G$ replacing the vertices in the intersection $V_j \cup H_i$.	
\end{itemize}

Since we have shown that our gadgets interpret a given Monotone NAE3SAT formula and transfer colors correctly, the constructed unit disk segment visibility graph is 3-colorable if and only if the given Monotone NAE3SAT formula has a satisfying assignment.

\subsubsection{The time and space complexity.}
For $n$ variables and $m$ clauses, the number of segments on a vertical long ``wire'' is at most $O(m)$, since the colors repeat every three endpoints. 
Since there are $n$ variables, total number of segments for vertical edges is at most $O(nm)$.
For every edge crossing, and for every clause, a constant number of line segments is needed.
In the worst case, there will be $O(nm)$ edge crossings, hence $O(nm)$ line segments for a constant $c$.
There are $m$ clauses, and thus $O(m)$ edges are needed for the clauses.
In total, $O(m + nm + nm + m) = O(m + nm)$ segments are enough to model a NAE3SAT instance with $n$ variables and $m$ clauses. 

It is trivial to see that the configurations given in Figures~\ref{fig:longEdge} takes up polynomial space.
The edge crossing gadget given in Figure~\ref{fig:crossingSegments} has an embedding with polynomially many digits which can be verified by the coordinate system given in Figure~\ref{fig:crossingGadgetSeg}.
Notice that when a horizontal edge and a vertical edge cross, because of the embedding, two ends of the edge crossing gadget have slightly different $y$-coordinates for the endpoints of the horizontal segments ($a$ and $r$ in Figure~\ref{fig:crossingGadgetSeg}) and slightly different $x$-coordinates for the endpoints of the vertical segments ($h$ and $o$ in Figure~\ref{fig:crossingGadgetSeg}).
At first, it seems like the total space that the gadget uses will grow with respect to the number of edge crossings.
However, this is not the case since we can simply use a pair of diagonal segments to shift the position of the upcoming horizontal (resp. vertical) segments back to the initial $y$-coordinate (resp. $x$-coordinate). 
Because of the repeating color patterns,  the distance between a pair of adjacent variables can be adjusted such that the crossings have enough space to be embedded in.

As we proved the correctness of our reduction and showed that it is a polynomial-time reduction, the theorem holds. Since the Monotone NAE3SAT problem is NP-complete \cite{Schaefer_complexitySAT}, the 3-coloring problem for unit disk segment visibility graphs is also NP-complete.

Since the Monotone NAE3SAT problem is NP-complete \cite{Schaefer_complexitySAT}, the 3-coloring problem for unit disk segment visibility graphs is also NP-complete by Theorem~\ref{thm:main}.	

\begin{rem}
	Since the 3-coloring problem for unit disk graphs \cite{Graf_udgColoring} is NP-complete, it is also NP-complete for unit disk point visibility graphs by Lemma~\ref{lem:UnitUDVG}. For an alternative reduction to \cite{Graf_udgColoring}, the mentioned gadgets can be utilized with small modifications.
\end{rem}

\subsubsection{An example embedding}

We show an example embedding of a single clause $(q \vee s \vee t)$.
Note that the figures given throughout Section~\ref{sec:3coloringSegment} are to describe the idea behind the proof.
In an actual embedding, we might need some supplementary segments to transfer the colors properly.
In Figure~\ref{fig:exampleSegmentEmb}, there are some extra segments around the edge crossing gadgets.
These segments have no function other than transferring the last seen color.
This can also be done by altering the length of each segment.

In Figure~\ref{fig:crossingZoom}, we show a zoom-in view of a crossing which appear in the example embedding given in Figure~\ref{fig:exampleSegmentEmb}.
The shaded area is the edge crossing gadget described in Figure~\ref{fig:crossingSegments}.
The bold lines denote the segments, and the thin, red lines denote how a color from a long edge is transferred to and from the edge crossing gadget.

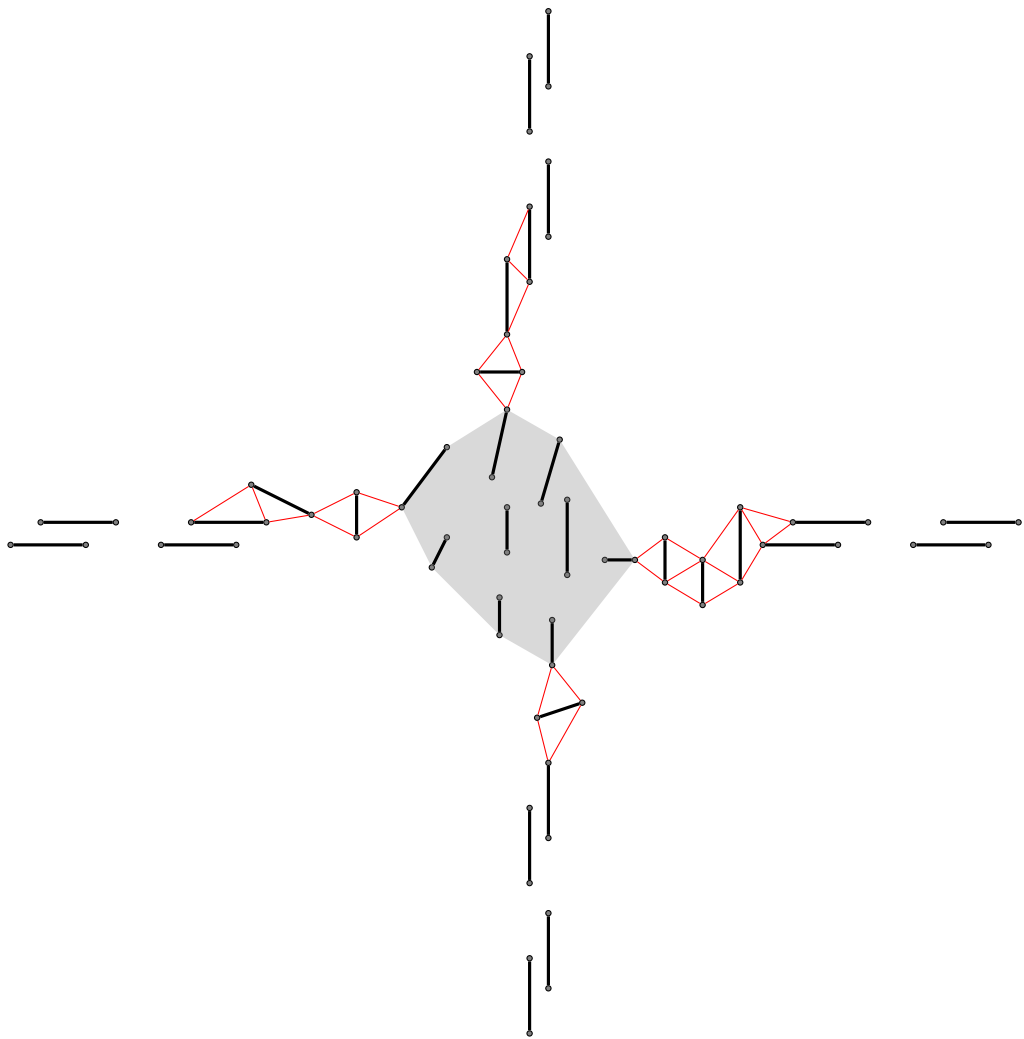
\begin{figure}[htbp]
	\centering
	\begin{tikzpicture}[RED/.style = {fill=red},
		BLUE/.style = {fill=blue},
		GREEN/.style = {fill=green}]
		\tikzstyle{every node}=[draw=black, fill=gray, shape=circle, minimum size=2pt,inner sep=0pt];
		\node (A) at (0,0.7) {}; 
		\node (B) at (0.6,1.5) {}; 
		
		\node (C) at (0.4,-0.1) {}; 
		\node (D) at (0.6,0.3) {}; 
		
		\node (E) at (1.4,0.1) {}; 
		\node (F) at (1.4,0.7) {}; 
		
		\node (G) at (1.2,1.1) {}; 
		\node (H) at (1.4,2) {}; 
		\node (I) at (1.85,0.75) {}; 
		\node (J) at (2.1,1.6) {}; 
		
		\node (K) at (2.2,-0.2) {}; 
		\node (L) at (2.2,0.8) {}; 
		
		\node (M) at (1.3,-1) {}; 
		\node (N) at (1.3,-0.5) {}; 
		
		\node (O) at (2,-1.4) {}; 
		\node (P) at (2,-0.8) {}; 
		
		\node (Q) at (2.7,0) {}; 
		\node (R) at (3.1,0) {}; 
		
		\node (L1) at (-2,1) {};
		\node (L2) at (-1.2,0.6) {};
		
		\node (L3) at (-0.6,0.9) {};
		\node (L4) at (-0.6,0.3) {};
		
		\node (R1) at (3.5,0.3) {};
		\node (R2) at (3.5,-0.3) {};
		
		\node (R3) at (4,0) {};
		\node (R4) at (4,-0.6) {};
		
		\node (R5) at (4.5,0.7) {};
		\node (R6) at (4.5,-0.3) {};
		
		\node (U1) at (1,2.5) {};
		\node (U2) at (1.6,2.5) {};
		
		\node (U3) at (1.4,3) {};
		\node (U4) at (1.4,4) {};
		
		\node (D1) at (1.8,-2.1) {};
		\node (D2) at (2.4,-1.9) {};

		\fill[gray,opacity=0.3] (A.center)--(B.center)--(H.center)--(J.center)--(R.center)--(O.center)--(M.center)--(C.center)--(A.center);
		
		\draw[very thick] (A)--(B);
		\draw[very thick] (C)--(D);
		\draw[very thick] (E)--(F);
		\draw[very thick] (G)--(H);
		\draw[very thick] (I)--(J);
		\draw[very thick] (K)--(L);
		\draw[very thick] (M)--(N);
		\draw[very thick] (O)--(P);
		\draw[very thick] (Q)--(R);
		\draw[very thick] (L1)--(L2);
		\draw[very thick] (L3)--(L4);
		\draw[very thick] (R1)--(R2);
		\draw[very thick] (R3)--(R4);
		\draw[very thick] (R5)--(R6);
		\draw[very thick] (D1)--(D2);
		\draw[very thick] (U1)--(U2);
		\draw[very thick] (U3)--(U4);
		
		\begin{scope}[shift={(-4,0.2)}]
			\node (H1) at (0.8,0) {};
			\node (H1') at (1.8,0) {};
			\node (H2) at (1.2,0.3) {};
			\node (H2') at (2.2,0.3) {};
		\end{scope}
		
		\begin{scope}[shift={(-6,0.2)}]
			\node (x) at (0.8,0) {};
			\node (y) at (1.8,0) {};
			\node (z) at (1.2,0.3) {};
			\node (t) at (2.2,0.3) {};
			\draw[very thick] (x)--(y);
			\draw[very thick] (z)--(t);
		\end{scope}
		
		\begin{scope}[shift={(4,0.2)}]
			\node (H3) at (0.8,0) {};
			\node (H3') at (1.8,0) {};
			\node (H4) at (1.2,0.3) {};
			\node (H4') at (2.2,0.3) {};
		\end{scope}
		
		\begin{scope}[shift={(6,0.2)}]
			\node (x) at (0.8,0) {};
			\node (y) at (1.8,0) {};
			\node (z) at (1.2,0.3) {};
			\node (t) at (2.2,0.3) {};
			\draw[very thick] (x)--(y);
			\draw[very thick] (z)--(t);
		\end{scope}
		
		\begin{scope}[shift={(1.7,6)}]
			\node (V1) at (0,-1.3) {};
			\node (V1') at (0,-2.3) {};
			\node (V2) at (0.25,-0.7) {};
			\node (V2') at (0.25,-1.7) {};
		\end{scope}
		
		\begin{scope}[shift={(1.7,8)}]
			\node (x) at (0,-1.3) {};
			\node (y) at (0,-2.3) {};
			\node (z) at (0.25,-0.7) {};
			\node (t) at (0.25,-1.7) {};
			\draw[very thick] (x)--(y);
			\draw[very thick] (z)--(t);
		\end{scope}
		
		\begin{scope}[shift={(1.7,-2)}]
			\node (V3) at (0,-1.3) {};
			\node (V3') at (0,-2.3) {};
			\node (V4) at  (0.25,-0.7){};
			\node (V4') at (0.25,-1.7) {};
		\end{scope}
		
		\begin{scope}[shift={(1.7,-4)}]
			\node (x) at (0,-1.3) {};
			\node (y) at (0,-2.3) {};
			\node (z) at (0.25,-0.7) {};
			\node (t) at (0.25,-1.7) {};
			\draw[very thick] (x)--(y);
			\draw[very thick] (z)--(t);
		\end{scope}

		\draw[very thick] (H1)--(H1');
		\draw[very thick] (H2)--(H2');
		\draw[very thick] (H3)--(H3');
		\draw[very thick] (H4)--(H4');
		
		\draw[very thick] (V1)--(V1');
		\draw[very thick] (V2)--(V2');
		\draw[very thick] (V3)--(V3');
		\draw[very thick] (V4)--(V4');
		
		\tikzstyle{every path}=[draw=red];
		
		\draw (L2)--(H2')--(L1)--(H2);
		\draw (L2)--(L3)--(A)--(L4)--(L2);
		\draw (V1)--(U4)--(V1')--(U3);
		\draw (H)--(U1)--(U3)--(U2)--(H);
		
		\draw (H3)--(H4)--(R5)--(H3)--(R6);
		\draw (R1)--(R3)--(R2)--(R4)--(R6)--(R3)--(R5);
		\draw (R1)--(R)--(R2);
		
		\draw (O)--(D1)--(V4)--(D2)--(O);
		
	\end{tikzpicture}
	\caption{An embedding of an edge crossing gadget zoomed in.}
	\label{fig:crossingZoom}
\end{figure}

\begin{figure}
	\vspace{-6em}
	\hspace{-6em}
	\begin{tikzpicture} [scale=0.42,
		RED/.style = {fill=red},
		BLUE/.style = {fill=blue},
		GREEN/.style = {fill=green},
		]
		
		\tikzstyle{every node}=[draw=black, fill=gray, shape=circle, minimum size=2pt,inner sep=0pt];
		
		\node[GREEN] (A) at (-2,3.1) {};
		\node[BLUE] (B) at (-1,3.1) {};
		\node[RED] (C) at (-1.4,2.7) {};
		\node[GREEN] (D) at (-0.4, 2.7) {};
		\draw[thick] (A)--(B);
		\draw[thick] (C)--(D);
		
		\vedgeL{(0.2,4)}{38}	
		\hedgeT{(0,2.5)}{40}
		
		\vedgeCT{(9.7,2.2)}{8}
		\vedgeCF{(21.7,-21.8)}{8}
		\vedgeCF{(33.7,-33.8)}{10}
		
		\vedgeT{(9.7,-11.7)}{6}
		\vedgeT{(9.7,-23.7)}{6}
		\vedgeT{(9.7,-35.7)}{8}
		\vedgeL{(22,-35.7)}{6}
		
		\hedgeT{(0,-9.5)}{6}
		\hedgeT{(12,-9.5)}{28}
		\hedgeF{(0,-21.5)}{6}
		\hedgeF{(12,-21.5)}{28}
		\hedgeF{(0,-33.5)}{6}
		\hedgeF{(12,-33.5)}{6}
		\hedgeF{(24,-33.5)}{16}
		\crossing{(8,-9.7)}{1}
		\crossing{(8,-21.7)}{1}
		\crossing{(8,-33.7)}{1}
		\crossing{(20,-33.7)}{1}
		
		\hedgeL{(8.6,-44.8)}{10}
		\hedgeF{(24.3,-44.8)}{10}
		
		\nae{(22,-45)}
		
	\end{tikzpicture}
	\caption{An example embedding and coloring of a single clause $(q \vee s \vee t)$ using line segments.}
	\label{fig:exampleSegmentEmb}
\end{figure}
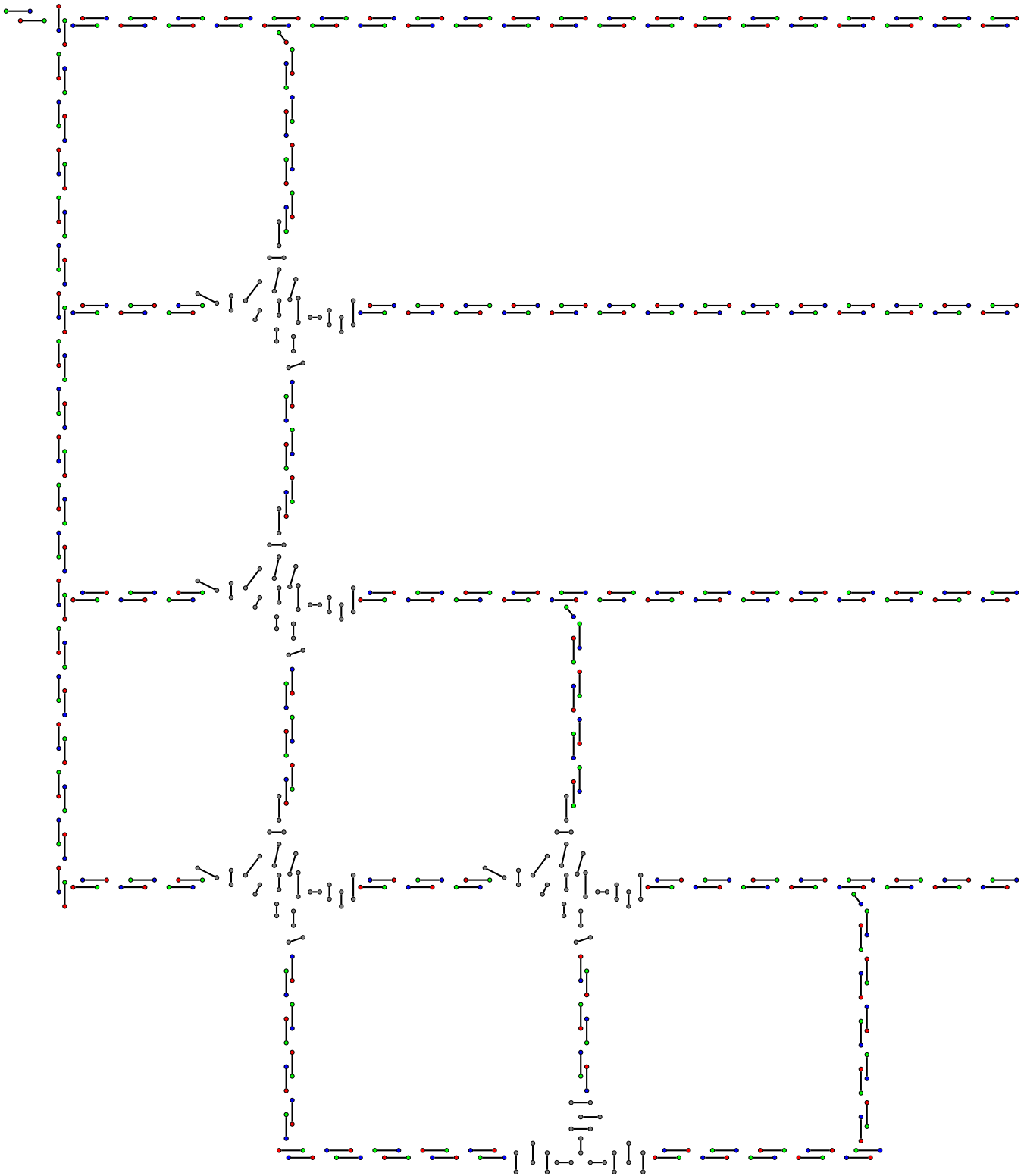

\begin{rem}
	K\'{a}ra et al. showed that there are exactly five cases when the visibility graph of a set of points is 3-colorable (see Figure 3 in \cite{Kara_pointVisChromatic}).
	However, note that our model considers also the Euclidean distances and thus this particular result does not apply to our case when the set of points are not bounded by a circle of diameter 1.
	
	The NP-completeness reduction for unit disk point visibility graphs are straightforward from Gr\"{a}f et al.'s proof of 3-colorability of unit disk graphs \cite{Graf_udgColoring}.	
\end{rem}

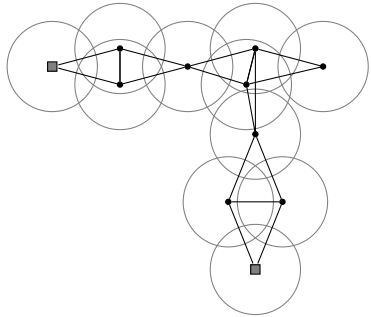
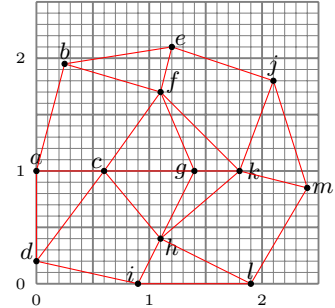
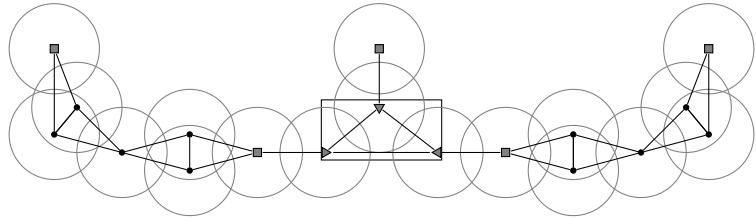
\begin{figure}[tp]
	\centering
	\captionsetup[subfigure]{position=b}
	\subfloat[Long edges for unit disk point visibility graphs.]{
		\centering
		\begin{tikzpicture}[scale=0.6,
			square/.style = {regular polygon, regular polygon sides=4, fill=gray, scale=2.5}	
			]
			\tikzstyle{every node}=[draw=black, fill=black, shape=circle, minimum size=2pt,inner sep=0pt];

			\coordinate (1) at (0,1.5);
			\coordinate (2) at (1.5,1.9);
			\coordinate (3) at (1.5,1.1);
			\coordinate (4) at (3,1.5);
			\coordinate (5) at (4.5,1.9);
			\coordinate (6) at (4.3,1.1);
			\coordinate (7) at (4.5,0);
			\coordinate (8) at (6,1.5);
			\coordinate (9) at (3.9,-1.5);
			\coordinate (10) at (5.1,-1.5);
			\coordinate (11) at (4.5,-3);

			\foreach \i in {2,...,10}
			{
				\node (\i) at (\i) {};
				\draw[gray] (\i) circle (1cm);
			}
			\draw[gray] (1) circle (1cm);
			\draw[gray] (11) circle (1cm);
			
			\node[square] (1) at (1) {};
			\node[square] (11) at (11) {};
			
			\draw (1)--(2)--(3)--(1);
			\draw (2)--(3)--(4)--(2);
			\draw (4)--(5)--(6)--(4);
			\draw (6)--(7)--(5);
			\draw (5)--(6)--(8)--(5);
			\draw (7)--(9)--(10)--(7);
			\draw (9)--(11)--(10);

		\end{tikzpicture}
		\label{fig:longEdgePt}
	}
	\hfill~\hfill
	\subfloat[The point configuration to be replaced with edge crossings which transfers the color from $a$ to $m$, and from $e$ to $i$.]{
		\centering
		\hspace{-0.5em}
		\begin{tikzpicture}[scale=1.5]
			\draw (0,0) to[grid with coordinates] (2.5,2.5);
			
			\tikzstyle{every node}=[draw, fill=black, shape=circle, minimum size=2pt,inner sep=0pt];
			
			\node[label={\footnotesize $a$}] (A) at (0,1) {}; 
			\node[label={\footnotesize $b$}] (B) at (0.25,1.95) {}; 
			\node[label=135:{\footnotesize $c$}] (C) at (0.6,1) {}; 
			\node[label=135:{\footnotesize $d$}] (D) at (0,0.2) {}; 
			\node[label=20:{\footnotesize $e$}] (E) at (1.2,2.1) {}; 
			\node[label=20:{\footnotesize $f$}] (F) at (1.1,1.7) {}; 
			\node[label=left:{\footnotesize $g$}] (G) at (1.4,1) {}; 
			\node[label=-10:{\footnotesize $h$}] (H) at (1.1,0.4) {}; 
			\node[label=100:{\footnotesize $i$}] (I) at (0.9,0) {}; 
			\node[label=above:{\footnotesize $j$}] (J) at (2.1,1.8) {}; 
			\node[label=0:{\footnotesize $k$}] (K) at (1.8,1) {}; 
			\node[label={\footnotesize $l$}] (L) at (1.9,0) {}; 
			\node[label=0:{\footnotesize $m$}] (M) at (2.4,0.85) {}; 

			\tikzstyle{every path}=[draw, color=red];
			
			\draw (A)--(B);
			\draw (A)--(C);
			\draw (A)--(D);
			
			\draw (B)--(E);
			\draw (B)--(F);
			
			\draw (C)--(D);
			\draw (C)--(H);
			\draw (C)--(G);
			\draw (C)--(F);
			
			\draw (D)--(I);
			
			\draw (E)--(F);
			\draw (E)--(J);
			
			\draw (F)--(G);
			\draw (F)--(K);
			
			\draw (G)--(H);
			\draw (G)--(K);
			
			\draw (H)--(I);
			\draw (H)--(L);	
			\draw (H)--(K);		
			
			\draw (I)--(L);
			
			\draw (J)--(K);
			\draw (J)--(M);
			
			\draw (K)--(M);
			
			\draw (L)--(M);
		\end{tikzpicture}
		\label{fig:crossingGadgetPt}
	}
	
	\subfloat[Monotone NAE3SAT clause for a unit disk point visibility graph.]{
		\centering
		\begin{tikzpicture}[scale=0.6, 
			square/.style = {regular polygon, regular polygon sides=4, fill=gray, scale=2.2},
			triangle/.style = {regular polygon, regular polygon sides=3, fill=gray, scale=2.2}
			]
			\tikzstyle{every node}=[draw, fill=black, shape=circle, minimum size=2pt,inner sep=0pt]
			\node[fill=none, shape=rectangle, minimum width=1.6cm, minimum height=0.8cm] at (8.75,2) {};	
			
			\node[square]  (1) at (1.5,3.8) {};
			\node (2) at (1.5,1.9) {};
			\node (3) at (2,2.5) {};
			\node (4) at (3,1.5) {};
			\node (5) at (4.5,1.9) {};
			\node (6) at (4.5,1.1) {};
			\node[square] (7) at (6,1.5) {};
			\node[triangle, rotate=-90] (8) at (7.5,1.5) {};
			\node[triangle, rotate=180] (9) at (8.7,2.5) {};
			\node[triangle, rotate=90] (10) at (10,1.5) {};
			\node[square] (11) at (11.5,1.5) {};
			\node (12) at (13,1.9) {};
			\node (13) at (13,1.1) {};
			\node (14) at (14.5, 1.5) {};
			\node (15) at (15.5, 2.5) {}; 
			\node (16) at (16, 1.9) {};
			\node[square] (17) at (16, 3.8) {};
			\node[square] (18) at (8.7,3.8) {};

			\foreach \i in {1,...,18}
			{
				\draw[gray] (\i) circle (1cm);
			}

			\draw (1)--(2)--(3)--(1);
			\draw (2)--(3)--(4)--(2);
			\draw (4)--(5)--(6)--(4);
			\draw (6)--(7)--(5);
			\draw (7)--(8)--(9)--(10)--(8);
			\draw (10)--(11);
			\draw (11)--(12)--(13)--(11);
			\draw (9)--(18);
			\draw (12)--(13)--(14)--(12);
			\draw (14)--(15)--(16)--(14);
			\draw (15)--(16)--(17)--(15);
			
		\end{tikzpicture}
	}
	\caption{Three main components to build the NP-hardness gadget for unit disk visibility graph of a set of points.}
	\label{fig:clauseGadgetPt}
\end{figure}

In Figure \ref{fig:longEdgePt}, we show the gadget to transfer the color on a long edge for a unit disk point visibility graph. In Figure~\ref{fig:crossingGadgetPt}, we give the gadget to replace edge crossings. In Figure~\ref{fig:clauseGadgetPt}, we show the embedding of the points in the Euclidean plane. Then, the same reduction given in Section~\ref{sec:3coloringSegment} for unit disk segment visibility graphs can be utilized to prove the NP-completeness of 3-coloring problem for unit disk point visibility graphs.

\subsection{The 3-coloring problem for unit disk visibility graphs for polygons with holes} \label{sec:withholes}
In this section, we prove that determining whether the unit disk visibility graph of a given polygon with holes is 3-colorable or not is an NP-complete problem.
Let us state our theorem.
\begin{thm}  \label{thm:withholes}
	There is a polynomial-time reduction from the 3-coloring problem for 4-regular planar graphs to the 3-coloring problem for unit disk visibility graphs of polygons with holes.
\end{thm}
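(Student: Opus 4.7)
The plan is to adapt the wire-and-gadget machinery of Section~\ref{sec:3coloringSegment} to the setting of polygons with holes, where holes act as visibility blockers in place of the line segments used previously. Starting from a 4-regular planar graph $H$, I will construct a polygon $\PP$ with holes whose unit disk visibility graph is 3-colorable if and only if $H$ is. The point of starting from the planar version is that planarity of $H$ lets us route ``wires'' along a fixed planar drawing without crossings, so no edge-crossing gadget analogous to Figure~\ref{fig:crossingSegments} is needed, and the reduction is considerably cleaner than the one of Theorem~\ref{thm:main}.

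First, I would fix a straight-line planar embedding of $H$ on an $O(n)\times O(n)$ grid (for instance, via Schnyder's algorithm). For each vertex $v$ of $H$ I would place a small \emph{vertex gadget} inside $\PP$: a $K_3$ configuration that forces the three colors to appear, with a distinguished \emph{representative} of $v$ whose color encodes the would-be color of $v$ in $H$, together with four ``ports'' at which wires can attach. The 4-regularity of $H$ makes exactly four wires incident to every vertex gadget, keeping the gadget uniform. For each edge $uv$ of $H$ I would carve a thin corridor in $\PP$ along the drawn edge $uv$, bounded on both sides by carefully shaped holes so that visibility is confined to the corridor's interior; inside it I place the repeating pattern of reflex polygon vertices used in the long-edge gadget of Figure~\ref{fig:longEdge}, which rigidly propagates color information from one port to the other.

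The key design choice is that the propagation must force the representatives of $u$ and $v$ to receive \emph{different} colors, whereas the plain long-edge gadget forces equality at paired endpoints. This is arranged by inserting a half-step phase shift at one end of each edge corridor, so that the two representative vertices sit on different phases of the repeating $1,2,3,1,2,3,\dots$ sequence and hence must receive different colors in any proper 3-coloring. A global scaling in the spirit of Lemma~\ref{lem:scaledown} then ensures that any two polygon vertices not connected by a short intended line of sight are at Euclidean distance strictly greater than one, so spurious unit disk visibility edges are excluded.

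The main obstacle will be the careful engineering of the vertex gadget, where four incoming corridors meet at a single point while (i) correctly reading the color of the representative into each outgoing wire and (ii) not creating accidental visibilities among the four wires or between a wire and a non-incident vertex gadget. I would handle this by surrounding each vertex gadget with a small buffer region of holes that isolates each port from the others, and by a finite case analysis of the 3-colorings of this local configuration. Correctness then follows by a standard two-way argument: any proper 3-coloring of $H$ lifts to a 3-coloring of the UDVG by propagating each color along the corresponding corridor, and conversely the restriction of any 3-coloring of the UDVG to the representative vertices yields a proper 3-coloring of $H$. The polynomial-size grid bound together with constant-size gadgets guarantees that the reduction runs in polynomial time and uses polynomially many bits per coordinate.
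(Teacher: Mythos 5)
Your proposal follows essentially the same route as the paper: a reduction from 3-coloring of 4-regular planar graphs that uses a planar embedding to avoid crossing gadgets, thin corridors along the edges whose repeating color pattern is phase-adjusted (the paper does this by making the number of vertices on each side a positive multiple of 3) so that the two endpoint representatives must receive different colors, and a constant-size vertex gadget (the paper's ``chamber,'' a central vertex seeing the corridor openings within a unit circle) where the four incident corridors meet. The approach and correctness argument match the paper's proof.
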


Before proving Theorem~\ref{thm:withholes}, we describe the gadgets used to construct a polygon with holes (which has a corresponding unit disk visibility graph for the constructed polygon) from a given 4-regular planar graph in more detail, and show that they correctly transform an instance of the 3-coloring problem for 4-regular planar graphs to an instance of the 3-coloring problem for unit disk visibility graphs of polygons with holes.

\subsubsection{The corridors}\label{sec:corridor}

We first describe how we model the edges of a given 4-regular planar graph. In Figure~\ref{fig:polygonEdge}, there are two nodes, $u$ and $v$, and a ``corridor'' which connects them.
The interior of the polygon is shaded, and the boundaries are indicated with bold lines.
The visibility edges are indicated using thin lines, and colored red. When the number of vertices on each side of the corridor (excluding $u$ and $v$) is a positive multiple of 3, it is trivial to verify that $u$ and $v$ receive different colors in a proper 3-coloring.

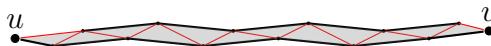
\begin{figure}[h]
	\centering
	\begin{tikzpicture}
		
		\tikzstyle{every node}=[draw=black, fill=black, shape=circle, minimum size=1pt,inner sep=0pt];

		\foreach [evaluate={\j=int(mod(\i,2));}] \i in {0,...,5}
		{
			\ifthenelse{\j = 1}
			{
				\coordinate (d\i) at (\i,0);
				\coordinate (u\i) at (\i + 0.4 ,0.2);
			}
			{
				\coordinate (d\i) at (\i,-0.1);
				\coordinate (u\i) at (\i + 0.4 ,0.1);
			}		
		}

		\foreach [evaluate={\j=\i+1;}] \i in {0,...,5}
		{
			\ifthenelse{\i = 5}{\draw[red] (u\i)--(d\i);}
			{
				\draw[thick,black] (u\i)--(u\j);
				\draw[thick,black] (d\i)--(d\j);
				\draw[red] (u\i)--(d\i);
				\draw[red] (u\i)--(d\j);				
			}
			\node at (u\i) {};
			\node at (d\i) {};
		}
		
		\fill[gray, opacity=0.3] (-0.5,0)--(u0)--(u1)--(u2)--(u3)--(u4)--(u5)--(5.8,0.1)--(d5)--(d4)--(d3)--(d2)--(d1)--(d0)--(-0.5,0);
		\node[label=$u$,fill=black,scale=3] (u) at (-0.5,0) {};
		\node[label=$v$,fill=black,scale=3] (v) at (5.8,0.1) {};
		\draw[red] (u)--(u0);
		\draw[thick] (u)--(d0);
		\draw[red] (v)--(u5);
		\draw[thick] (v)--(d5);
	\end{tikzpicture}
	\caption{A corridor modeling the edges in a planar graph.}
	\label{fig:polygonEdge}
\end{figure}

Therefore, \emph{a corridor} shown in Figure~\ref{fig:polygonEdge} replaces the edges in a given planar graph.
We use the same idea which we used to model edges in unit disk segment visibility graphs.
However, unlike the wires, instead of transferring a color along a long edge, our gadget makes sure that two ends of an edge receives different colors since these ends correspond to adjacent vertices of the given 4-regular planar graph.
A corridor consists of two polygonal chains $A$ and $B$ with edges $a_1a_2, a_2a_3, \dots, a_{k-1}a_k$ and $b_1b_2, b_2b_3, \dots, b_{k-1}b_k$, respectively. It is trivial to see that we can obtain a unit disk visibility graph for a polygon with holes, where for each $i$, the visibility edges $a_ib_i$, and $b_ia_{i+1}$ exist as visibility edges\footnote{Even if we assume that no three vertices can be collinear, we can slightly perturb the vertices by $\varepsilon$ units where $\varepsilon$ is some positive number which can be represented using polynomially many decimal digits.}.
This basically describes an induced subgraph with $2k$ vertices, $2k-2$ boundary edges, and $2k-1$ visibility edges.
Moreover, the largest induced cycle is 3 (which means this is a chordal graph), and each triplet $(a_i,a_{i+1},b_i)$ and $(b_i,b_{i+1},a_i)$ yields a $C_3$.
Now, suppose that the polygonal chain $A$ has two neighboring vertices $u$ and $v$ where $ua_1$ and $va_k$ are two polygonal edges.
Assuming that $k = 3c$ for some constant $c \in \mathbb{N}^+$, $u$ and $v$ receive different colors in a 3-coloring of the described subgraph.

\subsubsection{The chambers}\label{sec:chamber}

Now, let us describe the gadget which replaces the vertices in a given planar graph, which we refer to as a ``chamber''. \emph{A chamber} is an induced subgraph with 12 vertices $c_1, \dots, c_{12}$ with boundary edges $c_1c_2$, $c_3c_4$, $c_4c_5$, $c_6c_7$, $c_7c_8$, $c_9c_{10}$, $c_{10}c_{11}$, and $c_{12}c_1$. 
Figure~\ref{fig:polygonVertex} shows an embedding of a chamber of a polygon $P$ where the interior of the polygon is shaded. The vertex $c_1$ is at the center of some circle $\mathcal{C}$ with radius 1.

\begin{figure}[h]
	\centering
	\begin{tikzpicture} [scale = 2]
		\draw (-1,-1) to[grid with coordinates] (1,1);
		\draw[blue] (0,0) circle (1cm);
		
		\begin{scope}[shift = {(-1.4,-0.8)}]
			\tikzstyle{every node}=[draw=black, fill=gray, shape=circle, minimum size=3pt,inner sep=0pt];
			\coordinate (0) at (1.4,0.8);
			\coordinate (1) at (2,0);
			\coordinate (2) at (2.2,0.2);
			\coordinate (3) at (2.8,0.8);
			\coordinate (4) at (2.2,1.4);
			\coordinate (5) at (2,1.6);
			\coordinate (6) at (1.4,2.2);
			\coordinate (7) at (0.8,1.6);
			\coordinate (8) at (0.6,1.4);
			\coordinate (9) at (0,0.8);
			\coordinate (10) at (0.6,0.2);
			\coordinate (11) at (0.8,0);
			
			\coordinate (e1) at (2.4,-0.4);
			\coordinate (e2) at (2.6,-0.2);
			\coordinate (e4) at (2.6,1.8);
			\coordinate (e5) at (2.4,2);
			\coordinate (e7) at (0.4,2);
			\coordinate (e8) at (0.2,1.8);
			\coordinate (e10) at (0.2,-0.2);
			\coordinate (e11) at (0.4,-0.4);	
			
			\draw[ultra thick] (0)--(1)--(e1);
			\draw[ultra thick] (e2)--(2)--(3)--(4)--(e4);
			\draw[ultra thick] (e5)--(5)--(6)--(7)--(e7);
			\draw[ultra thick] (e8)--(8)--(9)--(10)--(e10);
			\draw[ultra thick] (e11)--(11)--(0);
			
			\foreach \i in {2,4,5,7,8,10}
			{
				\draw[red] (0)--(\i);
			}
			
			\foreach \i in {1,...,11}
			{
				\node at (\i) {};
				\ifthenelse { \i = 3 \OR \i = 6 \OR \i = 9}
				{}
				{
					\node at (\i) {};
				}
			}
			
			\node[scale=3] at (0) {};
			
			\fill[gray, opacity=0.3] (0)--(1)--(e1)--(e2)--(2)--(3)--(4)--(e4)--(e5)--(5)--(6)--(7)--(e7)--(e8)--(8)--(9)--(10)--(e10)--(e11)--(11)--(0);

		\end{scope}
	\end{tikzpicture}
	\caption{The chamber gadget that replaces a vertex in a given planar graph.}
	\label{fig:polygonVertex}
\end{figure}
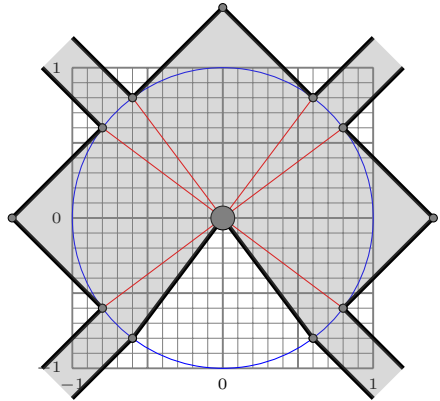

Let us refer to such a vertex as \emph{the central vertex} of the chamber. The vertices $c_i$ for $i=2,3,5,6,8,9,11,12$ are on the boundary of $\mathcal{C}$, and the remaining vertices $c_4, c_7$, and $c_{10}$ are outside $\mathcal{C}$, which means they do not see $c_1$.
The vertices that are on the boundary of $\mathcal{C}$ are four pair of ``openings'' to the corridors which connect chambers together since the given planar graph is 4-regular. A unit disk is drawn around the central vertex to demonstrate the visibility relations between it and the opening of the corridors. The eight vertices on that unit disk are called the corridor vertices of a chamber, and the remaining three vertices are called the connecting vertices of a chamber. The connecting vertices are essential because the color of the corridor vertices must be dependent only on the central vertex. In this case, if the input graph has two adjacent vertices $u$ and $v$, then there exists a pair of chambers $U$ and $V$, and a corridor with $3c$ vertices which connects $U$ and $V$, and the central vertices $c_u$ and $c_v$ of $U$ and $V$ must receive different colors.

\subsubsection{The proof of Theorem~\ref{thm:withholes}}

We now show that the 3-coloring problem for unit disk visibility graphs of polygons with holes is NP-hard by giving a reduction from the 3-coloring problem for 4-regular planar graphs \cite{Dailey_4regularplanar}.

Given a 4-regular planar graph, we construct a polygon with holes. Two main components of our reduction are as follows.

\begin{enumerate}
	\item  \emph{A corridor} shown in Figure~\ref{fig:polygonEdge} replaces the edges in a given planar graph.
	We use the same idea which we used to model edges in unit disk segment visibility graphs.
	However, instead of transferring a color along a long edge, our gadget makes sure that two ends of an edge receives different colors since the colors of these ends are determined by the colors of the corresponding adjacent vertices of the given 4-regular planar graph. Assuming that $k = 3c$ for some constant $c \in \mathbb{N}^+$, $u$ and $v$ receive different colors in a 3-coloring of the described subgraph.
	
	\item \emph{A chamber} shown in Figure~\ref{fig:polygonVertex} replaces the vertices in a given planar graph. Since we give a reduction from 4-regular planar graphs, each chamber has exactly four corridors connected to it.
	The big vertex in the center, which is called the central vertex of the chamber, corresponds to a vertex of the given planar graph. In this case, if the input graph has two adjacent vertices $u$ and $v$, then there exists a pair of chambers $U$ and $V$, and a corridor with $3c$ vertices which connects $U$ and $V$, and the central vertices $c_u$ and $c_v$ of $U$ and $V$ must receive different colors. 
\end{enumerate}

Given a 4-regular planar graph $H$ on $n$ vertices $v_1, \dots, v_n$, we construct the corresponding polygon $P$ with holes as follows:

\begin{itemize}
	\item For each vertex $v_i$, add a chamber to $P$ whose central vertex is vertex $u_i$.
	\item For each pair of adjacent vertices $(v_i,v_j)$, add a corridor to $P$ between the chambers with central vertices $u_i$ and $u_j$. 
\end{itemize}

Considering any 3-coloring of $H$, the color given to the vertex $v_i \in H$ can be given to the central vertex $u_i$ of the chamber of $P$ replacing $v_i$, and the colors of central vertices determines the colors of the vertices of corridor, thus a 3-coloring of $P$. Considering any 3-coloring of $P$, the color given to the central vertex $u_i$ of the chamber of $P$ can be given to the vertex $v_i \in H$ replaced by $u_i$. Therefore, $P$ has a 3-coloring if and only if the corresponding color given to the vertices of $H$ yields a 3-coloring. \hfill $\qed$

See Figure~\ref{fig:PolHol} for an example 3-regular planar graph and its corresponding polygon with holes in our reduction.

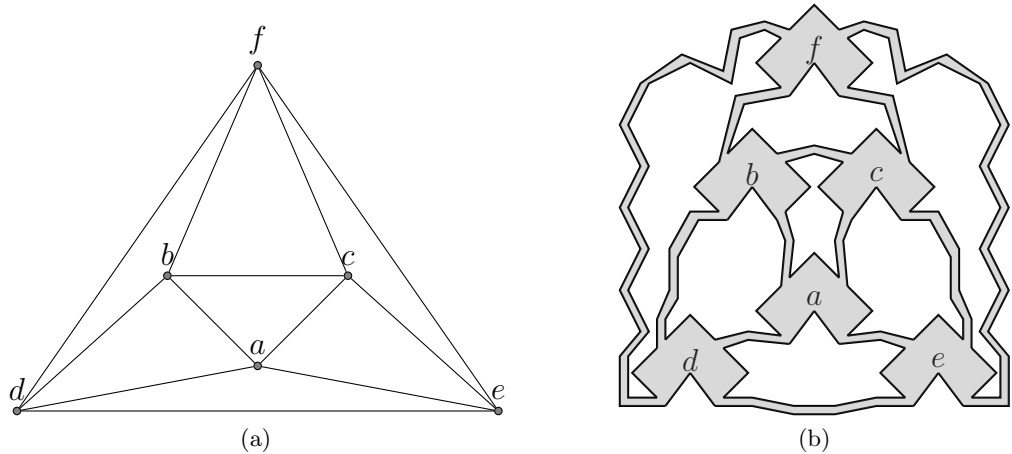
\begin{figure} [htbp]
	\captionsetup[subfigure]{position=b}
	\centering
	\subfloat[]{
		\centering
		\begin{tikzpicture}[scale=2]
			\tikzstyle{every node}=[draw=black, fill=gray, shape=circle, minimum size=1.5pt,inner sep=1pt];
			\node[label=$a$] (a) at (0,0) {};
			\node[label=$b$] (b) at (-0.6,0.6) {};
			\node[label=$c$] (c) at (0.6,0.6) {};
			\node[label=$d$] (d) at (-1.6,-0.3) {};
			\node[label=$e$] (e) at (1.6,-0.3) {};
			\node[label=$f$] (f) at (0,2) {};
			
			\draw (a)--(b)--(c)--(a);
			\draw (d)--(e)--(f)--(d);
			\draw (a)--(d);
			\draw (a)--(e);
			\draw (b)--(d);
			\draw (b)--(f);
			\draw (c)--(e);
			\draw (c)--(f);
			
		\end{tikzpicture}
		\label{fig:exampleGraph}
	}
	~\hfill~
	\subfloat[]{
		\centering
		\begin{tikzpicture}[scale=0.55]
			\node at (0,0.3) {$a$};
			\node at (-1.5,3.3) {$b$};
			\node at (1.5,3.3) {$c$};
			\node at (-3,-1.2) {$d$};
			\node at (3,-1.2) {$e$};
			\node at (0,6.3) {$f$};
			\tikzstyle{every node}=[draw=black, fill=gray, shape=circle, minimum size=1.5pt,inner sep=0pt];
			\polyV{(0,0)}{0};
			\polyV{(-1.5,3)}{0};
			\polyV{(1.5,3)}{0};
			\polyV{(-3,-1.5)}{0};
			\polyV{(3,-1.5)}{0};
			\polyV{(0,6)}{0};
			\tikzstyle{every path}=[thick];
			
			\draw (-0.6,0.8)--(-0.5,1.7)--(-0.7,2.4);
			\draw (-0.8,0.6)--(-0.7,1.7)--(-0.9,2.2);
			\fill[gray,opacity=0.3] (-0.6,0.8)--(-0.5,1.7)--(-0.7,2.4)--(-0.9,2.2)--(-0.7,1.7)--(-0.8,0.6);
			
			\draw (0.6,0.8)--(0.5,1.7)--(0.7,2.4);
			\draw (0.8,0.6)--(0.7,1.7)--(0.9,2.2);
			\fill[gray,opacity=0.3] (0.6,0.8)--(0.5,1.7)--(0.7,2.4)--(0.9,2.2)--(0.7,1.7)--(0.8,0.6);
			
			\draw (-0.8,-0.6)--(-1.6,-0.5)--(-2.4,-0.7);
			\draw (-0.6,-0.8)--(-1.6,-0.7)--(-2.2,-0.9);
			\fill[gray,opacity=0.3] (-0.8,-0.6)--(-1.6,-0.5)--(-2.4,-0.7)--(-2.2,-0.9)--(-1.6,-0.7)--(-0.6,-0.8);
			
			\draw (0.8,-0.6)--(1.6,-0.5)--(2.4,-0.7);
			\draw (0.6,-0.8)--(1.6,-0.7)--(2.2,-0.9);
			\fill[gray,opacity=0.3] (0.8,-0.6)--(1.6,-0.5)--(2.4,-0.7)--(2.2,-0.9)--(1.6,-0.7)--(0.6,-0.8);
			
			\draw (-0.9,3.8)--(0,4)--(0.9,3.8);
			\draw (-0.7,3.6)--(0,3.8)--(0.7,3.6);
			\fill[gray,opacity=0.3] (-0.9,3.8)--(0,4)--(0.9,3.8)--(0.7,3.6)--(0,3.8)--(-0.7,3.6);
			
			\draw (-2.3,2.4)--(-3,2.4)--(-3.5,1.5)--(-3.5,0.5)--(-3.8,-0.2)--(-3.8,-0.9);
			\draw (-2.1,2.2)--(-2.8,2.2)--(-3.3,1.3)--(-3.3,0.3)--(-3.6,-0.2)--(-3.6,-0.7);
			\fill[gray,opacity=0.3] (-2.3,2.4)--(-3,2.4)--(-3.5,1.5)--(-3.5,0.5)--(-3.8,-0.2)--(-3.8,-0.9)--(-3.6,-0.7)--(-3.6,-0.2)--(-3.3,0.3)--(-3.3,1.3)--(-2.8,2.2)--(-2.1,2.2);
			
			\draw (-2.1,3.8)--(-1.7,5)--(-0.6,5.2);
			\draw (-2.3,3.6)--(-1.9,5.2)--(-0.8,5.4);
			\fill[gray,opacity=0.3] (-2.1,3.8)--(-1.7,5)--(-0.6,5.2)--(-0.8,5.4)--(-1.9,5.2)--(-2.3,3.6);
			
			\draw (2.1,3.8)--(1.7,5)--(0.6,5.2);
			\draw (2.3,3.6)--(1.9,5.2)--(0.8,5.4);
			\fill[gray,opacity=0.3] (2.1,3.8)--(1.7,5)--(0.6,5.2)--(0.8,5.4)--(1.9,5.2)--(2.3,3.6);
			
			
			\draw (2.3,2.4)--(3,2.4)--(3.5,1.5)--(3.5,0.5)--(3.8,-0.2)--(3.8,-0.9);
			\draw (2.1,2.2)--(2.8,2.2)--(3.3,1.3)--(3.3,0.3)--(3.6,-0.2)--(3.6,-0.7);
			\fill[gray,opacity=0.3] (2.3,2.4)--(3,2.4)--(3.5,1.5)--(3.5,0.5)--(3.8,-0.2)--(3.8,-0.9)--(3.6,-0.7)--(3.6,-0.2)--(3.3,0.3)--(3.3,1.3)--(2.8,2.2)--(2.1,2.2);
			
			\draw (-2.2,-2.1)--(-1.5,-2.1)--(-0.5,-2.3)--(0.5,-2.3)--(1.5,-2.1)--(2.2,-2.1);
			\draw (-2.4,-2.3)--(-1.5,-2.3)--(-0.5,-2.5)--(0.5,-2.5)--(1.5,-2.3)--(2.4,-2.3);
			\fill[gray,opacity=0.3] (-2.2,-2.1)--(-1.5,-2.1)--(-0.5,-2.3)--(0.5,-2.3)--(1.5,-2.1)--(2.2,-2.1)--(2.4,-2.3)--(1.5,-2.3)--(0.5,-2.5)--(-0.5,-2.5)--(-1.5,-2.3)--(-2.4,-2.3);
			
			\draw (-3.8,-2.1)--(-4.5,-2.1)--(-4.5,-1.1)--(-4,-0.5)--(-4.5,0.5)--(-4,1.5)--(-4.5,2.5)--(-4,3.5)--(-4.5,4.5)--(-4,5.5)--(-3,6)--(-2,5.5)--(-1.8,6.6)--(-1.2,6.8)--(-0.8,6.6);
			\draw (-3.6,-2.3)--(-4.7,-2.3)--(-4.7,-1.1)--(-4.2,-0.5)--(-4.7,0.5)--(-4.2,1.5)--(-4.7,2.5)--(-4.2,3.5)--(-4.7,4.5)--(-4.2,5.5)--(-3.2,6.2)--(-2.2,5.8)--(-2,6.8)--(-1.2,7)--(-0.6,6.8);
			\fill[gray,opacity=0.3] (-3.8,-2.1)--(-4.5,-2.1)--(-4.5,-1.1)--(-4,-0.5)--(-4.5,0.5)--(-4,1.5)--(-4.5,2.5)--(-4,3.5)--(-4.5,4.5)--(-4,5.5)--(-3,6)--(-2,5.5)--(-1.8,6.6)--(-1.2,6.8)--(-0.8,6.6)--(-0.6,6.8)--(-1.2,7)--(-2,6.8)--(-2.2,5.8)--(-3.2,6.2)--(-4.2,5.5)--(-4.7,4.5)--(-4.2,3.5)--(-4.7,2.5)--(-4.2,1.5)--(-4.7,0.5)--(-4.2,-0.5)--(-4.7,-1.1)--(-4.7,-2.3)--(-3.6,-2.3);
			
			\draw (3.8,-2.1)--(4.5,-2.1)--(4.5,-1.1)--(4,-0.5)--(4.5,0.5)--(4,1.5)--(4.5,2.5)--(4,3.5)--(4.5,4.5)--(4,5.5)--(3,6)--(2,5.5)--(1.8,6.6)--(1.2,6.8)--(0.8,6.6);
			\draw (3.6,-2.3)--(4.7,-2.3)--(4.7,-1.1)--(4.2,-0.5)--(4.7,0.5)--(4.2,1.5)--(4.7,2.5)--(4.2,3.5)--(4.7,4.5)--(4.2,5.5)--(3.2,6.2)--(2.2,5.8)--(2,6.8)--(1.2,7)--(0.6,6.8);
			\fill[gray,opacity=0.3] (3.8,-2.1)--(4.5,-2.1)--(4.5,-1.1)--(4,-0.5)--(4.5,0.5)--(4,1.5)--(4.5,2.5)--(4,3.5)--(4.5,4.5)--(4,5.5)--(3,6)--(2,5.5)--(1.8,6.6)--(1.2,6.8)--(0.8,6.6)--(0.6,6.8)--(1.2,7)--(2,6.8)--(2.2,5.8)--(3.2,6.2)--(4.2,5.5)--(4.7,4.5)--(4.2,3.5)--(4.7,2.5)--(4.2,1.5)--(4.7,0.5)--(4.2,-0.5)--(4.7,-1.1)--(4.7,-2.3)--(3.6,-2.3);
			
		\end{tikzpicture}
		\label{fig:examplePolygon}
	}
	
	\caption{(a) An example 4-regular planar graph, and (b) the polygon that corresponds to the graph given in (a).} 
	\label{fig:PolHol}
\end{figure}

\subsubsection{The time and space complexity.} Given a 4-regular planar graph $H$ on $n$ vertices, we add $n$ chambers to $P$, each having 12 vertices. The positions of the centers of chambers can be determined with respect to any planar embedding of $H$. For each pair of adjacent vertices in $H$, we add a corridor to $P$ between the chambers corresponding to these vertices. The number of vertices on each polygonal chain of a corridor is at most $O(n)$, therefore at most $O(n + n) = O(n)$ vertices in a corridor, in total. Thus, both chambers and corridors take up polynomial space. Since there are $2n$ edges in $H$, there are at most $O(n + 2n^2) = O(n^2)$ vertices in $P$, thus polynomially many edges in $P$. As a result, the given reduction can be done in polynomial time and space.

As we proved the correctness of our reduction and showed that it is a polynomial-time reduction, the theorem holds. Since the 3-coloring problem for 4-regular planar graphs is NP-complete \cite{Dailey_4regularplanar}, the  3-coloring problem for unit disk visibility graphs of polygons with holes is also NP-complete.

Since the 3-coloring problem for 4-regular planar graphs is NP-complete \cite{Dailey_4regularplanar}, the 3-coloring problem for unit disk visibility graphs of polygons with holes is also NP-complete by Theorem~\ref{thm:withholes}.

\section{Some other combinatorial problems on unit disk visibility graphs} \label{sec:othercomb}

In previous sections, we considered the 3-coloring problem (in general, chromatic number problem) for the unit disk visibility graphs of a set of line segments, and a set of points.
There are of course many other combinatorial problems to consider. 
In this section, our aim is to give some insight about some other famous combinatorial problems, and argue why it would be interesting to study them.

In the gadget used to prove NP-completeness of 3-coloring of unit disk segment visibility graphs, all line segments can be exactly one unit long except the edge crossings. Moreover, the rest of the gadget contains line segments either horizontal or vertical (parallel to $x$ or $y$-axis). Considering these facts, we pose these two interesting questions for reader's consideration: 

\begin{open}
	Is the 3-colorability of unit disk visibility graphs of line segments NP-hard when all the segments are exactly 1 unit long?
\end{open}

\begin{open}
	Is the 3-colorability of unit disk visibility graphs of line segments NP-hard when all the segments are either vertical or horizontal?
\end{open}

As the above results show that unit disk visibility graphs are not included in the (hierarchic) intersection of unit disk graphs and visibility graphs, we would like to study the following problems which may have interesting results on unit disk visibility graphs.

\begin{open}
	Maximum clique problem on unit disk visibility graphs.
\end{open}

It is a long known result that the maximum clique problem of a unit disk graph can be found in polynomial time when the representation is given \cite{Clark_UDmaxclique} and even without a representation \cite{Raghavan_robust}.
However, the algorithm described by Clark et al. \cite{Clark_UDmaxclique} relies on the fact that if two disk centers are close enough, then the corresponding disks intersect, and thus the corresponding vertices are adjacent. 
Whereas for unit disk visibility graph, this is not the case.
Even though two disk centers are close enough, there might be an obstacle in-between, which means the corresponding vertices might not be adjacent.
It is therefore an interesting problem to study for unit disk visibility graphs.

\begin{open}
	Chromatic number problem on unit disk visibility graphs of polygons.
\end{open}

In Section~\ref{sec:3coloringSegment}, we showed that the determining whether a visibility graph of a set of line segments is 3-colorable is NP-complete.
These reductions are possible because of the edge-crossing certificates (see Figure~\ref{fig:crossingGadgetSeg} and Figure~\ref{fig:crossingGadgetPt}).
In the certificates, we use a special structure in the middle of the gadget, to transfer the colors properly.
Specifically, this structure is $W_4$: consists of five vertices: $a,b,c,d,e$ and 8 edges $ab, bc, cd, da, ea, eb, ec, ed$.
In Section~\ref{sec:polyproper}, it was proven that deciding whether the visibility graph of a simple polygon is 4-colorable can be done in polynomial time, and 5-colorability problem is NP-complete.
The complexity of deciding the 3-colorability (even 4-colorability) for the unit disk visibility graph of a simple polygon is yet to be solved.

\begin{open}
	Determining whether a unit disk visibility graph of segments yields a Hamiltonian cycle.
\end{open}

Hoffman and T\'{o}th showed that every segment visibility graph yields a Hamiltonian cycle \cite{Hoffman_segment}.
It is clearly not the case for unit disk visibility graphs of segments (consider two segments with entpoints on $(0,0)$, $(1,0)$, $(0,1)$, $(0,2)$).
Thus, we ask the question: can we determine whether a unit disk visibility graph of line segments yield a Hamiltonian cycle in polynomial time?

\begin{open}
	Conflict-free coloring problem on a unit disk visibility graph of segments, a unit disk visibility graph of points, and a unit disk visibility graph of polygons.
\end{open}

A correct solution to the conflict-free coloring problem on a graph is assigning colours to (not necessarily all) vertices of that graph such that the neighborhood of each vertex contains at least one unique color \cite{cf-app}.
This problem is shown to be NP-hard for the visibility graphs of simple polygons in Section~\ref{sec:polygonconflict}.
To the best of our knowledge, the problem is open for the visibility graphs of a set of points, and the visibility graphs of line segments.
As for the unit disk graphs, it is NP-complete to decide whether a conflict-free coloring with only one color exists, and six colors are always sufficient \cite{Fekete_conflictfree}.
However, as we mentioned in Section~\ref{sec:classification}, the result for the unit disk graphs does not apply to unit disk visibility graphs.
All considered, studying the conflict-free coloring problem for UDVG of segments, UDVG of points, and UDVG of polygons is an interesting research direction.

\chapter{Axes-Parallel Unit Disk Graphs} \label{chap:apud}

\section{Summary of the chapter}
In this chapter, we first describe a polynomial-time reduction which shows that deciding whether a graph can be realized as unit disks onto given straight lines is NP-hard, 
when the given lines are parallel to either $x$-axis or $y$-axis.
Using the reduction we described, we also show that this problem is NP-complete when the given lines are only parallel to $x$-axis (and one another).
We obtain those results using the idea of the logic engine introduced by Bhatt and Cosmadakis in 1987.

\begin{itemize}
	\item If the disks are forced to be centered on straight lines of which any pair is either parallel or perpendicular, then the recognition problem is NP-hard.
	\item If any pair of the pre-given lines are parallel to each other, then the problem is NP-complete.
	\item If there are only two lines that are perpendicular, then the recognition problem is interesting to study, yet does not yield trivial results.
\end{itemize}

\section{Related work}	
In this chapter, we are particularly interested in the unit disk graph recognition problem i.e. given a simple graph, deciding whether there exists an embedding of disks onto the plane whose intersections corresponds to the given graph.

Breu and Kirkpatrick showed that the unit disk graph recognition problem is NP-hard in general\cite{Breu_UDrecog}. 
Later on, this result was extended, and it was proved that the problem is also $\exists\mathbb{R}$-complete \cite{sphereAndDotProduct,integerRealization}.
Aspnes et al. showed that, even when we know the precise pairwise distances between the adjacent vertices in a unit disk graph, it is NP-hard to find a unique embedding of unit disks \cite{complexityOfWSN}\footnote{Note that this problem is different than embeddability of an edge weighted graph \cite{saxe}, since two disks must intersect if their centers are close enough.}.
Kuhn et al. showed that finding a ``good'' embedding is not approximable when the problem is parameterized by the maximum distance between any pair of disks' centers \cite{udgApprox}.
Besides the recognition problem, there are other famous combinatorial problems that remain NP-hard when the input graph is a unit disk graph.
Some of these problems are, but not limited to, maximum independent set \cite{Clark_UDmaxclique}, $k$-coloring \cite{Graf_udgColoring}, minimum dominating set \cite{mCenterProblems},  and Hamiltonian cycle \cite{hamiltonGrid}.

Ito and Kadoshita tackled the above mentioned problems by assuming that the centers of disks are in a predefined square-shaped region \cite{udgParameterized}.
They found out that some of these problems, namely, Hamiltonian cycle and $k$-coloring are fixed-parameter tractable, whereas some other problems, namely, the maximum independent set and the minimum dominating set are both W[1]-complete.
The result of \cite{udgParameterized} is a nice example to show that restricting the domain does not cause radical changes on the complexities of all major combinatorial problems.
But how does the recognition problem behave when the domain is restricted?

Intuitively, the most restricted domain for unit disk graphs is when the disks are centered on a single straight line in the Euclidean plane.
In this case, the unit disks become \emph{unit intervals} on the line, and they yield a \emph{unit interval graph} \cite{unitIntervalGraphs}.
To recognize whether a given graph is a unit interval graph is a linear-time task \cite{intervalRecognition}.
Breu studied the problem of unit disk graph recognition when each disk is centered on a ``thick line'' of width $c$ \cite{udgConstrained}.
In other words, the disks are restricted to be centered in the area between two straight lines with Euclidean distance $c$.
Such a configuration is called a \emph{$c$-strip graph}.
He showed that the recognition problem can be solved in polynomial time with this constraint, when $c \leq \sqrt{3}/2$.
That is, the mapping function for a given unit disk graph $G = (V,E)$ is $\Sigma: V \to (-\infty,+\infty) \times (0,c)$ where $0 \leq c \leq \sqrt{3}/2$.
Later on, Hayashi et al. introduced the class of \emph{thin strip graphs} \cite{thinStrip}.
A graph is a thin strip graph, if it is a $c$-strip graph for every $c>0$.
The main result of \cite{thinStrip} is that no constant $t$ exists where the $t$-strip graphs are exactly thin strip graphs.

In our work, we introduce \emph{axes-parallel unit disk graphs}.
Our domain is restricted to the set of straight lines given by their equations.
Given a simple graph, and a set of straight lines, we ask the question ``can this graph be realized as unit disks on the given straight lines?''
The answer to this question is ``yes'' if the given graph can be realized as unit disks, onto the straight lines that are given as input.
We show that even though these lines are restricted to be parallel to either $x$-axis or $y$-axis, it is NP-hard to recognize whether $G$ is an axes-parallel unit disk graph.

\section{APUD(k,m) recognition} \label{sec:nphard}

We prove that axes-parallel unit disk recognition ($\APUD(k,m)$ recognition with $k$ and $m$ given as input) is NP-hard by giving reduction from the \emph{Monotone not-all-equal 3-satisfiability} (NAE3SAT) problem\footnote{This problem is equivalent to the 2-coloring of 3-uniform hypergraphs. We choose to give the reduction from Monotone NAE3SAT as it is more intuitive to construct for our problem}.
NAE3SAT is a variation of 3SAT a satisfying assignment must not give the thee literals in each clause the same truth value, and due to Schaefer's dichotomy theory, The problem remains NP-complete when all clauses are monotone (i.e. none of the literals are negated) \cite{Schaefer_complexitySAT}.
Our main theorem is as follows.

\begin{thm} \label{thm:apud}
	There is a polynomial-time reduction of any instance $\Phi$ of Monotone NAE3SAT to some instance $\Psi$ of $\APUD(k,m)$ such that $\Phi$ is a YES-instance if, and only if $\Psi$ is a YES-instance.
\end{thm}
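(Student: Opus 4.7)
The plan is to encode a Monotone NAE3SAT instance $\Phi$ via a geometric logic engine in the style of Bhatt--Cosmadakis, adapted to the axes-parallel unit disk setting. Starting from $\Phi$ with $n$ variables and $m$ clauses, I will construct in polynomial time a graph $G$ together with a set $\mathcal{H}$ of horizontal lines and a set $\mathcal{V}$ of vertical lines such that $G$ admits a unit disk realization with centers on $\mathcal{H} \cup \mathcal{V}$ iff $\Phi$ admits a not-all-equal satisfying assignment.

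First I will build a rigid horizontal \emph{spine} on a single line $\ell_0 \in \mathcal{H}$, realized as a long induced path in $G$ whose vertices are all required to lie on $\ell_0$. The adjacencies along the path force consecutive centers to be within distance one, and carefully chosen non-adjacencies to \emph{guard} disks placed on nearby horizontal lines force the spacing to be essentially rigid, fixing the horizontal positions of $n$ distinguished \emph{sockets}, one per variable.

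Next, for each variable $x_i$, I introduce a vertical line $V_i \in \mathcal{V}$ through socket $i$ and a chain of disks on $V_i$ attached to the spine at the socket. The chain's internal adjacency pattern and its forced non-adjacencies to nearby spine disks will leave exactly two rigid placements: one in which the chain extends upward from the spine (encoding $x_i = $ true) and one in which it extends downward (encoding $x_i = $ false). For each clause $C_j = (x_a \vee x_b \vee x_c)$, I place an auxiliary horizontal line $H_j^+$ well above the spine together with a small \emph{clause gadget} on $H_j^+$, connected by edges of $G$ to specific disks high up on the chains $V_a, V_b, V_c$. The gadget will be designed so that the three required adjacencies can be realized simultaneously iff at least one of the three chains extends upward, i.e., iff at least one of $x_a, x_b, x_c$ is true; a symmetric clause gadget on a line $H_j^-$ below the spine enforces that at least one of the three is false. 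Together, these two clause gadgets encode the NAE condition.

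The main obstacle will be the careful geometric tuning of the variable and clause gadgets: I must ensure (i) that the spine and variable chains have no feasible placement other than the intended $2^n$ combinatorial configurations, (ii) that a configuration is geometrically realizable iff its associated assignment satisfies $\Phi$ in the NAE sense, and (iii) that all coordinates involved are rationals of bit-length polynomial in $n+m$, so that the whole reduction runs in polynomial time. Once these three properties are established, both directions of the equivalence follow immediately, and Theorem~\ref{thm:apud} is proved, inheriting NP-hardness from Monotone NAE3SAT.
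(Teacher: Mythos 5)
Your overall architecture matches the paper's: a Bhatt--Cosmadakis logic engine with a horizontal spine path pinned to a distinguished horizontal line, one vertical line per variable carrying a chain whose two rigid placements (up/down flip) encode the truth value, and clause structure duplicated symmetrically above and below the spine so that the two mirror copies together enforce the not-all-equal condition. The paper additionally uses two rigid boundary columns $P_L$ and $P_R$ (paths decorated with diamonds and $4$-cycles) to lock the frame and, crucially, to bound the horizontal room available on each clause line; your ``guard disks on nearby horizontal lines'' play a roughly analogous rigidity role but you never introduce the bounding columns, which the paper's clause argument depends on.

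The genuine gap is in your clause gadget. You propose a gadget on $H_j^+$ \emph{connected by edges} to designated disks on the chains $V_a,V_b,V_c$, realizable ``iff at least one of the three chains extends upward.'' Adjacency constraints in a realization problem are conjunctive: if the gadget has an edge to a disk on $V_a$ and the chain $V_a$ points downward, that single edge is unrealizable and the whole embedding fails. So your gadget would force \emph{all three} chains upward, encoding $x_a\wedge x_b\wedge x_c$ rather than the disjunction you need (and, separately, one small gadget cannot be within unit-disk reach of three widely separated vertical lines). The paper resolves exactly this by encoding clause membership \emph{negatively} and using a packing argument: for each pair $(x_i,C_j)$ with $x_i\notin C_j$ a ``flag'' vertex is attached to the chain $P^i$ at the height of the clause line $\ell^C_j$, the lower half of each chain carries flags for \emph{all} clauses, and the spacing between the boundary columns $\ell_L$ and $\ell_R$ leaves room for at most $m-1$ flags on each clause line. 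Hence on $\ell^C_j$ at least one variable of $C_j$ must be unflipped (true) and on the mirror line $\ell'^C_j$ at least one must be flipped (false), which is precisely NAE. Replacing your adjacency-based clause gadget with this flag-collision mechanism (and adding the rigid left/right boundary paths that make the space bound effective) is the missing idea; the rest of your plan, including the polynomial bound on coordinate bit-lengths via an explicit $\varepsilon\le 1/4k$ spacing, is in line with the paper.
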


We construct our hardness proof using the scheme called a \emph{logic engine}, which is used to prove the hardness of several geometric problems \cite{logicEngine}.
For a given instance $\Phi$ of Monotone NAE3SAT, there are two main components in our reduction.
First, we construct a backbone for our gadget.
The backbone models only the number of clauses and the number of literals.
Next, we model the relationship between the clauses and literals, i.e. which literal appears in which clause.

Let us begin by describing the input graph.
For the sake of simplicity, we assume that the given formula has $3$ clauses, $A, B, C$, and $4$ literals, $q,r,s,t$ for the moment.
In general, we denote the clauses by $C_1, \dots, C_k$, and the literals by $x_1, \dots, x_m$.
Later on, with the help of Figure~\ref{fig:configuration}, we explain how to generalize the input graph according to any given instance of Monotone NAE3SAT formula.
For the following part, we describe the input graph given in Figure~\ref{fig:skeleton}.
Throughout the manuscript, we index the vertices from left to right, and from bottom to top, in ascending order.

\begin{figure}[htbp]
	\centering
	\begin{tikzpicture}[xscale=0.8, yscale=0.3,
		triangle/.style = {regular polygon, regular polygon sides=3},
		ltriangle/.style = {regular polygon, regular polygon sides=3, rotate=90},
		rtriangle/.style = {regular polygon, regular polygon sides=3, rotate=270},
		dtriangle/.style = {regular polygon, regular polygon sides=3, rotate=180}]
		
		\node[draw=none] at (-0.7,-7) {$P_L$};
		\node[draw=none] at (10.7,-7) {$P_R$};
		\node[draw=none] at (-0.7,0) {$P_\alpha$};
		\node[draw=none] at (2.7,-6.5) {$P_q$};
		\node[draw=none] at (4.7,-6.5) {$P_r$};
		\node[draw=none] at (6.7,-6.5) {$P_s$};
		\node[draw=none] at (8.7,-6.5) {$P_t$};

		\tikzstyle{every node}=[draw, shape=circle, minimum size=5pt,inner sep=0pt];

		\foreach \i in {0,2,4,6,8,10} \node[shape=rectangle, fill=green!80!black, minimum size=4pt] (\i0) at (\i,0) {};
		
		\foreach \i in {1,3,5,7,9}	\node[shape=rectangle, fill=yellow!90!black, minimum size=4pt] (\i0) at (\i,0) {};
		
		\foreach \i in {0,...,9}
		{
			\pgfmathtruncatemacro\j{\i+1};
			\draw[very thick, color=yellow!90!black] (\i0)--(\j0);
		}
		
		\foreach \i in {2,4,6,8} 
		{
			\foreach \j in {-7,-6,-5,-4,-3,-2,-1,1,2,3,4,5,6,7}
			{
				\node[fill=blue!90!black] (\i\j) at (\i,\j) {};
				
			}
		}
		
		\foreach \i in {2,4,6,8} 
		{
			\foreach \j in {-7,-6,-5,-4,-3,-2,-1,0,1,2,3,4,5,6}
			{
				\pgfmathtruncatemacro\k{\j+1};
				\draw[very thick, color=blue] (\i\j)--(\i\k);
				
			}
		}
		
		\foreach \i in {0,10} 
		{
			\foreach \j in {-7,-6,-5,-4,-3,-2,-1,1,2,3,4,5,6,7}
			{
				\node[shape=rectangle,fill=blue!80!black, minimum size=4pt] (\i\j) at (\i,\j) {};
				
			}
		}
		
		\foreach \i in {0,10} 
		{
			\foreach \j in {-7,-6,-5,-4,-3,-2,-1,0,1,2,3,4,5,6}
			{
				\pgfmathtruncatemacro\k{\j+1};
				\draw[very thick, color=blue!80!black] (\i\j)--(\i\k);
				
			}
		}

		\foreach \i in {-5.5,-3.5,-1.5,1.5,3.5,5.5}
		{
			\pgfmathtruncatemacro\j{\i-0.5}; 
			\pgfmathtruncatemacro\k{\i+0.5};

			\node[ltriangle, fill=red] (-1\j) at (-0.7,\i) {};		
			\node[rtriangle, fill=red] (1\j) at (0.7,\i) {};
			
			\node[ltriangle, fill=red] (9\j) at (9.3,\i) {};
			\node[rtriangle, fill=red] (11\j) at (10.7,\i) {};
			\draw (1\j)--(0\j);
			\draw (1\j)--(0\k);
			\draw (-1\j)--(0\j);
			\draw (-1\j)--(0\k);
			\draw (9\j)--(10\j);
			\draw (9\j)--(10\k);
			\draw (11\j)--(10\j);
			\draw (11\j)--(10\k);
		}

		\foreach \i in {2,4,6,8} 
		{
			\node[triangle, fill=red] (\i9) at (\i,8.3) {};
			\node[dtriangle, fill=red] (\i-9) at (\i,-8.3) {};
		}
		
		\foreach \i in {1.5,2.5,3.5,4.5,5.5,6.5,7.5,8.5} 
		{
			\pgfmathtruncatemacro\j{\i+0.5}
			\node[shape=rectangle, fill=red, minimum size=4pt] (\j8) at (\i,7.5) {};
			\node[shape=rectangle, fill=red, minimum size=4pt] (\j-8) at (\i,-7.5) {};
		}
		
		\foreach \i in {2,4,6,8} 
		{
			\pgfmathtruncatemacro\j{\i+1}
			\draw (\i7)--(\i8)--(\i9)--(\j8)--(\i7);
			\draw (\i-7)--(\i-8)--(\i-9)--(\j-8)--(\i-7);
		}
		
	\end{tikzpicture}
	\caption{ Skeleton of the input graph for $\Phi$. The consecutive induced paths, labeled as $P_q, P_r, P_s, P_t$, are to be embedded on the literal lines $\ell_q, \ell_r, \ell_s, \ell_t$ in Figure~\ref{fig:frame} respectively. The vertices in the long induced paths $P_L$ and $P_R$ in \ref{fig:skeleton} (indicated by rectangles) must be embedded on the lines $\ell_L$ and $\ell_R$ given in \ref{fig:frame}. Similarly, the vertices in $P_\alpha$ (indicated by blue and green rectangles) must be embedded on the line $\ell_\alpha$ given in \ref{fig:frame}.}
	\label{fig:skeleton}
\end{figure}
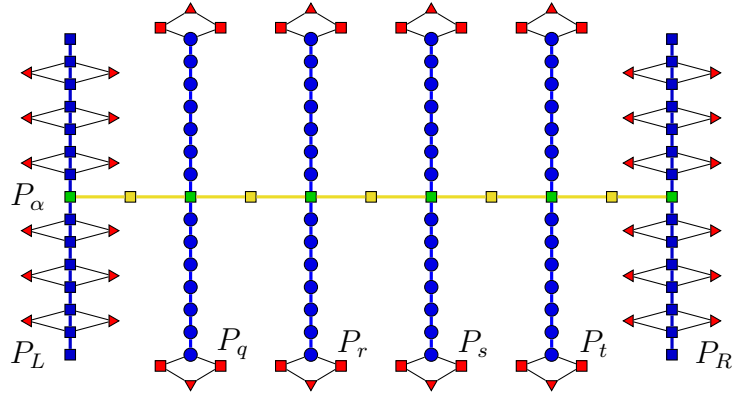

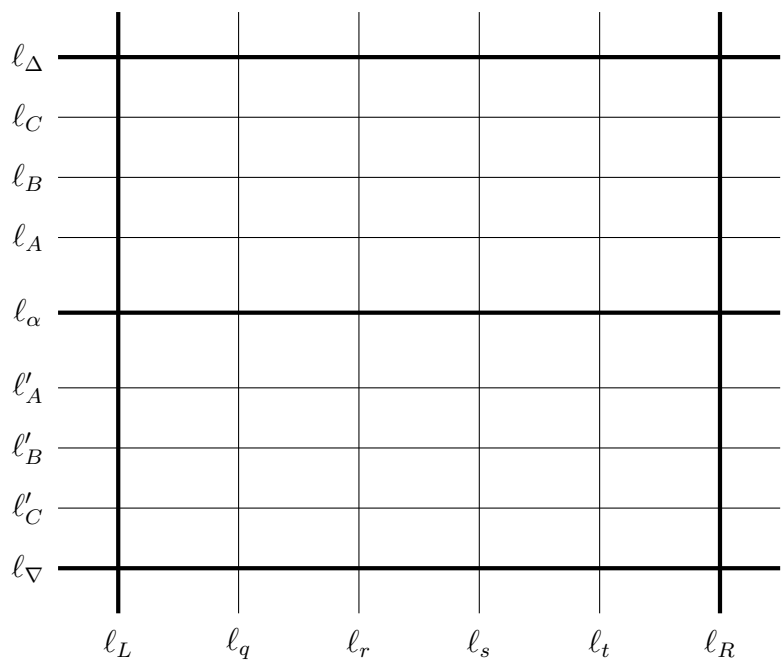
\begin{figure}[htbp]
	\centering
	\begin{tikzpicture}[scale=0.4]
		
		\draw[ultra thick] (-2,8.5) -- (22, 8.5); 
		\node[draw=none] at (-3,8.5) {$\ell_\Delta$};
		\draw[ultra thick] (-2,0) -- (22,0); 
		\node[draw=none] at (-3,0) {$\ell_\alpha$};
		\draw[ultra thick] (-2,-8.5) -- (22, -8.5); 
		\node[draw=none] at (-3,-8.5) {$\ell_\nabla$};

		\draw[ultra thick] (0,-10) -- (0, 10); 
		\node[draw=none] at (0,-11) {$\ell_L$};
		\draw[ultra thick] (20,-10) -- (20, 10); 
		\node[draw=none] at (20,-11) {$\ell_R$};
		
		\node[draw=none] at (-3,2.5) {$\ell_A$};
		\node[draw=none] at (-3,4.5) {$\ell_B$};
		\node[draw=none] at (-3,6.5) {$\ell_C$};
		\node[draw=none] at (-3,-2.5) {$\ell_A'$};
		\node[draw=none] at (-3,-4.5) {$\ell_B'$};
		\node[draw=none] at (-3,-6.5) {$\ell_C'$};
		\node[draw=none] at (4,-11) {$\ell_q$};
		\node[draw=none] at (8,-11) {$\ell_r$};
		\node[draw=none] at (12,-11) {$\ell_s$};
		\node[draw=none] at (16,-11) {$\ell_t$};
		
		\foreach \i in {1,2,3} 
		{
			\draw (-2,2*\i+0.5) -- (22,2*\i+0.5);
			\draw (-2,2*-\i-0.5) -- (22,2*-\i-0.5);
			
		}

		\foreach \i in {1,2,3,4} 
		{
			\draw (4*\i,-10) -- (4*\i, 10);
			
		}
	\end{tikzpicture}
	\caption{The line set of the configuration for a Monotone NAE3SAT formula $\Phi$ with $4$ literals ($q,r,s,t$) and $3$ clauses ($A,B,C$).}
	\label{fig:frame}
\end{figure}

Three essential components of the input graph is the following induced paths
$P_\alpha = (\alpha_1, \alpha_2, \dots, \alpha_{11})$, 
$P_L = (L_1, L_2, \dots, L_{15})$, 
and $P_R = (R_1, R_2, \dots, R_{15})$.
The length of $P_\alpha$ is $2m+3$ for $m$ literals. In our case, $(2\times 4) +3 = 11$.
The lengths of $P_L$ and $P_R$ are the same, equal to $3 + 4k$ for $k$ clauses. In our case, $3 + (4\times 3) = 15$.

The middle vertices of $P_L$ and $P_R$ are the end vertices of $P_\alpha$.
That is, $\alpha_1 = L_8$, and $\alpha_{11} = R_8$.
The paths $P_L$ and $P_R$ define the left and the right boundary for our gadget, respectively.

For $i  = q,r,s,t$, there is an induced path $P_i = (i_1, \dots, i_{15})$ for each literal, with 15 vertices.
In general, we denote those paths by $P^1, P^2, \dots, P^m$ for $m$ literals.
The vertices of these paths are denoted by blue circles in the Figure~\ref{fig:skeleton}, they are mutually disjoint, but each of them shares one vertex with $P_\alpha$.
The shared vertices are precisely the middle vertices, those are indicated by green rectangle vertices in the figure.
That is, $\alpha_3 = q_8$, $\alpha_5 = r_8$, $\alpha_7 = s_8$, and $\alpha_9 = t_8$.
Moreover, $i_1$ is a vertex of an induced 4-cycle, and $i_{15}$ is a vertex of another induced 4-cycle for $i = q,r,s,t$.
The three vertices in a 4-cycle, except the one in one of the induced paths, are indicated by the red color in the figure.
Precisely two of them, that are adjacent to a blue vertex (either $i_1$ or $i_{15}$) are indicated by squares, and the remaining is indicated by a triangle.

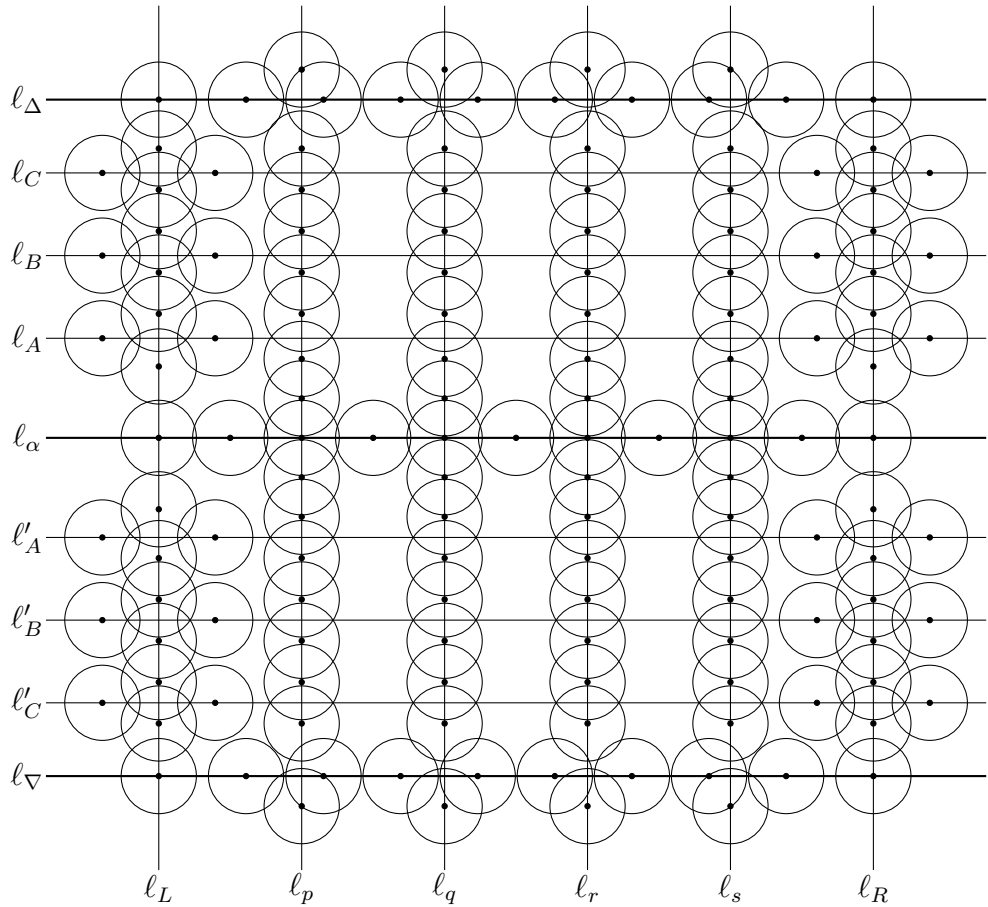
\begin{figure}[htbp]
	\centering
	\begin{tikzpicture}[scale=0.5]
		
		\node[draw=none] at (-3.5,0) {$\ell_\alpha$};
		\node[draw=none] at (0,-12) {$\ell_L$};
		\node[draw=none] at (19,-12) {$\ell_R$};

		\node[draw=none] at (3.8,-12) {$\ell_p$};
		\node[draw=none] at (7.6,-12) {$\ell_q$};
		\node[draw=none] at (11.4,-12) {$\ell_r$};
		\node[draw=none] at (15.2,-12) {$\ell_s$};
		
		\node[draw=none] at (-3.5,-9) {$\ell_\nabla$};
		\node[draw=none] at (-3.5,-7.05) {$\ell'_C$};
		\node[draw=none] at (-3.5,-4.85) {$\ell'_B$};
		\node[draw=none] at (-3.5,-2.65) {$\ell'_A$};
		\node[draw=none] at (-3.5,2.65) {$\ell_A$};
		\node[draw=none] at (-3.5,4.85) {$\ell_B$};
		\node[draw=none] at (-3.5,7.05) {$\ell_C$};
		\node[draw=none] at (-3.5,9) {$\ell_\Delta$};
		
		\tikzstyle{every node}=[draw, fill=black, shape=circle, minimum size=2pt,inner sep=0pt];

		\draw[thick] (-3,0)--(22,0);
		
		\foreach \i in {0,1.9,3.8,5.7,7.6,9.5,11.4,13.3,15.2,17.1,19} 
		{
			\node at (\i,0) {};
			\draw (\i,0) circle[radius=1];			
		}

		\foreach \i in {3.8,7.6,11.4,15.2} 
		{
			\foreach \j in {-7.6,-6.5,-5.4,-4.3,-3.2,-2.1,-1.05,1.05,2.1,3.3, 4.4,5.5, 6.6,7.7}
			{
				\node at (\i,\j) {};
				\draw (\i,\j) circle[radius=1];
				
			}
		}
		
		\foreach \i in {0,19} 
		{
			\foreach \j in {-7.6,-6.5,-5.4,-4.3,-3.2,-1.9,1.9,3.3, 4.4,5.5, 6.6,7.7}
			{
				\node at (\i,\j) {};
				\draw (\i,\j) circle[radius=1];
				
			}
		}
		
		\foreach \i in {0,3.8,7.6,11.4,15.2,19} 
		\draw (\i,-11.5)--(\i,11.5);	
		
		\foreach \i in {-7.05,-4.85,-2.65,2.65,4.85,7.05} 
		{	
			\node at (1.5,\i) {};
			\node at (-1.5,\i) {};
			\node at (17.5,\i) {};
			\node at (20.5,\i) {};		
			\draw (1.5,\i) circle[radius=1];
			\draw (-1.5,\i) circle[radius=1];
			\draw (17.5,\i) circle[radius=1];
			\draw (20.5,\i) circle[radius=1];
			\draw (-3,\i)--(22,\i);
		}
		
		\draw[thick] (-3,9)--(22,9); 
		\draw[thick] (-3,-9)--(22,-9); 
		
		\foreach \i in {0,2.32, 4.38, 6.43, 8.48, 10.53, 12.58, 14.63, 16.68,19}
		{
			\node at (\i,9) {};
			\draw (\i,9) circle[radius=1];
			\node at (\i,-9) {};
			\draw (\i,-9) circle[radius=1];
		}
		
		\foreach \i in {3.8,7.6,11.4,15.2} 
		{
			\foreach \j in {-9.8,9.8}
			{
				\node at (\i,\j) {};
				\draw (\i,\j) circle[radius=1];
			}
		}

	\end{tikzpicture}
	\caption{Realization of the graph given in \ref{fig:skeleton} onto the lines given in \ref{fig:frame}.}
	\label{fig:skeletonRealization}
\end{figure}

Starting from the second edge of $P_L$ (respectively $P_R$), every second edge is a chord of a 4-cycle ($C_4$).
Throughout the paper, we refer to such 4-cycles with a chord as a \emph{diamond}.
Two vertices of these diamonds are of $P_L$ (respectively $P_R$), and remaining two are denoted by red triangles in Figure~\ref{fig:skeleton}.

Remember that the problem takes two inputs: a graph, and a set of lines determined by their equations (or rather by two sets of rational numbers, since every line is parallel to either $x$- or $y$- axis).
For a Monotone NAE3SAT formula with $3$ clauses and $4$ literals, we have described the input graph above.
Now, let us discuss the input lines of our gadget.
The input graph is given in Figure~\ref{fig:skeleton}, and the corresponding lines are given in Figure~\ref{fig:frame}.
We claim that the given graph can be embedded onto the given lines with $\varepsilon$ flexibility, and the resulting realization looks like the set of unit disks given in Figure~\ref{fig:skeletonRealization}.

In order to force such embedding, we adjust the Euclidean distance between each pair of parallel lines carefully.
We start by defining the horizontal line $\ell_\alpha$.
This line is the axis of horizontal symmetry for our line configuration.
Thus, it is safe to assume that $\ell_\alpha$ is the $x$-axis.
On the positive side of the $y$-axis, for each clause $A$, $B$, and $C$, there is a straight line parallel to $\ell_\alpha$, and another horizontal line acting as the top boundary of the configuration.
These lines are denoted by $\ell_A, \ell_B$, $\ell_C$, and $\ell_\Delta$, and their equations are $y = a$, $y = b$, $y = c$, and $y = \Delta$, respectively, where $a < b < c < \Delta$.
For every pair of consecutive horizontal lines, the Euclidean distance between them is precisely $2.01$ units.
That is, $a = 2.01$, $b = 4.02$ and $c = 6.03$, and $\Delta = 8.04$.
Note that $2.01$ is sufficiently small for the case where there are three clauses.
In general, this value is $(2 + \varepsilon)$ where $\varepsilon \leq 1/4k$.
For every horizontal line described above, there is another horizontal line symmetric to it about the $x$-axis.
These lines are $\ell'_A, \ell'_B, \ell'_C$, and $\ell_\nabla$ (see Figure~\ref{fig:frame}).

The leftmost vertical line is $\ell_L$, which is the left boundary of our configuration.
We can safely assume that $\ell_L$ is the $y$-axis for the sake of simplicity.
For each literal $q$, $r$, $s$ and $t$, there exists a vertical line parallel to $\ell_L$, and another vertical line that defines the right boundary of our configuration.
These lines are denoted by $\ell_q, \ell_r$, $\ell_s$, $\ell_t$, and $\ell_R$, and their equations are $x = q$, $x = r$, $x = s$, $x = t$ and $x = R$, respectively, where $q < r < s < t < R$.
The Euclidean distance between each pair of consecutive vertical lines is precisely $3.8$ units.
That is $q = 3.8$, $r = 7.6$, $s = 11.4$, $t = 15.2$, and $R = 19$.

Up to this point, we have described the input graph, and the input lines for a given Monotone NAE3SAT formula with 3 clauses and 4 literals.
In general, for a given Monotone NAE3SAT formula $\Phi$ with $k$ clauses $C_1, C_2, \dots, C_k$, and $m$ literals $x_1, x_2, \dots, x_m$, the $\APUD(2k+1, m+2)$ instance has the components that are listed in Figure~\ref{fig:configuration}.

\begin{figure}
	\begin{mdframed}
		\begin{enumerate}
			\item An induced path $P_\alpha = (\alpha_1, \alpha_2, \dots, \alpha_{2m+3})$ with $2m+3$ vertices.
			
			\item $m$ induced paths $P^1 = (P^1_1, P^1_2, \dots, P^1_{4k+3})$, $\dots$, $P^m = (P^m_1, P^m_2, \dots, P^m_{4k+3})$, each with $4k+3$ vertices, where $\alpha_3 = P^1_{2k+2}$, $\alpha_5 = P^2_{2k+2}$, $\dots$, $\alpha_{2k+1} = P^m_{2k+2}$, and induced 4-cycles containing the first and the last vertices of each of these paths.
			
			\item Two induced paths $P_L = (L_1, \dots, L_{4k+3})$ and $P_R = (R_1, \dots, R_{4k+3})$, each with $4k+3$ vertices, where the edges 
			$L_2L_3$, $L_4L_5$, $\dots$, $L_{2k}L_{2k+1}$, $L_{2k+3}L_{2k+4}$, $\dots$, $L_{4k+1}L_{4k+2}$, and
			
			$R_2R_3$, $R_4R_5$, $\dots$, $R_{2k}R_{2k+1}$, $R_{2k+3}R_{2k+4}$, $\dots$, $R_{4k+1}R_{4k+2}$ are chords of disjoint 4-cycles.
			\item $2k+3$ horizontal lines $\ell_\nabla, \ell'^C_k, \ell'^C_{k-1}, \dots, \ell_\alpha, \ell^C_1, \ell^C_2, \dots, \ell^C_k, \ell_\Delta$, with equations
			
			$\ell_\nabla: y= -(2+\varepsilon)(k+1)$
			
			$\ell_\Delta: -\ell_\nabla$
			
			$\ell_\alpha: y = 0$
			
			$\ell'^C_{i} = -(2+\varepsilon)i$, 
			
			and $\ell^C_{i} = (2+\varepsilon)i$ for $i = 1,2,\dots,k$  and for $\varepsilon \leq 1/4k$.
			\item $m+2$ vertical lines $\ell_L, \ell^x_1, \ell^x_2, \dots, \ell^x_m, \ell_R$, with equations $\ell_L: x = 0$, 
			
			$\ell_R: x = 3.8(m+1)$, 
			
			and $\ell^x_i: 3.8i$ for $i = 1,2,\dots,m$.
		\end{enumerate}
	\end{mdframed}
	\caption{The straight line configuration for the gadget that corresponds to a NAE3SAT formula with $k$ clauses and $m$ lines.}
	\label{fig:configuration}
\end{figure}

In total, for the given formula $\Phi$ with $k$ clauses and $m$ literals, our gadget is an instance of $\APUD(2k+3,m+2)$.

Now, let us show that the given graph has a unique embedding onto the given lines, up to $\varepsilon$ flexibility.

\begin{prp} \label{prp:boldLines}
	The vertices indicated by rectangles in Figure~\ref{fig:skeleton} can only be embedded on the bold lines in Figure~\ref{fig:frame}.
\end{prp}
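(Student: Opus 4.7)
The plan is to exploit the fact that the input horizontal lines are spaced by $(2+\varepsilon)$ and the vertical lines by $3.8$, both strictly greater than $2$. Since unit disks have radius $1$, two disks whose centers lie on distinct parallel input lines cannot intersect. Consequently, every edge $uv$ of $G$ either has both endpoints on the same input line, or connects a vertex on a horizontal line with one on a vertical line, with both centers within distance $2$ of the corresponding crossing point. This single observation is the workhorse of the whole proof: in particular, any induced path must be realizable as a sequence of unit disks whose consecutive centers are at distance at most $2$, and whose ``long direction'' hugs either a single horizontal or a single vertical input line.

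Next I would handle $P_\alpha$. It is an induced path on $2m+3$ vertices whose endpoints $\alpha_1, \alpha_{2m+3}$ are identified with a vertex of $P_L$ and a vertex of $P_R$, and whose internal vertices $\alpha_{2i+1}$ are identified with the middle vertex of $P^i$ for each literal $x_i$. The embedded path therefore must reach both $\ell_L$ and $\ell_R$ and must pass through the neighbourhood of each vertical line $\ell^{x_i}$, covering a horizontal extent of $3.8(m+1)$. Since consecutive disks of an induced path are at distance at most $2$, the $2m+3$ vertices of $P_\alpha$ provide a total chain length of at most $2(2m+2) = 4m+4$, which is barely enough to achieve the required horizontal coverage $3.8(m+1)$ and leaves no slack for any vertical excursion. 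A deviation onto a distinct horizontal line would cost at least $(2+\varepsilon)$ of vertical distance (hence, by the triangle inequality, at least one ``wasted'' edge worth of horizontal budget), so the budget is strictly insufficient unless every $\alpha_j$ sits on a single horizontal line. The only such line compatible with the identified shared vertices and the left--right symmetry of the construction is $\ell_\alpha$, so $P_\alpha$ is forced onto $\ell_\alpha$.

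An analogous argument forces $P_L$ onto $\ell_L$ and $P_R$ onto $\ell_R$. Each of these is an induced path on $4k+3$ vertices whose middle vertex has already been pinned (in the previous step) to $\ell_\alpha \cap \ell_L$ or $\ell_\alpha \cap \ell_R$, and it must extend both up to $\ell_\Delta$ and down to $\ell_\nabla$, spanning a vertical distance of $2(2+\varepsilon)(k+1)$. The same chain-length budget as for $P_\alpha$ applies, and it allows no deviation from the unique vertical line through the already-pinned middle vertex. Finally, the red rectangular vertices belonging to the $4$-cycles attached to the end vertices of each $P^i$ must sit at the extreme top and bottom of the gadget: by the same budget argument the end vertex of $P^i$ is pinned to $\ell^{x_i}$, and its adjacent rectangular neighbours can only lie at distance at most $2$ from it on a horizontal input line distinct from the ones already ruled out by the path arguments, which leaves exactly $\ell_\Delta$ and $\ell_\nabla$.

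The main obstacle is making the chain-length budget argument airtight. It is tempting (but wrong) to claim the path needs length exactly equal to the span; one must actually show that every time a path locally turns between a horizontal and a vertical input line, it consumes at least one vertex of budget while advancing its dominant coordinate by strictly less than $2$. I would handle this by a local case analysis at each crossing: if a path enters the unit square around a crossing on a horizontal line and exits on the perpendicular vertical line, both entry and exit centers lie within distance $2$ of the crossing point, so the net horizontal advance is strictly smaller than what two consecutive centers along the same horizontal line would achieve. Summing these local losses across any hypothetical turning path shows it cannot attain the required span, and the proposition follows.
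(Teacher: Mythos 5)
Your proposal rests almost entirely on a chain-length ``budget'' argument (consecutive centers of an induced path are at distance at most $2$, so an $n$-vertex path spans at most $2(n-1)$), and this argument is quantitatively insufficient for the gadget at hand. For $P_\alpha$ the budget is $2(2m+2)=4m+4$ against a required span of $3.8(m+1)$, leaving slack $0.2(m+1)$ --- already more than a full edge's worth once $m\geq 10$, so ``no slack for any vertical excursion'' is false. The situation is far worse for $P_L$ and $P_R$: they have $4k+2$ edges, hence budget $8k+4$, against a vertical span of roughly $4k$, i.e.\ the budget is about \emph{double} what is needed, and no amount of local case analysis at crossings can close a factor-two gap. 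The paper does not use an upper-bound budget at all; it forces the embedding structurally. The diamonds hanging off every other edge of $P_L$ and $P_R$ each require a perpendicular crossing (their two non-path vertices must sit on a perpendicular line on opposite sides, by Lemma~\ref{lem:triangle}), so $P_L$ needs $2k+1$ distinct crossings and a counting argument over the available intersection points in the line arrangement pins it to $\ell_L$; the induced $K_{1,4}$'s centered at $\alpha_3,\alpha_5,\dots$ (which by Lemma~\ref{lem:s4} cannot be realized on two parallel lines) force $P_\alpha$ and the junctions with the $P^i$ to sit at crossings, pinning $P_\alpha$ to $\ell_\alpha$.

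You are also missing the two ingredients the paper needs to place the red rectangle vertices on $\ell_\Delta$ and $\ell_\nabla$ specifically. First, the \emph{lower} bound for induced paths: non-adjacent disks are at distance strictly greater than $2$, so the half-path $P^i_{2k+2},\dots,P^i_{4k+3}$ on a vertical line must span strictly more than $2k$ units (Lemma~\ref{lem:squeeze}); you only ever invoke the upper bound. Second, the geometry of an induced $4$-cycle at a crossing (Lemma~\ref{lem:c4}): its non-adjacent pair must lie on one line on opposite sides of the crossing, at distance strictly between $2-\sqrt3$ and $\sqrt3$ from it. Combining the $>2k$ lower bound with these constraints and the deliberately chosen spacing $2+\varepsilon$ with $\varepsilon\leq 1/4k$ yields the contradiction $2-\sqrt3<\varepsilon k$ if the terminal $4$-cycle tried to sit at $\ell^C_k$, which is what forces it up to $\ell_\Delta$. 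Your final sentence asserts the red squares land on $\ell_\Delta$ or $\ell_\nabla$ because all other lines are ``already ruled out by the path arguments,'' but nothing in your argument rules out the clause lines $\ell_A,\dots,\ell_C$ for these vertices.
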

We prove this proposition by giving a series of lemmas.

Let us start by discussing the embedding of $P_L$ onto $\ell_L$ (and respectively $P_R$ onto $\ell_R$).
We give the following two trivial lemmas as preliminaries for the proof of our claim.

\begin{lem} \label{lem:triangle}
	Consider two disks $A$ and $B$, centered on $(a,0)$ and $(b,0)$ with $0<|a|<|b|$.
	Another disk, $C$ that is centered on $(0,c)$ cannot intersect $B$ without intersecting $A$.
\end{lem}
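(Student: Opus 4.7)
The plan is to reduce the lemma to a direct comparison of Euclidean distances from the center of $C$ to the centers of $A$ and $B$, and then invoke the fact that all three disks have equal radius. The key observation is that the hypothesis $0 < |a| < |b|$ only constrains the squared coordinates, which is exactly what appears in the distance formula from $(0,c)$.

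First, I would compute the two distances explicitly using the Pythagorean theorem: the distance from the center $(0,c)$ of $C$ to the center $(a,0)$ of $A$ is $\sqrt{a^2 + c^2}$, and the distance from $(0,c)$ to $(b,0)$ is $\sqrt{b^2 + c^2}$. From $0 < |a| < |b|$ we get $a^2 < b^2$, hence $a^2 + c^2 < b^2 + c^2$, which in turn yields the strict inequality
\[
\mathrm{dist}\bigl((0,c),(a,0)\bigr) \;<\; \mathrm{dist}\bigl((0,c),(b,0)\bigr).
\]
Note that this step is insensitive to the signs of $a$ and $b$, so the case where $a$ and $b$ lie on opposite sides of the origin is handled uniformly with the case where they lie on the same side.

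Next, I would translate the distance comparison into the language of disk intersection. Since $A$, $B$, $C$ are unit disks of a common radius $r$, two of them intersect if and only if the Euclidean distance between their centers is at most $2r$. Hence, if $C$ intersects $B$, then $\mathrm{dist}(C,B) \le 2r$, and by the strict inequality above we immediately obtain $\mathrm{dist}(C,A) < 2r$, so $C$ intersects $A$ as well, completing the proof by contrapositive.

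There is no real obstacle in this argument — it is essentially a one-line calculation followed by the definition of unit disk intersection. The only point worth stating explicitly is that the hypothesis is given in terms of absolute values precisely so that the subsequent geometric claims in Proposition~\ref{prp:boldLines} can be applied regardless of on which side of the $y$-axis the disks $A$ and $B$ happen to lie.
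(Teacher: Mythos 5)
Your proof is correct and follows essentially the same route as the paper's: compute the two center distances $\sqrt{a^2+c^2}$ and $\sqrt{b^2+c^2}$, use $|a|<|b|$ to compare them, and conclude via the intersection criterion that distance at most twice the radius forces $C$ to meet $A$ whenever it meets $B$. The only cosmetic difference is that you phrase the threshold as $2r$ while the paper fixes the unit radius and writes $\leq 2$.
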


\begin{proof}
	
	Consider the triangle whose corners are $(a,0)$, $(b,0)$, and $(0,c)$.
	If $|a|<|b|$, then $\sqrt{a^2+c^2} < \sqrt{b^2+c^2}$ holds.
	For $C$ to intersect $B$, $\sqrt{b^2+c^2} \leq 2$ must hold.
	However, since $|a| < |b|$, if $\sqrt{b^2+c^2} \leq 2$ holds, then $\sqrt{a^2+c^2} \leq 2$ also holds.
	Thus, $C$ intersects $B$ if, and only if $C$ intersects $A$. 
\end{proof}

\begin{lem} \label{lem:s4}
	An induced 4-star ($K_{1,4}$) can be realized as a unit disk graph on two perpendicular lines, but not on two parallel lines.
\end{lem}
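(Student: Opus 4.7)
The plan is to prove the two assertions of the lemma separately: a direct construction for the perpendicular case, followed by an algebraic impossibility argument for the parallel case.

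For the perpendicular case I would place the central disk of $K_{1,4}$ at the intersection of the two lines (taken as the coordinate axes) and the four leaf-disks at the symmetric positions $(\pm d, 0)$ and $(0, \pm d)$ for any fixed $d$ with $\sqrt{2} < d \leq 2$. Each leaf is at distance $d \leq 2$ from the center, so its disk intersects the center disk; any two leaves on the same axis are $2d > 2\sqrt{2} > 2$ apart, and any two on perpendicular axes are $d\sqrt{2} > 2$ apart, so no two leaf-disks intersect, and we obtain the induced $K_{1,4}$.

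For the parallel case I would set the center disk at the origin on $\ell_1: y = 0$ and take $\ell_2: y = h$ with $h \geq 0$. A brief pigeonhole argument first shows that at most two leaves can lie on a single line: on a single line the four leaves would all lie in an interval of length $\leq 4$ around the center, so two of them would be within distance $2$ of each other and would induce an unwanted adjacency. A routine case split on $h$ then confines the analysis to $0 < h < \sqrt{3}$ with exactly two leaves on each line, since $h \geq 2$ admits no leaf on $\ell_2$ and $\sqrt{3} \leq h < 2$ gives $\sqrt{4-h^2} \leq 1$, allowing at most one leaf on $\ell_2$.

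Writing the leaves on $\ell_1$ as $(a_1,0),(a_2,0)$ with $a_1 < 0 < a_2$ and those on $\ell_2$ as $(b_1,h),(b_2,h)$ with $b_1 < 0 < b_2$, the constraints become $|a_i| \leq 2$, $|b_i| \leq \sqrt{4-h^2}$, $a_2 - a_1 > 2$, $b_2 - b_1 > 2$, and $|a_i - b_j| > \sqrt{4-h^2}$ for all $i,j \in \{1,2\}$. The key observation is that for the same-sign pairs $(a_1,b_1)$ and $(a_2,b_2)$ the identity $|a_i - b_i| = \bigl||a_i| - |b_i|\bigr|$ together with $|b_i| \leq \sqrt{4-h^2}$ forces $|a_i| > |b_i| + \sqrt{4-h^2}$ (the alternative $|b_i| - |a_i| > \sqrt{4-h^2}$ would require $|a_i| < 0$). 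Summing the two inequalities yields $a_2 - a_1 > (b_2 - b_1) + 2\sqrt{4-h^2} > 2 + 2\sqrt{4-h^2}$, and combined with $a_2 - a_1 \leq 4$ this gives $\sqrt{4-h^2} < 1$, i.e., $h > \sqrt{3}$, contradicting $h < \sqrt{3}$. The main obstacle will be the sign-bookkeeping for the same-sign cross pairs; once the identity $|a_i - b_i| = \bigl||a_i| - |b_i|\bigr|$ is in place and the wrong branch is discarded, the inequality chain and the contradiction come out almost mechanically.
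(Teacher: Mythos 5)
Your proposal is correct, and while the perpendicular-lines construction matches the paper's (the paper places the leaves at $(\pm(2-\varepsilon),0)$ and $(0,\pm(2-\varepsilon))$, which is your construction with $d=2-\varepsilon$), your argument for the parallel-lines impossibility is genuinely different. The paper fixes a claw with two leaves flanking the center on its own line and the third leaf on the other line, reflects the off-line leaves about the center's vertical axis, and argues that the resulting seven disks form an induced $K_{1,6}$, which is known not to be a unit disk graph; your route instead runs a self-contained pigeonhole-plus-inequality argument that forces exactly two leaves per line, extracts $|a_i| > |b_i| + \sqrt{4-h^2}$ from the same-sign cross pairs, and derives the contradiction $h > \sqrt{3}$. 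What your version buys is that it needs no external forbidden-subgraph fact and no symmetry claim (the paper's reflection step quietly requires verifying that the mirrored disks stay independent of the original leaves, which it does not spell out), and it even identifies the quantitative threshold $h=\sqrt{3}$ on the line separation; the cost is more case analysis and sign bookkeeping. One small imprecision: at $h=2$ a leaf at $(0,h)$ is still at distance exactly $2$ from the center and hence adjacent under the closed-disk convention, so that case admits one leaf on $\ell_2$ rather than none --- but one leaf there plus at most two on $\ell_1$ still falls short of four, so your conclusion is unaffected.
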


\begin{proof}
	
	First part of the proof is trivial.
	Consider the $x=0$ and $y=0$ lines as two perpendicular lines.
	Four unit disks centered on $(0,-(2-\varepsilon)$, $((2-\varepsilon),0)$, $(0, (2-\varepsilon)$, and $(-(2-\varepsilon),0)$ where $0 < \varepsilon \ll 1$ form an induced 4-star ($K_{1,4}$).
	
	Now, let us show that an induced 4-star cannot be realized as unit disks on two parallel lines.
	The disks that correspond to the vertices of an induced claw ($K_{1,3}$) can be embedded on two parallel lines if the centers of three disks are collinear.
	Suppose that four disks, $A$, $B$, $C$, and $U$ form an induced claw, where $U$ is the central vertex, and $A,B,C$ are the rays.
	Without loss of generality, suppose that $A$, $U$, and $C$ are centered on $(a,0)$, $(0,0)$ and $(c,0)$, respectively.
	Thus, $B$ must be on the second parallel line, say $y=k$.
	To complete a 4-star, we need one more disk, $D$, centered on either $y=0$ or $y=k$, such that $D$ intersects $U$, but none of $A$, $B$, and $C$ 
	
	Clearly, $D$ cannot be on $y=0$ line because $U$ is enclosed by $A$ and $C$ from both sides.
	So, suppose that $D$ is centered on $(d,k)$.
	In this case, we show that no such $k$ exists by contradiction.
	
	Place two more disks, $B'$ and $D'$, centered on $(-b,k)$ and $(-d,k)$, respectively.
	If $B$ and $D$ do not intersect, then $B'$ and $D'$ also do not intersect.
	Moreover, since $A$ does not intersect with $B$, $A$ also does not intersect with $B'$.
	Symmetrically, $C$, $D$, $C'$, and $D'$ have no pairwise intersections.
	
	The described configuration is a $K_{1,6}$ with disks $U;A,B,C,D,B',D'$ which cannot be realized as a unit disk graph.
	Therefore, we have a contradiction. 
\end{proof}

\begin{lem} \label{lem:c4}
	If an induced 4-cycle ($C_4$) can be realized as a unit disk graph on two perpendicular lines that intersect at $\omega$, then the following hold:
	\begin{enumerate}[i)]
		\item A pair of non-intersecting disks must be on the same line, but at the different sides of $\omega$.
		\item The the Euclidean distance between $\omega$ and any disk center is strictly greater than $2 - \sqrt{3}$ and strictly less than $\sqrt{3}$.
	\end{enumerate}
\end{lem}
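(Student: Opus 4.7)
\medskip

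\noindent\textbf{Proof plan.}
I will place $\omega$ at the origin and take the two perpendicular lines to be the $x$-axis and the $y$-axis. Label the vertices of the induced $C_4$ as $A,B,C,D$ with edges $AB$, $BC$, $CD$, $DA$ and non-edges $AC$, $BD$. The whole argument then reduces to a case analysis on how the four disk centres are distributed between the two axes.

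For part (i), I will rule out every distribution other than ``two non-adjacent disks on each axis.'' First, all four disks on one axis is impossible because the intersection graph of a set of collinear unit disks is an interval graph, and $C_4$ is not chordal. Second, suppose three disks lie on one axis (say the $x$-axis) and one on the other; then the three collinear disks induce a $P_3$ (the only 3-vertex induced subgraph of $C_4$), whose middle vertex is the disk $X_2$ whose centre lies strictly between the centres of the other two. A direct case check on signs shows $|X_2|$ is never strictly larger than both $|X_1|$ and $|X_3|$, so Lemma~\ref{lem:triangle} forces the off-axis disk $Y$ (which should miss $X_2$ and hit the farther of $X_1,X_3$) into a contradiction; the degenerate case $X_2=\omega$ is handled separately because then $X_2$ would hit any disk within distance $2$ of $\omega$, yet $Y$ must. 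Third, in the 2-and-2 split, the only remaining alternative to the desired configuration has the two axis-sharing disks being an \emph{adjacent} pair of $C_4$ on each axis; writing $A=(a_1,0)$, $B=(a_2,0)$, $C=(0,c_1)$, $D=(0,c_2)$ for the pairing $\{A,B\},\{C,D\}$ and summing the four pairwise squared-distance inequalities
\[
a_1^2+c_2^2\le 4,\quad a_2^2+c_1^2\le 4,\quad a_1^2+c_1^2>4,\quad a_2^2+c_2^2>4,
\]
gives $a_1^2+a_2^2+c_1^2+c_2^2\le 8$ and $>8$ simultaneously, a contradiction; the other pairing is symmetric. Hence the two non-adjacent disks must share an axis. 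Finally, writing these two as $(p,0)$ and $(q,0)$ with $p,q$ of the same sign, the requirement that both are adjacent to each disk on the other axis forces $|p|,|q|\le 2$, which contradicts the non-edge inequality $|p-q|>2$; therefore they lie on opposite sides of $\omega$.

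For part (ii), I will work in the surviving configuration $A=(-a,0)$, $C=(c,0)$, $B=(0,b)$, $D=(0,-d)$ with $a,b,c,d>0$ and the six constraints
\[
a^2+b^2\le 4,\ \ b^2+c^2\le 4,\ \ c^2+d^2\le 4,\ \ d^2+a^2\le 4,\ \ a+c>2,\ \ b+d>2.
\]
For the upper bound on $a$, the first and fourth give $b,d\le\sqrt{4-a^2}$, so $2<b+d\le 2\sqrt{4-a^2}$, whence $a^2<3$, i.e., $a<\sqrt{3}$; the bounds on $b,c,d$ are obtained identically. For the lower bound, the just-proven upper bound $c<\sqrt{3}$ combined with $a+c>2$ gives $a>2-\sqrt{3}$, and again the other three are symmetric.

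The main obstacle I anticipate is the careful handling of the 3-on-one-axis case: one has to verify that the middle vertex of the induced $P_3$ on the collinear triple really corresponds to the centrally-located disk, and that in every sign configuration of $X_1,X_2,X_3$ along the axis Lemma~\ref{lem:triangle} applies to at least one of the visibility requirements for $Y$. The remaining steps are then essentially algebraic inequality manipulations.
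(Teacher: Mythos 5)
Your proposal is correct, and it is noticeably more thorough than the proof in the paper. The paper's argument for part~(i) simply \emph{posits} that the two non-adjacent disks sit at $(a,0)$ and $(c,0)$ on a common line and then shows $ac<0$; it never rules out the distributions in which the non-adjacent pair is split across the two axes. You close exactly that gap: the $4$--$0$ case via chordality of interval graphs, the $3$--$1$ case via the induced $P_3$ and Lemma~\ref{lem:triangle}, and the adjacent-pairs $2$--$2$ case via the neat observation that summing the two edge inequalities and the two non-edge inequalities yields $a_1^2+a_2^2+c_1^2+c_2^2\le 8$ and $>8$ simultaneously --- an argument that does not appear in the paper at all. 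For part~(ii) the two proofs extract the same bounds from the same six constraints, but by different means: the paper runs an informal extremal argument (``$|d|$ is minimized when $|b|$ is maximized, which happens when $|a|=|c|=1+\sqrt{\varepsilon}$\dots''), whereas you get the upper bound directly from $2<b+d\le 2\sqrt{4-a^2}$ and the lower bound from $a>2-c>2-\sqrt{3}$. Your derivation is shorter, avoids the $\varepsilon$-bookkeeping, and makes the strictness of both bounds transparent; the paper's version has the minor virtue of exhibiting the near-extremal configuration explicitly. In short: same coordinate framework and same inequalities, but your case analysis in (i) and direct algebra in (ii) make the proposal a genuinely tighter proof than the one in the text.
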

\begin{proof}
	Without loss of generality, let two perpendicular lines be the $x$- and $y$-axes, and thus let $\omega$ be $(0,0)$.
	Denote the vertices of 4-cycle by $a,b,c,d$ in clockwise order.
	That is, the edges $ab$, $bc$, $cd$, and $ad$ exist.
	Denote the disks which correspond to these vertices by $A$, $B$, $C$, and $D$.
	
	Suppose that two non-intersecting disks, $A$ and $C$ are centered at $(a,0)$ and $(c,0)$, respectively where $0 \leq a \leq c$.
	Since $c-a > 2$ and $a > 0$, then $c > 2$, which means that any disk which intersects $C$ cannot be on $y$-axis.
	However, both $B$ and $D$ cannot be on $x$-axis because then we have a unit interval graph, in which an induced 4-cycle cannot appear.
	Thus, if $ac < 0$ must hold and symmetrically, $bd <0$ must also hold.
	Therefore (i) holds.
	
	Up to symmetry, assume that $A$, $B$, $C$ and $D$ are centered at $(a,0)$, $(0,b)$, $(c,0)$ and $(0,d)$, respectively, where $a,c < 0$ and $b,d > 0$.
	Then, the following equations hold:
	\begin{align}
		|a - c| > 2\\
		|b - d| > 2\\
		\sqrt{a^2+b^2} \leq 2\\
		\sqrt{b^2+c^2} \leq 2\\
		\sqrt{c^2+d^2} \leq 2\\
		\sqrt{d^2+a^2} \leq 2
	\end{align}
	Let us minimize the value of $|d|$, which is simply the Euclidean distance between the intersection point of two perpendicular lines and the center of $D$.
	Due to (2), $|d|$ is minimized when $|b|$ is maximized.
	Due to (3) and (4), the maximum value of $|b|$ is achieved when $|a|-|c|$ is minimum.
	Due to (1), $|a| - |c|$ is minimum when $|a| = |c| = 1 + 2\sqrt{\varepsilon}$ where $\varepsilon$ is sufficiently small.
	In order to maximize the value of $|b|$, we must assign $a = -(1 + \sqrt{\varepsilon})$, and $c = 1 + \sqrt{\varepsilon}$. 
	Then, the value of $b$ is $\sqrt{3 - \varepsilon}$ which is strictly less than $\sqrt{3}$, and due to (2), the value of $d$ is strictly greater than $2 - \sqrt{3}$.
	Since we maximized the value of $b$ and minimized the value of $d$, we can now conclude that (ii) holds.

\end{proof}

Now, with the help of Lemmas~\ref{lem:triangle} and \ref{lem:s4}, we state the following lemmas, and finish the proof of Proposition~\ref{prp:boldLines}.

\begin{lem} \label{lem:LRa}
	The induced paths $P_L$, $P_R$ and $P_\alpha$ in the input graph (Figure~\ref{fig:skeleton}) can only be embedded onto $\ell_L$, $\ell_R$, and $\ell_\alpha$, respectively (Figure~\ref{fig:frame}).
\end{lem}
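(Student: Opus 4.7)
The plan is to exploit length rigidity together with the forced use of perpendicular lines at each diamond. I would first establish a universal lower bound: if an induced path $v_1,v_2,\dots,v_n$ is realized as unit disks with all centers in a common horizontal (resp.\ vertical) strip of width $<2$, then its centers must strictly monotone along the long direction, because $v_i$ and $v_{i+2}$ are non-adjacent (distance $>2$) and hence cannot both lie within distance $2$ of $v_{i+1}$ on the same side. Consequently the horizontal/vertical extent of such a realization is at least something like $(n-1)\cdot(2-\sqrt{3})$ by Lemma~\ref{lem:c4}, and in particular grows linearly with~$n$.

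Next I would pin down $P_L$ and $P_R$. Each edge $L_{2j}L_{2j+1}$ is the chord of a diamond (a $C_4$ together with this chord), so by Lemma~\ref{lem:c4} the two non-chord vertices of the diamond must lie on a line perpendicular to the line carrying $L_{2j},L_{2j+1}$, within the distance band forced by that lemma. Iterating this along $P_L$, consecutive edges of $P_L$ must alternate between those that are chords of diamonds and those that are not, so all of $P_L$ is forced to sit on a single line: the chord-edges fix the direction (up to perpendicular flip), and the non-chord edges cannot jump to a perpendicular line without creating forbidden $C_4$s with adjacent diamonds. The only axes-parallel lines in the input long enough (linear in $k$) to host $P_L$ are $\ell_L,\ell_R,\ell_\Delta,\ell_\nabla$; but $P_L$ shares the vertex $\alpha_1=L_{2k+2}$ with $P_\alpha$, whose other end is $\alpha_{2m+3}=R_{2k+2}$, and a symmetric argument pins $P_R$ to one of those four long lines. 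The horizontal lines $\ell_\Delta,\ell_\nabla$ cannot host $P_L$ because the diamonds would then need to open towards $\ell_L$ or $\ell_R$, which are too far away by the choice of spacings; together with the symmetric placement of $P_R$, one checks that the only consistent choice up to the reflection symmetry of the line configuration is $P_L\mapsto\ell_L$ and $P_R\mapsto\ell_R$.

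Finally I would treat $P_\alpha$. Since its two endpoints are forced onto $\ell_L$ and $\ell_R$ at the same horizontal level (the ``middle'' vertex of each vertical path), and $P_\alpha$ has length exactly $2m+3$, the same monotonicity/extent argument shows that $P_\alpha$ must traverse horizontally from $\ell_L$ to $\ell_R$ using essentially every unit along the way; the only horizontal line on which this is geometrically possible, given the $3.8$-unit horizontal spacing and the $2+\varepsilon$ vertical spacing, is $\ell_\alpha$. Moreover, $P_\alpha$ shares the vertices $\alpha_3,\alpha_5,\dots,\alpha_{2m+1}$ with the middle vertices of the long literal paths $P^i$, which by a symmetric argument must each be hosted on a vertical line; this forces those shared vertices to lie at line intersections, and the only intersections consistent with the spacings are $\ell_\alpha\cap\ell^x_i$.

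The main obstacle I anticipate is the second step: ruling out that $P_L$ bends at a non-diamond edge onto a perpendicular line, or that a diamond is realized with its two side vertices on the \emph{same} side of the path rather than on opposite sides. The former is blocked by the fact that any such bend would create an induced $4$-cycle with the adjacent diamond, which is forbidden in our graph; the latter is blocked by Lemma~\ref{lem:c4}(i), which requires the two non-adjacent disks of the $C_4$ to sit on opposite sides of the intersection point. Once both subtleties are handled, the rest is a careful bookkeeping of the ``$(2+\varepsilon)$ versus $3.8$'' spacings showing that no other axes-parallel line is long enough, close enough, and correctly oriented to accommodate the required paths.
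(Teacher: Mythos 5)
Your overall strategy is in the right family — diamonds force their side vertices onto perpendicular lines near an intersection (Lemma~\ref{lem:c4}), the $4$-stars at $\alpha_3,\alpha_5,\dots$ force intersections as well (Lemma~\ref{lem:s4}), and spacing does the rest — but the decisive step in your second paragraph is wrong, and it is exactly the step where the paper's proof does the real work. You claim that ``the only axes-parallel lines in the input long enough (linear in $k$) to host $P_L$ are $\ell_L,\ell_R,\ell_\Delta,\ell_\nabla$.'' All lines in the configuration are infinite, so ``length'' is not a meaningful filter; the relevant resource is the number of perpendicular lines crossing a candidate host line, since each diamond of $P_L$ must be seated at a distinct such crossing. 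By that measure the literal lines $\ell_q,\dots,\ell_t$ are indistinguishable from $\ell_L$ and $\ell_R$: each is a vertical line crossed by all $2k+3$ horizontal lines, and each in fact hosts a path with exactly $4k+3$ vertices (the $P^i$). So your dichotomy does not isolate $\ell_L$ and $\ell_R$, and your subsequent elimination of $\ell_\Delta,\ell_\nabla$ (``the diamonds would need to open towards $\ell_L$ or $\ell_R$, which are too far away'') is also not the operative obstruction, since $\ell_\Delta$ is crossed by all $m+2$ vertical lines and those crossings are available to seat diamonds.

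What the paper does instead is a global supply-versus-demand count: $P_L$ needs $2k$ crossings for its diamonds plus one more for the shared vertex $\alpha_1=L_{2k+2}$ (whose neighbour $\alpha_2$ cannot sit on the same line, so by Lemma~\ref{lem:triangle} the shared disk must be the one nearest the perpendicular line carrying $\alpha_2$), hence $2k+1$ crossings; $P_R$ symmetrically needs $2k+1$; $P_\alpha$ needs $m+2$ crossings for its $4$-stars and endpoints; and the paths $P^1,\dots,P^m$ consume the remaining vertical lines. Matching this demand against the exact supply of intersection points in the configuration is also what rules out $P_L$ bending across several lines — not your suggested mechanism that a bend ``would create an induced $4$-cycle with the adjacent diamond,'' which you do not substantiate and which I do not see how to make precise (the graph is fixed; a bend threatens to create \emph{extra adjacencies} or to starve other paths of intersection points, and only the latter is actually controlled). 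To repair your proof you should replace the ``long enough'' filter with this counting argument, and then the identification of $P_L$ with $\ell_L$ rather than with some $\ell^x_i$ follows from the order in which the shared central vertices $\alpha_1,\alpha_3,\dots,\alpha_{2m+3}$ appear along the induced path $P_\alpha$, which places $L_{2k+2}$ and $R_{2k+2}$ at the two extreme vertical lines.
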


\begin{proof}
	Assume that $P_L = (L_1, \dots, L_{4k+3})$ is realized on a single line, say $d$.
	Denote the disks $\mathcal{D}_L^1, \dots \mathcal{D}_L^{4k+3}$ that correspond to these vertices, respectively.
	Denote the disks $\mathcal{D}_\alpha^1, \dots, \mathcal{D}_\alpha^{2m+3}$ that correspond to the vertices ($\alpha_1, \dots, \alpha_{2m+3}$) of $P_\alpha$.
	
	Consider the diamond $u,v,L_{4k-1}, L_{4k}$, where $u$ and $v$ are indicated by red triangles in Figure~\ref{fig:skeleton}.
	Neither $U$ nor $V$ intersect any disk that corresponds to a vertex of $P_L$.
	Therefore, neither of them can be embedded onto $d$.
	Similarly, if they are embedded onto some other line perpendicular to $d$, then they intersect either $\mathcal{D}_L^{4k-2}$ or $\mathcal{D}_L^{4k+1}$ to intersect $\mathcal{D}_L^{4k-1}$ and $\mathcal{D}_L^{4k}$.
	Thus, every such pair of disks must be embedded in a way such that their centers lie onto the same line that intersects $d$, and at the different sides of $d$.
	This property also holds for the diamond that includes the vertices $L_{4k+1}$ and $L_{4k+2}$, as there is one extra vertex $L_{4k+3}$ at the end of $P_L$.
	
	There are $2k$ disjoint diamonds in which $4k$ vertices of $P_L$ are included.
	Each of these diamonds must be embedded around the intersection of two perpendicular lines.
	In addition, there is one vertex that $P_L$ and $P_\alpha$ have in common ($\alpha_1 = L_{2k+2}$).
	Since $P_\alpha$ is an induced path, each disk $\mathcal{D}_\alpha^i$ such that $1 \leq i \leq 2m+3$ cannot be centered on $d$.
	Thus, the center of the disk $\mathcal{D}_L^{2k+2} = \mathcal{D}_\alpha^1$, must be the closest center to some line $f$ on which $\mathcal{D}_\alpha^2$ is centered.
	Otherwise, due to Lemma~\ref{lem:triangle}, $\mathcal{D}_\alpha^2$ intersects with a closer to $f$.
	
	Therefore, to embed $P_L$, there must be a straight line, and precisely $2k+1$ parallel lines that are perpendicular to this line.
	Similarly, $P_R$ also requires $2k+1$ intersection points to be realized.
	Thus, both $P_L$ and $P_R$ must be realized on two distinct vertical lines.
	
	Now recall Lemma~\ref{lem:s4}.
	Note that each one of $\alpha_3, \alpha_5, \dots \alpha_{2m+1}$ is the central vertex of some 4-star, and thus must be embedded near the intersection of two perpendicular lines.
	Considering that $P_L$ and $P_R$ are realized on two vertical lines, we need $m+2$ intersection points to realize $P_\alpha$ as unit disks.
	Since $\alpha_1 = L_{2k+2}$ and $\alpha_{2m+3} = R_{2k+2}$, the remaining four intersections for 4-stars are the intersections between $\ell_alpha$ and each of $\ell_q, \ell_r, \ell_s, \ell_t$.
	It is now trivial to see that our initial assumption of $P_L$ being realized on a single line always holds, since if $P_L$ is realized on more than one straight lines, then $P_\alpha$ cannot be realized due to the number of intersections.
	Of course, one also needs to take into account the other induced paths, $P^1, \dots, P^m$ to verify this statement.
	
	Therefore, $P_L$, $P_R$ and $P_\alpha$ can only be embedded onto $\ell_L$, $\ell_R$, and $\ell_\alpha$, respectively. 
\end{proof}

\begin{lem} \label{lem:4cycles}
	A pair of non-intersecting disks that are included in an induced 4-cycle, but not included in any of $P^1, P^2, \dots, P^m$, must lie on either $\ell_\Delta$ or $\ell_\nabla$ (red rectangles in Figure~\ref{fig:skeleton}). 
\end{lem}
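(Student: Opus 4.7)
The plan is to show, via the same sort of $4$-star reasoning used in Lemma~\ref{lem:LRa}, that each literal path $P^i$ must sit on the vertical line $\ell^x_i$; and then use a packing estimate on $\ell^x_i$ together with Lemma~\ref{lem:c4} to force the horizontal line carrying the two red squares of each end-$4$-cycle to be $\ell_\Delta$ (for the top cycle) or $\ell_\nabla$ (for the bottom one).

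First I would observe that the shared vertex $\alpha_{2i+1}=P^i_{2k+2}$ is the centre of an induced $K_{1,4}$ formed by its two neighbours on $P_\alpha$ and its two neighbours on $P^i$. By Lemma~\ref{lem:s4} this star spans two perpendicular lines: $\ell_\alpha$ (carrying the two $P_\alpha$-neighbours) and a vertical line which, by the left-to-right ordering inherited from $P_\alpha$ in Lemma~\ref{lem:LRa}, must be $\ell^x_i$. A routine argument on induced paths of unit disks on a single line (three consecutive centres force the middle one to sit strictly between the other two) then shows that the centres of $P^i$ appear monotonically along $\ell^x_i$; after restricting to the upper half, the centres $y_{2k+2}=0<y_{2k+3}<\cdots<y_{4k+3}$ form a strictly increasing sequence with consecutive gaps $\leq 2$ and any two consecutive gaps summing to strictly more than $2$.

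Now consider the induced $4$-cycle $C=(P^i_{4k+3},s_1,t,s_2)$ at the top of $P^i$. Its only non-adjacent pair with both vertices outside every $P^j$ is $\{s_1,s_2\}$, since the other such pair $(P^i_{4k+3},t)$ contains the path vertex $P^i_{4k+3}$. By Lemma~\ref{lem:c4}, $C$ is realised around the intersection $\omega$ of two perpendicular lines; since $P^i_{4k+3}\in\ell^x_i$, one of them is $\ell^x_i$, and the other is some horizontal line $\ell_h\in\mathcal{H}$ at height $h$ which contains $s_1,s_2$ (on opposite sides of $\omega$). Lemma~\ref{lem:c4}(ii) then yields $y_{4k+3}=h-\beta$ with $\beta\in(2-\sqrt{3},\sqrt{3})$. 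An elementary alternating-gaps analysis on the $2k+1$ consecutive gaps above the midline shows that the minimum of $y_{4k+3}$, subject to the two path-gap constraints, strictly exceeds $2k$; hence $h>2k+(2-\sqrt{3})>2k+\tfrac14$. Because of the design choice $\varepsilon\leq 1/(4k)$, the topmost intermediate clause line $\ell^C_k$ sits at height $(2+\varepsilon)k<2k+\tfrac14$, and so do every $\ell^C_j$ with $j<k$ and $\ell_\alpha$ itself. The only horizontal line of $\mathcal{H}$ compatible with the lower bound on $h$ is therefore $\ell_\Delta$, and hence $s_1,s_2\in\ell_\Delta$. The symmetric argument applied to the bottom $4$-cycle of $P^i$ places its two red squares on $\ell_\nabla$, completing the claim for every literal.

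The main obstacle will be executing the alternating-gaps packing estimate cleanly enough to obtain the strict bound $y_{4k+3}>2k$; once this (and its companion for the lower half) is in hand, the rest of the proof is a direct application of Lemmas~\ref{lem:s4} and \ref{lem:c4} together with the calibration of horizontal-line spacings imposed by $\varepsilon\leq 1/(4k)$.
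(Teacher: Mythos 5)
Your proposal is correct in substance, but it takes a visibly different route from the paper's own proof of this lemma, and in fact it is considerably more complete. The paper's proof here is two sentences: it invokes only the qualitative part of Lemma~\ref{lem:c4} (an induced $4$-cycle must straddle the intersection of two perpendicular lines, with each non-adjacent pair on opposite sides of that intersection) and then simply asserts that the non-adjacent pair outside the $P^j$'s lands on $\ell_\Delta$ or $\ell_\nabla$; the quantitative reason why the horizontal line cannot be one of the clause lines $\ell^C_j$ is deferred to the later Lemmas~\ref{lem:inducedPaths} and~\ref{lem:squeeze}, whose proofs contain exactly the ingredients you use (the $4$-star placement of $P^i$ on $\ell^x_i$, the induced-path packing bound giving a gap strictly exceeding $2k$, Lemma~\ref{lem:c4}(ii), and the calibration $\varepsilon\leq 1/(4k)$, via $2-\sqrt3>\varepsilon k$). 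So you have essentially folded the proof of Lemma~\ref{lem:squeeze} into the proof of this statement; what this buys is a self-contained argument that actually justifies the ``Therefore'' which the paper leaves dangling, at the cost of duplicating work the paper performs one lemma later. One small point to tighten: when you write $y_{4k+3}=h-\beta$ you silently assume that the path endpoint lies \emph{below} the intersection point $\omega$ and the apex vertex above it; the opposite orientation must be excluded (it is, because the apex is non-adjacent to every path vertex, yet any point of $\ell^x_i$ between $0$ and $y_{4k+3}$ is within distance $1$ of some path centre by the monotone packing), and the paper makes the same unstated assumption in Lemma~\ref{lem:squeeze}.
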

\begin{proof}
	In any 4-cycle, two of the disks must be centered on the same line, and the remaining two must be centered a line which is perpendicular to the line on which the first two are centered.
	Moreover, the centers must lie at four different directions from the intersection point.
	
	Therefore, a non-intersecting pair of disks from each induced 4-cycle must be centered on $\ell_\Delta$ and $\ell_\nabla$.
	The vertices to which those pairs of disks correspond are indicated by red rectangles in Figure~\ref{fig:skeleton}. 
\end{proof}

The following three propositions altogether show that for a given instance of NAE3SAT with $k$ clauses and $m$ literals, the corresponding graph has a unique embedding up to $\varepsilon$ flexibility onto the corresponding line configuration.
Proposition~\ref{lem:inducedPaths} shows that the long induced paths $P^1, \dots, P^m$ in the input graph must be embedded onto the vertical lines.
Proposition~\ref{lem:squeeze} shows that on a vertical line, the disks which correspond to these paths must be embedded between the top line $\ell_\Delta$ and the bottom line $\ell_\nabla$ in the line configuration.

\begin{lem} \label{lem:inducedPaths}
	The induced paths
	$P^1 = (P^1_1, \dots, P^1_{4k+3})$,
	$P^2 = (P^2_1, \dots, P^2_{4k+3})$, $\dots$,
	$P^m = (P^m_1, \dots, P^m_{4k+3})$ in the input graph can only be embedded onto $\ell^x_1$, $\ell^x_2$, $\dots$ $\ell^x_m$, respectively.
\end{lem}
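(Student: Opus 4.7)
The plan is to mirror the argument already used in the proof of Lemma~\ref{lem:LRa} for $P_L, P_R, P_\alpha$, and then localize the resulting vertical lines to the prescribed positions $\ell^x_1,\ldots,\ell^x_m$. First, I would observe that each $P^i$ has the same combinatorial structure as $P_L$ and $P_R$: it is an induced path of length $4k+3$, bounded at both ends by an induced $4$-cycle, and containing $2k$ internal diamonds (induced $4$-cycles with a chord) placed at every other edge. By Lemma~\ref{lem:4cycles}, the two ``off-path'' vertices of each diamond inside $P^i$ must sit at the two opposite sides of the line supporting the path, on a line perpendicular to it; and by the same argument used for $P_L$, this forces $P^i$ to be embedded along a single straight line, say $d_i$.

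Next, I would show $d_i$ must be \emph{vertical}. Since $P^i$ meets $P_\alpha$ in precisely the vertex $P^i_{2k+2} = \alpha_{2i+1}$, and the neighbours $\alpha_{2i}, \alpha_{2i+2}$ of $\alpha_{2i+1}$ in $P_\alpha$ have disks centered on $\ell_\alpha$ (by Lemma~\ref{lem:LRa}), the two neighbours of $\alpha_{2i+1}$ inside $P^i$ must be centered on a line through $\alpha_{2i+1}$ that is \emph{not} $\ell_\alpha$ (otherwise the induced path structure of $P^i \cup P_\alpha$ would collapse, creating forbidden adjacencies). But this line must be perpendicular to $\ell_\alpha$: the four neighbours of $\alpha_{2i+1}$ (two from $P_\alpha$, two from $P^i$) form an induced $K_{1,4}$ at $\alpha_{2i+1}$, and by Lemma~\ref{lem:s4}, a $4$-star can be realized by unit disks only at the intersection of two perpendicular lines, with the rays lying along those two lines. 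Hence $d_i$ is vertical and passes through the disk center of $\alpha_{2i+1}$.

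Then I would pin down which vertical line $d_i$ is. The disks $\mathcal{D}_\alpha^1,\ldots,\mathcal{D}_\alpha^{2m+3}$ are centered on $\ell_\alpha$ (Lemma~\ref{lem:LRa}), and consecutive disks among them must intersect while non-consecutive ones must not. Counting the $m+2$ star-centers $\alpha_1, \alpha_3,\ldots,\alpha_{2m+3}$, each needing the intersection of $\ell_\alpha$ with a distinct perpendicular line to host its $K_{1,4}$, we get exactly $m+2$ forced vertical intersection points on $\ell_\alpha$. Two of these are already used by $\alpha_1 \in \ell_L$ and $\alpha_{2m+3} \in \ell_R$. The remaining $m$ intersections correspond (in order along $\ell_\alpha$) to $\alpha_3, \alpha_5, \ldots, \alpha_{2m+1}$, and the only way to realize them while respecting the prescribed line set and the mandatory $2+\varepsilon$ horizontal spacing between star-centers on $\ell_\alpha$ (forced by the requirement that non-adjacent $\alpha_{2i-1}, \alpha_{2i+1}$ not intersect and the adjacent ones do) is to place them at the intersections with $\ell^x_1,\ldots,\ell^x_m$ respectively. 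Consequently $d_i = \ell^x_i$, and $P^i$ is embedded along $\ell^x_i$.

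The main obstacle will be the last quantitative step: verifying that the forced spacing and the disjointness/adjacency constraints on $\ell_\alpha$ leave no freedom to shuffle the correspondence $\alpha_{2i+1} \mapsto \ell^x_i$. This relies on the fact that the chosen gap $3.8$ between consecutive vertical lines together with the $2+\varepsilon$ gap on $\ell_\alpha$ admits only the monotone assignment; I would formalize this by an induction from left to right along $\ell_\alpha$, using Lemma~\ref{lem:c4}(ii) to bound how far each star-center can deviate from its nominal intersection point, and then checking that any permutation other than the identity would either force two non-adjacent $\alpha$-disks to intersect or separate two adjacent ones by more than $2$ units.
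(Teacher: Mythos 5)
Your overall route is the same as the paper's: the proof given there is essentially just your second step, namely that $P^i_{2k+2}=\alpha_{2i+1}$ is simultaneously the middle vertex of $P^i$ and the central vertex of a $4$-star, so by Lemma~\ref{lem:s4} its two $P^i$-neighbours must lie on a line perpendicular to $\ell_\alpha$ through that intersection point, and the assignment of the $m$ interior star-centers to $\ell^x_1,\dots,\ell^x_m$ in order falls out of the intersection-point count already performed in Lemma~\ref{lem:LRa}. So steps two and three of your plan are exactly what the paper does (the paper states them much more tersely and leans on Lemma~\ref{lem:LRa} and Lemma~\ref{lem:squeeze} for the quantitative part you propose to redo by induction).

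However, your first step rests on a misreading of the construction. The internal diamonds (induced $4$-cycles with a chord at every other edge) are a feature of $P_L$ and $P_R$ only; the literal paths $P^i$ carry induced $4$-cycles \emph{only at their two endpoints} $P^i_1$ and $P^i_{4k+3}$, with no chords and no internal diamonds (the flag vertices added later form triangles and depend on the formula, and are not present in the skeleton this lemma is about). So you cannot ``mirror the argument used for $P_L$'' to force $P^i$ onto a single line: that argument used the $2k$ diamonds to pin down $2k$ perpendicular intersection points along the supporting line. The correct substitute is what the paper implicitly uses: the end $4$-cycles of $P^i$ must be realized at intersections with $\ell_\Delta$ and $\ell_\nabla$ (Lemma~\ref{lem:4cycles}), the middle vertex is pinned on $\ell_\alpha$ by the $4$-star, and the induced-path length constraint together with the scarcity of intersection points (Lemma~\ref{lem:squeeze} and the count in Lemma~\ref{lem:LRa}) leaves no room for the path to leave its vertical line. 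With your first step repaired along these lines the rest of your argument goes through.
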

\begin{proof}
	Due to Lemma~\ref{lem:LRa}, we know that $P_\alpha$ is realized on $\ell_\alpha$.
	Also recall that for $1 \leq i \leq k$; $\alpha_{2i+1} = P^i_{2k+2}$ holds.
	Since every $P^i_{2k+2}$ is a central vertex of some 4-star, as well as the middle vertex of the path $P^i$, the induced path $P^1, P^2, \dots, P^m$ must be realized on $\ell^x_1, \ell^x_2, dots, \ell^x_m$, respectively.
\end{proof}

Before stating the next proposition, let us remind that a 4-cycle is an induced subgraph in a unit interval graph and hence can be realized as a unit disk intersection graph onto two distinct straight lines.
In our case, two lines are either parallel or perpendicular.
Since every pair of parallel lines has the Euclidean distance strictly greater than $2$ units in between, it is clear that we have to embed an induced 4-cycle onto two perpendicular lines.
Moreover, considering the clockwise order $a,b,c,d$ on an induced 4-cycle, remember that by Lemma~\ref{lem:c4}, the vertex pairs $a,c$ must be on one of the perpendicular lines, and $b,d$ must be on the other line.

\begin{lem} \label{lem:squeeze}
	The center of each disk that correspond to a vertex of the induced paths $P^1, \dots, P^m$ must be between $\ell_\Delta$ and $\ell_\nabla$. 
\end{lem}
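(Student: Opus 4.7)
The plan is to reduce the statement to the two endpoints of $P^i$ via a monotonicity argument, and then derive a contradiction at the top endpoint (the bottom following by symmetry). By Lemma~\ref{lem:inducedPaths}, every vertex of $P^i$ already lives on the vertical line $\ell^x_i$, and I will invoke the standard fact that a unit-disk representation of an induced path confined to a single line is monotone: the path order coincides, up to reversal, with the linear order of the centres. Consequently, once I show $y_{4k+3}\le y_\Delta$ and $y_1\ge y_\nabla$ for the endpoint centres on $\ell^x_i$, every intermediate vertex's centre will automatically lie in $[y_\nabla,y_\Delta]$.

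To set up the top endpoint, I will first argue that the two ``red square'' vertices in the 4-cycle at $P^i_{4k+3}$ sit on $\ell_\Delta$ rather than on $\ell_\nabla$. This is forced because the path $P^i$ has linear span strictly greater than $4k+2$ (a routine pairing of its $4k+2$ consecutive gaps, each pair summing to more than $2$), whereas placing both endpoint 4-cycles on the same horizontal line would trap both $y_1$ and $y_{4k+3}$ inside a window of width less than $2\sqrt 3$ by Lemma~\ref{lem:c4}(ii), which is impossible for $k\ge 1$. Lemma~\ref{lem:4cycles} then gives the two red squares on $\ell_\Delta$, Lemma~\ref{lem:c4}(i) places the red triangle on the same line as $P^i_{4k+3}$, namely $\ell^x_i$, on the opposite side of $\omega:=\ell^x_i\cap\ell_\Delta$, and Lemma~\ref{lem:c4}(ii) bounds both $P^i_{4k+3}$ and the red triangle within distance $\sqrt 3$ of $\omega$.

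The core step is a proof by contradiction: assume $y_{4k+3}>y_\Delta$, and set $h:=y_{4k+3}-y_\Delta\in(0,\sqrt 3)$ and $h':=y_\Delta-y_{\mathrm{tri}}\in(0,\sqrt 3)$. Graph non-adjacency of $P^i_{4k+3}$ and the red triangle forces $h+h'>2$, while Lemma~\ref{lem:c4}(ii) gives $h+h'<2\sqrt 3<4$. I will then bring in the vertex $P^i_{4k+1}$, two steps from $P^i_{4k+3}$ along the path, and set $d:=y_{4k+3}-y_{4k+1}$. Adjacency along the two successive path-edges gives $d\le 4$, and non-adjacency of $P^i_{4k+1}$ with $P^i_{4k+3}$ in the induced path gives $d>2$. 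A direct calculation yields $y_{4k+1}-y_{\mathrm{tri}}=h+h'-d$, and graph non-adjacency of $P^i_{4k+1}$ with the red triangle demands $|h+h'-d|>2$. A two-case split then collapses: if $h+h'>d$ one deduces $h+h'>d+2>4$, and if $h+h'<d$ one deduces $d>h+h'+2>4$, both contradicting the bounds just noted.

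The delicate point, where I expect the main obstacle to sit, is exactly this two-case analysis: the contradiction survives only because $h+h'$ is strictly below $2\sqrt 3<4$ and $d$ is at most $4$, leaving no slack in either branch; any weaker version of Lemma~\ref{lem:c4}(ii) would fail here. Once $y_{4k+3}\le y_\Delta$ is established, the mirror-symmetric argument at the bottom 4-cycle of $P^i$ gives $y_1\ge y_\nabla$, and the monotonicity reduction of the first paragraph closes the proof.
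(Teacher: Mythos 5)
Your proof is correct, but it takes a genuinely different route from the paper's, and the two arguments in fact cover complementary halves of the claim. The paper's proof is a single global length estimate anchored at $\ell_\alpha$: the sub-path $P^i_{2k+2},\dots,P^i_{4k+3}$ is an induced path with $2k+1$ edges on $\ell^x_i$, so the terminal centre satisfies $\psi>2k$, whereas realizing the terminal $4$-cycle at a clause line $\ell^C_j$ ($j\le k$) would force $\psi<(2+\varepsilon)k-2+\sqrt{3}$ by Lemma~\ref{lem:c4}; these are incompatible precisely because $\varepsilon\le 1/4k$, so the terminal cycle must sit at $\ell_\Delta$. That is the one place in the reduction where the $(2+\varepsilon)$ spacing of the horizontal lines earns its keep, and your argument never touches it: you delegate ``the square vertices lie on $\ell_\Delta$ or $\ell_\nabla$'' wholesale to Lemma~\ref{lem:4cycles}, and instead spend your effort on the case the paper's proof leaves implicit, namely that $P^i_{4k+3}$ cannot overshoot $\ell_\Delta$ to the outside. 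Your local three-disk squeeze ($P^i_{4k+1}$ and the triangle vertex both forced into the window $[\,y_{4k+3}-4,\;y_{4k+3}-2)$ of width $2$ on $\ell^x_i$, yet required to be more than $2$ apart by non-adjacency) is sound, and the monotonicity reduction to the two endpoints is a correct and routine fact about induced paths in unit-interval representations. The one caveat is that your appeal to Lemma~\ref{lem:4cycles} is formally legitimate but shifts the quantitative burden onto a lemma whose proof in the paper is little more than a restatement of its conclusion; in the paper's own development, the substance of ``the terminal $4$-cycle cannot be realized at a clause line'' is actually supplied here, in the proof of Lemma~\ref{lem:squeeze} itself. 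So your write-up reads best as a complement to the paper's argument rather than a reconstruction of it.
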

\begin{proof}
	By Lemma~\ref{lem:inducedPaths}, we already know that the disk which corresponds to the middle vertex $P^i_{2k+2}$ of some path $P^i$ must be on the line $\ell_\alpha$.
	
	Consider the $2k+2$ disks which correspond to the vertices $P^i_{2k+2}, \dots, P^i_{4k+3}$ for some $i$.
	Let us denote these disks by $D_1, \dots, D_{2k+2}$.
	Note that $P^i_{2k+2}$ is the central vertex of a 4-star, and $P^i_{4k+3}$ is a vertex of a 4-cycle.
	That is, the disks which correspond to$P^i_{2k+3}$
	These disks yield an induced path, and thus the Euclidean distance between the centers of the first disk and the last disk on the path must be strictly greater than $2k$ units.
	Thus, the disk which corresponds to the vertex $P^i_{4k+3}$ must be embedded above the horizontal line  $\ell^C_{k-1} : y = (2+\varepsilon)(k-1)$ which is strictly less than $2k$ for a sufficiently small value of $\varepsilon$.
	
	Let us denote the disk which corresponds to the vertex $P^i_{4k+3}$ by $\mathcal{D}$, and $y$-coordinate of the center of that disk by $\psi$.
	By Lemma~\ref{lem:c4}, in order to form an induced 4-cycle, $\mathcal{D}$ must be below the horizontal line $\ell^C_k : y = (2+\varepsilon)k$, and the Euclidean distance between the center of $\mathcal{D}$ and $\ell^C_k$ must be greater than $2- \sqrt{3}$.
	\begin{align*}
		\psi & > 2k\\
		(2+\varepsilon)k - \psi & > 2 - \sqrt{3}
	\end{align*}
	Which can be rewritten as
	\begin{align*}
		2k &< \psi < (2+\varepsilon)k - 2 + \sqrt{3}\\
		2k &< (2+\varepsilon)k - 2 + \sqrt{3}\\
		2 - \sqrt{3} &< \varepsilon k
	\end{align*}
	which is a contradiction for $\varepsilon \leq 1/4k$.
\end{proof}

\begin{prp} \label{prp:KclausesMliterals}
	For the given input graph that corresponds to a NAE3SAT formula with $k$ clauses and $m$ literals, the following hold:
	\begin{enumerate}[i)]
		\item A pair of non-intersecting disks that are included in an induced 4-cycle, but not included in any of $P^1, P^2, \dots, P^n$, must lie on either $\ell_\Delta$ or $\ell_\nabla$. 
		
		\item The induced paths
		$P^1 = (P^1_1, \dots, P^1_{4k+3})$,
		$P^2 = (P^2_1, \dots, P^2_{4k+3})$, $\dots$,
		$P^n = (P^m_1, \dots, P^m_{4k+3})$ in the input graph can only be embedded onto $\ell^x_1$, $\ell^x_2$, $\dots$ $\ell^x_m$, respectively.
		
		\item The center of each disk that correspond to a vertex of those induced paths must be between $\ell_\Delta$ and $\ell_\nabla$. 
		
	\end{enumerate}
\end{prp}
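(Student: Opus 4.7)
The plan is to observe that Proposition~\ref{prp:KclausesMliterals} is essentially a consolidation of the three lemmas proved just before it, lifted from the special case of $3$ clauses and $4$ literals to the general case of $k$ clauses and $m$ literals. The three parts (i), (ii), (iii) correspond directly to Lemmas~\ref{lem:4cycles}, \ref{lem:inducedPaths}, and \ref{lem:squeeze} respectively, so the proof will be obtained by invoking each lemma in turn and verifying that the arguments therein did not rely on the particular values $k=3$, $m=4$.

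First, for part (i), I would invoke Lemma~\ref{lem:4cycles}. The only ingredient used there is Lemma~\ref{lem:c4}, which constrains how an induced $C_4$ must be realized across two perpendicular lines, namely that its two non-adjacent pairs must lie on opposite sides of the intersection point, each pair on one of the two lines. Since Lemma~\ref{lem:LRa} (whose proof is independent of the specific counts $k,m$ beyond counting the $2k$ diamonds along $P_L$ and $P_R$) places $P_L, P_R, P_\alpha$ on $\ell_L, \ell_R, \ell_\alpha$, the ``external'' diamonds along these paths force the corresponding non-intersecting disk pairs to sit on a horizontal line crossing $\ell_L$ or $\ell_R$ at the relevant intersection. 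The only horizontal lines available to accommodate all $2k$ such diamonds symmetrically about $\ell_\alpha$ are the boundary lines $\ell_\Delta$ and $\ell_\nabla$, as the intermediate lines $\ell^C_i,\ell'^C_i$ are already needed to host the $K_{1,4}$-gadget intersections coming from the clause-literal structure.

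Next, for part (ii), I would invoke Lemma~\ref{lem:inducedPaths}: each middle vertex $\alpha_{2i+1}=P^i_{2k+2}$ must sit at an intersection of $\ell_\alpha$ with one of the perpendicular lines (because it is the center of an induced $K_{1,4}$ that, by Lemma~\ref{lem:s4}, cannot be realized on two parallel lines). Since by Lemma~\ref{lem:LRa} the path $P_\alpha$ is realized on $\ell_\alpha$ with its interior vertices $\alpha_2,\alpha_4,\ldots,\alpha_{2m+2}$ forced onto the $m+2$ distinct perpendicular lines $\ell_L,\ell^x_1,\ldots,\ell^x_m,\ell_R$ in order, the induced path $P^i$ passing through $\alpha_{2i+1}$ must lie on $\ell^x_i$ for each $1\leq i\leq m$. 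The argument is purely combinatorial about the number of required intersections, so the generalization from $m=4$ to arbitrary $m$ is immediate.

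Finally, for part (iii), I would invoke Lemma~\ref{lem:squeeze}. The key quantitative estimate there was that the top endpoint $P^i_{4k+3}$ of the path $P^i$, together with the $C_4$ anchoring it, forces a $y$-coordinate $\psi$ satisfying $2k<\psi<(2+\varepsilon)k-2+\sqrt{3}$, which is feasible only when $\varepsilon>(2-\sqrt{3})/k$; with the choice $\varepsilon\leq 1/(4k)$ baked into Figure~\ref{fig:configuration}, this inequality cannot hold, so $\mathcal{D}$ cannot escape above $\ell_\Delta$. The symmetric argument applies below $\ell_\nabla$. The main (and indeed only) non-trivial point is to check that the $\varepsilon$-spacing choice made in the general configuration still suffices for all $k$; this is guaranteed by the definition $\varepsilon\leq 1/(4k)$, so nothing new needs to be proved. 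Assembling (i), (ii), (iii) yields the proposition.
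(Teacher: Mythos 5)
Your proposal is correct and matches the paper's proof exactly: the paper's proof of Proposition~\ref{prp:KclausesMliterals} is a one-line statement that it follows directly from Lemma~\ref{lem:4cycles}, Lemma~\ref{lem:inducedPaths}, and Lemma~\ref{lem:squeeze}, which is precisely the decomposition you use. Your added verification that the arguments (in particular the $\varepsilon\leq 1/(4k)$ spacing in Lemma~\ref{lem:squeeze}) do not depend on the special case $k=3$, $m=4$ is a reasonable elaboration of the same route.
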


\begin{proof}
	Directly follows from Lemma~\ref{lem:4cycles}, Lemma~\ref{lem:inducedPaths} and Lemma~\ref{lem:squeeze}.
\end{proof}

With the Lemma~\ref{lem:LRa} and Proposition~\ref{prp:KclausesMliterals}, we have shown that the vertices denoted by rectangles in Figure~\ref{fig:skeleton} must be embedded onto the bold lines in Figure~\ref{fig:frame}, which was stated in Proposition~\ref{prp:boldLines}.
For a quick visual verification of these lemmas, we refer the reader to the Figures \ref{fig:skeleton} and \ref{fig:frame} with values $k=3$ and $m=4$.

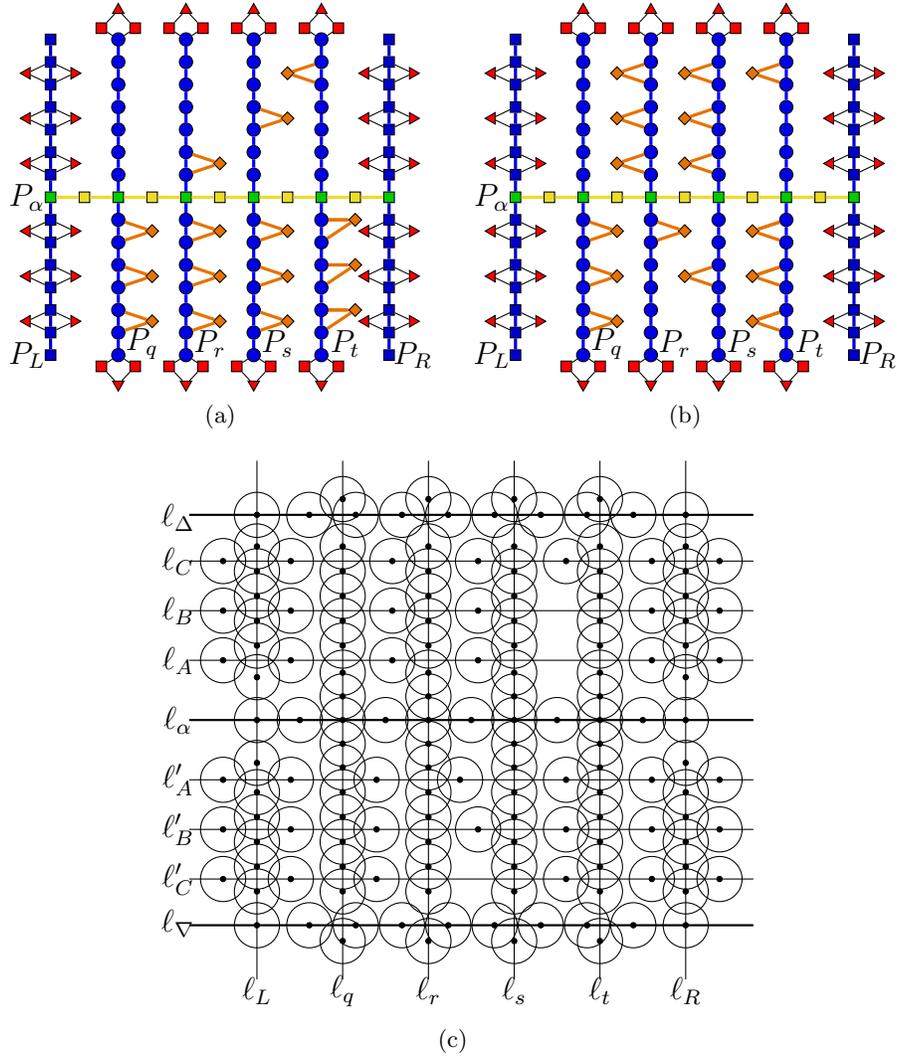
\begin{figure}[htbp]
	\centering
	
	\subfloat[]{
		\centering
		
		\begin{tikzpicture}[xscale=0.45, yscale=0.3,
			triangle/.style = {regular polygon, regular polygon sides=3},
			ltriangle/.style = {regular polygon, regular polygon sides=3, rotate=90},
			rtriangle/.style = {regular polygon, regular polygon sides=3, rotate=270},
			dtriangle/.style = {regular polygon, regular polygon sides=3, rotate=180}]
			
			\node[draw=none] at (-0.7,-7) {$P_L$};
			\node[draw=none] at (10.7,-7) {$P_R$};
			\node[draw=none] at (-0.7,0) {$P_\alpha$};
			\node[draw=none] at (2.7,-6.5) {$P_q$};
			\node[draw=none] at (4.7,-6.5) {$P_r$};
			\node[draw=none] at (6.7,-6.5) {$P_s$};
			\node[draw=none] at (8.7,-6.5) {$P_t$};

			\tikzstyle{every node}=[draw, shape=circle, minimum size=5pt,inner sep=0pt];

			\foreach \i in {0,2,4,6,8,10} \node[shape=rectangle, fill=green!80!black, minimum size=4pt] (\i0) at (\i,0) {};
			
			\foreach \i in {1,3,5,7,9}	\node[shape=rectangle, fill=yellow!90!black, minimum size=4pt] (\i0) at (\i,0) {};
			
			\foreach \i in {0,...,9}
			{
				\pgfmathtruncatemacro\j{\i+1};
				\draw[very thick, color=yellow!90!black] (\i0)--(\j0);
			}
			
			\foreach \i in {2,4,6,8} 
			{
				\foreach \j in {-7,-6,-5,-4,-3,-2,-1,1,2,3,4,5,6,7}
				{
					\node[fill=blue!90!black] (\i\j) at (\i,\j) {};
					
				}
			}
			
			\foreach \i in {2,4,6,8} 
			{
				\foreach \j in {-7,-6,-5,-4,-3,-2,-1,0,1,2,3,4,5,6}
				{
					\pgfmathtruncatemacro\k{\j+1};
					\draw[very thick, color=blue] (\i\j)--(\i\k);
					
				}
			}
			
			\foreach \i in {0,10} 
			{
				\foreach \j in {-7,-6,-5,-4,-3,-2,-1,1,2,3,4,5,6,7}
				{
					\node[shape=rectangle,fill=blue!80!black, minimum size=4pt] (\i\j) at (\i,\j) {};
					
				}
			}
			
			\foreach \i in {0,10} 
			{
				\foreach \j in {-7,-6,-5,-4,-3,-2,-1,0,1,2,3,4,5,6}
				{
					\pgfmathtruncatemacro\k{\j+1};
					\draw[very thick, color=blue!80!black] (\i\j)--(\i\k);
					
				}
			}

			\foreach \i in {-5.5,-3.5,-1.5,1.5,3.5,5.5}
			{
				\pgfmathtruncatemacro\j{\i-0.5}; 
				\pgfmathtruncatemacro\k{\i+0.5};

				\node[ltriangle, fill=red] (-1\j) at (-0.7,\i) {};		
				\node[rtriangle, fill=red] (1\j) at (0.7,\i) {};
				
				\node[ltriangle, fill=red] (9\j) at (9.3,\i) {};
				\node[rtriangle, fill=red] (11\j) at (10.7,\i) {};
				\draw (1\j)--(0\j);
				\draw (1\j)--(0\k);
				\draw (-1\j)--(0\j);
				\draw (-1\j)--(0\k);
				\draw (9\j)--(10\j);
				\draw (9\j)--(10\k);
				\draw (11\j)--(10\j);
				\draw (11\j)--(10\k);
			}

			\foreach \i in {2,4,6,8} 
			{
				\node[triangle, fill=red] (\i9) at (\i,8.3) {};
				\node[dtriangle, fill=red] (\i-9) at (\i,-8.3) {};
			}
			
			\foreach \i in {1.5,2.5,3.5,4.5,5.5,6.5,7.5,8.5} 
			{
				\pgfmathtruncatemacro\j{\i+0.5}
				\node[shape=rectangle, fill=red, minimum size=4pt] (\j8) at (\i,7.5) {};
				\node[shape=rectangle, fill=red, minimum size=4pt] (\j-8) at (\i,-7.5) {};
			}
			
			\foreach \i in {2,4,6,8} 
			{
				\pgfmathtruncatemacro\j{\i+1}
				\draw (\i7)--(\i8)--(\i9)--(\j8)--(\i7);
				\draw (\i-7)--(\i-8)--(\i-9)--(\j-8)--(\i-7);
			}
			
			\foreach \i in {-5.5,-3.5,-1.5}
			{
				\pgfmathtruncatemacro\j{\i-0.5}; 
				\pgfmathtruncatemacro\k{\i+0.5};
				\pgfmathtruncatemacro\l{\k-1}; 
				\foreach \x in {3,5,7}
				{
					\pgfmathtruncatemacro\y{\x-1}; 
					\pgfmathtruncatemacro\z{\j+1};
					\node[shape=diamond,fill=orange!90!black] (\x\j) at (\x,\i) {};
					\draw[very thick, color=orange!90!black] (\x\j)--(\y\j);
					\draw[very thick, color=orange!90!black] (\x\j)--(\y\z);	
				}
				\node[shape=diamond,fill=orange!90!black] (9\k) at (9,\k) {};
				\draw[very thick, color=orange!90!black] (9\k)--(8\k);
				\draw[very thick, color=orange!90!black] (9\k)--(8\k);
				\draw[very thick, color=orange!90!black] (9\k)--(8\l);
				
			}

			\node[shape=diamond, fill=orange!90!black] (52) at (5, 1.5) {};
			\draw[very thick, color=orange!90!black] (52)--(41);
			\draw[very thick, color=orange!90!black] (52)--(42);
			
			\node[shape=diamond,fill=orange!90!black] (74) at (7, 3.5) {};
			\draw[very thick, color=orange!90!black] (74)--(63);
			\draw[very thick, color=orange!90!black] (74)--(64);
			
			\node[shape=diamond,fill=orange!90!black] (75) at (7,5.5) {};
			\draw[very thick, color=orange!90!black] (75)--(85);
			\draw[very thick, color=orange!90!black] (75)--(86);
			
		\end{tikzpicture}
		\label{fig:withFlags}
	}
	\subfloat[]{
		\centering
		\begin{tikzpicture}[xscale=0.45, yscale=0.3,
			triangle/.style = {regular polygon, regular polygon sides=3},
			ltriangle/.style = {regular polygon, regular polygon sides=3, rotate=90},
			rtriangle/.style = {regular polygon, regular polygon sides=3, rotate=270},
			dtriangle/.style = {regular polygon, regular polygon sides=3, rotate=180}]
			
			\node[draw=none] at (-0.7,-7) {$P_L$};
			\node[draw=none] at (10.7,-7) {$P_R$};
			\node[draw=none] at (-0.7,0) {$P_\alpha$};
			\node[draw=none] at (2.7,-6.5) {$P_q$};
			\node[draw=none] at (4.7,-6.5) {$P_r$};
			\node[draw=none] at (6.7,-6.5) {$P_s$};
			\node[draw=none] at (8.7,-6.5) {$P_t$};

			\tikzstyle{every node}=[draw, shape=circle, minimum size=5pt,inner sep=0pt];

			\foreach \i in {0,2,4,6,8,10} \node[shape=rectangle, fill=green!80!black, minimum size=4pt] (\i0) at (\i,0) {};
			
			\foreach \i in {1,3,5,7,9}	\node[shape=rectangle, fill=yellow!90!black, minimum size=4pt] (\i0) at (\i,0) {};
			
			\foreach \i in {0,...,9}
			{
				\pgfmathtruncatemacro\j{\i+1};
				\draw[very thick, color=yellow!90!black] (\i0)--(\j0);
			}
			
			\foreach \i in {2,4,6,8} 
			{
				\foreach \j in {-7,-6,-5,-4,-3,-2,-1,1,2,3,4,5,6,7}
				{
					\node[fill=blue!90!black] (\i\j) at (\i,\j) {};
					
				}
			}
			
			\foreach \i in {2,4,6,8} 
			{
				\foreach \j in {-7,-6,-5,-4,-3,-2,-1,0,1,2,3,4,5,6}
				{
					\pgfmathtruncatemacro\k{\j+1};
					\draw[very thick, color=blue] (\i\j)--(\i\k);
					
				}
			}
			
			\foreach \i in {0,10} 
			{
				\foreach \j in {-7,-6,-5,-4,-3,-2,-1,1,2,3,4,5,6,7}
				{
					\node[shape=rectangle,fill=blue!80!black, minimum size=4pt] (\i\j) at (\i,\j) {};
					
				}
			}
			
			\foreach \i in {0,10} 
			{
				\foreach \j in {-7,-6,-5,-4,-3,-2,-1,0,1,2,3,4,5,6}
				{
					\pgfmathtruncatemacro\k{\j+1};
					\draw[very thick, color=blue!80!black] (\i\j)--(\i\k);
					
				}
			}

			\foreach \i in {-5.5,-3.5,-1.5,1.5,3.5,5.5}
			{
				\pgfmathtruncatemacro\j{\i-0.5}; 
				\pgfmathtruncatemacro\k{\i+0.5};

				\node[ltriangle, fill=red] (-1\j) at (-0.7,\i) {};		
				\node[rtriangle, fill=red] (1\j) at (0.7,\i) {};
				
				\node[ltriangle, fill=red] (9\j) at (9.3,\i) {};
				\node[rtriangle, fill=red] (11\j) at (10.7,\i) {};
				\draw (1\j)--(0\j);
				\draw (1\j)--(0\k);
				\draw (-1\j)--(0\j);
				\draw (-1\j)--(0\k);
				\draw (9\j)--(10\j);
				\draw (9\j)--(10\k);
				\draw (11\j)--(10\j);
				\draw (11\j)--(10\k);
			}

			\foreach \i in {2,4,6,8} 
			{
				\node[triangle, fill=red] (\i9) at (\i,8.3) {};
				\node[dtriangle, fill=red] (\i-9) at (\i,-8.3) {};
			}
			
			\foreach \i in {1.5,2.5,3.5,4.5,5.5,6.5,7.5,8.5} 
			{
				\pgfmathtruncatemacro\j{\i+0.5}
				\node[shape=rectangle, fill=red, minimum size=4pt] (\j8) at (\i,7.5) {};
				\node[shape=rectangle, fill=red, minimum size=4pt] (\j-8) at (\i,-7.5) {};
			}
			
			\foreach \i in {2,4,6,8} 
			{
				\pgfmathtruncatemacro\j{\i+1}
				\draw (\i7)--(\i8)--(\i9)--(\j8)--(\i7);
				\draw (\i-7)--(\i-8)--(\i-9)--(\j-8)--(\i-7);
			}
			
			\foreach \i in {-5.5,-3.5,-1.5}
			{
				\pgfmathtruncatemacro\j{\i-0.5}; 
				\pgfmathtruncatemacro\k{\i+0.5};
				\pgfmathtruncatemacro\l{\k-1}; 
				\foreach \x in {3}
				{
					\pgfmathtruncatemacro\y{\x-1}; 
					\pgfmathtruncatemacro\z{\j+1};
					\node[shape=diamond,fill=orange!90!black] (\x\j) at (\x,\i) {};
					\draw[very thick, color=orange!90!black] (\x\j)--(\y\j);
					\draw[very thick, color=orange!90!black] (\x\j)--(\y\z);	
				}
				\node[shape=diamond,fill=orange!90!black] (9\k) at (7,\i) {};
				\draw[very thick, color=orange!90!black] (9\k)--(8\k);
				\draw[very thick, color=orange!90!black] (9\k)--(8\k);
				\draw[very thick, color=orange!90!black] (9\k)--(8\l);
				
			}
			
			\foreach \i in {5.5,3.5,1.5}
			{
				\pgfmathtruncatemacro\j{\i-0.5}; 
				\pgfmathtruncatemacro\k{\i+0.5};
				\pgfmathtruncatemacro\l{\k-1}; 
				\foreach \x in {3,5}
				{
					\pgfmathtruncatemacro\y{\x+1}; 
					\pgfmathtruncatemacro\z{\j+1};
					\node[shape=diamond,fill=orange!90!black] (\x\j) at (\x,\i) {};
					\draw[very thick, color=orange!90!black] (\x\j)--(\y\j);
					\draw[very thick, color=orange!90!black] (\x\j)--(\y\z);	
				}
				
			}

			\node[shape=diamond, fill=orange!90!black] (5-2) at (5, -1.5) {};
			\draw[very thick, color=orange!90!black] (5-2)--(4-1);
			\draw[very thick, color=orange!90!black] (5-2)--(4-2);
			
			\node[shape=diamond,fill=orange!90!black] (7-4) at (5, -3.5) {};
			\draw[very thick, color=orange!90!black] (7-4)--(6-3);
			\draw[very thick, color=orange!90!black] (7-4)--(6-4);
			
			\node[shape=diamond,fill=orange!90!black] (75) at (7,5.5) {};
			\draw[very thick, color=orange!90!black] (75)--(85);
			\draw[very thick, color=orange!90!black] (75)--(86);
			
		\end{tikzpicture}
		\label{fig:truthAssignment}
	}
	
	\subfloat[]{
		\centering
		
		\begin{tikzpicture}[scale=0.3]
			
			\node[draw=none] at (-3.5,0) {$\ell_\alpha$};
			\node[draw=none] at (0,-12) {$\ell_L$};
			\node[draw=none] at (19,-12) {$\ell_R$};

			\node[draw=none] at (3.8,-12) {$\ell_q$};
			\node[draw=none] at (7.6,-12) {$\ell_r$};
			\node[draw=none] at (11.4,-12) {$\ell_s$};
			\node[draw=none] at (15.2,-12) {$\ell_t$};
			
			\node[draw=none] at (-3.5,-9) {$\ell_\nabla$};
			\node[draw=none] at (-3.5,-7.05) {$\ell'_C$};
			\node[draw=none] at (-3.5,-4.85) {$\ell'_B$};
			\node[draw=none] at (-3.5,-2.65) {$\ell'_A$};
			\node[draw=none] at (-3.5,2.65) {$\ell_A$};
			\node[draw=none] at (-3.5,4.85) {$\ell_B$};
			\node[draw=none] at (-3.5,7.05) {$\ell_C$};
			\node[draw=none] at (-3.5,9) {$\ell_\Delta$};
			
			\tikzstyle{every node}=[draw, fill=black, shape=circle, minimum size=2pt,inner sep=0pt];

			\draw[thick] (-3,0)--(22,0);
			
			\foreach \i in {0,1.9,3.8,5.7,7.6,9.5,11.4,13.3,15.2,17.1,19} 
			{
				\node at (\i,0) {};
				\draw (\i,0) circle[radius=1];			
			}

			\foreach \i in {3.8,7.6,11.4,15.2} 
			{
				\foreach \j in {-7.6,-6.5,-5.4,-4.3,-3.2,-2.1,-1.05,1.05,2.1,3.3, 4.4,5.5, 6.6,7.7}
				{
					\node at (\i,\j) {};
					\draw (\i,\j) circle[radius=1];
					
				}
			}
			
			\foreach \i in {0,19} 
			{
				\foreach \j in {-7.6,-6.5,-5.4,-4.3,-3.2,-1.9,1.9,3.3, 4.4,5.5, 6.6,7.7}
				{
					\node at (\i,\j) {};
					\draw (\i,\j) circle[radius=1];
					
				}
			}
			
			\foreach \i in {0,3.8,7.6,11.4,15.2,19} 
			\draw (\i,-11.5)--(\i,11.5);	
			
			\foreach \i in {-7.05,-4.85,-2.65,2.65,4.85,7.05} 
			{	
				\node at (1.5,\i) {};
				\node at (-1.5,\i) {};
				\node at (17.5,\i) {};
				\node at (20.5,\i) {};		
				\draw (1.5,\i) circle[radius=1];
				\draw (-1.5,\i) circle[radius=1];
				\draw (17.5,\i) circle[radius=1];
				\draw (20.5,\i) circle[radius=1];
				\draw (-3,\i)--(22,\i);
			}
			
			\draw[thick] (-3,9.1)--(22,9.1); 
			\draw[thick] (-3,-9.1)--(22,-9.1); 
			
			\foreach \i in {0,2.32, 4.38, 6.43, 8.48, 10.53, 12.58, 14.63, 16.68,19}
			{
				\node at (\i,9.1) {};
				\draw (\i,9.1) circle[radius=1];
				\node at (\i,-9.1) {};
				\draw (\i,-9.1) circle[radius=1];
			}
			
			\foreach \i in {3.8,7.6,11.4,15.2} 
			{
				\foreach \j in {-9.8,9.8}
				{
					\node at (\i,\j) {};
					\draw (\i,\j) circle[radius=1];
				}
			}
			
			\foreach \i in {2.65,4.85,7.05} 
			{
				\node at (6,\i) {};
				\draw (6,\i) circle[radius=1];
				\node at (9.8,\i) {};
				\draw (9.8,\i) circle[radius=1];
			}
			
			\foreach \i in {-2.65,-4.85,-7.05} 
			{
				\node at (5.3,\i) {};
				\draw (5.3,\i) circle[radius=1];
				\node at (13.7,\i) {};
				\draw (13.7,\i) circle[radius=1];
			}
			
			\node at (13.7,7.05) {};
			\draw (13.7,7.05) circle[radius=1];
			
			\node at (9.8,-4.85) {};
			\draw (9.8,-4.85) circle[radius=1];
			
			\node at (9,-2.65) {};
			\draw (9,-2.65) circle[radius=1];
		\end{tikzpicture}
		\label{fig:truthAssignmentRealization}
	}
	\caption{(a) The input graph for the Monotone NAE3SAT formula with $\Phi$ with literals $q,r,s,t$, and clauses $A,B,C$.
		The formula is written as $\Phi = A \wedge B \wedge C$  where $A = (q \vee s \vee t)$, $B = (q \vee r \vee t)$, and $C =(q \vee r \vee s)$. The flag vertices, indicated by orange diamonds, are adjacent to the vertices that correspond to a clause on an induced path, if the literal does not appear in that clause.\\
		(b) A truth assignment that satisfies the formula given in \ref{fig:withFlags}: $q = \textsc{true}$, $r = \textsc{false}$, $s = \textsc{false}$, and $t = \textsc{true}$.\\
		(c) Realization of the graph given in \ref{fig:truthAssignment}.}
	\label{fig:fullGadget}
\end{figure}

Using the backbone we have described, we now show how to model the relationship between the clauses and the literals.
To make it easier to follow, we also refer to Figure~\ref{fig:skeleton} in parentheses in the following description.
\begin{itemize}
	\item Consider a sub-path $(P^i_{2k+3}, P^i_{2k+4}$, $\dots$, $P^i_{4k+3})$ of the induced path $P^i$. This part corresponds to the literal $x_i$ of the given Monotone NAE3SAT formula (corresponds to $(q_9, \dots, q_{15})$ of $P_q$ in our example).
	
	\item The edges $P^i_{2k+3}P^i_{2k+4}$, $P^i_{2k+5}P^i_{2k+6}$, $\dots$, $P^i_{4k+1}P^i_{4k+2}$ (correspond to $q_9q_{10}$, $q_{11}q_{12}$, and $q_{13}q_{14}$ in $P_q$ in our example) are used to model membership of $x_i$ in the clauses $C_1$, $C_2$, $\dots$, $C_k$ (correspond to the clauses $A$, $B$, and $C$ in our example), respectively.
	
	\item If $x_i$ appears in a clause $C_j$, then we do nothing for the edges correspond do those clauses.
	
	\item Otherwise, if $x_i$ does not appear in $C_j$, then we introduce a \emph{flag vertex} in the graph, which is adjacent to $P^i_{2(k+j)+1}$ and $P^i_{2(k+j)+2}$.
	
	\item Due to the rigidity of the backbone (up to $\varepsilon$ flexibility), this flag vertex lies on $\ell^C_j$.
	Similarly, in our example, if $q$ appears in $B$, then $q_{11}q_{12}$ stays as is, but otherwise, a flag vertex is introduced, adjacent to both $q_{11}$ and $q_{12}$.
	
	\item Every clause has 3 literals. Thus, on each horizontal line, 3 out of $m$ possible flag vertices will be missing.
	That sums up to a total of $k(m-3)$ flag vertices for this part of the graph.
	
	\item For the remaining sub-path $(P^i_1,\dots,P^i_{2k+2})$ of $P^i$ (corresponds to $(q_1,\dots,q_8)$ of $P_q$ in our example), we introduce the flag vertices for the pairs $(P^i_2,P^i_3)$, $(P^i_4,P^i_5)$, $\dots$, $P^i_{2k}$, $P^i_{2k+1}$ (correspond to $(q_6,q_7)$ $(q_2,q_3)$, $(q_4,q_5)$, $(q_6,q_7)$ in our example).
	
	\item That is a total number of $km$ flag vertices for this part of the graph. In the whole graph, there are precisely $2km-3k$ flag vertices.
	
	\item Finally, let us note that the value of $\varepsilon$ does not require more than polynomially many digits with respect to the input size, where $(2+\epsilon)$ is the distance between two consecutive parallel lines.
	We assume that $\varepsilon < 1/4k$ where $k$ is the number of clauses in the given NAE3SAT formula. 
	Since $k$ is a part of the input, multiplying it by $0.25$ will not result a number whose decimal representation requires exponentially many digits.
\end{itemize}

Realize that the embeddings on some vertical lines must be flipped upside-down to create space for the flag vertices.
This operation corresponds to the truth assignment of the literal that corresponds to that vertical line.
The configuration forces at least one literal to have a different truth assignment, because for a pair of symmetrical horizontal lines, say $\ell_A$ and $\ell'_A$, there must be at least one, and at most two missing flags for the disks to fit between $\ell_L$ and $\ell_R$.
Here, we conclude the proof of Theorem~\ref{thm:apud}. 

The input graph, a YES-instance, and the realization of the YES-instance of the Monotone NAE3SAT formula $\Phi = (q \vee s \vee t) \wedge (q \vee r \vee t) \wedge (q \vee r \vee s)$ is given in Figure~\ref{fig:fullGadget}.

Before moving onto the next section, we would like to state a trivial corollary which is a natural result derived from Theorem~\ref{thm:apud}.
To the best of our knowledge, there are no results on unit disk graph recognition problem with the input restricted by the size of the largest cycle.
A slightly related result was studied by Fomin, Lokshtanov and Saurabh in 2012 \cite{bidimensionality}, which considers the cycle-packing problem on disk graphs with no large cliques.

\begin{cor} \label{cor:largestCycle}
	Given a graph $G = (V,E)$, deciding whether $G$ is a unit disk graph is an NP-hard problem when the size of the largest induced cycle in $G$ is of length $4$.
\end{cor}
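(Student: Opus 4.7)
The plan is to argue that the graph $G$ produced by the reduction underlying Theorem~\ref{thm:apud} itself witnesses NP-hardness of unrestricted unit disk graph recognition, and that its largest induced cycle has length exactly~$4$. First I would go through the pieces of $G$ and enumerate their induced cycles. The skeleton (the paths $P_\alpha$, $P_L$, $P_R$, and $P^1,\ldots,P^m$) is acyclic, since these paths are induced and meet only at a handful of articulation vertices ($\alpha_1=L_{2k+2}$, $\alpha_{2m+3}=R_{2k+2}$, and $\alpha_{2i+1}=P^i_{2k+2}$). Each diamond on $P_L$ or $P_R$ is a $C_4$ with a chord, contributing two induced triangles but no induced $4$-cycle. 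Each flag vertex is attached to two consecutive path vertices and hence yields only a triangle. The only induced $4$-cycles are the small end-cycles attached to $P^i_1$ and $P^i_{4k+3}$, and these are vertex-disjoint from all other decorations. A short case analysis then rules out induced cycles of length $\geq 5$.

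Next I would show that $G$ is a unit disk graph if and only if the input Monotone NAE3SAT formula is satisfiable. The forward direction is immediate: the $\APUD(k,m)$ realization produced in Theorem~\ref{thm:apud} from any satisfying assignment is already a unit disk realization, so there is nothing to do. For the converse I would revisit Lemmas~\ref{lem:triangle}--\ref{lem:squeeze} and rephrase them so that they do not presuppose an externally prescribed line set, but instead extract the relevant lines from the realization itself. Lemmas~\ref{lem:triangle} and~\ref{lem:s4} are already stated as purely geometric facts about UDGs and transfer verbatim. From them, the long diamond-studded path $P_L$ forces $2k+1$ perpendicular axis-line pairs meeting $P_L$ at the consecutive diamond centres, and similarly for $P_R$; the shared middle vertices with $P_\alpha$ and with each literal path $P^i$ then lock these axes into one rigid grid, up to a global isometry and an~$\varepsilon$ perturbation. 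Inside this implicitly defined grid the flag-vertex argument of Theorem~\ref{thm:apud} reads off a satisfying NAE3SAT assignment.

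The main obstacle is precisely the converse direction above: the rigidity arguments in Section~\ref{sec:nphard} were written under the $\APUD(k,m)$ hypothesis, so the technical heart of the proof is formulating and proving an axis-free analogue of Lemma~\ref{lem:c4}, namely that in any UDG realization of an induced $C_4$ together with the diamond-and-path continuation used in our backbone, the two non-adjacent pairs of disk centres must lie on two mutually perpendicular lines through their common near-intersection point. Once this lemma is in place, the rigidity propagates through the backbone exactly as in the proof of Theorem~\ref{thm:apud}, and the corollary follows by composing the resulting polynomial reduction with the NP-completeness of Monotone NAE3SAT.
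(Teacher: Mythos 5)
Your proposal diverges substantially from the paper's own proof, which is only three sentences long: it observes that in the reduction of Section~\ref{sec:nphard} the only cycles are diamonds and induced $4$-cycles (the rest being long induced paths), and then asserts the claim ``since axes-parallel unit disk graph recognition is a more restricted version of unit disk graph recognition.'' You correctly sensed that this transfer from the line-constrained problem to unconstrained UDG recognition is the delicate point, and you set out to repair it by re-proving the rigidity of the gadget without the prescribed lines. That is a more ambitious reading of the corollary than the paper itself attempts.

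However, the step you yourself flag as the technical heart --- an axis-free analogue of Lemma~\ref{lem:c4} forcing the backbone onto an implicit perpendicular grid --- will not go through, and this is a genuine gap rather than a deferred detail. Every rigidity lemma in Section~\ref{sec:nphard} (including Lemmas~\ref{lem:triangle} and~\ref{lem:s4}, which you claim ``transfer verbatim'') is a statement about disks whose centres are \emph{already constrained} to prescribed lines; none of them says anything about an unconstrained realization. Without the lines, the gadgets impose essentially no geometry: a diamond ($C_4$ plus a chord) is a unit \emph{interval} graph and so can be realized on a single line; an induced $C_4$ is realized by the corners of a thin rectangle, with no forced perpendicular axes through a common point; a $K_{1,4}$ centre need not sit at any intersection; and the long induced paths can bend freely in the plane, so the ``$2k+1$ perpendicular axis-line pairs'' you want to extract simply do not exist. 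Since the whole construction is a sparse assembly of paths, diamonds, $4$-cycles and degree-$2$ flag vertices, one can realize it as a unit disk graph for \emph{every} input formula by drawing the paths as well-separated polylines, so your converse direction (``$G$ is a UDG $\Rightarrow$ $\Phi$ is NAE-satisfiable'') is false for this $G$. If you want to keep the corollary in the unconstrained setting, you need an entirely different source of rigidity; if you only want what the paper proves, the content is just the observation about the cycle structure of the reduction graph together with the paper's (admittedly terse) appeal to the restricted problem.
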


\begin{proof}
	In the NP-hardness proof given in Section~\ref{sec:nphard}, the largest cycles in the input graph are diamonds, and induced 4-cycles.
	The rest of the graph consists of long paths.
	Since axes-parallel unit disk graph recognition is a more restricted version of unit disk graph recognition problem, the claim holds. 
\end{proof}

\section{APUD(k,0) recognition is NP-complete} \label{sec:npcomplete}
In this section, we show that the recognition of axes-parallel unit disk graphs is NP-complete, when all the given lines are parallel to each other.
This version of the problem is referred to as $\APUD(k,0)$, as there are $k$ horizontal lines given as input, but no vertical lines.
We use the reduction given in Section~\ref{sec:nphard}.

\begin{thm} \label{thm:apud0m}
	$\APUD(k,0)$ recognition is NP-hard.
\end{thm}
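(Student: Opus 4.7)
The plan is to prove NP-hardness by a polynomial-time reduction from $\APUD(k,m)$ recognition, whose NP-hardness is established in Theorem~\ref{thm:apud}. Membership in NP follows by a standard witnessing argument: a valid realization can be encoded by specifying, for each vertex, the index of the horizontal line carrying its center plus an $x$-coordinate chosen from a rational grid of polynomial bit-complexity; verifying intersection/non-intersection of the resulting disks is then polynomial.

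First, I would observe that in the reduction from Section~\ref{sec:nphard}, the vertical input lines $\ell_L, \ell_R, \ell^x_1, \ldots, \ell^x_m$ are used solely to anchor the induced paths $P_L, P_R, P^1, \ldots, P^m$ to specific $x$-coordinates, so that the diamonds and 4-stars in the skeleton force the unique column structure proved in Proposition~\ref{prp:boldLines}. The task, therefore, is to replace each such vertical line by a graph-theoretic construct that enforces the same $x$-alignment when the input consists of horizontal lines only.

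Second, I would augment the input of Section~\ref{sec:nphard} by adding a bounded number of new horizontal lines just above $\ell_\Delta$ and just below $\ell_\nabla$, and extending the graph with ``column enforcement'' gadgets attached to the top and bottom of each intended column. Each such gadget would be a chain of induced $4$-cycles and $4$-stars, analogous to the diamond chains used along $P_L$ and $P_R$ in Figure~\ref{fig:skeleton}, running between two new horizontal lines and sharing vertices with the endpoints of the corresponding path $P_L$, $P_R$, or $P^i$. By Lemma~\ref{lem:c4} and Lemma~\ref{lem:s4}, the only way to embed such a chain with unit disks centered on parallel horizontal lines of vertical spacing $2+\varepsilon$ is to place its vertices with a fixed horizontal step, so the endpoints of each column are forced into narrow $\varepsilon$-intervals around the intended $x$-coordinates, exactly as the pre-given vertical lines did in the original reduction.

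Third, I would re-derive the rigidity statements (the analogues of Lemma~\ref{lem:LRa}, Lemma~\ref{lem:inducedPaths} and Lemma~\ref{lem:squeeze}) in this modified setting, showing that in any valid $\APUD(k',0)$ realization of the augmented graph the backbone vertices must occupy the column positions prescribed by the enforcement gadgets, after which the flag-vertex argument of Section~\ref{sec:nphard} carries over verbatim to establish equivalence with satisfiability of the NAE3SAT formula. The main obstacle will be controlling the accumulation of horizontal slack across the column enforcement chains: in the original proof the vertical lines provided an absolute geometric reference, whereas here every $x$-coordinate is determined only relative to its neighbours, so the length of each enforcement chain and the choice of $\varepsilon$ must be tuned so that the total drift remains strictly below the tolerance required by the flag-vertex mechanism, while keeping the construction of polynomial size.
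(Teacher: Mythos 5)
Your overall strategy (reuse the Section~\ref{sec:nphard} construction and compensate for the missing vertical lines) is the right starting point, but the construction you describe has a concrete gap that would make the ``satisfiable $\Rightarrow$ realizable'' direction fail. You add new horizontal lines only \emph{outside} the frame (above $\ell_\Delta$ and below $\ell_\nabla$) and leave the interior line set at the original $2k+3$ lines with vertical spacing $2+\varepsilon$. But the long induced paths $P_L$, $P_R$, $P^1,\dots,P^m$ still have to be realized somewhere: consecutive vertices of such a path must be intersecting disks, yet any two disks centered on distinct interior lines are at vertical distance at least $2+\varepsilon>2$ and therefore cannot intersect, and placing long runs of consecutive column vertices on a single horizontal line makes the path sprawl horizontally by roughly one unit per vertex, destroying the 4-star and flag structure along $P_\alpha$. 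So with your line set the columns simply cannot ``climb'' from $\ell_\alpha$ to $\ell_\Delta$ at all, no matter what enforcement gadgets you attach at the ends. The paper's proof supplies exactly the ingredient you are missing: it adds, around each interior clause line, companion horizontal lines at sub-unit vertical spacing (distance $1+\varepsilon$ between the companions), and rehomes the column disks onto these companions so that the column becomes a zig-zag induced path hopping between nearby horizontal lines. No new vertices or gadgets are needed, because the horizontal positions of the columns are already pinned by the existing structure (the 4-stars where each $P^i$ meets $P_\alpha$, together with Lemma~\ref{lem:squeeze}); the only thing the vertical lines were really providing was a supply of admissible heights, and that is what the extra parallel lines replace.

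A secondary issue: your rigidity claim that ``by Lemma~\ref{lem:c4} and Lemma~\ref{lem:s4} the only way to embed such a chain on parallel lines is with a fixed horizontal step'' is not supported by those lemmas. Lemma~\ref{lem:c4} concerns two perpendicular lines, and Lemma~\ref{lem:s4} shows a $K_{1,4}$ cannot be realized on \emph{two} parallel lines (it can be realized on three or more suitably spaced parallel lines), so neither transfers to the all-horizontal setting; a chain of diamonds between two closely spaced parallel lines can in fact slide and compress horizontally. You correctly flag the drift-accumulation problem in your last paragraph, but the construction needs to be repaired at the level of the interior line set before that analysis can even begin.
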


\begin{proof}
	Consider the realization given in Figure~\ref{fig:skeletonRealization}.
	Notice that the length of the paths $P^1, P^2, \dots, P^m$ ($P_q, P_r, P_s, P_t$ in our example), and thus the number of disks on vertical lines is equal.
	Lemma~\ref{lem:squeeze} shows that the induced paths can only be realized by unit disks that are centered between $\ell_\nabla$ and $\ell_\Delta$.
	Therefore, for the disks that correspond to the vertices on these paths, we do not need any vertical line.
	Instead, we use triplets of parallel horizontal lines to simulate a vertical line.
	For a horizontal line  $\ell^C_i$ in the line configuration we use in Secion~\ref{sec:nphard}, we add two horizontal lines, namely $\ell^{Ca}_i$ and $\ell^{Cb}_i$, where $\ell^{Ca}_i$ is above $\ell$, and $\ell^{Cb}_i$ is below $\ell$ both of which are symmetrically with respect to $\ell$.
	The Euclidean distance between each $\ell_a$ and $\ell_b$ is $1 + \varepsilon$ where $\varepsilon$ is a sufficiently small value.
	The disks which correspond to the long induced paths $P^1, \dots, P^m$ are placed on the extra two parallel lines, and thus the whole configuration is realized as described in Section~\ref{sec:nphard}.
	
	As a result, for a given instance $\Phi$ of Monotone NAE3SAT formula with $k$ clauses and $m$ literals; we have an instance $\Psi$ of $\APUD(2k+3,m+2)$ to prove NP-hardness with vertical lines, and an instance $\Psi'$ of $\APUD(4k+3,0)$ to prove NP-completeness without vertical lines.
	In $\Psi'$, only disks that can ``jump'' from one horizontal line to another are the ones that are on the top line of $\ell^{Ca}_1$ and  bottom line of $\ell'^{Cb}_1$.
	And those jumps do not change the overall configuration.
\end{proof}

\begin{figure}[htbp]
	\centering
	\begin{tikzpicture}[scale=0.5]
		
		\node[draw=none] at (-3.5,0) {$\alpha$};

		\node[draw=none] at (3.8,-12) {$\ell_q$};
		\node[draw=none] at (7.6,-12) {$\ell_r$};
		\node[draw=none] at (11.4,-12) {$\ell_s$};
		\node[draw=none] at (15.2,-12) {$\ell_t$};
		
		\node[draw=none] at (-3.5,-9) {$\ell_\Delta$};
		\node[draw=none] at (-3.5,-7.05) {$\ell'_C$};
		\node[draw=none] at (-3.5,-4.85) {$\ell'_B$};
		\node[draw=none] at (-3.5,-2.65) {$\ell'_A$};
		\node[draw=none] at (-3.5,2.65) {$\ell_A$};
		\node[draw=none] at (-3.5,4.85) {$\ell_B$};
		\node[draw=none] at (-3.5,7.05) {$\ell_C$};
		\node[draw=none] at (-3.5,9) {$\ell_\nabla$};
		
		\tikzstyle{every node}=[draw, fill=black, shape=circle, minimum size=2pt,inner sep=0pt];

		\draw[thick] (-3,0)--(22,0);
		
		\foreach \i in {0,1.9,3.8,5.7,7.6,9.5,11.4,13.3,15.2,17.1,19} 
		{
			\node at (\i,0) {};
			\draw (\i,0) circle[radius=1];			
		}

		\foreach \i in {3.8,7.6,11.4,15.2} 
		{
			\foreach \j in {-7.6,-6.5,-5.4,-4.3,-3.2,-1.9,1.9,3.3, 4.4,5.5, 6.6,7.7}
			{
				\node at (\i,\j) {};
				\draw (\i,\j) circle[radius=1];		
			}
		}
		
		\foreach \i in {-7.6,-6.5,-5.4,-4.3,-3.2,-1.9,1.9,3.3, 4.4,5.5, 6.6,7.7} %
		\draw[red] (-3,\i)--(22,\i);
		
		\foreach \i in {0,19} 
		{
			\foreach \j in {-7.6,-6.5,-5.4,-4.3,-3.2,-1.9,1.9,3.3, 4.4,5.5, 6.6,7.7}
			{
				\node at (\i,\j) {};
				\draw (\i,\j) circle[radius=1];
				
			}
		}
		
		
		\foreach \i in {-7.05,-4.85,-2.65,2.65,4.85,7.05} 
		{	
			\node at (1.5,\i) {};
			\node at (-1.5,\i) {};
			\node at (17.5,\i) {};
			\node at (20.5,\i) {};		
			\draw (1.5,\i) circle[radius=1];
			\draw (-1.5,\i) circle[radius=1];
			\draw (17.5,\i) circle[radius=1];
			\draw (20.5,\i) circle[radius=1];
			\draw (-3,\i)--(22,\i);
		}
		
		\draw[thick] (-3,9)--(22,9); 
		\draw[thick] (-3,-9)--(22,-9); 
		
		\foreach \i in {0, 2.32, 4.38, 6.43, 8.48, 10.53, 12.58, 14.63, 16.68, 19}
		{
			\node at (\i,9) {};
			\draw (\i,9) circle[radius=1];
			\node at (\i,-9) {};
			\draw (\i,-9) circle[radius=1];
		}

		\foreach \i in {3.8,7.6,11.4,15.2} 
		{
			\foreach \j in {-9.8,9.8}
			{
				\node at (\i,\j) {};
				\draw (\i,\j) circle[radius=1];
			}
		}
		
	\end{tikzpicture}
	\caption{Realization of the graph given in Figure~\ref{fig:skeleton} on a set of parallel lines.}
	\label{fig:allHorizontal}
\end{figure}

To show that $\APUD(k,0)$ recognition is in NP, we need to prove that a given solution can be verified in polynomial time as well as any feasible input will have a solution that takes up polynomial space, with respect to the input size.
Thus, we show that for any graph $G \in \APUD(0,k)$, there exists an embedding where the disk centers are represented using polynomially many decimals with respect to the input size.

For the following lemmas, define the set $\mathcal{H} = \{0, h_1, h_2, \dots, h_k\}$ where each element of the set corresponds to the Euclidean distance between a horizontal line, and the $x$-axis.
Without loss of generality, we assume that the bottom-most line is $x$-axis itself, and $h_i < h_j$ iff $i<j$.

\begin{dfn}[Extension disk]
	Let $A$ and $B$ a pair of intersecting unit disks centered at $(a,h_i)$ and $(b,h_j)$, respectively.
	The \emph{extension disk} $A_\EX$ of $A$ is a disk centered at $(a,h_i)$, with radius $2$.
	Then, $(b,h_j)$ is contained inside $A_\EX$, and symmetrically, $(a,h_i)$ is contained inside the extension disk $B_\EX$ of $B$.
\end{dfn}

We give the trivial definition of extension disk for the sake of simplicity in the following lemmas.
Essentially, we claim that the center of a disk has some freedom of movement, and this freedom is determined by precisely two disks.
To show this dependence, we utilize the intervals on a line, defined by the intersection points between the extension disks with that line.

\begin{lem}
	Consider a disk $A$ centered at $(a,h_i)$. 
	Each pair of extension disks, $B_\EX$ and $C_\EX$ that intersect $y = h_i$ line determines an interval $I_A$, in which the center of $A$ can move without changing the relationship among $A$, $B$, and $C$.
\end{lem}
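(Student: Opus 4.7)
The proof plan uses only the convexity of disks and the one-dimensional nature of their intersection with a horizontal line.

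First, I would observe that the extension disk $B_\EX$, being a disk of radius~$2$ centered at $(b,h_j)$, is convex, so its intersection with the horizontal line $y=h_i$ is either empty, a single point, or a closed line segment. Since the statement assumes $B_\EX$ meets $y=h_i$, denote this (possibly degenerate) segment by $J_B$, and analogously define $J_C$ on $y=h_i$ for $C_\EX$. By the very definition of the extension disk, a point $(x,h_i)$ belongs to $B_\EX$ if and only if the Euclidean distance from $(x,h_i)$ to $(b,h_j)$ is at most~$2$, which is exactly the condition for the two unit disks centered at these points to intersect. Hence whether $A$ and $B$ intersect is controlled solely by the membership of the center of $A$ in $J_B$, and symmetrically for $C$.

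Next, I would partition the line $y=h_i$ using the (at most four) endpoints of $J_B$ and $J_C$. This yields at most five open pieces on which the two membership indicators $\mathbf{1}_{J_B}$ and $\mathbf{1}_{J_C}$ are constant. I would then define $I_A$ as the closed connected component of
\[
\big(J_B\text{ or its complement in }y=h_i\big)\,\cap\,\big(J_C\text{ or its complement in }y=h_i\big)
\]
that contains the original center $(a,h_i)$, where the choice within each pair is dictated by whether $A$ currently intersects $B$ and whether $A$ currently intersects $C$. Since each of $J_B$, $J_C$ and their line-complements is a union of at most two intervals on a line, this connected component is itself a single interval (possibly unbounded on one side), giving the required $I_A$.

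Finally, I would verify that throughout $I_A$ the pairwise relationships among $A$, $B$ and $C$ are preserved: the $A$--$B$ relationship is constant by the choice involving $J_B$, the $A$--$C$ relationship is constant by the choice involving $J_C$, and the $B$--$C$ relationship does not depend on the position of~$A$ at all. The only mildly delicate point — which is the main \emph{obstacle}, although it is minor — is the case analysis on the four possibilities for the initial position of $(a,h_i)$ relative to $J_B\cap J_C$, $J_B\setminus J_C$, $J_C\setminus J_B$, and the complement of $J_B\cup J_C$, to confirm in each case that the connected component containing $(a,h_i)$ is indeed a single interval with endpoints taken from $\partial J_B\cup\partial J_C\cup\{\pm\infty\}$.
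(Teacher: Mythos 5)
Your proposal is correct and follows essentially the same approach as the paper: both arguments rest on the observation that the trace of an extension disk on the line $y=h_i$ is exactly the locus of centers at which $A$ intersects the corresponding disk, and both delimit $I_A$ by boundary points of these traces, with open or closed endpoints according to the current intersection status. Your connected-component bookkeeping is somewhat more careful than the paper's ``nearest intersection point on each side'' phrasing, but it is the same underlying idea.
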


\begin{proof}
	Let $\beta$ and $\gamma$ be two positive numbers, $B$ and $C$ be a pair of disks, and $B_\EX$ and $C_\EX$ be the extension disks, respectively.
	Without loss of generality, let $(a-\beta, h_i)$ be the intersection point between $B_\EX$ and $y=h_i$ which is the closest intersection point to the center of $A$.
	Similarly, let $(a+\gamma, h_i)$ be the intersection point between $C_\EX$ and $y=h_i$ which is the closest intersection point to the center of $A$.
	
	If $A$ intersects both $B$ and $C$, then $I_A = [a-\beta, a+\gamma]$.
	If $A$ intersects $B$ but not $C$, then $I_A = [a-\beta, a+\gamma)$.
	If $A$ intersects $C$ but not $B$, then $I_A = (a-\beta, a+\gamma]$.
	If $A$ intersects neither $B$ nor $C$, then $I_A = (a-\beta, a+\gamma)$. 
\end{proof}

\begin{lem} \label{lem:interval}
	Consider a disk $A$ centered at $(a,h_i)$, and assume that the Euclidean distance between any pair of parallel lines is different than $2$.
	Let $I_A$ be an interval on $y = h_i$ line, in which the center of $A$ can move freely without changing the relationship between $A$ and any other disk.
	Then, $I_A$ is determined by at most two extension disks that intersect $y = h_i$ line.
\end{lem}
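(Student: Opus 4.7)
The plan is to translate the freedom of movement of $A$'s center along $y = h_i$ into a one-dimensional interval intersection/exclusion problem, then argue that the two endpoints of the maximal such interval $I_A$ are each pinned down by a single extension disk.

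First, I would parametrize the candidate positions of $A$'s center as $(a', h_i)$ and, for each other unit disk $B$ centered at some $(b, h_j)$, rewrite the adjacency condition $A \cap B \neq \emptyset$ as $a' \in J_B$, where $J_B := B_{\EX} \cap \{y = h_i\}$. When $|h_i - h_j| > 2$, the extension disk $B_{\EX}$ misses the line $y = h_i$, so $J_B = \emptyset$ and $B$ contributes no constraint (since $A$ and $B$ could never intersect regardless of $a'$). The hypothesis that no two parallel lines are at Euclidean distance exactly $2$ rules out the degenerate case $|h_i - h_j| = 2$, and hence the remaining disks (those whose extension disks meet $y = h_i$) all yield non-degenerate closed intervals $J_B = [b - \sqrt{4 - (h_i - h_j)^2}, \, b + \sqrt{4 - (h_i - h_j)^2}]$.

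Next, I would describe the feasible set of positions preserving all adjacency relationships as
\[
F \;=\; \Bigl(\bigcap_{B \,:\, A \cap B \neq \emptyset} J_B \Bigr) \;\setminus\; \Bigl(\bigcup_{B \,:\, A \cap B = \emptyset} J_B\Bigr),
\]
where both the intersection and the union range only over disks $B$ whose extension disks $B_{\EX}$ intersect $y = h_i$. The interval $I_A$ is the maximal connected component of $F$ containing the current abscissa $a$.

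The crux of the argument is the behaviour at the boundary of $I_A$. Moving $a'$ away from $a$ preserves every adjacency until, at some first position $\ell < a$, one specific relationship flips: either $a'$ exits some $J_B$ with $B$ currently adjacent to $A$ (so $\ell$ is the left endpoint of that $J_B$), or $a'$ enters some $J_B$ with $B$ currently non-adjacent to $A$ (so $\ell$ is the right endpoint of that $J_B$). In either scenario, a single extension disk is responsible for the left endpoint of $I_A$. An identical argument, symmetric in sign, identifies at most one extension disk responsible for the right endpoint. Thus $I_A$ is determined by at most two extension disks intersecting $y = h_i$.

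The main technical care required is bookkeeping the open/closed character of $I_A$ at its boundary — which depends on whether the witnessing extension disk corresponds to an adjacency or a non-adjacency — and ensuring that the non-degeneracy of every $J_B$ (guaranteed by the hypothesis on inter-line distances) prevents a boundary point from coinciding with an isolated tangency, which could otherwise ambiguously invoke additional disks.
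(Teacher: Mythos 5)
Your proposal is correct and follows essentially the same route as the paper: both arguments reduce the problem to the intersection points of extension disks with the line $y=h_i$ and observe that $I_A$ is bounded by the nearest such point on each side of $a$, so one extension disk accounts for each endpoint. Your version is somewhat more explicit about the feasible set and the open/closed bookkeeping at the boundary, but the underlying idea is identical to the paper's proof.
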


\begin{proof}
	Let $p_1, p_2, \dots$ be the $x$-coordinates of the intersection points between $y=h_i$ line and extension disks that intersect $y = h_i$ line, such that $p_j < p_{j+1}$.
	Suppose that $a \neq p_j$ for any $j$.
	Then, there exists two intersection points whose $x$-coordinates are $p_j, p_{j+1}$ such that $p_j < a < p_{j+1}$ holds.
	The interval $A_I$ is determined precisely by the extension disks whose intersection points are $(p_j,h_i)$ and $(p_{j+1}, h_i)$.
	
	Observe that if $a = p_j$ holds for some $j$, this is the same case where $A$ intersects with the corresponding disk, say $B$.
	Depending on the the center of $B$ being to the left or to the right of the center of $A$, the interval can still be defined by either the pair $p_{j-1}, p_j$ or $p_j,p_{j+1}$. 
\end{proof}

Now, let us show that the intervals are large enough if the disk centers have coordinates that are represented using polynomially many decimals.
We denote this interval by $I_A$ for the center of a disk $A$.
Let us refer to a number in the output as \emph{good number} if its decimal representation has polynomially many decimals with respect to the input size, and refer to a number as \emph{bad number}, otherwise.

\begin{lem} \label{lem:intersection}
	Let $A$ be a disk centered at $(a,h_i)$, and let $B$ and $C$ be two disks that defines $I_A$.
	If the centers of $B$ and $C$ are good numbers, then there exists a point $p$ in $I_A$, such that $p$ is also a good number.
\end{lem}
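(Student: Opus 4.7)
The plan is to give an explicit inverse-polynomial lower bound on the length $|I_A|$ and then pick a decimal of the corresponding precision inside $I_A$. First I would write the endpoints of $I_A$ in closed form. By Lemma~\ref{lem:interval}, $I_A$ is determined by the intersections of the extension disks $B_\EX$ and $C_\EX$ with the horizontal line $y=h_i$. Writing $B=(b,h_j)$ and $C=(c,h_{j'})$, each such intersection has the form $b\pm\sqrt{4-(h_i-h_j)^2}$ or $c\pm\sqrt{4-(h_i-h_{j'})^2}$. By assumption $b,c$ are good numbers, and since the heights $h_i,h_j,h_{j'}\in\mathcal H$ are part of the input (hence good rationals), each radicand is a good rational in $[0,4]$. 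Consequently $I_A=[\ell,r]$ where $\ell$ and $r$ each have the shape $t+\sigma\sqrt{q}$ for a good rational $t$, a sign $\sigma\in\{-1,+1\}$, and a good rational $q\in[0,4]$.

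Next I would bound $r-\ell$ from below. The quantity $r-\ell$ has the form $\Delta+\sigma_1\sqrt{u}+\sigma_2\sqrt{v}$ with $\Delta$ a good rational and $u,v\in[0,4]$ good rationals, so it is an algebraic number of degree at most~$4$ over $\mathbb Q$ whose minimal polynomial has coefficients of polynomial bit-length in the input. I would rule out $r=\ell$ separately (see below), and for $r\neq \ell$ apply the standard Mignotte/Liouville-type root-separation bound (or, more elementarily, multiply the four conjugates $\Delta\pm\sqrt{u}\pm\sqrt{v}$ together, clear denominators, and observe that the resulting integer is nonzero and at most polynomially large): this yields a lower bound $r-\ell\geq 10^{-q(N)}$ for some polynomial $q$ and $N$ the number of input digits. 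With this bound in hand, any decimal with exactly $q(N)+1$ digits after the point that lies in $I_A$ is a good number, and at least one such decimal exists because $|I_A|>10^{-q(N)}$.

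The main obstacle is the degenerate case $r=\ell$, i.e.\ $|I_A|=0$. I would handle it in two subcases. If neither $B_\EX$ nor $C_\EX$ contributes a radical (that is, $B$ and $C$ both lie on the line $y=h_i$, so $\sqrt{4-(h_i-h_j)^2}=2$), then $\ell$ and $r$ are themselves good numbers, and $r=\ell$ forces $a=\ell=r$ to be good as well, so we may take $p=a$. Otherwise at least one endpoint involves a true quadratic irrational; by the degree argument above, $r-\ell=0$ then forces the algebraic conjugates to satisfy explicit rational equations in $b,c,h_i,h_j,h_{j'}$ that we can detect in polynomial time, and in that single exceptional configuration we slightly enlarge $I_A$ by replacing a closed boundary with the next weaker intersection inequality (using that the discs of $A$ with $B$ or $C$ touch only on a measure-zero set of centres). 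After this perturbation the interval strictly contains an open subinterval of positive length, and the previous two paragraphs apply verbatim to yield a good point~$p$.
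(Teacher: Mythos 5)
Your proposal is correct, and it is in fact substantially more complete than the paper's own argument. The paper's proof performs only your first step: it writes the two relevant intersection points as $b\pm\sqrt{4-\delta_{ij}^2}$ and $c\pm\sqrt{4-\delta_{im}^2}$, observes that the length of $I_A$ is a good rational plus a difference of two square roots, and then simply \emph{asserts} that ``even though the expression $(\sqrt{4-\delta_{ij}^2}-\sqrt{4-\delta_{im}^2})$ is very small, it does not change the fact that there exists a number $p$ inside the interval'' with polynomially many decimals. That assertion is precisely the point that needs proof --- if the radical difference could make $|I_A|$ doubly-exponentially small, no good number need lie inside --- and your root-separation argument (the conjugate-product/Liouville bound for degree-$\le 4$ algebraic numbers with polynomial-bit-length data, giving $|I_A|\ge 10^{-q(N)}$ whenever $I_A$ is nondegenerate) is exactly the missing quantitative core. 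Likewise, the paper silently ignores the degenerate case $|I_A|=0$, where the disk centre could in principle be pinned to an irrational algebraic point; your case analysis (rational collapse vs.\ tangency relaxation) addresses a real loose end in the paper's NP-membership argument. One small caution on your ``elementary'' variant: the product of all four conjugates $\Delta\pm\sqrt u\pm\sqrt v$ can vanish even when $r-\ell\neq 0$ (if a \emph{different} sign combination happens to be zero), so one must pass to the minimal polynomial or argue the lower-degree subcases separately; the Mignotte-type bound you cite as the primary tool handles this correctly, so this is a presentational rather than a mathematical issue.
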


\begin{proof}
	Let $(b,h_j)$ and $(c,h_m)$ be the centers of $B$ and $C$, respectively.
	Define $\delta_{ij} = |h_i - h_j|$, and define $\delta_{im} = |h_i - h_m|$.
	The intersection points on the line $y = h_i$ are $(\sqrt{4-\delta_{ij}^2} - b, h_i)$ and $(\sqrt{4-\delta_{im}^2} - c, h_i)$.
	And thus by Pythagorean theorem we have
	
	\begin{align*}
		&\left|(\sqrt{4-\delta_{ij}^2} - b) - (\sqrt{4-\delta_{im}^2} - c)\right|\\
		=& \left|(b+c) + (\sqrt{4-\delta_{ij}^2} - \sqrt{4-\delta_{im}^2})\right|
	\end{align*}
	as the length of the interval $I_A$.
	
	Since $b$ and $c$ are both good numbers, $b+c$ is also a good number.
	So, even though the expression $(\sqrt{4-\delta_{ij}^2}- \sqrt{4-\delta_{im}^2})$ is a very small, it does not change the fact that there exists a number $p$ inside the interval $A_I$ that is represented with polynomially many decimals.
	Therefore, $p$ is a good number. 
\end{proof}

\begin{lem} \label{lem:representation}
	The coordinates for the disk centers can be chosen in the form of $b + \Sigma_{i} \pm \sqrt{a_i}$, where $a_i = \sqrt{4-\delta_i^2}$ and $\delta_i$ is the distance between two of the given parallel lines, and $b$ is a good number.
\end{lem}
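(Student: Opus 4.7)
The plan is to prove this lemma by induction on the disks, processed in a carefully chosen order, so that each disk's $x$-coordinate inherits the prescribed form from its predecessor plus at most one new square-root term.

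First, I would fix any valid realization $\mathcal{R}$ of $G$ (which exists by hypothesis that $G \in \APUD(k,0)$) and sort its disks $A_1, A_2, \dots, A_n$ from left to right by $x$-coordinate. For the base case, I would set $x_{A_1} := 0$, a good number; since no other disk has yet been placed, any real position is valid, and the claimed form is satisfied with $b = 0$ and an empty sum.

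For the inductive step, assume $A_1, \dots, A_{j-1}$ have been placed so that each $x_{A_\ell}$ has the form $b_\ell + \sum_{i} \pm \sqrt{4-\delta_i^2}$ with $b_\ell$ a good number. To place $A_j$, I invoke Lemma~\ref{lem:interval}: its valid interval $I_{A_j}$ with respect to the already-placed disks is determined by at most two extension disks, whose intersections with the line $y = h_j$ have $x$-coordinates of the form $x_B \pm \sqrt{4 - \delta^2}$ for some already-placed $B$ and some line-distance $\delta$ between two given lines. I would then set $x_{A_j}$ either to one of these endpoints, giving $x_{A_j} = x_B \pm \sqrt{4-\delta^2}$ which immediately inherits the claimed form (the base $b$ carries over from $x_B$, and one new $\pm\sqrt{4-\delta^2}$ term is appended), or -- if strict inequalities require an interior point -- to a good-number interior point of $I_{A_j}$ guaranteed by Lemma~\ref{lem:intersection}, in which case the new coordinate inherits the form by a suitable choice of an updated good base $b$.

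The main obstacle will be verifying that this greedy left-to-right re-placement yields a \emph{globally} valid realization: moving $A_j$ within $I_{A_j}$ preserves all of its adjacencies with $A_1, \dots, A_{j-1}$ by definition of $I_{A_j}$, but one must ensure the not-yet-placed disks $A_{j+1},\dots, A_n$ can still be placed consistently. I would address this by arguing that, since the original realization $\mathcal{R}$ is valid and since at each step $A_j$'s chosen position lies in the same interval $I_{A_j}$ that $A_j$'s position in $\mathcal{R}$ belonged to (or differs from it only by a controlled shift along the endpoints), the adjacency structure restricted to any prefix of $A_1,\dots,A_j$ is preserved, and the remaining disks can be placed symmetrically by the same procedure applied inductively. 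This yields that every $x$-coordinate takes the claimed form with at most $n-1$ square-root summands, hence polynomial description size, as required for the NP-membership argument.
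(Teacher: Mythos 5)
Your core idea is the same as the paper's: order the disks, let Lemma~\ref{lem:interval} pin each disk's feasible interval down to at most two earlier disks, and then take either an interval endpoint (appending one $\pm\sqrt{4-\delta^2}$ term to a predecessor's coordinate) or a good interior point supplied by Lemma~\ref{lem:intersection}. Where you diverge is in how the ordering is obtained, and that is where the gap sits. The paper does not rebuild the realization from scratch left to right; it starts from the given valid realization and \emph{shifts each disk as far as possible in one direction within that realization}. Because every shift is a legal move inside the disk's true feasible interval (computed against \emph{all} other disks), a valid realization is maintained at every step, and the resulting ``pinned'' positions define the dependence order $D_1\prec\cdots\prec D_n$ in which each disk is either at distance exactly $2$ from a predecessor or has its interval bounded by two predecessors.

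In your version the interval $I_{A_j}$ is computed only against the already-placed disks $A_1,\dots,A_{j-1}$, and this is generally a \emph{strict superset} of the true feasible interval: $A_j$ may be required not to intersect (or to intersect) some $A_{j'}$ with $j'>j$, and snapping $x_{A_j}$ to an endpoint of the left-only interval can move it far from its original position in $\mathcal{R}$ and break that constraint. Your proposed repair --- that the chosen position ``lies in the same interval that $A_j$'s position in $\mathcal{R}$ belonged to'' --- does not hold for endpoint choices, precisely because the two intervals differ; and the appeal to applying ``the same procedure symmetrically'' to the remaining disks does not resolve it, since those disks' own feasible intervals may have become empty after $A_j$ was moved. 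The fix is either to adopt the paper's shift-to-extremes scheme (which never leaves the space of valid realizations), or to compute each $I_{A_j}$ against all disks at their current positions and only then perturb, which is essentially what the subsequent Lemma~\ref{lem:decimals} does line by line. With that correction the rest of your argument, including the count of at most $n-1$ radical summands, goes through.
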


\begin{proof}
	Suppose that we shift all the disk centers as far left to right as possible.
	This defines a partial order $D_1 \prec D_2 \prec \dots D_n$ of dependence among the disks.
	Now, let us fix one disk, and then respect this partial order among other disks.
	
	For each disk $D_{i}$ one of the following holds:
	\begin{enumerate}[(i)]
		\item The Euclidean distance between $D_i$ and $D_j$ is exactly $2$ for some $j < i$.
		\item There are two other disks, $D_j$ and $D_m$ for some $j,m <i$ that defines the interval for the center of $D_i$ (by Lemma~\ref{lem:interval}).
	\end{enumerate}
	
	In case (i) holds, then the center of $D_i$ has the exactly same $x$-coordinate with the center of $D_j$.
	In case (ii) holds, then $a_i = \sqrt{4-\delta_i^2}$, and by Lemma~\ref{lem:intersection}, the center of $D_i$ can have an $x$-coordinate which is a good number. 
\end{proof}

Now, we finally can show that $\APUD(k,0)$ is in NP by utilizing the previous lemmas.

\begin{lem} \label{lem:decimals}
	For every graph $G$ and a set of parallel lines $\mathcal{L}$, if $G$ can be realized on lines from $\mathcal{L}$ as unit disks, then there exists such a realization using polynomial number of decimals with respect to the input size. 
\end{lem}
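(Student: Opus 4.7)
The plan is to combine the three preceding lemmas (Lemma~\ref{lem:interval}, Lemma~\ref{lem:intersection}, Lemma~\ref{lem:representation}) into an inductive construction that rewrites an arbitrary realization into one whose coordinates are good numbers. Start from any realization $\Sigma$ of $G$ on $\mathcal{L}$, and push every disk as far to the left as is compatible with the intersection pattern dictated by $E(G)$. This pushing induces a partial order on the disks $D_1\prec D_2\prec\dots\prec D_n$: the center of $D_i$ is pinned from the left either by the $y$-axis (for a ``leftmost'' disk) or by one earlier disk $D_j$ whose extension disk's right boundary touches the horizontal line of $D_i$, and is otherwise free until it meets the extension-disk boundary of some second earlier disk $D_m$. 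Extend $\prec$ arbitrarily to a linear order.

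First I would fix $D_1$ by translating the whole configuration so that the center of $D_1$ sits at a rational point of bounded bit length (for instance, at $x$-coordinate $0$ on its horizontal line); this is clearly a good number. Then, proceeding in the order of $\prec$, I would argue inductively: assuming $D_1,\dots,D_{i-1}$ have already been re-embedded at good $x$-coordinates while preserving every adjacency and non-adjacency of $G$, I invoke Lemma~\ref{lem:interval} to identify the (at most two) already-placed disks $D_j,D_m$ whose extension disks define the feasibility interval $I_{D_i}$ for the center of~$D_i$. Since $D_j$ and $D_m$ now have good-number centers by the induction hypothesis, Lemma~\ref{lem:intersection} delivers a good number inside $I_{D_i}$, at which I place the new center of~$D_i$. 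Because the new center lies in $I_{D_i}$, its relationship with every earlier disk is unchanged, and its relationship with later disks is a constraint to be honored when those are processed in turn.

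The main obstacle I foresee is twofold. First, I must make sure that the replacement procedure cannot lengthen the bit-complexity of good numbers super-polynomially as $i$ grows; this is handled by Lemma~\ref{lem:representation}, which guarantees that every coordinate we ever need is a sum of a good rational $b$ and a bounded-length combination of the $\sqrt{4-\delta^2}$ terms coming from the fixed input distances $\delta$ between lines of $\mathcal{L}$, so the bit-length grows only additively with $n$ and polynomially with the encoding size of $\mathcal{L}$. Second, I must be careful that the partial order $\prec$ really is acyclic and that ``each disk after the first is constrained by at most two earlier disks''; this follows from the left-shifting argument, because at the moment a disk stops moving leftward under the shift it is blocked by at most one disk on each side of its defining interval, giving the needed two constraints.

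Finally, a brief complexity check closes the argument: the number of disks is $n=|V(G)|$, the number of distinct line heights is $|\mathcal{H}|\le k$, and each coordinate we write has bit-length bounded by a polynomial in the input size. Therefore the whole re-embedding $\Sigma'$ uses only polynomially many decimal digits, proving the lemma and, combined with Theorem~\ref{thm:apud0m} and the straightforward polynomial-time verifier for unit disk intersection patterns, establishing that $\APUD(k,0)$ recognition lies in NP and hence is NP-complete.
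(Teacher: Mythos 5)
Your proposal is correct and rests on the same three lemmas the paper uses, but it organizes the induction differently. The paper's proof decomposes the configuration \emph{by line}: it splits $G$ into $G_0,\dots,G_k$ according to which line $y=h_i$ each disk sits on, observes that each $G_i$ is a unit interval graph, embeds $G_0$ on the $x$-axis with good coordinates using the known unit-interval-graph embedding, and then processes the lines bottom-up, placing all of $G_i$ while respecting only the neighbours already fixed in $\bigcup_{j<i}G_j$. You instead induct \emph{disk by disk} along the left-shift dependency order $D_1\prec\cdots\prec D_n$ --- which is exactly the order the paper itself sets up inside Lemma~\ref{lem:representation} --- and place each center at a good number inside the interval supplied by Lemmas~\ref{lem:interval} and~\ref{lem:intersection}. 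The per-line route buys a clean base case (the $G_0$ layer is just a unit interval graph, for which polynomial-precision embeddings are already known) and keeps the within-line and cross-line constraints conceptually separate; your per-disk route avoids that case split entirely and makes the ``at most two earlier constraints'' structure explicit from the start. Both arguments share the same unaddressed subtlety --- that perturbing earlier disks must not empty the feasibility interval of a later one, which is justified only implicitly by the existence of the original realization --- so I do not count that against you. One small caution: make sure the interval you hand to Lemma~\ref{lem:intersection} is computed from the \emph{updated} positions of the two defining earlier disks, as you state, since Lemma~\ref{lem:interval} as written defines $I_A$ relative to the fixed original configuration.
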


\begin{proof}
	Since the input consists of only horizontal lines, $y$-coordinates of every disk center is fixed.
	It remains to prove that $x$-coordinates of the disks in a solution cannot be forced to be in very small intervals.
	That is, if a solution consists of disks whose centers are bad numbers, then we can perturb the solution to a new solution contains only good numbers.
	The algorithm described below guarantees existence of an embedding of $G$ on parallel lines as unit disks, whose centers are good numbers.
	
	\begin{enumerate}
		\item Let $G_0, G_1, \dots, G_k$ denote the disjoint induced subgraphs of $G$, such that the vertices of $G_i$ correspond to the disks centered on line $y = h_i$.
		\item Embed $G_0$ on $x$-axis with small perturbations which results as all disks on $y=0$ having polynomially many decimals.
		\item For each $1 \leq i \leq k$; find an embedding of $G_i$ onto $y=h_i$ line by only considering neighbors from $\bigcup_{j<i}G_j$.
	\end{enumerate}
	
	Let us now show that the algorithm is correct.
	
	Observe that each $G_i$ is a unit interval graph, and by \cite{unitIntervalGraphs}, we know that every unit interval graph has an embedding with polynomially many decimals.
	Thus, we can embed $G_0$ on $x$-axis with small perturbations which results as all disks on $y=0$ having good number as their centers.
	
	Now, consider $G_0 \cup G_1$.
	If $h_1 > 2$, then $G_1$ is a standalone unit interval graph, without any relationship with $G_0$.
	If, on the other hand, $h_1 \leq 2$, then by Lemma~\ref{lem:intersection} and Lemma~\ref{lem:interval}, we know that every vertex of $G_1$ can be embedded on $h_1$ as unit disks, of whose centers are good numbers.
	
	For each $2 \leq i \leq k$, the algorithm processes every $G_i$ with respect to the previous embeddings, and thus, by Lemma~\ref{lem:interval}, $G_i$ has an embedding on $y = h_i$ as unit disks, whose centers are good numbers.
	By Lemma~\ref{lem:representation}, we know that the perturbations can be done in a way such that every disk keeps its relationships with the other disks.
	Note that the number of lines is bounded by the number of disks, as empty lines do not have any effect on the embedding.
	Thus, even if we have the coordinates on the line $y = h_k$ as nested square roots, the resulting coordinates will have $a(1 + 2 + \dots + 2^{k})$ bits where $a$ is a good number.
	Thus, the coordinates can be represented by polynomially many decimals, and hence are good numbers.
	
	Therefore, if $G$ has an embedding on parallel lines, then we can obtain an embedding that can be represented by polynomially many decimals with respect to the input by starting from $G_0$, and gradually moving up to $G_k$, processing each subgraph iteratively. 
\end{proof}

\begin{cor} \label{col:polytime}
	Every yes-instance of $\APUD(k,0)$ has a polynomial witness that can be verified in polynomial time
\end{cor}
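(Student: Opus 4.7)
The plan is to exhibit a certificate and a polynomial-time verifier. For the certificate, I would take the realization guaranteed by Lemma~\ref{lem:decimals}: given a yes-instance $(G,\mathcal{H})$ of $\APUD(k,0)$, there exists a mapping $\Sigma:V(G)\to\mathbb{R}\times\mathcal{H}$ whose disk-intersection graph equals $G$, and in which every coordinate is a \emph{good number}, i.e.\ representable using polynomially many bits in the input size. Since the $y$-coordinate of each disk is already one of the input values in $\mathcal{H}$, only the $x$-coordinates need to be written down; by Lemma~\ref{lem:decimals} each of these is expressible as $b+\sum_i\pm\sqrt{a_i}$ with $b$ a good number and each $a_i$ derived from the pairwise distances of the given parallel lines, so the whole certificate has polynomial bit-length.

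The verifier then does the following: first, check that every pair $(x_v,y_v)$ in $\Sigma$ satisfies $y_v\in\mathcal{H}$; second, for each pair $u,v\in V(G)$ compute the squared Euclidean distance $(x_u-x_v)^2+(y_u-y_v)^2$ and compare it with $4$; third, accept if and only if the resulting intersection pattern matches $E(G)$ exactly (distance $\leq 2$ iff $uv\in E(G)$). All of these are $O(n^2)$ arithmetic operations on numbers of polynomial bit-length, so the verifier runs in polynomial time. Combined with the NP-hardness proved in Theorem~\ref{thm:apud0m}, this yields NP-completeness of $\APUD(k,0)$ recognition.

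The only genuine subtlety is the boundary case where some pairwise distance equals exactly $2$: the verifier must handle the equality test on numbers of the form $b+\sum_i\pm\sqrt{a_i}$ without ambiguity. This can be dealt with by noting that, as in the proof of Lemma~\ref{lem:decimals}, the good-number perturbation can be chosen to avoid tangencies (keeping any prescribed non-edge strictly non-adjacent and any prescribed edge strictly adjacent), so the verifier's comparisons are strict inequalities on polynomial-size rationals and never require exact comparison of nested radicals.
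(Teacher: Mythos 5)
Your proposal is correct and follows essentially the same route as the paper: take the polynomial-size realization guaranteed by Lemma~\ref{lem:decimals} as the certificate and verify the intersection pattern with $O(n^2)$ arithmetic operations on polynomial-length numbers. Your closing remark about avoiding exact tangency comparisons on nested radicals is a careful detail the paper's own (very brief) proof does not spell out, but it does not change the argument.
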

\begin{proof}
	By Lemma~\ref{lem:decimals}, we know that for every graph that has a feasible embedding on $k$ parallel lines, there exists an embedding that takes up polynomial space in decimal representation with respect to the input size.
	Then, it remains to verify whether the solution gives the intersection graph, and the disks are centered on given lines, which takes only polynomial number of usual arithmetic operations. 
\end{proof}

\begin{thm} \label{thm:completeness}
	$\APUD(k,0)$ is NP-complete.
\end{thm}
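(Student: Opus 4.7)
The plan is to combine the two ingredients that have already been established earlier in the chapter. On the one hand, Theorem~\ref{thm:apud0m} provides a polynomial-time reduction from Monotone NAE3SAT, yielding NP-hardness of $\APUD(k,0)$. On the other hand, Corollary~\ref{col:polytime} shows that every yes-instance admits a witness of polynomial size that can be checked in polynomial time, which is precisely the NP-membership requirement. So the proof is a one-line combination, and nothing new has to be argued.

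First I would restate what is needed: a decision problem is NP-complete iff it is in NP and NP-hard. Then I would invoke Corollary~\ref{col:polytime} for membership in NP, pointing out that, given a candidate embedding whose coordinates are good numbers (bounded polynomial bit-length, as granted by Lemma~\ref{lem:decimals}), verification reduces to checking, for each pair of vertices $u,v$, whether the Euclidean distance between the corresponding centers is at most $2$ (using only the standard arithmetic operations on rationals) and whether the $y$-coordinate of each center lies in the given set $\mathcal{H}$ — all doable in polynomial time. Then I would invoke Theorem~\ref{thm:apud0m} for NP-hardness, observing that the reduction constructed there from Monotone NAE3SAT runs in polynomial time and produces an instance of $\APUD(4k+3,0)$.

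There is no genuine obstacle here — the two supporting results were proved precisely so that this wrap-up would be immediate. The only mild subtlety worth flagging in the write-up is that the verifier is checking an $\APUD(k,0)$ realization, not a general unit disk realization, so the $y$-coordinate constraint must be part of the check; but this is a trivial addition, since $\mathcal{H}$ is part of the input. Thus the proof will be a short paragraph concluding that $\APUD(k,0)$ recognition is both in NP and NP-hard, hence NP-complete.
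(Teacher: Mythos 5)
Your proposal matches the paper's proof exactly: the paper also concludes NP-completeness by citing Theorem~\ref{thm:apud0m} for hardness together with Lemma~\ref{lem:decimals} and Corollary~\ref{col:polytime} for NP membership. Your extra remark about the verifier also checking the $y$-coordinates against $\mathcal{H}$ is a reasonable and harmless elaboration of what the paper leaves implicit.
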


\begin{proof}
	Directly follows by Theorem~\ref{thm:apud0m}, Lemma~\ref{lem:decimals}, and Corollary~\ref{col:polytime}. 
\end{proof}

At this point, we would like to pose the question whether the technique used in this section can also be used for the problem of recognizing $\APUD(k,m)$ to determine NP membership.

\begin{conj}
	$\APUD(k,m)$ recognition is NP-complete.
\end{conj}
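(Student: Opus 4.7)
The NP-hardness half is immediate: Theorem~\ref{thm:apud} already constructs, from a Monotone NAE3SAT formula, an instance of $\APUD(2k+3, m+2)$, which is a special case of $\APUD(k,m)$ when $k,m$ are part of the input. So the proposal concerns the missing half: showing that $\APUD(k,m)$ recognition is in NP, i.e., every yes-instance admits a realization whose coordinates can be written with polynomially many bits, by extending Lemma~\ref{lem:decimals} and Corollary~\ref{col:polytime} from the purely parallel setting to the mixed horizontal-vertical setting.

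The plan is to guess a witness in two phases. First, nondeterministically assign each vertex to one of the $k+m$ input lines; this is only $O(n \log(k+m))$ bits. Second, given the assignment, produce numerical coordinates along the chosen lines. I would mimic the layered construction used for $\APUD(k,0)$: build a dependency partial order on the disks by sweeping leftmost-/bottommost-first and, for each disk $D$, identify a constant-size set of previously placed disks that determine the interval $I_D$ in which its free coordinate can move without altering any of its intersection/non-intersection relations (the analogue of Lemma~\ref{lem:interval}). The objective at each step is to pick a good number (polynomial-bit rational) inside $I_D$, analogous to Lemma~\ref{lem:intersection}, and then verify at the end that the resulting disks realize exactly the input graph.

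The hard part will be controlling the algebraic complexity across line orientations. In $\APUD(k,0)$ every cross-constraint involved two disks whose $y$-coordinates are both input numbers $h_i,h_j$, so intersections of extension disks with a horizontal line were expressions in $\sqrt{4-(h_i-h_j)^2}$ whose depth could be bounded by the number of lines (Lemma~\ref{lem:representation}). Once vertical lines are introduced, a disk $B$ at $(v_j, b)$ on a vertical line depends on a disk $A$ at $(a, h_i)$ through the constraint $(a-v_j)^2 + (b-h_i)^2 \le 4$, so the admissible interval for $b$ has endpoints $h_i \pm \sqrt{4-(a-v_j)^2}$ in which the variable $a$ itself may be irrational. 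Iterating horizontal-vertical-horizontal-$\dots$ dependencies threatens to nest radicals to a depth growing with the chain length, potentially blowing up the bit complexity super-polynomially.

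My plan to sidestep this blow-up is a generic-perturbation argument: show that if a realization exists, then there exists one in which every intersection constraint is strict (Euclidean distance strictly below or strictly above $2$, and each coordinate strictly inside its admissible interval), so that every interval $I_D$ produced during the sweep has positive length that is at least $2^{-\mathrm{poly}(n)}$. Granted such strict slack, one can replace the algebraic endpoint $h_i + \sqrt{4-(a-v_j)^2}$ by a polynomial-bit rational approximation to $a$ before even taking the square root, absorbing the error into the slack and thereby flattening the nested-radical tower to a single layer at each step. The key lemma to prove is that the realization space, being a semi-algebraic set defined by a polynomial number of quadratic inequalities with axis-aligned structure, is either empty or contains a full-dimensional open ball of radius $2^{-\mathrm{poly}(n)}$ centered at a rational point; this should follow from a quantitative Tarski-Seidenberg bound specialized to the axis-aligned case, whose concrete complexity estimate is the main technical obstacle in the proof.
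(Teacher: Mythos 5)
First, note that the paper does not prove this statement: it is posed explicitly as a conjecture. What the paper establishes is only the NP-hardness half (Theorem~\ref{thm:apud}) and NP membership for the purely parallel case $\APUD(k,0)$ (Lemma~\ref{lem:decimals}, Corollary~\ref{col:polytime}, Theorem~\ref{thm:completeness}); membership in NP for the mixed case is left open. Your proposal correctly identifies that the hardness half is already done and that the entire content of the conjecture is the membership half, and your diagnosis of where the $\APUD(k,0)$ argument breaks --- alternating horizontal--vertical dependency chains nesting radicals to a depth that the layered sweep of Lemma~\ref{lem:representation} can no longer flatten --- is exactly the obstacle the authors gesture at.

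However, your proposed fix has a genuine gap at its load-bearing step. The claim that the realization space, if nonempty, contains a rational-centered ball of radius $2^{-\mathrm{poly}(n)}$ does not ``follow from a quantitative Tarski--Seidenberg bound'': the known quantitative bounds for semi-algebraic sets defined by polynomially many quadratic inequalities guarantee only witnesses of \emph{singly exponential} bit-size in the number of variables, not polynomial. Worse, the phenomenon you would need to exclude is known to actually occur for unit disk graphs without line constraints --- realizations can be forced to require exponentially many bits, which is precisely why the unrestricted recognition problem is $\exists\mathbb{R}$-complete rather than known to be in NP (the paper cites \cite{integerRealization} for this). The paper's own gadget in Section~\ref{sec:nphard} is ``rigid up to $\varepsilon$ flexibility'' with $\varepsilon\leq 1/4k$, so in that instance the slack happens to be polynomial, but nothing in your sweep-and-perturb argument shows that an adversarial yes-instance of $\APUD(k,m)$ cannot squeeze some coordinate into an interval of doubly-logarithmic measure or force an exact equality that no perturbation removes. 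Establishing the polynomial-slack lemma for the axis-aligned setting is not a technical afterthought; it is essentially equivalent to the conjecture itself, and your proposal leaves it unproved.
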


\section{APUD(1,1) recognition is open} \label{sec:apud11}
In this section, we discuss a natural basis for $\APUD(k,m)$ recognition problem.
That is, $k = 1$, and $m = 1$.
For the sake of simplicity, we can assume that given two lines are the $x$-axis and the $y$-axis.
First, we give some forbidden induced subgraphs for $\APUD(1,1)$.
Namely, those subgraphs are 5-cycle ($C_5$), a 4-sun ($S_4$), and a 5-star ($K_{1,5}$).

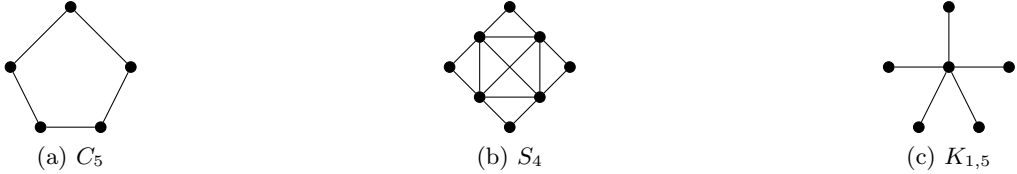
\begin{figure}
	\subfloat[$C_5$]{
		\begin{tikzpicture}[scale=0.4]
			\tikzstyle{every node}=[draw, fill=black, shape=circle, minimum size=4pt,inner sep=0pt];
			\node (A) at (0,4) {};
			\node (B) at (2,2) {};
			\node (C) at (1,0) {};
			\node (D) at (-1,0) {};
			\node (E) at (-2,2) {};
			\draw (A)--(B)--(C)--(D)--(E)--(A);
		\end{tikzpicture}
	}\hfill
	\subfloat[$S_4$]{
		\begin{tikzpicture}[scale=0.4]
			\tikzstyle{every node}=[draw, fill=black, shape=circle, minimum size=4pt,inner sep=0pt];
			\node (A) at (0,4) {};
			\node (B) at (1,3) {};
			\node (C) at (2,2) {};
			\node (D) at (1,1) {};
			\node (E) at (0,0) {};
			\node (F) at (-1,1) {};
			\node (G) at (-2,2) {};
			\node (H) at (-1,3) {};
			\draw (A)--(B)--(C)--(D)--(E)--(F)--(G)--(H)--(A);
			\draw (B)--(D)--(F)--(H)--(B);
			\draw (B)--(F);
			\draw (D)--(H);
		\end{tikzpicture}
	}\hfill
	\subfloat[$K_{1,5}$]{
		\begin{tikzpicture}[scale=0.4]
			\tikzstyle{every node}=[draw, fill=black, shape=circle, minimum size=4pt,inner sep=0pt];
			\node (A) at (0,2) {};
			\node (B) at (0,4) {};
			\node (C) at (2,2) {};
			\node (D) at (1,0) {};
			\node (E) at (-1,0) {};
			\node (F) at (-2,2) {};
			\draw (A)--(B);
			\draw (A)--(C);
			\draw (A)--(D);
			\draw (A)--(E);
			\draw (A)--(F);
			
		\end{tikzpicture}
	}
	\caption{Three induced subgraphs that cannot be embedded on $\APUD(1,1)$.}
\end{figure}
\begin{lem} \label{lem:symmetry}
	Consider two disks $A$ and $B$, centered on $(a,0)$ and $(b,0)$ with $a<0<b$.
	If $|a| = |b|$, then another disk, $C$ that is centered on $(0,c)$ intersects either both, or none.
\end{lem}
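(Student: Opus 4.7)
The plan is to exploit the reflective symmetry of the configuration about the $y$-axis. Since $a < 0 < b$ and $|a| = |b|$, we have $a = -b$, so the centers of $A$ and $B$ are mirror images of each other across the $y$-axis. The center of $C$ lies on the $y$-axis, which is the axis of this reflection.

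First, I would compute the two relevant Euclidean distances using the Pythagorean theorem: the distance from $(0,c)$ to $(a,0)$ is $\sqrt{a^2 + c^2}$, and the distance from $(0,c)$ to $(b,0)$ is $\sqrt{b^2 + c^2}$. Because $|a| = |b|$ gives $a^2 = b^2$, these two quantities are equal. Next, I would invoke the defining property of unit disks: two unit disks intersect if and only if the Euclidean distance between their centers is at most one unit (recall that in Chapter~\ref{chap:udvg} the diameter is taken to be one, though the same argument works with radius one and threshold $2$). Since the two center-to-center distances are equal, $C$ satisfies this threshold against $A$ if and only if it does against $B$, giving the dichotomy ``both or none.''

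There is essentially no obstacle here; the statement is a one-line consequence of symmetry once the distances are written down. The only thing to be careful about is the convention on what constitutes an ``intersection'' at the boundary case where the distance equals the threshold exactly, but this degenerate case is handled uniformly by the equality of distances (either both disks touch $C$ tangentially, or neither does). This lemma will then be used in tandem with Lemma~\ref{lem:triangle} to constrain possible embeddings in the subsequent analysis of $\APUD(1,1)$.
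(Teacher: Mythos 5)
Your proposal is correct and follows essentially the same argument as the paper: both compute the two center-to-center distances $\sqrt{a^2+c^2}$ and $\sqrt{b^2+c^2}$, observe they are equal since $a^2=b^2$, and conclude that the intersection threshold (the paper uses distance $\leq 2$ here) is met for both disks or for neither.
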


\begin{proof}
	Consider the triangle whose corners are $(a,0)$, $(b,0)$, and $(0,c)$.
	If $|a|=|b|$, then $\sqrt{a^2+c^2} = \sqrt{b^2+c^2}$ holds.
	For $C$ to intersect $A$, $\sqrt{a^2+c^2} \leq 2$ must hold.
	However, since $|a| = |b|$, if $\sqrt{a^2+c^2} \leq 2$ holds, then $\sqrt{b^2+c^2} \leq 2$ also holds.
	Thus, $C$ intersects $A$ if, and only if $C$ intersects $B$.
	For the same reason, if $\sqrt{a^2+c^2} > 2$ holds, then $\sqrt{b^2+c^2} > 2$ also holds.
	Thus, $C$ does not intersect $A$ if, and only if $C$ does not intersect $B$. 
\end{proof}

\begin{lem} \label{lem:c5s4k15}
	$C_5, S_4, K_{1,5} \not\in\APUD(1,1)$.
\end{lem}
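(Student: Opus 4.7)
The plan is to derive all three non-realizability statements from a single geometric observation that I will call the \emph{contiguous window principle}: if unit disks $D_1,\dots,D_k$ are placed on the $x$-axis with sorted centers $x_1<\cdots<x_k$ and $E$ is a unit disk centered at $(0,y)$ on the $y$-axis, then $E$ is adjacent to exactly those $D_i$ with $|x_i|\le\sqrt{4-y^2}$, so $E$'s neighborhood forms a contiguous window in the sorted sequence that is symmetric about the origin. Combined with the standard fact that the disks on a single axis induce a unit interval graph -- hence chordal, with sort order matching path order on any induced path -- this principle will drive every case analysis.

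For $K_{1,5}$ the argument is a direct counting one: at most two leaves of the $5$-star can lie on each axis, because two leaf centers on the $x$-axis must simultaneously be within distance~$2$ of the star center and more than~$2$ apart from each other, and the identical bound holds by symmetry on the $y$-axis; four is strictly less than five.

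For $C_5$ I split on how the five vertices are distributed between the two axes. The $5$-$0$ split is immediately excluded because unit interval graphs are chordal while $C_5$ is not. For the $4$-$1$ split, the four vertices on one axis induce $P_4$ which, by the unit interval structure, is sorted in path order; the fifth vertex then has to see the two endpoints of $P_4$ and miss the two middle vertices, which is forbidden by the contiguous window principle. For the $3$-$2$ split, up to cyclic symmetry of $C_5$ the triple on one axis is either a $P_3$ or a $P_2+K_1$, and in each case I will write the contiguity constraints imposed by the two opposite-axis disks on the sorted absolute values $|x_i|$. The constraints will produce two chains of strict inequalities pointing in opposite directions, which is inconsistent; the prototypical instance is the triple $\{v_1,v_3,v_5\}$, where the disks at $v_2$ and $v_4$ respectively force $|x_1|<|x_5|$ and $|x_5|<|x_1|$.

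For $S_4$ I will focus on its $K_4$ subgraph on $\{B,D,F,H\}$ and use that each outer vertex $A,C,E,G$ is adjacent to exactly two cyclically consecutive members of this $K_4$. When $B,D,F,H$ all lie on one axis, the required neighborhoods of the four outer vertices are the four pairs $\{B,H\}$, $\{B,D\}$, $\{D,F\}$, $\{F,H\}$; the contiguous window principle together with the fact that a sorted sequence of four elements has only three consecutive pairs rules out placing all four outer vertices on the same axis as the $K_4$, while placing them all on the perpendicular axis makes the antipodal pairs $A,E$ and $C,G$ demand contradictory orderings of $|x_B|,|x_D|,|x_F|,|x_H|$, and each mixed placement combines one such antipodal contradiction on the perpendicular axis with a contiguity obstruction on the parallel axis. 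The remaining configurations, where $K_4$ itself splits $3$-$1$ or $2$-$2$ across the axes, are handled by the same template after first observing that two $K_4$-vertices on the same axis must be mutually adjacent in the sort order there (they are part of a unit interval clique), which pins down enough of the relative geometry to chase the outer vertices around the cycle and again reach an inconsistent cyclic chain of absolute-value inequalities. The main obstacle I anticipate is the $2$-$2$ split for $S_4$: both axes then carry their own contiguity constraints and the outer vertices impose coupled conditions on both sides simultaneously; my intended route is to treat the pair of $K_4$-vertices on each axis as a single block and reduce the analysis to the same absolute-value contradictions used in the pure cases.
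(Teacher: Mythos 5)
Your contiguous window principle is exactly the paper's Lemma~\ref{lem:triangle} restated (a disk on one axis sees precisely the disks on the other axis whose centers lie in a symmetric interval $[-r,r]$), and the parts of your proof built on it for $K_{1,5}$ and $C_5$ are correct and complete. Your $K_{1,5}$ count (at most two leaves per axis, since three points of a line within distance $2$ of a common center cannot be pairwise more than $2$ apart) is in fact more direct than the paper's argument, which detours through the non-realizability of $K_{1,6}$ as a unit disk graph; your $C_5$ split by distribution is more exhaustive than the paper's single pigeonhole-plus-triangle-lemma step, but it rests on the same mechanism and all three sub-cases check out.

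The genuine gap is in $S_4$, exactly where you flagged trouble. Your plan for the $2$-$2$ split of the central clique is to ``reduce the analysis to the same absolute-value contradictions used in the pure cases,'' and this works for the \emph{antipodal} split (each outer vertex then straddles the axes and any placement of the four of them yields two opposite strict inequalities), which is also the only split the paper's own proof treats, via Lemmas~\ref{lem:triangle} and~\ref{lem:symmetry}. It fails, however, for the \emph{consecutive} split. Put $B,D$ on the $x$-axis and $F,H$ on the $y$-axis, so that only $A$ ($\sim H,B$) and $E$ ($\sim D,F$) straddle the axes. With $A$ on the $x$-axis and $E$ on the $y$-axis the window principle yields only $|y_H|<|y_F|$, $|x_D|<|x_B|$, $|x_A|>|x_B|$ and $|y_E|>|y_F|$, and these are simultaneously satisfiable: $B=(1,0)$, $D=(-0.4,0)$, $F=(0,0.6)$, $H=(0,-0.5)$, $A=(1.92,0)$, $E=(0,1.95)$ realizes every required adjacency and non-adjacency among these six disks. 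The contradiction only surfaces when one tries to place $C$, which must meet both $x$-axis clique disks while missing both $y$-axis ones: on the $x$-axis it would have to satisfy $|x_C|>\sqrt{4-y_H^2}$ yet stay within distance $2$ of both $x_B$ and $x_D$, and on the $y$-axis it is confined to $[-\sqrt{4-x_B^2},\sqrt{4-x_B^2}]$ together with $y_F$ and $y_H$ while having to be more than $2$ away from each of them. These are same-axis distance arguments, not consequences of nesting symmetric windows, so the advertised reduction does not close this case; the $3$-$1$ split of the clique is likewise only gestured at. The lemma is true and your toolkit can be pushed through, but the $S_4$ case needs this finer analysis spelled out (the paper itself leaves the consecutive, $3$-$1$ and $4$-$0$ splits implicit, so a complete write-up along your lines would actually go beyond what is printed there).
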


\begin{proof}
	Let $A,B,C,D,E$ be the disks that form an induced 5-cycle on two perpendicular lines,
	with the intersections between the pairs $(A,B)$, $(B,C)$, $(C,D)$, $(D,E)$, $(E,A)$.
	By the pigeon hole principle, three of these disks must be on the same line.
	Without loss of generality, let $A$, $C$, and $D$ be on $y=0$ line with centers $(a,0)$ and $(c,0)$, and $(d,0)$, respectively.
	Up to symmetry, assume that $a < 0 < c < d$.
	The remaining two disks, $B$ and $E$ must be centered on $x=0$.
	Without loss of generality, assume that $e < 0 < b$.
	However, $E$ cannot intersect $D$ without intersecting $C$ by Lemma~\ref{lem:triangle}.
	Therefore, a 5-cycle cannot be realized as unit disks on two perpendicular lines.
	
	Let $A,B,C,D$ be the disks that form the central clique of an induced 4-sun on two perpendicular lines with.
	By Lemma \ref{lem:triangle}, two of these disks must be on one line, and the remaining two must be on the other line.
	Without loss of generality, assume that $A$ and $C$ are on $y=0$ line, and $B$ and $D$ are on $x=0$ line.
	Denote the centers of $A,B,C,D$ with $(a,0),(0,b),(c,0),(0,d)$, respectively, and assume that $a<0<c$ and $d<0<b$.
	Let $X$ and $Y$ be two disks centered on two given perpendicular lines.
	Assume that $X$ intersects $A,D$, and, $Y$ intersects $B,C$.
	Clearly, $X$ and $Y$ should be on the same line, and on the different sides of the clique to avoid intersections with other rays.
	By Lemma~\ref{lem:symmetry}, if $|b|=|d|$, then $X$ cannot intersect $B$ or $D$ independently.
	By Lemma~\ref{lem:triangle}, if $|b|<|d|$, then $X$ cannot intersect $D$ without intersecting $B$. 
	Similarly, if $|b| > |d|$, then $Y$ cannot intersect $B$ without intersecting $D$.
	Therefore, a 4-sun cannot be realized as unit disks on two perpendicular lines.
	
	Four rays $a,b,c,d$ of a 4-star $u;a,b,c,d$ must be on four different sides of the central vertex $u$.
	To complete a 5-star, there must be one more ray, say $e$ centered on $(e,0)$.
	If we can embed $e$ on one of the lines without intersecting any rays, then we can place another disk on $(-e,0)$ to form a $K_{1,6}$.
	However, a $K_{1,6}$ cannot be realized as a unit disk graph.
	Therefore, a 5-star cannot be realized as unit disks on two perpendicular lines. 
\end{proof}

\begin{lem} \label{lem:bending}
	A given graph $G$ can be embedded on $x$-axis and $y$-axis as a unit disk intersection graph, without using negative coordinates for the disk centers if, and only if $G$ is a unit interval graph.
\end{lem}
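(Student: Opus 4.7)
The backward direction of the equivalence looks easy. If $G$ is a unit interval graph, then after translating we may assume its unit intervals $[c_i-1,c_i+1]$ have all centers $c_i\ge 0$; placing a unit disk at $(c_i,0)$ then gives a valid $\APUD(1,1)$ realization that never uses a negative coordinate (all centers lie on the non-negative $x$-axis). So all the real work is in the forward direction.

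For the forward direction my plan is to reduce the geometry to a purely combinatorial ordering argument. Given such an embedding, separate the disks into $A_1,\dots,A_p$ on the $x$-axis with centers $(a_i,0)$ indexed so that $0\le a_1\le\cdots\le a_p$, and $B_1,\dots,B_q$ on the $y$-axis with centers $(0,b_j)$ indexed so that $0\le b_1\le\cdots\le b_q$ (a disk at the origin may be assigned to either side arbitrarily). I would then prove that $G$ is a proper interval graph by producing an explicit vertex ordering in which every closed neighborhood is a contiguous block---this is a standard characterization of proper (equivalently, unit) interval graphs. My candidate ordering is
\[
\sigma \;=\; \bigl(B_q,B_{q-1},\dots,B_1,A_1,A_2,\dots,A_p\bigr),
\]
i.e., the $B$-disks listed by decreasing distance from the origin followed by the $A$-disks by increasing distance, so that $B_1$ and $A_1$ (the disks closest to the origin on their respective axes) sit next to each other in the middle.

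The heart of the argument is the contiguity of $N[v]$ in $\sigma$. Within the $A$-block, the $A$-neighbors of $A_i$ are $\{A_j:a_j\in[a_i-2,a_i+2]\}$, which is automatically a consecutive interval $\{A_{j_{\min}(i)},\dots,A_{j_{\max}(i)}\}$ by monotonicity. Across blocks, $A_i\sim B_k$ iff $a_i^2+b_k^2\le 4$, so since $(b_k)$ is non-decreasing, the set of $B$-neighbors of $A_i$ is some prefix $\{B_1,\dots,B_{k^\ast(i)}\}$; in $\sigma$ this prefix is exactly the suffix of the $B$-block adjacent to the $A$-block. The only way $N[A_i]$ could fail to be contiguous is a gap at the $B/A$ interface: $A_i$ has some $B$-neighbor but fails to have $A_1$ as a neighbor. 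Here is where non-negativity does the work: any $B$-neighbor forces $a_i^2+b_1^2\le 4$, hence $a_i\le 2$, and then $a_1\ge 0$ gives $a_i-a_1\le 2$, so $A_1\in N[A_i]$ and $j_{\min}(i)=1$. The case $v=B_j$ is handled by the symmetric argument with the two axes interchanged. Once this is established, the proper-interval characterization finishes the proof.

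The hard part, I expect, is really just getting this boundary-gluing analysis right: one has to notice that non-negative coordinates are precisely what forces the prefix of $B$-neighbors and the consecutive block of $A$-neighbors to meet at $B_1, A_1$ rather than leaving a hole, and that this would fail if negative coordinates were allowed (since then an $A_i$ with large $a_i$ could have a $B$-neighbor without being close to $A_1$, and a direct ordering argument would break down). Modulo this observation the rest is bookkeeping.
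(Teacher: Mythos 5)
Your proof is correct, but it takes a genuinely different route from the paper. The paper argues geometrically: it ``unfolds'' the configuration by rotating the $y$-axis disks onto the negative $x$-axis, observes that non-adjacent pairs stay non-adjacent because $a+b\ge\sqrt{a^2+b^2}$, and then repairs the adjacencies that may have been lost by sliding centers closer together (and, in the converse direction, sliding them apart). That repair step is stated rather informally --- it is not spelled out why the local adjustments can be made simultaneously for all pairs without disturbing other adjacencies. You instead bypass the geometry entirely in the forward direction: you extract the two monotone sequences of centers, concatenate them into the ordering $\sigma=(B_q,\dots,B_1,A_1,\dots,A_p)$, and verify that every closed neighborhood is consecutive, invoking the standard characterization of proper (equivalently, by Roberts' theorem, unit) interval graphs. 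The key observation --- that any cross-axis adjacency forces $a_i\le 2$, which together with $a_1\ge 0$ glues the $B$-prefix to $A_1$ without a gap --- is exactly where the non-negativity hypothesis enters, and you identify it explicitly. Your backward direction (translate the unit-interval representation so all centers are non-negative and lay it on the $x$-axis) is also simpler than the paper's rotation-based converse. What the paper's approach buys is a concrete geometric picture of the same embedding being ``bent'' at the origin, which motivates the surrounding $\APUD(1,1)$ discussion; what yours buys is a fully rigorous combinatorial argument with no hand-waved perturbation step. Both are acceptable; yours is arguably tighter.
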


\begin{proof}
	In this proof, let us denote the class of graphs that can be embedded on $x$- and $y$- axes as unit disks, using positive coordinates only by ${(xy)}^+$.
	We show that the disks on the $y$-axis can be rotated by $\pi/2$ degrees counterclockwise, and the intersection relationships can be preserved as given in $G$. 
	
	$\mathbf{G \in {(xy)}^+ \Rightarrow G \in \mathbf{UIG}:}$
	Consider two disks, $A$ and $B$, whose centers are $(a,0)$ and $(0,b)$, respectively, where $a$ and $b$ are both positive numbers.
	If $A$ and $B$ do not intersect, then $\sqrt{a^2 + b^2} > 2$.
	After the rotation, the center of $B$ will be on $(-b,0)$.
	The new distance between the centers is $a + b$.
	Since $(a+b)^2 > a^2 + b^2 > 4$, the inequality $a + b > 2$ holds.
	
	If $A$ and $B$ intersect, then $\sqrt{a^2 + b^2} \leq 2$.
	After the rotation, it might still be the case that $a + b > 2$.
	However, now we can safely move the center of $B$ and the other centers that have negative coordinates closer to the center of $A$, recovering the intersection.
	Note that if a disk $C$ is centered in between the centers of $A$ and $B$ after the rotation, then both $A$ and $B$ must intersect $C$ by Lemma~\ref{lem:triangle}.
	
	$\mathbf{G \in \mathbf{UIG} \Rightarrow G \in {(xy)}^+:}$
	Since $G$ is a unit interval graph, we can assume that every interval is a unit disk, and the graph is embedded on $x$-axis.
	Consider two disks, $A$ and $B$, whose centers are $(-a,0)$ and $(b,0)$, respectively, where $a$ and $b$ are both positive numbers.
	If $A$ and $B$ are intersecting, then $a + b \leq 2$
	Then, after the rotation, since $a + b \leq 2$ holds, then $\sqrt{a^2 + b^2} \leq 2$ also holds.
	
	If $A$ and $B$ are not intersecting, then $a + b > 2$.
	After the rotation $\sqrt{a^2 + b^2} \leq 2$ might hold, creating an intersection between $A$ and $B$.
	However, we can simply shift the center of $A$ (along with the other centers that are on $y$-axis) far away from the center of $B$, separating $A$ and $B$. 
\end{proof}

Lemma~\ref{lem:bending} shows that if we use only non-negative coordinates, then\\
$\APUD^{+}(1,1) = \APUD^{+}(1,0) = \UIG$.
This also applies if we use only non-positive coordinates.
Thus, a given $\APUD(1,1)$ can always be partitioned into two unit interval graphs.
Considering the embedding, one of these two partitions contains the disks that are centered on the positive sides of $x$- and $y$- axes, and the other partition contains the disks that are centered on the negative sides of $x$- and $y$- axes.

\begin{lem} \label{lem:partition}
	A graph $G \in \APUD(1,1)$ can be vertex-partitioned into four parts, such that any two form a unit interval graph.
\end{lem}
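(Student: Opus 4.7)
The plan is to partition the vertices of $G$ according to which of the four closed half-axes the corresponding disk center lies on. Fix a realization of $G \in \APUD(1,1)$ by unit disks centered on the $x$-axis and $y$-axis. Define the four vertex sets
\begin{align*}
V_1 &= \{v : \text{center of } D_v \text{ lies on the positive } x\text{-axis}\}, \\
V_2 &= \{v : \text{center of } D_v \text{ lies on the negative } x\text{-axis}\}, \\
V_3 &= \{v : \text{center of } D_v \text{ lies on the positive } y\text{-axis}\}, \\
V_4 &= \{v : \text{center of } D_v \text{ lies on the negative } y\text{-axis}\},
\end{align*}
assigning any vertex whose disk is centered at the origin (or on the boundary between two half-axes) to an arbitrary one of the candidate sets. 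This makes $V_1,V_2,V_3,V_4$ a partition of $V(G)$.

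The verification splits into two cases according to the geometry of the two parts chosen. First, for the pairs $(V_1,V_2)$ and $(V_3,V_4)$, every disk in the union is centered on a single straight line (the $x$-axis or the $y$-axis, respectively), so the induced subgraph is a unit interval graph by definition. Second, for any ``mixed'' pair such as $(V_1,V_3)$, all disks are centered on the non-negative parts of the two perpendicular axes, so the induced unit disk graph lies in the class $\APUD^{+}(1,1)$ of Lemma~\ref{lem:bending}; by that lemma, it is a unit interval graph. The remaining mixed pairs $(V_1,V_4)$, $(V_2,V_3)$, $(V_2,V_4)$ are reduced to this case by reflecting the realization across the $x$-axis, the $y$-axis, or both, which preserves the intersection pattern among unit disks and maps the chosen two half-axes to the positive ones.

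I expect no real obstacle: the statement is essentially a bookkeeping consequence of Lemma~\ref{lem:bending}, and the only subtlety is making sure disks sitting exactly at the origin or shared by two half-axes are placed in a single part so that the four sets form a genuine partition. Since such borderline disks can be perturbed infinitesimally (or assigned arbitrarily) without changing the intersection graph, this is harmless, and the four-part partition with the claimed pairwise unit-interval property follows.
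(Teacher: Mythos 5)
Your proof is correct and rests on the same core idea as the paper's: split the disk centers by half-axis and invoke Lemma~\ref{lem:bending} to handle the mixed (perpendicular) pairs. In fact your write-up is more faithful to the statement as literally written than the paper's own proof. The paper only exhibits two sets, $\Sigma^{+}(G)$ (positive $x$ together with positive $y$) and $\Sigma^{-}(G)$ (negative $x$ together with negative $y$), shows each is a unit interval graph via Lemma~\ref{lem:bending}, and concludes with a \emph{two}-part partition; it never names four parts nor checks all pairwise unions. You do name the four half-axis classes and verify all six pairs: the two same-axis pairs are unit interval graphs trivially (all centers collinear), one mixed pair is exactly the $\APUD^{+}(1,1)$ situation of Lemma~\ref{lem:bending}, and the remaining mixed pairs reduce to it by reflections, which preserve the intersection pattern. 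Your handling of a disk centered at the origin (assign it arbitrarily; it cannot break any of the six cases) matches the paper's remark. So the only difference is granularity of bookkeeping, and yours buys the stronger ``any two of four parts'' conclusion that the lemma actually claims.
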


\begin{proof}
	Let $\Sigma(G)$ be an embedding of $G$ onto $x$- and $y$- axes as unit disks.
	Denote the set of unit disks in $\Sigma(G)$ that are centered on the positive side of the $x$-axis, and the positive side of the $y$-axis by $\Sigma^+(G)$.
	Similarly, denote the set of unit disks that are centered on the negative side of the $x$-axis, and the negative side of the $y$-axis $\Sigma^-(G)$.
	By Lemma~\ref{lem:bending}, both $\Sigma^+(G)$ and $\Sigma^-(G)$ yield separate interval graphs.
	The vertices that correspond to the disks in $\Sigma^+(G)$ yield a unit interval graph, as well as the vertices that correspond to the disks in $\Sigma^-(G)$.
	Note that in case there exists a disk which is centered at $(0,0)$, then it can be included in one of the partitions arbitrarily.
	Therefore, we can vertex-partition $G$ into two unit interval graphs. 
\end{proof}

Up to this point, we showed that if a unit disk graph can be embedded onto two orthogonal lines, then it can be partitioned into two interval graphs.
However, this implication obviously does not hold the other way around.
Thus, we now identify some structural properties of $\APUD(1,1)$.

\begin{rem} \label{rem:c4}
	Consider four unit disks $A,B,C,D$, that are embedded onto $x$-axis and $y$-axis.
	If they induce a $4$-cycle, then the centers of those disks will be at $(a,0)$, $(0,b)$, $(-c,0)$, $(0,-d)$, respectively, where $a,b,c,d$ are non-negative numbers.
\end{rem}

For the upcoming lemma, we will utilize the observation given in Remark~\ref{rem:c4}.
The lemma is an important step to describe a characterization of $\APUD(1,1)$.

\begin{lem} \label{lem:twoc4s}
	Consider eight unit disks embedded onto $x$-axis and $y$-axis, around the origin, whose intersection graph contains two induced $4$-cycles.
	Then, this intersection contains at least four 4-cycles, each with a chord, not necessarily as induced subgraphs.
	Moreover, those 4-cycles are formed by pairs of disks on the same direction ($+x$, $+y$, $-x$, $-y$) with respect to the origin.
\end{lem}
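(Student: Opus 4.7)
The plan is to peel off the geometric constraints of the problem to force a rigid labeling of the eight disks, then exploit a short Pythagorean inequality to manufacture each required chord. First I would invoke Remark~\ref{rem:c4} to conclude that every induced $4$-cycle among disks centered on the two perpendicular axes uses exactly one disk from each of the four directions $+x,+y,-x,-y$. Because there are $8$ disks in total and two induced $4$-cycles, each contributing one disk per direction, the two cycles must be vertex-disjoint and collectively account for all $8$ disks, giving precisely two disks per direction. Label them $A_1,A_2$ on $+x$, $B_1,B_2$ on $+y$, $C_1,C_2$ on $-x$, $D_1,D_2$ on $-y$, with the first induced $4$-cycle being $A_1B_1C_1D_1$ and the second $A_2B_2C_2D_2$.

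Next I would apply Lemma~\ref{lem:c4}(ii) to every one of the eight centers: each of $a_1,a_2,b_1,b_2,c_1,c_2,d_1,d_2$ lies in the open interval $(2-\sqrt{3},\sqrt{3})$. In particular $|a_1-a_2|<\sqrt{3}-(2-\sqrt{3})=2\sqrt{3}-2<2$, so $A_1$ and $A_2$ intersect and the edge $A_1A_2$ belongs to the intersection graph; the same reasoning furnishes edges $B_1B_2$, $C_1C_2$, and $D_1D_2$. Now for each pair of adjacent directions I would assemble a $4$-cycle on the four disks in those directions. For the pair $(+x,+y)$, the cycle $A_1\!\to\!A_2\!\to\!B_2\!\to\!B_1\!\to\!A_1$ uses the edges $A_1A_2$, $A_2B_2$ (from the second induced $4$-cycle), $B_2B_1$, and $B_1A_1$ (from the first induced $4$-cycle), so it is a genuine $4$-cycle of the intersection graph. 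For the chord, observe that the edges $A_1B_1$ and $A_2B_2$ give $a_1^2+b_1^2\leq 4$ and $a_2^2+b_2^2\leq 4$, hence
\[
(a_1^2+b_2^2)+(a_2^2+b_1^2)=a_1^2+b_1^2+a_2^2+b_2^2\leq 8,
\]
so at least one of the two potential chords $A_1B_2$, $A_2B_1$ satisfies the Pythagorean bound $\leq 4$ and is therefore an edge. Performing the same construction for the three remaining adjacent direction pairs $(+y,-x)$, $(-x,-y)$, $(-y,+x)$ yields four $4$-cycles-with-chord on the four pairwise distinct vertex sets $\{A_i,B_j\}$, $\{B_i,C_j\}$, $\{C_i,D_j\}$, $\{D_i,A_j\}$, as required.

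The only genuinely delicate point, and thus the place where I expect to spend extra care, is the rigid labeling in the first step: one must rule out the possibility that the two induced $4$-cycles share vertices, and this is precisely what the hypothesis of eight distinct disks together with Remark~\ref{rem:c4} accomplishes. After that, everything reduces to the distance bound from Lemma~\ref{lem:c4}(ii) and the one-line Pythagorean counting argument, so the remainder of the proof is essentially mechanical.
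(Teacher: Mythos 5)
Your proof is correct and follows the same overall decomposition as the paper's: both arguments group the eight disks into the four quadruples consisting of the two disks on one half-axis together with the two disks on an adjacent half-axis, observe that the two same-axis edges plus the two edges inherited from the original induced cycles close up into a $4$-cycle, and then show that at least one diagonal of that $4$-cycle is present. The differences are in how the two sub-claims are justified. For the same-axis edges, the paper derives $0<a,b,u,v<2$ directly from Lemma~\ref{lem:triangle}-style reasoning, while you import the sharper bound $(2-\sqrt{3},\sqrt{3})$ from Lemma~\ref{lem:c4}(ii); both suffice, yours is slightly more economical. For the chord, the paper runs a small case analysis (if $A$ and $V$ do not intersect, then $\sqrt{a^2+v^2}>2$ together with $\sqrt{u^2+v^2}\le 2$ forces $u<a$, and Lemma~\ref{lem:triangle} then yields the edge $BU$), whereas your observation that $(a_1^2+b_2^2)+(a_2^2+b_1^2)=(a_1^2+b_1^2)+(a_2^2+b_2^2)\le 8$ forces one of the two diagonals by averaging is shorter, symmetric, and avoids the auxiliary lemma entirely; this is a genuine simplification. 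One caveat: your opening claim that the two induced $4$-cycles \emph{must} be vertex-disjoint and exhaust the eight disks does not actually follow from the hypothesis as stated (two induced $4$-cycles in the intersection graph of eight disks could in principle share a vertex, leaving a disk outside both cycles); the paper silently adopts the disjoint labelling $(A,B,C,D)$, $(U,V,W,X)$ as the intended reading of the lemma, and you should present this as an assumption rather than a deduction.
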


\begin{proof}
	Let those $4$-cycles be $(A,B,C,D)$, and $(U,V,W,X)$, in counterclockwise order, precisely as given in Remark~\ref{rem:c4}.
	Consider the disks $A,B,C$ centered at $(a,0)$, $(0,b)$, $(-c,0)$, where $a,b,c$ are non-negative numbers.
	Due to the configuration, $B$ intersects both $A$ and $C$, but $A$ and $C$ do not intersect.
	If $a=0$, then by Lemma~\ref{lem:triangle}, $C$ intersect $A$ if it intersects $B$.
	Thus, $a>0$, and by symmetry, $b,c>0$ holds.
	Since $A$ intersects $B$, and $b>0$, then $a<2$ also holds.
	Again, up to symmetry, $b,c < 2$ holds.
	
	Now consider the disks $A,B,U,V$ centered at $(a,0)$, $(0,b)$, $(u,0)$, $(0,v)$, respectively.
	With the same reasoning, $0<a,b,u,v<2$ holds.
	Thus, the disks that are centered on the same axis intersect.
	That is, $A$ and $U$, and, $B$ and $V$ intersect.
	If $A$ and $V$ do not intersect, then $\sqrt{a^2 + v^2} > 2$.
	Since $U$ and $V$ intersect, $\sqrt{u^2 + v^2} \leq 2$ holds, and $u < a$ should also hold.
	Thus, by Lemma~\ref{lem:triangle}, $B$ and $U$ intersects.
	Symmetrically, if $B$ and $U$ do not intersect, then $A$ and $V$ intersect.
	With the same reasoning, since $B$ and $C$ intersect, $C$ and $V$ also intersect.
	Therefore, these eight disks have four subsets, $\{A,B,U,V\}$, $\{B,C,V,W\}$, $\{C,D,W,X\}$, $\{D,A,X,U\}$, such that in each set, the induced graph is a $4$-cycle with at least one chord, which gives us four 4-cycles, each with a chord. 
\end{proof}

The given lemmas imply that for a connected graph $G \in \APUD(1,1)$, the following hold.
\begin{enumerate}[(i)]
	\item $G$ does not contain either of 4-sun ($S_4$) or 5-star($K_{1,5}$) as an induced subgraph, and the largest induced cycle in $G$ is of length $4$  (by Lemma \ref{lem:c5s4k15}).
	\item $G$ can be vertex partitioned into into four parts, such that any two form a unit interval graph (by Lemma~\ref{lem:partition}).
	\item Given two $4$-cycles, $(a,b,c,d)$ and $(u,v,w,x)$ in $G$, each one of the quadruplets $\{a,b,u,v\}$, $\{b,c,v,w\}$, $\{c,d,w,x\}$, and $\{d,a,x,u\}$ is either a diamond or a $K_4$ (by Lemma~\ref{lem:twoc4s}).
\end{enumerate}

Although this characterization gives a rough idea about the structure of a graph $G \in \APUD(1,1)$, it is not clear whether this is a full characterization and the recognition can be done in polynomial time.
Note that the characterization is a necessary step through recognition, but it is not yet known whether it is sufficient.
Therefore, we state the following conjecture.
\begin{conj}
	Given a graph $G$, it can be determined whether $G \in \APUD(1,1)$ in polynomial time.
\end{conj}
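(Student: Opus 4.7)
The plan is to reduce $\APUD(1,1)$-recognition to a bounded number of unit interval graph recognition tasks, using the four-quadrant partition guaranteed by Lemma~\ref{lem:partition} as a skeleton. First I would make the partition explicit: for any realization $\Sigma(G)$, each vertex is assigned one of four ``directions'' $\{+x,-x,+y,-y\}$ according to where its disk is centered, and the subgraph induced by the vertices in any two adjacent directions (sharing an axis) is a unit interval graph, while opposite directions interact only through disks close to the origin. Consequently, if we can guess the direction assignment, the two $x$-axis parts must together form a unit interval graph with a canonical ordering, and similarly for the $y$-axis parts; the remaining question is whether the resulting pair of unit interval graphs can be aligned so that the perpendicular-axis intersections prescribed by $E(G)$ are simultaneously realized by the circular constraint $a^2+b^2\leq 4$.

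The algorithm I would propose proceeds in three stages. First, test the necessary conditions: no induced $C_k$ for $k\geq 5$ (Lemma~\ref{lem:c5s4k15} and an easy extension, since any long induced cycle forces three disks on one axis in an alternating pattern), no induced $S_4$, and no induced $K_{1,5}$; this is polynomial. Second, identify the \emph{core} $K\subseteq V(G)$ consisting of vertices whose disks must lie in a small neighbourhood of the origin; I would define $K$ combinatorially as a maximal vertex set inducing the subgraph patterns described in Lemma~\ref{lem:twoc4s} (essentially, the vertices participating in the central $4$-cycles or $K_4$'s whose removal disconnects $G$ into at most four unit interval components). Using the standard unit interval recognition and canonical-ordering algorithm of Corneil et al., enumerate the constantly-many orderings of $K$ consistent with perpendicular-axis placement. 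Third, for each such ordering, attempt to assemble the four arms: each arm is a unit interval graph (by Lemma~\ref{lem:bending}) that must be glued to $K$ along one of the two axis directions. The gluing corresponds to choosing a starting position on the axis, and can be decided by a greedy sweep that tracks the forced intersection pattern between perpendicular arms.

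The correctness argument would use the characterisation developed in the previous section: properties (i)--(iii) are necessary, and I conjecture they are sufficient together with the ``four arms glue to a common core'' condition checked by the algorithm. To prove sufficiency one translates the combinatorial checks into a system of linear and quadratic inequalities on the coordinates of disk centres, and shows (by convexity of the interval constraints along each axis combined with monotonicity of the circular constraint $a^2+b^2\leq 4$) that feasibility of the system is equivalent to the success of the greedy assembly. Because the number of axis orderings to try is polynomial (each arm, as a unit interval graph, has at most a constant number of canonical orderings up to reversal), the total running time stays polynomial.

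The main obstacle, and the place where I expect the argument to be delicate, is precisely the quadratic cross-axis constraint $a^2+b^2\leq 4$. Unlike the pure unit interval setting, where intersections are determined by a single linear inequality, the interaction between a $+x$-disk and a $+y$-disk is nonlinear, so small perturbations along one axis can turn a planned intersection into a non-intersection on the other. I would handle this by pushing the $+x$-arm as far from the origin as its intended intersection pattern with the $+y$-arm allows, computing for each disk $D_i$ on the $+x$-axis the interval of feasible centre positions determined by its required intersections with $+y$-disks, and checking that these intervals are compatible with the unit interval ordering coming from the $x$-arm. The feasibility of this simultaneous system is what I would need to prove decidable in polynomial time; here I suspect one can adapt the 2-dimensional LP techniques used for $c$-strip graphs in~\cite{udgConstrained}, since once the direction assignment and canonical orderings are fixed, each coordinate is constrained by a bounded-complexity convex region.
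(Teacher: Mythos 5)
The statement you are proving is stated in the paper as a \emph{conjecture}: the authors explicitly write that their characterization of $\APUD(1,1)$ (properties (i)--(iii) derived from Lemmas~\ref{lem:c5s4k15}, \ref{lem:partition} and \ref{lem:twoc4s}) is ``a necessary step through recognition, but it is not yet known whether it is sufficient,'' and they leave polynomial-time recognition open. There is therefore no proof in the paper to compare against, and your proposal does not close the gap either: at the decisive point you write ``I conjecture they are sufficient,'' which is exactly the open question. A proof of sufficiency --- that every graph passing your combinatorial tests admits an actual realization on the two axes --- is the entire content of the conjecture, and no argument for it is given.

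Beyond that, two concrete steps in your plan would fail as stated. First, the claim that feasibility reduces to ``a bounded-complexity convex region'' per coordinate is wrong: for a \emph{non-edge} between a disk at $(a,0)$ and a disk at $(0,b)$ you must enforce $a^2+b^2>4$, which is the complement of a disk and hence non-convex; the system mixing these with the convex constraints $a^2+b^2\leq 4$ for edges is not an LP-type problem, so the appeal to convexity and to the $c$-strip techniques of~\cite{udgConstrained} (where all disks lie in a single strip and the geometry is genuinely different) does not go through without substantial new work. Second, the assertion that there are ``constantly-many orderings of $K$'' and ``at most a constant number of canonical orderings up to reversal'' per arm is unjustified: the core $K$ is never given a precise combinatorial definition or a size bound, and unit interval graphs with modules (near-twins) can admit many valid orderings whose geometric realizations differ in their distances from the origin --- and the cross-axis constraint is sensitive to exactly those distances. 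Until the enumeration is bounded and the non-convex feasibility test is shown decidable in polynomial time, the proposal remains a plausible research programme rather than a proof, which is consistent with the paper leaving the statement as a conjecture.
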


Since the very essence oft this paper is restricting the solution space, we leave an open problem for the readers before finishing this section.
The following question can also be interpreted as instead of having two intersecting lines as the solution space, we have a vertical line, and a horizontal ray which is perpendicular to the line, and the origin of the ray is on the vertical line.

\begin{open}
	Can $\APUD(1,1)$ recognition be solved in polynomial time if the disk centers are not allowed to have negative $x$-coordinates?
\end{open}

\chapter{Conclusion} \label{chap:conclusion}

\section{Summary of the Contributions}
In this thesis, we tackled some combinatorial problems in the field of computational geometry.
We summarized our contribution to the literature in Table~\ref{table:contributions}.

\begin{table}[h!]
	\begin{small}
		\begin{tabular}{|c c p{9cm}|} 
			\hline
			& &\\[-1em]
			\textbf{Paper} & \textbf{Chapter(s)} & \textbf{Summary of the paper}\\ [0.5ex] 
			\hline\hline
			& &\\[-1em]
			\cite{Cagirici_conflictfree}& \ref{chap:restrictedpolygons} and \ref{chap:generalpolygons}& We designed an algorithm for producing a vertex-to-point guarding of funnels that is 
			optimal in the number of guards.
			Also, we designed an algorithm for a vertex-to-point conflict-free chromatic guarding 
			for funnels, which gives only a small additive error
			with respect to the minimum number of colors required.\\ 
			& &\\[-1em]
			\hline
			& &\\[-1em]
			\cite{Cagirici_chromaticpolygon} & \ref{chap:generalpolygons}& We showed that the problem of deciding 5-colourability for visibility graphs of simple polygons, is NP-complete.
			We also showed that the 4-colouring problem can be solved for visibility graphs of simple polygons, in polynomial time,
			whereas for visibility graphs of polygons with holes, it becomes NP-complete.\\
			& &\\[-1em]
			\hline
			& &\\[-1em]
			\cite{udVis} & \ref{chap:udvg}& We introduced the unit disk visibility graphs, showed that the visibility graphs are a proper subclass of the unit disk visibility graphs, and proved that the 3-coloring problem is  NP-complete both for unit disk segment visibility graphs and unit disk visibility graphs of polygons with holes.\\
			& &\\[-1em]
			\hline
			& &\\[-1em]
			\cite{UDembed} & \ref{chap:apud} & Studied a variation of unit disk recognition problem which requires the disk centers to be centered on pre-given straight lines instead of anywhere in the Euclidean plane, and showed that the problem is NP-hard when any pair of lines is  either parallel or perpendicular, and NP-complete when the lines are parallel to each other.\\
			\hline
		\end{tabular}
	\end{small}
	\caption{The contributions of this thesis to the literature.}
	\label{table:contributions}
\end{table}

In Chapter~\ref{chap:restrictedpolygons}, we considered two combinatorial problems, namely, polygon guarding and conflict-free chromatic guarding, and studied their aspects on funnels, and weak visibility polygons.
The main purpose behind studying restricted types of polygons is to describe an inductive scheme to obtain a conflict-free chromatic guarding of a simple polygon. 

We have described an algorithm for producing a vertex-to-point guarding of funnels that is 
optimal in the number of guards.
We have also designed an algorithm for a vertex-to-point conflict-free chromatic guarding 
for funnels, which gives only a small additive error with respect to the minimum number of colors required. 

Moreover, we have given a simple efficient upper bound of $O(\log^2 n)$, for the special case of weak-visibility polygons.
This still leaves room for improvement down to $O(\log n)$, which is
the worst-case scenario already for funnels and which would match
the previous P2P upper bound for simple polygons of \cite{Bartschi-2014}.
We believe that such an improvement is achievable by a more careful
distribution of the colours sets (and re-use of colours)
to the max funnels in Algorithm~\ref{alg:weakv2pcoloringB}.

In Chapter~\ref{chap:generalpolygons}, we dealed with the conflict-free chromatic guarding on the general polygons.
We have utilized the results on the chromatic guarding from Chapter~\ref{chap:restrictedpolygons} in order to obtain a conflict-free chromatic guarding in a given simple polygon with $O(n \log^2 n)$ colors.

In addition, we have studied the proper coloring problem on the general polygons, and on the polygons with holes. 
we have showed that the problem of deciding 5-colorability for visibility graphs of simple polygons, is NP-complete.
We have also showed that the 4-coloring problem can be solved for visibility graphs of simple polygons, in polynomial time,
whereas for visibility graphs of polygons with holes, it becomes NP-complete.
However, it still remains to be explored whether approximation or parameterized algorithms exist for coloring problems of visibility graphs of polygons.

In Chapter~\ref{chap:udvg}, we considered the visibility graphs of various geometric settings with real-world restrictions, i.e., two points can see each other only if they are close enough in the Euclidean plane, and there is no obstacle in-between.
Thus, we have introduced the concept of ``unit disk visibility graphs,'' which models the real-world scenarios more accurately compared to the conventional visibility graphs.

The phrase ``unit disk'' indicates that every point in the Euclidean plane has a range, modeled by unit disks.
Since the main motivation behind in Chapter~\ref{chap:restrictedpolygons} and Chapter~\ref{chap:generalpolygons} are wireless sensors, and wireless sensor networks are usually modeled by unit disk graphs, our results in this chapter can be interpreted as an intersection between the areas of visibility graphs and unit disk graphs. 
We have studied three particular types of unit disk graphs, namely, unit disk visibility graphs of a set of line segments, simple polygons and polygons with holes.
We showed that the unit disk graphs are a proper subclass of unit disk point visibility graphs while they are neither a subclass nor a superclass of these classes.
Moreover, we proved that the visibility graphs are a proper subclass of the unit disk visibility graphs.

We have also studied the problem of proper coloring for the newly introduced graph class, and showed that the 3-coloring problem for unit disk segment visibility graphs, and for the unit disk visibility graphs of polygons with holes is NP-complete.

In Chapter~\ref{chap:restrictedpolygons} and Chapter~\ref{chap:generalpolygons}, we have studied the visibility graphs, and in Chapter~\ref{chap:udvg}, we have studied a proper superclass of visibility graphs -- unit disk visibility graphs in order to model real-world restrictions of wireless sensors more accurately.
In Chapter~\ref{chap:apud}, without steering away from our motivation based on wireless sensors, we study the ``axes-parallel unit disk graphs.'' 
In this newly introduced class, we tackled the problem by restricting the solution space.
We restricted the disks to be centered on pre-given straight lines instead of anywhere in the Euclidean plane.
The result in this chapter revealed three main aspects of the problem. 
First, we showed that if the disks are forced to be centered on straight lines of which any pair is either parallel or perpendicular, then the recognition problem is NP-hard.
Then, we showed that if any pair of the pre-given lines are parallel to each other, then the problem is NP-complete.
Finally, we showed that if there are only two lines that are perpendicular, then the recognition problem is interesting to study, yet does not yield trivial results.

\begin{table}[tbph!]
	\begin{small}
		\begin{tabular}{|c p{5cm} p{6cm}|} 
			\hline
			& &\\[-1em]
			\textbf{Paper} & \textbf{Name of the Paper} & \textbf{Author's contributions}\\ [0.5ex] 
			\hline\hline
			& &\\[-1em]
			\cite{Cagirici_conflictfree}& On conflict-free chromatic guarding of simple polygons& Collaborated with all the co-authors in the design of the algorithm that produces an approximate chromatic guarding for a given funnel (Algorithm~\ref{alg:apxcfreefunnel}), and collaborated with Petr Hlin\v{e}n\'{y} on the final form of the algorithm which generalizes this scheme to the general case (Algorithm~\ref{alg:coloringoverall}).\\ 
			& &\\[-1em]
			\hline
			& &\\[-1em]
			\cite{Cagirici_chromaticpolygon} & On Colourability of Polygon Visibility Graphs& Collaborated with Bodhayan Roy in the design of the algorithm that decides whether a visibility graph of a simple polygon admits 4-coloring (Algorithm~\ref{alg:4coloring})\\
			& &\\[-1em]
			\hline
			& &\\[-1em]
			\cite{udVis} & Unit Disk Visibility Graphs& Worked alone on the construction of the NP-completeness reductions for the 3-coloring problem on unit disk visibility graphs of line segments (Theorem~\ref{thm:main}) and polygons with holes (Theorem~\ref{thm:withholes}).\\
			& &\\[-1em]
			\hline
			& &\\[-1em]
			\cite{UDembed} & On embeddability of unit disk graphs onto straight lines & Single author paper.\\
			\hline
		\end{tabular}
	\end{small}
	\caption{The contributions of the author of this thesis to the associated papers.}
	\label{table:authorsContributions}
\end{table}

In Table~\ref{table:authorsContributions}, we summarize the contribution of the author of this thesis to every paper that his name appears on.
For the sake of simplicity, instead of listing every lemma, theorem, etc. we have listed the core parts in which the author contributed to, and this list should be interpreted as ``contribution to this part and the affiliated propositions of the corresponding part.''

\section{Other Research}

In this section, we mention two other publications which were not detailed in this thesis, but definitely worth mentioning.

\subsection{Clique-Width of Point Configurations}
First of these papers are entitled as \textbf{``Clique-Width of Point Configurations''} \cite{cliqueWidth}, which was published in \emph{Workshop on Graph-Theoretic Concepts in Computer Science} on June 2020, and the extended version was accepted in \emph{Journal of Combinatorial Theory} in June 2021.

In this paper, we have studied the famous structural width parameter clique-width on geometric point configurations.
We showed that the basic properties of this clique-width notion can be applied to a set of points in the plane, and we relate it to the monadic second-order logic of point configurations.
As an application, we have provided several linear FPT time algorithms for geometric point problems which are NP-hard in general.

The paper describes how clique-width can be used to measure ``complexity'' of the combinatorial structure of point configurations in the plane.
Promising applications of the clique-width of annotated point configurations include mainly parameterized algorithms for solving some generally hard problems in discrete geometry.

The author of this thesis has contributed to this paper by reviewing the literature and providing the combinatorial problems mentioned in the paper which demonstrate the significance and relevance of studying clique-width parameter on point sets.

\subsection{On upward straight-line embeddings of oriented paths}
The second paper we would like to mention is entitled as \textbf{``On upward straight-line embeddings of oriented paths''} \cite{upwardStraight}, which was published in the proceedings of VII Spanish Meeting on Computational Geometry.

In this paper, we investigate upward straight-line embeddings of oriented paths. 
Along the lines of similar results in the literature, we find a condition -- related to the number of vertices in between sources and sinks of an oriented path -- which guarantees that an oriented path satisfying the condition on nvertices admits an upward straight-line embedding  into  any $n$-point set in general position. We also show that the following holds for every $\varepsilon >0$.
We show that if $S$ is a set of $n$ points chosen uniformly at random in the unit square, and $P$ is an oriented path on at most $(1/3 - \varepsilon)n$ vertices, then with high probability, $P$ has an upward straight-line embedding into $S$.

The author of this thesis has contributed to this paper by describing an algorithm to obtain an upward straight-line embedding for a convex point set.
Based on this algorithm, the main theorem of the paper, which proves that a random set of points in the unit square admits an upward straight-line embedding with high probability.

\section{Concluding Remarks}

The visibility graphs have become more and more relevant as the recent technological advancements emphasize on the importance of wireless networks.
Studying the computational aspects of some combinatorial problems is needed for engineers to produce more efficient products.

The relationship between a pair of objects can be interpreted and modeled in many different ways.
We picked two major ones for our doctoral studies:
The visibility, which considers the obstacles between two object, and unit disk graphs, which considers the distance between two objects.
Moreover, we bridged the gap between these two models, and introduced a new concept which considers both the obstacles, and the distance.

Considering that the application areas and the scales of wireless networks vary dearly -- from nanosensors which are injected into human body to giant satellites which communicate with us from space, we believe that our contributions to the literature can be considered worthwhile.

While our research is not backed up with real-world experiments,  the underlying theory is, or will be useful for the relevant applications.
Keeping that in mind, we look forward to seeing the technology which utilizes the theoretical results in this paper, creating new challenges to overcome.

%

\bibliographystyle{alpha}
\bibliography{main}
\end{document}